\begin{document}
\theoremstyle{plain}
\newtheorem*{ithm}{Theorem}
\newtheorem*{idefn}{Definition}
\newtheorem{thm}{Theorem}[section]
\newtheorem{lem}[thm]{Lemma}
\newtheorem{dlem}[thm]{Lemma/Definition}
\newtheorem{prop}[thm]{Proposition}
\newtheorem{set}[thm]{Setting}
\newtheorem{cor}[thm]{Corollary}
\newtheorem*{icor}{Corollary}
\theoremstyle{definition}
\newtheorem{assum}[thm]{Assumption}
\newtheorem{notation}[thm]{Notation}
\newtheorem{defn}[thm]{Definition}
\newtheorem{clm}[thm]{Claim}
\newtheorem{ex}[thm]{Example}
\theoremstyle{remark}
\newtheorem{rem}[thm]{Remark}
\numberwithin{equation}{section}

\newcommand{\unit}{\mathbb I}
\newcommand{\ali}[1]{{\mathfrak A}_{[ #1 ,\infty)}}
\newcommand{\alm}[1]{{\mathfrak A}_{(-\infty, #1 ]}}
\newcommand{\nn}[1]{\lV #1 \rV}
\newcommand{\br}{{\mathbb R}}
\newcommand{\dm}{{\rm dom}\mu}
\newcommand{\inn}{({\rm {inner}})}
\newcommand{\Ad}{\mathop{\mathrm{Ad}}\nolimits}
\newcommand{\Proj}{\mathop{\mathrm{Proj}}\nolimits}
\newcommand{\RRe}{\mathop{\mathrm{Re}}\nolimits}
\newcommand{\RIm}{\mathop{\mathrm{Im}}\nolimits}
\newcommand{\Wo}{\mathop{\mathrm{Wo}}\nolimits}
\newcommand{\Prim}{\mathop{\mathrm{Prim}_1}\nolimits}
\newcommand{\Primz}{\mathop{\mathrm{Prim}}\nolimits}
\newcommand{\ClassA}{\mathop{\mathrm{ClassA}}\nolimits}
\newcommand{\Class}{\mathop{\mathrm{Class}}\nolimits}
\newcommand{\diam}{\mathop{\mathrm{diam}}\nolimits}
\def\qed{{\unskip\nobreak\hfil\penalty50
\hskip2em\hbox{}\nobreak\hfil$\square$
\parfillskip=0pt \finalhyphendemerits=0\par}\medskip}
\def\proof{\trivlist \item[\hskip \labelsep{\bf Proof.\ }]}
\def\endproof{\null\hfill\qed\endtrivlist\noindent}
\def\proofof[#1]{\trivlist \item[\hskip \labelsep{\bf Proof of #1.\ }]}
\def\endproofof{\null\hfill\qed\endtrivlist\noindent}

\newcommand{\sym}{\mathop{\mathrm{Sym}}\nolimits}
\newcommand{\co}{\mathop{\mathrm{co}}\nolimits}
\newcommand{\Dia}{\mathop{\mathrm{D}}\nolimits}
\newcommand{\Hom}{\mathop{\mathrm{Hom}}\nolimits}
\newcommand{\sgn}{\mathop{\mathrm{sgn}}\nolimits}

\def\qed{{\unskip\nobreak\hfil\penalty50
\hskip2em\hbox{}\nobreak\hfil$\square$
\parfillskip=0pt \finalhyphendemerits=0\par}\medskip}
\def\proof{\trivlist \item[\hskip \labelsep{\bf Proof.\ }]}
\def\endproof{\null\hfill\qed\endtrivlist\noindent}
\def\proofof[#1]{\trivlist \item[\hskip \labelsep{\bf Proof of #1.\ }]}
\def\endproofof{\null\hfill\qed\endtrivlist\noindent}
\newcommand{\ind}{\mathop{\mathrm{ind}}\nolimits}
\newcommand{\varphii}{\varphi}
\newcommand{\pgs}{\caP_{\sigma}}
\newcommand{\oo}{{\boldsymbol\varphii}}
\newcommand{\caA}{{\mathcal A}}
\newcommand{\caB}{{\mathcal B}}
\newcommand{\caC}{{\mathcal C}}
\newcommand{\caD}{{\mathcal D}}
\newcommand{\caE}{{\mathcal E}}
\newcommand{\caF}{{\mathcal F}}
\newcommand{\caG}{{\mathcal G}}
\newcommand{\caH}{{\mathcal H}}
\newcommand{\caI}{{\mathcal I}}
\newcommand{\caJ}{{\mathcal J}}
\newcommand{\caK}{{\mathcal K}}
\newcommand{\caL}{{\mathcal L}}
\newcommand{\caM}{{\mathcal M}}
\newcommand{\caN}{{\mathcal N}}
\newcommand{\caO}{{\mathcal O}}
\newcommand{\caP}{{\mathcal P}}
\newcommand{\caQ}{{\mathcal Q}}
\newcommand{\caR}{{\mathcal R}}
\newcommand{\caS}{{\mathcal S}}
\newcommand{\caT}{{\mathcal T}}
\newcommand{\caU}{{\mathcal U}}
\newcommand{\caV}{{\mathcal V}}
\newcommand{\caW}{{\mathcal W}}
\newcommand{\caX}{{\mathcal X}}
\newcommand{\caY}{{\mathcal Y}}
\newcommand{\caZ}{{\mathcal Z}}
\newcommand{\bbA}{{\mathbb A}}
\newcommand{\bbB}{{\mathbb B}}
\newcommand{\bbC}{{\mathbb C}}
\newcommand{\bbD}{{\mathbb D}}
\newcommand{\bbE}{{\mathbb E}}
\newcommand{\bbF}{{\mathbb F}}
\newcommand{\bbG}{{\mathbb G}}
\newcommand{\bbH}{{\mathbb H}}
\newcommand{\bbI}{{\mathbb I}}
\newcommand{\bbJ}{{\mathbb J}}
\newcommand{\bbK}{{\mathbb K}}
\newcommand{\bbL}{{\mathbb L}}
\newcommand{\bbM}{{\mathbb M}}
\newcommand{\bbN}{{\mathbb N}}
\newcommand{\bbO}{{\mathbb O}}
\newcommand{\bbP}{{\mathbb P}}
\newcommand{\bbQ}{{\mathbb Q}}
\newcommand{\bbR}{{\mathbb R}}
\newcommand{\bbS}{{\mathbb S}}
\newcommand{\bbT}{{\mathbb T}}
\newcommand{\bbU}{{\mathbb U}}
\newcommand{\bbV}{{\mathbb V}}
\newcommand{\bbW}{{\mathbb W}}
\newcommand{\bbX}{{\mathbb X}}
\newcommand{\bbY}{{\mathbb Y}}
\newcommand{\bbZ}{{\mathbb Z}}
\newcommand{\str}{^*}
\newcommand{\lv}{\left \vert}
\newcommand{\rv}{\right \vert}
\newcommand{\lV}{\left \Vert}
\newcommand{\rV}{\right \Vert}
\newcommand{\la}{\left \langle}
\newcommand{\ra}{\right \rangle}
\newcommand{\ltm}{\left \{}
\newcommand{\rtm}{\right \}}
\newcommand{\lcm}{\left [}
\newcommand{\rcm}{\right ]}
\newcommand{\ket}[1]{\lv #1 \ra}
\newcommand{\bra}[1]{\la #1 \rv}
\newcommand{\lmk}{\left (}
\newcommand{\rmk}{\right )}
\newcommand{\al}{{\mathcal A}}
\newcommand{\md}{M_d({\mathbb C})}
\newcommand{\ainn}{\mathop{\mathrm{AInn}}\nolimits}
\newcommand{\id}{\mathop{\mathrm{id}}\nolimits}
\newcommand{\Tr}{\mathop{\mathrm{Tr}}\nolimits}
\newcommand{\Ran}{\mathop{\mathrm{Ran}}\nolimits}
\newcommand{\Ker}{\mathop{\mathrm{Ker}}\nolimits}
\newcommand{\Aut}{\mathop{\mathrm{Aut}}\nolimits}
\newcommand{\spn}{\mathop{\mathrm{span}}\nolimits}
\newcommand{\Mat}{\mathop{\mathrm{M}}\nolimits}
\newcommand{\UT}{\mathop{\mathrm{UT}}\nolimits}
\newcommand{\DT}{\mathop{\mathrm{DT}}\nolimits}
\newcommand{\GL}{\mathop{\mathrm{GL}}\nolimits}
\newcommand{\spa}{\mathop{\mathrm{span}}\nolimits}
\newcommand{\supp}{\mathop{\mathrm{supp}}\nolimits}
\newcommand{\rank}{\mathop{\mathrm{rank}}\nolimits}
\newcommand{\idd}{\mathop{\mathrm{id}}\nolimits}
\newcommand{\ran}{\mathop{\mathrm{Ran}}\nolimits}
\newcommand{\dr}{ \mathop{\mathrm{d}_{{\mathbb R}^k}}\nolimits} 
\newcommand{\dc}{ \mathop{\mathrm{d}_{\cc}}\nolimits} \newcommand{\drr}{ \mathop{\mathrm{d}_{\rr}}\nolimits} 
\newcommand{\zin}{\mathbb{Z}}
\newcommand{\rr}{\mathbb{R}}
\newcommand{\cc}{\mathbb{C}}
\newcommand{\ww}{\mathbb{W}}
\newcommand{\nan}{\mathbb{N}}\newcommand{\bb}{\mathbb{B}}
\newcommand{\aaa}{\mathbb{A}}\newcommand{\ee}{\mathbb{E}}
\newcommand{\pp}{\mathbb{P}}
\newcommand{\wks}{\mathop{\mathrm{wk^*-}}\nolimits}
\newcommand{\mk}{{\Mat_k}}
\newcommand{\mnz}{\Mat_{n_0}}
\newcommand{\mn}{\Mat_{n}}
\newcommand{\dist}{\dc}
\newcommand{\braket}[2]{\left\langle#1,#2\right\rangle}
\newcommand{\ketbra}[2]{\left\vert #1\right \rangle \left\langle #2\right\vert}
\newcommand{\abs}[1]{\left\vert#1\right\vert}
\newtheorem{nota}{Notation}[section]
\def\qed{{\unskip\nobreak\hfil\penalty50
\hskip2em\hbox{}\nobreak\hfil$\square$
\parfillskip=0pt \finalhyphendemerits=0\par}\medskip}
\def\proof{\trivlist \item[\hskip \labelsep{\bf Proof.\ }]}
\def\endproof{\null\hfill\qed\endtrivlist\noindent}
\def\proofof[#1]{\trivlist \item[\hskip \labelsep{\bf Proof of #1.\ }]}
\def\endproofof{\null\hfill\qed\endtrivlist\noindent}
\newcommand{\lal}{{\boldsymbol\lambda}}

\newcommand{\ZZ}{\bbZ_2\times\bbZ_2}
\newcommand{\SSS}{\mathcal{S}}
\newcommand{\cs}{S}
\newcommand{\ct}{t}
\newcommand{\hS}{S}
\newcommand{\vv}{{\boldsymbol v}}
\newcommand{\ala}{a}
\newcommand{\bet}{b}
\newcommand{\gam}{c}
\newcommand{\alphas}{\alpha}
\newcommand{\alphai}{\alpha^{(\sigma_{1})}}
\newcommand{\alphan}{\alpha^{(\sigma_{2})}}
\newcommand{\betas}{\beta}
\newcommand{\betai}{\beta^{(\sigma_{1})}}
\newcommand{\betan}{\beta^{(\sigma_{2})}}
\newcommand{\alphass}{\alpha^{{(\sigma)}}}
\newcommand{\uu}{V}
\newcommand{\vp}{\varsigma}
\newcommand{\vpr}{R}
\newcommand{\tg}{\tau_{\Gamma}}
\newcommand{\sgg}{\Sigma_{\Gamma}^{(\sigma)}}
\newcommand{\nh}{1}
\newcommand{\rk}{2,a}
\newcommand{\nii}{1,a}
\newcommand{\nhh}{3,a}
\newcommand{\sjt}{2}
\newcommand{\sjtg}{2}
\newcommand{\bcg}{\caB(\caH_{\alpha})\otimes  C^{*}(\Sigma_{\Gamma}^{(\sigma)})}
\newcommand{\Uo}{{\rm U}(1)}
\newcommand{\ua}{\mathfrak{p}}
\newcommand{\mkA}{{\mathfrak A}}
\newcommand{\mkB}{{\mathfrak B}}
\newcommand{\rar}{{\boldsymbol r}}

\title{Classification of  symmetry protected topological phases in quantum spin
 chains}

\author{Yoshiko Ogata \thanks{ Graduate School of Mathematical Sciences
The University of Tokyo, Komaba, Tokyo, 153-8914, Japan
Supported in part by
the Grants-in-Aid for
Scientific Research, JSPS.}}
\maketitle

\begin{abstract}
We consider the classification problem of 
symmetry protected topological (SPT) phases on quantum spin systems.
SPT phases are gapped short-range-entangled quantum phases with a symmetry $G$.
We explain that in one and two-dimensional quantum spin systems, 
there are $H^{2}(G, U(1))$/ $H^{3}(G, U(1))$-valued invariant, confirming
a physicists conjecture.
\end{abstract}

\maketitle

\tableofcontents

\section{Symmetry protected topological phases}
Phase transitions in macroscopic systems have been among central topics of study in theoretical physics.
Traditional examples of phase transitions were those between a disordered phase and an ordered phase.
It is well known that such a phase transition can be characterized by identifying a suitable order parameter, which is nothing but an extensive sum of some local operators.  A notable example is the ferromagnetic phase transition in the Ising model, where the order parameter is the bulk magnetization.
This is the topic of the lecture note by Michael Aizenman in the present volume.

In recent years it has become clear that certain quantum many-body systems exhibit phase transitions between two phases which have no orders.
Such phase transitions cannot be characterized by standard order parameters, and are called ``topological phase transitions'' \cite{Nobel2016}. 
We note  that here the term ``topological'' should not be interpreted in its strict mathematical sense.

In the present lecture we focus on mathematical aspects of the problem of symmetry protected topological phases in quantum spin systems.
The problem has its root in Haldane's discovery (which brought him the 2016 Nobel prize in physics \cite{Nobel2016}) that the antiferromagnetic Heisenberg chain with spin 1 has ``exotic'' low energy properties, essentially different from 
those of its spin $1/2$ counterpart.
The problem has been extensively studied from experimental, numerical, theoretical, and mathematical points of view.
After the introduction by Gu and Wen \cite{GuWen2009} of the notion of symmetry protected topological (SPT) phases, it has become clear that the model studied by Haldane belongs to a nontrivial SPT phase characterized by an index that takes value in the second group cohomology.
The problem of SPT phases has been further developed by physicists, and it has been conjectured that SPT phases in $\nu$ dimensional quantum spin systems are classified by an index that takes value in the $(\nu+1)$-th group cohomology of the relevant symmetry group.

\medskip
Let us be slightly more precise.
A quantum spin system is characterized by a self-adjoint operator called Hamiltonian, which governs the time evolution of the system.
In short SPT phases are equivalence classes of Hamiltonians satisfying certain conditions on the symmetry as well as low energy properties.
Let us consider the set of all Hamiltonians with some fixed symmetry, which have a unique gapped ground state in the bulk.
In two or higher dimensions, in order to exclude models with so called ``topological order'', we further assume that all Hamiltonians in the set can be smoothly
deformed into
a common trivial gapped Hamiltonian without closing the gap.
We say 
two such Hamiltonians are equivalent, if they can be smoothly deformed into
each other, without breaking the symmetry.
We call an equivalence class of this classification, a
symmetry protected topological (SPT) phase.
In this lecture, we consider this problem in operator algebraic framework.
In this framework, what we deal with is not Hamiltonians, but a time evolution
(strongly continuous one parameter group of automorphisms on a $C^{*}$-algebra).
The reason for considering the problem in this framework is because
it enables us to treat infinite system directly.
It is essential  to treat the infinite system directly for analysis of SPT phases for us,
because the invariant of our classification can be most naturally defined on the infinite system.
The situation is quite analogous to that of Fredholm index.
Recall that 
the Fredholm index of an operator $T$ (with
$\dim \ker T, \dim \ker T^*<\infty$ and $T\caH$ closed) on a Hilbert space $\caH$ is defined by
$
\ind T:=\dim \ker T-\dim \ker T^*
$.
By linear algebra, this is always $0$ if $\caH$ is of finite dimensional.
However, it is not the case if the dimension of $\dim\caH$ is infinite.
For example, on $\caH=l^2(\bbN)$, the unilateral shift
\begin{align*}
S\lmk \xi_1,\xi_2,\xi_3,\ldots\rmk
:=\lmk 0, \xi_1,\xi_2,\xi_3,\ldots\rmk,\quad \lmk \xi_1,\xi_2,\xi_3,\ldots\rmk\in l^2(\bbN)
\end{align*}
has $\ker S=\{0\}$, $\ker S^*=\bbC(1,0,0,0,\ldots)$ and 
$
\ind S =0-1=-1
$.
From this, we see that in order to get non-zero Fredholm index, one need to consider infinite dimensional
Hilbert spaces.
Quite analogous phenomena occurs in SPT-theory.

In this section, we introduce the SPT-phases in operator algebraic framework.
\subsection{Finite dimensional quantum mechanics}
In order to motivate us for the operator algebraic framework of quantum statistical mechanics,
we first recall finite dimensional quantum mechanics in this subsection.

In finite-dimensional quantum mechanics,
physical observables are represented by
elements of $\Mat_n$, the algebra of $n\times n$-matrices.
Each positive matrix $\rho$ with $\Tr\rho=1$ (called a density matrix) defines 
a physical state by
\[
\omega_\rho: \Mat_n\ni A\mapsto \Tr\lmk \rho A\rmk\in\bbC.
\]
We call this map $\omega_\rho$ a {sate}.
This corresponds to the
{expectation values} of each {physical observables} $A\in\Mat_n$, in the 
{physical state {$\omega_\rho$}}.
When a state ${\omega_\rho}$ can not be written as a convex combination of two
different states, it is said to be $ pure$.
A state $\omega_\rho$ is pure if and only if $\rho$ is a rank one projection.
{Time evolution} is given by {a self-adjoint matrix $H$},
called {Hamiltonian},
via a formula 
\[
\Mat_n\ni A\mapsto \tau_t(A):=e^{it{H}} A e^{-it{H}},
\quad t\in\bbR.
\]
Let $p$ be the spectral projection of $H$ corresponding to the 
{lowest eigenvalue}.
A state ${\omega_{\rho}}(A):=\Tr{\rho} A$ on $\Mat_n$ is said to be {a ground state} of {$H$}
if the support of $\rho$ is under $p$.
Ground state is {unique} if and only if
$ p$ is a rank one projection, i.e., the {lowest eigenvalue} of {$H$}
is {non-degenerated}.
In this case, the {unique ground state} is $\omega_{ p}(A):=\Tr{p} A$,
and it is {pure} because $p$ is rank one.
Let {$G$} be a finite group and suppose that there is a group action
 {$\beta : G\to \Aut(\Mat_n)$} given by unitaries $V_g$, $g\in G$ 
 \[
 {\beta_g}(A):=\Ad\lmk{ V_g}\rmk\lmk A\rmk,\quad A\in \Mat_d,\quad g\in G.
 \]
 If a Hamiltonian $H$ satisfies
$ {\beta_g}({H})={H}$ for all $g\in G$,
we say
$H$ is $\beta$-invariant.
If a {$\beta$-invariant} Hamiltonian $H$ has a 
{unique ground state} $\omega_{p}(A):=\Tr{p} A$,
then this {unique ground state} $\omega_{p}$  is $\beta$-invariant $\omega_{ p}\lmk \beta_g(A)\rmk=\omega_{p}(A)$, $A\in\Mat_n$, because the spectral projection $p$ is $\beta$-invariant,
i.e., $\beta_g(p)=p$.

\subsection{Basic facts about $C^{*}$-dynamical systems}\label{basiccssec}
In order to extend the finite dimensional quantum mechanics 
to infinite ones, we move to general $C^*$-dynamical systems.
In general $C^*$-dynamical systems, physical observables are
represented by elements of a $C^*$-algebra.
A $C^{*}$-algebra $\mkA$ is a Banach $*$-algebra with norm satisfying
\[
\lV A^{*}A\rV=\lV A\rV^{2},\quad A\in\mkA.
\]
In this lecture, we always consider $C^{*}$-algebras with a unit $\unit$, i.e., a unital $C^{*}$-algebra.
The algebra of $n\times n$ matrices $\Mat_n$ is a $C^*$-algebra with uniform norm
\[
\lV A\rV:=\sup\left\{\lV A\xi\rV \mid \xi\in \bbC^n,\; \lV \xi\rV=1\right\}.
\]
A linear functional $\omega$ over a $C^{*}$-algebra $\mkA$ is defined to be positive if 
\[
\omega(A^{*}A)\ge 0,
\]
for all $A\in\mkA$.
In physical settings, we regard a $C^{*}$-algebra $\mkA$ an algebra of physical local observables and
$\omega(A)$ denotes the expectation value of an observable $A\in\mkA$ in a state $\omega$.
A positive linear functional $\omega$ on $\mkA$ with $\lV \omega\rV=1$ is called a state.
A state $\omega$ over a $C^{*}$-algebra $\mkA$ is defined to be pure 
if it can not be written as a convex combination of two different states.
Recall that physical state in finite quantum mechanics is a map 
$\omega_\rho: \Mat_{n}\to \bbC$ given by a density matrix.
In fact, this $\omega_\rho$ is a positive normalized linear functional on $\Mat_n$, namely it is a state in the above sense.
It is pure if and only if $\rho$ is a one-rank projection.

A {representation} of a $C^*$-algebra $\mkA$ is defined to be a pair $(\caH,\pi)$, where $\caH$
is a Hilbert space and $\pi$ is a $*$-homomorphism of $\mkA$ into $\caB(\caH)$
(the set of all bounded operators on $\caH$.)
Two representations of a $C^*$-algebra $\mkA$, $(\caH_1,\pi_1)$, $(\caH_2,\pi_2)$
are {unitary equivalent} if
there is a unitary $u:\caH_1\to\caH_2$ such that
\begin{align*}
\Ad(u) \circ\pi_1(A)=\pi_2(A),\quad A\in\mkA.
\end{align*}

Given a state, we may associate a corresponding representation.
\begin{thm}
For each state $\omega$ on $\mkA$, there exist a
representation $\pi_{\omega}$ of $\mkA$ on a Hilbert space $\caH_{\omega}$
and a unit vector $\Omega_{\omega}\in\caH_{\omega}$
such that
\begin{align}\label{gns}
\omega(A)=\braket{\Omega_{\omega}}{\pi_{\omega}(A)\Omega_{\omega}},\quad A\in\mkA,\quad\text{and}
\quad
\caH_{\omega}=\overline{\pi_{\omega}(\mkA)\Omega_{\omega}}.
\end{align}
Here, $\overline{\cdot}$ denotes the norm closure.
It is {unique up to unitary equivalence}.
\end{thm}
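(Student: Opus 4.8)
The plan is to carry out the Gelfand--Naimark--Segal construction. First I would put on $\mkA$ the sesquilinear form $\langle A, B\rangle_\omega := \omega(A^*B)$; it is positive semidefinite because $\omega$ is a positive linear functional, and (being a positive functional, $\omega$ is automatically Hermitian) the Cauchy--Schwarz inequality $|\omega(A^*B)|^2 \le \omega(A^*A)\,\omega(B^*B)$ holds. Setting $N_\omega := \{A \in \mkA : \omega(A^*A) = 0\}$, Cauchy--Schwarz identifies $N_\omega$ with $\{A \in \mkA : \omega(B^*A) = 0 \text{ for all } B \in \mkA\}$; from this description $N_\omega$ is a closed left ideal, so $\langle\cdot,\cdot\rangle_\omega$ induces a genuine inner product on the quotient $\mkA/N_\omega$. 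I would then let $\caH_\omega$ be the completion of $\mkA/N_\omega$ with respect to this inner product, and write $[A] := A + N_\omega$.

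Next I would define $\pi_\omega(A)$ on the dense subspace $\mkA/N_\omega \subset \caH_\omega$ by left multiplication, $\pi_\omega(A)[B] := [AB]$. This is well defined because $N_\omega$ is a left ideal, and the estimate
\[
\omega\bigl((AB)^*(AB)\bigr) = \omega\bigl(B^*(A^*A)B\bigr) \le \lV A\rV^2\,\omega(B^*B),
\]
obtained by applying positivity of $\omega$ to $B^*C^*CB$ after writing $\lV A\rV^2\unit - A^*A = C^*C$, shows $\lV \pi_\omega(A)[B]\rV \le \lV A\rV\,\lV [B]\rV$. Hence each $\pi_\omega(A)$ extends to a bounded operator on $\caH_\omega$ with $\lV \pi_\omega(A)\rV \le \lV A\rV$, and a direct computation on $\mkA/N_\omega$ shows $A \mapsto \pi_\omega(A)$ is linear, multiplicative, and $*$-preserving, i.e.\ a representation of $\mkA$. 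Taking $\Omega_\omega := [\unit]$, positivity of $\omega$ together with $\lV\omega\rV = 1$ forces $\omega(\unit) = 1$, so $\lV \Omega_\omega\rV^2 = \omega(\unit) = 1$; moreover $\langle \Omega_\omega, \pi_\omega(A)\Omega_\omega\rangle = \omega(\unit^* A\,\unit) = \omega(A)$, and $\pi_\omega(\mkA)\Omega_\omega = \mkA/N_\omega$ is dense in $\caH_\omega$. This establishes \eqref{gns}.

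For uniqueness, suppose $(\caH, \pi, \Omega)$ is another triple satisfying \eqref{gns}. I would define $u$ on the dense subspace $\pi_\omega(\mkA)\Omega_\omega \subset \caH_\omega$ by $u\,\pi_\omega(A)\Omega_\omega := \pi(A)\Omega$. Since
\[
\langle \pi(A)\Omega, \pi(B)\Omega\rangle = \omega(A^*B) = \langle \pi_\omega(A)\Omega_\omega, \pi_\omega(B)\Omega_\omega\rangle,
\]
$u$ is well defined and isometric, hence extends to an isometry $\caH_\omega \to \caH$; its range contains the dense set $\pi(\mkA)\Omega$, so $u$ is unitary. Finally $u\,\pi_\omega(A)\,\pi_\omega(B)\Omega_\omega = \pi(AB)\Omega = \pi(A)\,u\,\pi_\omega(B)\Omega_\omega$ on a dense subspace, giving $\Ad(u)\circ\pi_\omega = \pi$. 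The only genuinely delicate ingredients are the left-ideal property of $N_\omega$ and the norm estimate for $\pi_\omega(A)$, both of which trace back to positivity of $\omega$ and, for the estimate, the $C^*$-identity; the remaining steps are routine verifications.
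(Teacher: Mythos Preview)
Your proof is the standard Gelfand--Naimark--Segal construction and is correct. The paper does not supply its own proof of this theorem; it is stated there as background (with a reference to \cite{BR1} for the general theory of $C^*$-dynamical systems), so there is nothing to compare against. Your argument handles all the essential points---the left-ideal property of $N_\omega$, the norm bound for $\pi_\omega(A)$ via positivity of $\lV A\rV^2\unit - A^*A$, and the uniqueness via the intertwining isometry---cleanly.
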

The latter condition in (\ref{gns}) means that the vector $\Omega_\omega$ is cyclic
for $\pi_\omega(\mkA)$ in $\caH_\omega$.
We use the convention of mathematical physics for inner product : namely, it is linear for the right variable.
The representation $(\caH_{\omega},\pi_{\omega}, \Omega_{\omega})$ is called a 
Gelfand-Naimark-Segal (GNS)-representation/GNS-triple.

An injective $*$-homomorphism of a $C^*$-algebra $\mkA$
onto itself is called a $*$-automorphism of $\mkA$.
We denote by $\Aut(\mkA)$, the set of all $*$-automorphisms on $\mkA$.
For example, for a unitary $V\in\mkA$, 
$\Ad(V)(A):=VAV^*$, $A\in\mkA$
defines a $*$-automorphism on $\mkA$.
Such a $*$-automorphism is called an inner automorphism.
In this lecture, we occasionally write $\alpha=\inn$ when $\alpha=\Ad(V)$
with some unitary $V\in\mkA$.
A $*$-automorphism which is not inner is said to be outer.

For a given representation $(\caH,\pi)$ of a $C^*$-algebra
$\mkA$, we say an automorphism $\alpha$ of $\mkA$
is implementable by a unitary in $(\caH,\pi)$
if there is a unitary $u$ on $\caH$
satisfying
\[
\Ad(u)\lmk \pi(A)\rmk=\pi\circ\alpha(A),\quad
A\in\mkA.
\]
If a state $\omega$ is invariant under some $\alpha\in\Aut(\mkA)$,
then it can be implemented in the GNS-representation by a unitary.
Namely the following theorem holds.
\begin{thm}\label{gnsthm}
If $\omega\circ\alpha=\omega$, then there is a unitary $V_{\alpha}$ on $\caH_{\omega}$
such that
\begin{align*}
V_{\alpha}\pi_{\omega}(A)V_{\alpha}^{*}=\pi_{\omega}\lmk \alpha(A)\rmk,\quad
V_{\alpha}\Omega_{\omega}
=\Omega_{\omega},
\quad A\in\mkA.
\end{align*}
\end{thm}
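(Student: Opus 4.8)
The plan is to mimic the GNS construction itself and build $V_\alpha$ directly on the cyclic vector. First I would set, on the dense subspace $\pi_\omega(\mkA)\Omega_\omega$ of $\caH_\omega$,
\begin{align*}
V_\alpha\,\pi_\omega(A)\Omega_\omega:=\pi_\omega\lmk\alpha(A)\rmk\Omega_\omega,\quad A\in\mkA.
\end{align*}
The one point requiring care is that this formula prescribes $V_\alpha$ on vectors rather than on algebra elements, so I must check it is well defined. This is precisely what the hypothesis $\omega\circ\alpha=\omega$ buys us: since $\alpha$ is a $*$-homomorphism,
\begin{align*}
\lV\pi_\omega\lmk\alpha(A)\rmk\Omega_\omega\rV^2=\omega\lmk\alpha(A)^*\alpha(A)\rmk=\omega\lmk\alpha(A^*A)\rmk=\omega(A^*A)=\lV\pi_\omega(A)\Omega_\omega\rV^2 ,
\end{align*}
so applying this identity to $A-B$ in place of $A$ shows both that $V_\alpha$ is well defined and that it is isometric on its domain. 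Hence it extends uniquely to an isometry on all of $\caH_\omega$.

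Next I would check surjectivity. Because $\alpha\in\Aut(\mkA)$ is onto, the range of $V_\alpha$ contains $\pi_\omega\lmk\alpha(\mkA)\rmk\Omega_\omega=\pi_\omega(\mkA)\Omega_\omega$, which is dense in $\caH_\omega$; since the range of an isometry is closed, $V_\alpha$ is unitary. Running the same argument for $\alpha^{-1}$ identifies the adjoint: $V_\alpha^*\,\pi_\omega(B)\Omega_\omega=\pi_\omega\lmk\alpha^{-1}(B)\rmk\Omega_\omega$.

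Finally, putting $A=\unit$ gives $V_\alpha\Omega_\omega=\pi_\omega\lmk\alpha(\unit)\rmk\Omega_\omega=\Omega_\omega$, and for the intertwining relation it is enough to test on the dense family $\pi_\omega(B)\Omega_\omega$:
\begin{align*}
V_\alpha\,\pi_\omega(A)\,V_\alpha^*\,\pi_\omega(B)\Omega_\omega=V_\alpha\,\pi_\omega\lmk A\,\alpha^{-1}(B)\rmk\Omega_\omega=\pi_\omega\lmk\alpha(A)B\rmk\Omega_\omega=\pi_\omega\lmk\alpha(A)\rmk\pi_\omega(B)\Omega_\omega ,
\end{align*}
so $V_\alpha\pi_\omega(A)V_\alpha^*=\pi_\omega\lmk\alpha(A)\rmk$ for all $A\in\mkA$. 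I do not expect a deep obstacle here: the whole argument is essentially the uniqueness half of the GNS theorem in disguise, and the only genuinely load-bearing step is the isometry identity above, which is exactly where the invariance of $\omega$ enters.
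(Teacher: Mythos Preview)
Your proof is correct and is exactly the approach the paper indicates: right after the statement the paper remarks that $V_\alpha\pi_\omega(A)\Omega_\omega=\pi_\omega(\alpha(A))\Omega_\omega$ and that ``that is how $V_\alpha$ is defined,'' which is precisely your construction.
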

Note from the condition that
\begin{align*}
V_{\alpha}\pi_{\omega}(A)\Omega_{\omega}
=\pi_{\omega}\lmk\alpha(A)\rmk\Omega_{\omega},
\quad A\in\mkA.
\end{align*}
In fact, that is how $V_\alpha$ is defined.

In particular, let $\gamma: G\to \Aut(\mkA)$
be a group action of $G$ on a $C^*$-algebra $\mkA$, and suppose that
a state $\omega$ is invariant under $\gamma$.
Then from the Theorem, we get
unitaries $V_g:=V_{\gamma_g}$.
It is a genuine unitary representation because
\begin{align*}
V_gV_h\pi_\omega(A)\Omega_\omega
=\pi_\omega\circ\gamma_g\gamma_h(A)\Omega_\omega
=\pi_\omega\circ\gamma_{gh}(A)\Omega_\omega
=V_{gh} \pi_\omega(A)\Omega_\omega.
\end{align*}

Using the GNS representation and the Hahn-Banach theorem, 
we can show that any $C^{*}$-algebra is isomorphic to a norm-closed self-adjoint algebra of bounded operators on a Hilbert space.

For a subset $\caM$ of operators on a Hilbert space $\caH$,
let $\caM'$ denote its commutant, i.e., the set of all bounded operators on $\caH$
commuting with every operator in $\caM$.
A von Neumann algebra on a Hilbert space $\caH$ is a $*$-subalgebra $\caM$ of $\caB(\caH)$
(the set of all bounded operators on $\caH$)
such that $\caM=\caM''$.
A center of a von Neumann algebra is defined as $\caZ(\caM):=\caM\cap\caM'$.
A von Neumann algebra $\caM$ is called a factor if it has a trivial center.
Given a state $\omega$ on a $C^*$-algebra,
 we obtain a von Neumann algebra $\pi_\omega(\mkA)''$
  out of the GNS representation of $\omega$.
 We call this von Neumann algebra 
 a von Neumann algebra associated to $\omega$.

A von Neumann algebra is said to be a type $I$-factor if it is isomorphic to
$\caB(\caH)$ for some Hilbert space $\caH$.
Type $I$ factors will play an important role in the analysis of SPT phases.
From this point of view, the following theorems are useful.
\begin{thm}\label{wigner}
Let $\caH_1$, $\caH_2$ be Hilbert spaces 
and $\tau: \caB(\caH_1)\to\caB(\caH_2)$
a $*$-isomorphism.
Then there is a unitary $u: \caH_1\to\caH_2$
such that
$\tau(x)=uxu^*$, $x\in\caB(\caH_1)$.
\end{thm}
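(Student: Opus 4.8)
The plan is to realize $\tau$ spatially by transporting a suitable cyclic vector, in the spirit of a GNS construction. First I would fix any unit vector $\xi\in\caH_1$ and let $p:=\ketbra{\xi}{\xi}$ be the corresponding rank one projection, recording the elementary identity $pAp=\braket{\xi}{A\xi}\,p$ for all $A\in\caB(\caH_1)$. The algebraic content I want is that $p$ is a minimal projection, i.e.\ $p\neq 0$ and $p\caB(\caH_1)p=\bbC p$. Since $\tau$ is a $*$-isomorphism, $q:=\tau(p)$ is a nonzero projection with $q\caB(\caH_2)q=\tau\lmk p\caB(\caH_1)p\rmk=\bbC q$, hence $q$ is a minimal projection of $\caB(\caH_2)$ and therefore one dimensional; write $q=\ketbra{\eta}{\eta}$ with $\eta\in\caH_2$ a unit vector.

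Next, using that $\ketbra{\zeta}{\xi}$ sends $\xi$ to $\zeta$, I note that $\caB(\caH_1)\xi=\caH_1$, and likewise $\caB(\caH_2)\eta=\tau\lmk\caB(\caH_1)\rmk\eta=\caH_2$ since $\tau$ is onto. So I can try to define $u:\caH_1\to\caH_2$ by $u\lmk A\xi\rmk:=\tau(A)\eta$ for $A\in\caB(\caH_1)$. That $u$ is well defined and isometric follows from the single computation
\[
\braket{\tau(A)\eta}{\tau(B)\eta}=\braket{\eta}{\tau(A^*B)\eta}=\braket{\eta}{q\,\tau(A^*B)\,q\,\eta}=\braket{\xi}{A^*B\xi}\braket{\eta}{q\eta}=\braket{A\xi}{B\xi},
\]
where the third equality uses $q\,\tau(A^*B)\,q=\tau\lmk pA^*Bp\rmk=\braket{\xi}{A^*B\xi}\,q$. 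Since moreover the range of $u$ is all of $\caB(\caH_2)\eta=\caH_2$, $u$ is a unitary from $\caH_1$ onto $\caH_2$.

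Finally I would verify the intertwining relation: for $A,B\in\caB(\caH_1)$,
\[
uA\lmk B\xi\rmk=u\lmk (AB)\xi\rmk=\tau(AB)\eta=\tau(A)\tau(B)\eta=\tau(A)\,u\lmk B\xi\rmk,
\]
so $uA=\tau(A)u$ on $\caH_1=\caB(\caH_1)\xi$, which rearranges to $\tau(A)=uAu^*$ for all $A\in\caB(\caH_1)$, as desired. The only step that is not pure bookkeeping is the claim that the $*$-isomorphism $\tau$ carries the minimal projection $p$ to a rank one projection $q$; I expect this to be the crux, and the point is precisely that minimality of a nonzero projection in $\caB(\caH)$ is the algebraic condition $p\caB(\caH)p=\bbC p$, which any $*$-isomorphism preserves, together with the fact that a minimal projection in a type $I$ factor $\caB(\caH_2)$ is one dimensional.
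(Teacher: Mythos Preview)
Your argument is correct. The paper, however, does not prove this theorem at all: it is stated as a standard background fact in the preliminary subsection on $C^*$-dynamical systems and is used repeatedly thereafter without proof. So there is no ``paper's own proof'' to compare against.

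That said, the route you take is a clean and standard one: the key observation is that minimality of a projection in $\caB(\caH)$ is the purely algebraic condition $p\caB(\caH)p=\bbC p$, which any $*$-isomorphism preserves, so $\tau$ carries rank-one projections to rank-one projections; once you have matched cyclic vectors $\xi\mapsto\eta$, the GNS-style definition $u(A\xi):=\tau(A)\eta$ gives the unitary. Each step you wrote checks out, including the inner-product computation that shows $u$ is well defined and isometric and the intertwining calculation $uA=\tau(A)u$.
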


\begin{thm}\label{purethm}
The following conditions are equivalent for a state $\omega$:
\begin{description}
\item[(1)] The representation $(\caH_{\omega},\pi_{\omega})$
is irreducible, i.e., the only closed subspaces of  $\caH_{\omega}$
which are invariant under the action of $\pi_{\omega}(\mkA)$
are the trivial subspaces $\{0\}$ and $\caH_{\omega}$.
\item[(2)] $\omega$ is pure,
\item[(3)] $\pi_{\omega}(\mkA)''=\caB(\caH_{\omega})$,
\item[(4)] $\pi_\omega(\mkA)'=\bbC\unit$.
\end{description}
\end{thm}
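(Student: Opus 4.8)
The plan is to run the cycle $(1)\Rightarrow(4)\Rightarrow(3)\Rightarrow(1)$ together with the equivalence $(4)\Leftrightarrow(2)$; the three implications among $(1),(3),(4)$ are essentially formal, whereas $(4)\Leftrightarrow(2)$ carries the analytic content.

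\textbf{The algebraic equivalences.} Since $\pi_\omega$ is a $*$-homomorphism, $\pi_\omega(\mkA)$ is a self-adjoint subset of $\caB(\caH_\omega)$, so $\pi_\omega(\mkA)'$ is a von Neumann algebra and $\pi_\omega(\mkA)''=\lmk\pi_\omega(\mkA)'\rmk'$ by definition. For $(3)\Leftrightarrow(4)$ I would use the elementary fact $\caB(\caH_\omega)'=\bbC\unit$ (an operator commuting with every rank-one operator $\ketbra{\xi}{\eta}$ is a scalar): if $(4)$ holds then $\pi_\omega(\mkA)''=(\bbC\unit)'=\caB(\caH_\omega)$, which is $(3)$; conversely $(3)$ forces $\pi_\omega(\mkA)'\subseteq\pi_\omega(\mkA)'''=\caB(\caH_\omega)'=\bbC\unit$, and since $\unit\in\pi_\omega(\mkA)'$ always holds, this is $(4)$. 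For $(1)\Leftrightarrow(4)$: a closed subspace $\caK\subseteq\caH_\omega$ is $\pi_\omega(\mkA)$-invariant if and only if the orthogonal projection onto $\caK$ lies in $\pi_\omega(\mkA)'$ — here one uses that $\pi_\omega(\mkA)$ is $*$-closed, so $\caK$ is invariant iff $\caK^\perp$ is. Thus $(1)$ says $\pi_\omega(\mkA)'$ has no projections other than $0$ and $\unit$, and since a von Neumann algebra is the norm-closed linear span of its projections (spectral theorem for its self-adjoint elements), this is equivalent to $\pi_\omega(\mkA)'=\bbC\unit$, i.e. $(4)$.

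\textbf{$(2)\Rightarrow(1)$, by contraposition.} Suppose $(\caH_\omega,\pi_\omega)$ is reducible, so there is a projection $E\in\pi_\omega(\mkA)'$ with $E\neq 0,\unit$. First I would note $E\Omega_\omega\neq 0$: otherwise $E\pi_\omega(A)\Omega_\omega=\pi_\omega(A)E\Omega_\omega=0$ for all $A\in\mkA$, so $E$ vanishes on the dense subspace $\pi_\omega(\mkA)\Omega_\omega$ and $E=0$; symmetrically $(\unit-E)\Omega_\omega\neq 0$. Put $\lambda:=\nn{E\Omega_\omega}^2\in(0,1)$ and define $\omega_1(A):=\lambda^{-1}\braket{\Omega_\omega}{E\pi_\omega(A)\Omega_\omega}$ and $\omega_2(A):=(1-\lambda)^{-1}\braket{\Omega_\omega}{(\unit-E)\pi_\omega(A)\Omega_\omega}$; these are states with $\lambda\omega_1+(1-\lambda)\omega_2=\omega$. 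They are distinct, since $\omega_1=\omega$ would give, for all $A,B$, $\braket{\pi_\omega(B)\Omega_\omega}{E\pi_\omega(A)\Omega_\omega}=\braket{\Omega_\omega}{E\pi_\omega(B^*A)\Omega_\omega}=\lambda\braket{\pi_\omega(B)\Omega_\omega}{\pi_\omega(A)\Omega_\omega}$, hence $E=\lambda\unit$ by cyclicity, impossible for a projection with $\lambda\in(0,1)$. Therefore $\omega$ is not pure.

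\textbf{$(4)\Rightarrow(2)$, by contraposition — the main obstacle.} Suppose $\omega=\lambda\omega_1+(1-\lambda)\omega_2$ with states $\omega_1\neq\omega_2$ and $\lambda\in(0,1)$; the goal is to produce a non-scalar element of $\pi_\omega(\mkA)'$. The key ingredient is a Radon--Nikodym lemma: if a positive linear functional $\psi$ satisfies $\psi\leq\omega$ (meaning $\omega-\psi$ is positive), then there is $T\in\pi_\omega(\mkA)'$ with $0\leq T\leq\unit$ and $\psi(A)=\braket{\Omega_\omega}{T\pi_\omega(A)\Omega_\omega}$ for all $A\in\mkA$. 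I would prove this by setting $B\lmk\pi_\omega(A)\Omega_\omega,\pi_\omega(B)\Omega_\omega\rmk:=\psi(A^*B)$ on the dense subspace $\pi_\omega(\mkA)\Omega_\omega$; the Cauchy--Schwarz inequality for the positive functionals $\psi$ and $\omega-\psi$ shows this sesquilinear form is well defined and has norm at most $1$, so it extends to $\caH_\omega$ and is implemented by a unique $T\in\caB(\caH_\omega)$ with $0\le T\le\unit$, and the identity $\psi(A^*BC)=\braket{\pi_\omega(A)\Omega_\omega}{T\pi_\omega(B)\pi_\omega(C)\Omega_\omega}=\braket{\pi_\omega(A)\Omega_\omega}{\pi_\omega(B)T\pi_\omega(C)\Omega_\omega}$ (read in two ways) forces $T\in\pi_\omega(\mkA)'$. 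Applying this with $\psi=\lambda\omega_1\leq\omega$ yields such a $T$; if $T=c\unit$ were scalar then $\lambda\omega_1=c\,\omega$ on $\mkA$, and evaluating at $\unit$ gives $c=\lambda$, hence $\omega_1=\omega$ and then $\omega_2=\omega$, contradicting $\omega_1\neq\omega_2$. So $T$ is non-scalar, $\pi_\omega(\mkA)'\neq\bbC\unit$, and $(4)$ fails. This last step — the extension and operator-representation of the sesquilinear form, together with the order bound $0\le T\le\unit$ — is where genuine analysis enters; everything else is bookkeeping.
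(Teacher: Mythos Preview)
Your proof is correct and follows the standard textbook route (essentially that of \cite{BR1}, Theorem~2.3.19). Note, however, that the paper does not actually prove this theorem: it is stated in Section~\ref{basiccssec} as a background fact about $C^*$-dynamical systems, with the implicit reference being the standard operator-algebra texts \cite{BR1}, \cite{takesaki}. So there is no ``paper's own proof'' to compare against; your argument supplies exactly the kind of proof one finds in those references, with the Radon--Nikodym lemma for positive functionals dominated by $\omega$ carrying the real work in $(4)\Rightarrow(2)$.
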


Another notion which plays an important role in the analysis of SPT phases is
{\it quasi-equivalence}. 
In probability theory, a probability measure $\lambda$ is {absolutely continuous} with respect to
another probability measure $\mu$ if there is a positive $h\in L^1(\mu)$
satisfying
\[
\int f\; d\lambda=\int f\; h\; d\mu,
\]
 for all {bounded measurable functions} $f$.
Furthermore, two measures are {equivalent}  if they are absolutely continuous to each other.
Quasi-equivalence is a non-commutative version of it.

If $(\caH,\pi)$ is a representation of a $C^{*}$-algebra $\mkA$,
then a state $\varphi$ of $\mkA$ is said to be $\pi$-normal if there exists 
a positive trace-class operator $\rho$ (called a density operator)
on $\caH$ with $\Tr(\rho)=1$
such that 
\begin{align}
\varphi(A)=\Tr\lmk \rho\pi(A)\rmk,\quad \text{for all}\quad A\in\mkA.
\end{align}
Two representations $\pi_{1}$ and $\pi_{2}$ of a $C^{*}$-algebra $\mkA$ are said to be
quasi-equivalent, written $\pi_{1}\sim_{q.e.}\pi_{2}$ 
if each $\pi_{1}$-normal state is $\pi_{2}$-normal and conversely.
\begin{thm}\label{qethm}
The following conditions are equivalent for representations $\pi_{1}$ and $\pi_{2}$ of $\mkA$:
\begin{description}
\item[(1)] $\pi_{1}\sim_{q.e.}\pi_{2}$
\item[(2)] there exists an isomorphism $\tau: \pi_{1}(\mkA)''\to \pi_{2}(\mkA)''$
such that $\tau\lmk \pi_{1}(A)\rmk=\pi_{2}(A)$
for all $A\in\mkA$.
\end{description}
\end{thm}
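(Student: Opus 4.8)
The plan is to prove $(2)\Rightarrow(1)$ directly, and $(1)\Rightarrow(2)$ by a duality argument identifying the preduals of $\pi_{1}(\mkA)''$ and $\pi_{2}(\mkA)''$ with one and the same subspace of $\mkA^{*}$. Two standard facts will be used throughout. First, the dictionary for normal states: writing $\caH_{i}$ for the Hilbert space of $\pi_{i}$, a state $\varphi$ on $\mkA$ is $\pi_{i}$-normal if and only if $\varphi=\psi\circ\pi_{i}$ for some normal state $\psi$ on $\pi_{i}(\mkA)''$; the non-obvious half is that a normal state on the von Neumann subalgebra $\pi_{i}(\mkA)''\subseteq\caB(\caH_{i})$ is a restriction of a normal state on $\caB(\caH_{i})$, hence is given by a density operator. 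Second, a $*$-isomorphism between von Neumann algebras is automatically $\sigma$-weakly bicontinuous (normal), and the $\sigma$-weak topology on a von Neumann algebra $\caM$ is precisely the weak-$*$ topology coming from the duality $\caM=\left(\caM_{*}\right)^{*}$.

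$(2)\Rightarrow(1)$: Let $\tau\colon\pi_{1}(\mkA)''\to\pi_{2}(\mkA)''$ be a $*$-isomorphism with $\tau\left(\pi_{1}(A)\right)=\pi_{2}(A)$ for all $A\in\mkA$; then also $\tau^{-1}\left(\pi_{2}(A)\right)=\pi_{1}(A)$. If $\varphi$ is $\pi_{1}$-normal, write $\varphi=\psi\circ\pi_{1}$ with $\psi$ a normal state on $\pi_{1}(\mkA)''$. Then $\psi\circ\tau^{-1}$ is a normal state on $\pi_{2}(\mkA)''$ (a composition of normal maps), and $\left(\psi\circ\tau^{-1}\right)\circ\pi_{2}=\psi\circ\pi_{1}=\varphi$, so $\varphi$ is $\pi_{2}$-normal. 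By the symmetric argument every $\pi_{2}$-normal state is $\pi_{1}$-normal, hence $\pi_{1}\sim_{q.e.}\pi_{2}$.

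$(1)\Rightarrow(2)$: First, $\ker\pi_{1}=\ker\pi_{2}$: if $\pi_{1}(A^{*}A)=0$, then for each unit vector $\xi\in\caH_{2}$ the vector state $\braket{\xi}{\pi_{2}(\cdot)\xi}$ is $\pi_{2}$-normal, hence $\pi_{1}$-normal, hence equals $\Tr\left(\rho\,\pi_{1}(\cdot)\right)$ for some density operator $\rho$ on $\caH_{1}$, so $\lV\pi_{2}(A)\xi\rV^{2}=\Tr\left(\rho\,\pi_{1}(A^{*}A)\right)=0$; thus $\pi_{2}(A)=0$, and symmetrically. Consequently $\tau_{0}\left(\pi_{1}(A)\right):=\pi_{2}(A)$ is a well-defined isometric $*$-isomorphism of $\pi_{1}(\mkA)$ onto $\pi_{2}(\mkA)$. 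Now let $N_{i}$ denote the set of $\pi_{i}$-normal states and let $\iota_{i}\colon\left(\pi_{i}(\mkA)''\right)_{*}\to\mkA^{*}$ be $\iota_{i}(\psi):=\psi\circ\pi_{i}$. Using Kaplansky density of the unit ball of $\pi_{i}(\mkA)$ in that of $\pi_{i}(\mkA)''$ together with the identification $\pi_{i}(\mkA)\cong\mkA/\ker\pi_{i}$, one checks that $\iota_{i}$ is isometric, with image $\spn N_{i}$ (here the normal-state characterization of the first paragraph is used). Since $N_{1}=N_{2}$ by hypothesis, $\iota_{1}$ and $\iota_{2}$ have the same image, so $\Phi:=\iota_{2}^{-1}\circ\iota_{1}$ (with $\iota_{2}$ viewed as a bijection onto this common image) is an isometric isomorphism $\left(\pi_{1}(\mkA)''\right)_{*}\to\left(\pi_{2}(\mkA)''\right)_{*}$, characterized by $\Phi(\psi)\circ\pi_{2}=\psi\circ\pi_{1}$. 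Let $\tau$ be the Banach-space adjoint of $\Phi^{-1}$; then $\tau$ is a $\sigma$-weakly bicontinuous linear bijection of $\pi_{1}(\mkA)''=\left(\left(\pi_{1}(\mkA)''\right)_{*}\right)^{*}$ onto $\left(\left(\pi_{2}(\mkA)''\right)_{*}\right)^{*}=\pi_{2}(\mkA)''$, and for every normal functional $\psi'$ on $\pi_{2}(\mkA)''$ one has $\langle\tau\left(\pi_{1}(A)\right),\psi'\rangle=\langle\pi_{1}(A),\Phi^{-1}(\psi')\rangle=\left(\Phi^{-1}(\psi')\circ\pi_{1}\right)(A)=\left(\psi'\circ\pi_{2}\right)(A)=\langle\pi_{2}(A),\psi'\rangle$, so $\tau$ extends $\tau_{0}$.

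It remains to see $\tau$ is multiplicative and $*$-preserving. For fixed $A\in\mkA$, the maps $x\mapsto\tau\left(\pi_{1}(A)x\right)$ and $x\mapsto\pi_{2}(A)\tau(x)$ are $\sigma$-weakly continuous on $\pi_{1}(\mkA)''$ and agree on the $\sigma$-weakly dense subalgebra $\pi_{1}(\mkA)$ (where $\tau=\tau_{0}$), hence agree everywhere; then, for fixed $x\in\pi_{1}(\mkA)''$, the $\sigma$-weakly continuous maps $y\mapsto\tau(yx)$ and $y\mapsto\tau(y)\tau(x)$ agree on $\pi_{1}(\mkA)$ by the previous step, hence everywhere, so $\tau$ is a homomorphism. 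Applying the same argument (continuity plus density) to $x\mapsto\tau(x^{*})$ versus $x\mapsto\tau(x)^{*}$ shows $\tau$ is $*$-preserving, and being bijective it is the desired $*$-isomorphism. The step I expect to be the main obstacle is the duality set-up for $(1)\Rightarrow(2)$, namely proving that $\iota_{i}$ is isometric with image exactly $\spn N_{i}$: this combines Kaplansky density, the isometric identification $\pi_{i}(\mkA)\cong\mkA/\ker\pi_{i}$, and the fact that normal states on a von Neumann subalgebra of $\caB(\caH_{i})$ extend to normal states on $\caB(\caH_{i})$. Everything after that is the routine continuity-and-density bootstrap.
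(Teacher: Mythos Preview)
The paper does not supply a proof of this theorem; it appears in the subsection on basic facts about $C^{*}$-dynamical systems as one of several standard results (alongside the GNS construction, Wigner's theorem, and the characterization of pure states) quoted without argument as background from operator algebra theory. Your proof is correct: the $(2)\Rightarrow(1)$ direction follows immediately from the automatic normality of $*$-isomorphisms between von Neumann algebras, and your predual approach to $(1)\Rightarrow(2)$ --- showing $\ker\pi_{1}=\ker\pi_{2}$, identifying both preduals with the common subspace $\spn N_{1}=\spn N_{2}\subset\mkA^{*}$ via the isometries $\iota_{i}$ (the isometry check via Kaplansky density and the quotient identification $\pi_{i}(\mkA)\cong\mkA/\ker\pi_{i}$ is handled correctly), taking the Banach adjoint to obtain a $\sigma$-weakly bicontinuous linear bijection, and then bootstrapping multiplicativity and the $*$-operation by separate $\sigma$-weak continuity and density --- is one of the standard textbook arguments.
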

Two states $\omega_{1}$ and $\omega_{2}$ of $\mkA$ are said to be quasi-equivalent if 
$\pi_{\omega_{1}}$ and $\pi_{\omega_{2}}$ are quasi-equivalent.
Although any states on a matrix algebra are quasi-equivalent,
that is not the case for infinite systems.
Indeed, the following physical interpretation can be made.
By the cyclicity of the GNS vector $\Omega_{\omega}$ in the GNS-representation (\ref{gns}),
if $\omega_{1}$ and $\omega_{2}$ are quasi-equivalent,
then
$\omega_{2}$ can be approximated arbitrarily well in norm topology of $\mkA^{*}$
by states of the form
\begin{align*}
\sum_{n=1}^{\infty} \omega_{1}\lmk A_{n}^{*} \cdot A_{n}\rmk,
\end{align*}
with some $A_{n}\in\mkA$.
Regarding elements in $\mkA$ as local observables, it may be interpreted 
that $\omega_{2}$ can be obtained from $\omega_{1}$ via some 
local perturbation.
Because the converse also holds, heuristically we may say that 
two states $\omega_{1}$ and $\omega_{2}$ being quasi-equivalent 
means that they are macroscopically the same.

Two representations $\pi_{1}$ and $\pi_{2}$ of a $C^{*}$-algebra $\mkA$ 
are mutually singular if 
none of $\pi_2$-normal state is $\pi_1$-normal and converse.
One can regard this as an analog of mutually singular
measures (like $\delta$-function and Lebesgue measure)
in probability theory.

Recall that a von Neumann algebra is a factor if it has a trivial center.
\begin{thm}\label{facthm}
Let $(\caH,\pi)$ be a representation of a $C^*$-algebra $\mkA$.
If $\pi(\mkA)''$ is a factor, 
then any subrepresentation of $\pi$
is quasi-equivalent to $\pi$.
\end{thm}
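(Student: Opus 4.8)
The plan is to realize an arbitrary nonzero subrepresentation of $(\caH,\pi)$ as the reduction of $\pi$ by a projection in the commutant $\pi(\mkA)'$, and then to exploit factoriality of $\pi(\mkA)''$ to show that this reduction is actually a $*$-\emph{isomorphism} onto the subrepresentation's von Neumann algebra; the conclusion is then immediate from Theorem~\ref{qethm}. Concretely, a subrepresentation is a pair $(\caK,\pi_\caK)$ with $\caK\neq\{0\}$ a closed $\pi(\mkA)$-invariant subspace of $\caH$ and $\pi_\caK(A):=\pi(A)|_\caK$. Let $E$ be the orthogonal projection onto $\caK$; since $\pi(\mkA)$ is a self-adjoint set of operators, invariance of $\caK$ is equivalent to $E\in\pi(\mkA)'$, and then every element of $\pi(\mkA)''$ commutes with $E$. (We may assume $\pi$ is nondegenerate, i.e. $\pi(\unit)=\unit_\caH$: otherwise $\pi(\unit)$ is a central projection of $\pi(\mkA)''$, so factoriality forces $\pi=0$ and the statement is trivial.)

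The first technical point I would establish is the identification
\[
\pi_\caK(\mkA)''=\{\,xE|_\caK\;:\;x\in\pi(\mkA)''\,\},
\]
the reduced von Neumann algebra on $\caK$. The right-hand side is a von Neumann algebra on $\caK$ (it is the range of the normal $*$-homomorphism $x\mapsto xE|_\caK$, hence $\sigma$-weakly closed) and it contains $\pi_\caK(\mkA)$, hence contains $\pi_\caK(\mkA)''$; for the reverse inclusion one uses Kaplansky's density theorem, so that compressing the strongly dense $*$-subalgebra $\pi(\mkA)$ of $\pi(\mkA)''$ by $E$ keeps it strongly dense in $\{xE|_\caK:x\in\pi(\mkA)''\}$, and then applies $\caM=\caM''$ on $\caK$. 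Granting this, define
\[
\tau:\pi(\mkA)''\longrightarrow\pi_\caK(\mkA)'',\qquad\tau(x):=xE|_\caK .
\]
Using $E\in\pi(\mkA)'$ one checks routinely that $\tau$ is a unital $*$-homomorphism; it is surjective by the identification above; and $\tau(\pi(A))=\pi(A)|_\caK=\pi_\caK(A)$ for all $A\in\mkA$.

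The only place the hypothesis enters, and the crux of the argument, is injectivity of $\tau$. Its kernel $\caJ:=\{x\in\pi(\mkA)'':xE=0\}$ is $\sigma$-weakly closed and is a two-sided ideal of $\pi(\mkA)''$: it is trivially a left ideal, and it is a right ideal since $xaE=xEa=0$ whenever $xE=0$ and $a\in\pi(\mkA)''$ (here $aE=Ea$). Hence $\caJ=\pi(\mkA)''f$ for a projection $f$ in the center $\caZ(\pi(\mkA)'')$. As $\pi(\mkA)''$ is a factor, $\caZ(\pi(\mkA)'')=\bbC\unit$, so $f\in\{0,\unit\}$; and $f=\unit$ is impossible, because $f\in\caJ$ would give $fE=0$, i.e. $E=0$, contradicting $\caK\neq\{0\}$. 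Thus $\caJ=\{0\}$, $\tau$ is a $*$-isomorphism, and since it satisfies $\tau(\pi(A))=\pi_\caK(A)$, Theorem~\ref{qethm} yields $\pi\sim_{q.e.}\pi_\caK$.

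I do not expect a genuine obstacle here: the two ingredients are standard von Neumann algebra facts (identification of $\pi_\caK(\mkA)''$ with the reduced algebra via Kaplansky density, and the description of $\sigma$-weakly closed two-sided ideals of a von Neumann algebra by central projections), and factoriality is used only to force the relevant central projection to vanish. Conceptually the content is just that a nonzero projection in the commutant of a factor has central support $\unit$, so passing to a subrepresentation loses no ``macroscopic'' information.
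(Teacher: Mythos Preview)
Your argument is correct and is the standard proof of this fact. Note that the paper does not actually supply its own proof of Theorem~\ref{facthm}: it is stated in Section~\ref{basiccssec} as a background result, alongside Theorems~\ref{wigner}, \ref{purethm}, and \ref{qethm}, and is used (e.g.\ in Lemma~\ref{ih}) without further justification. So there is nothing to compare against, but your write-up would serve perfectly well as a self-contained proof; the two ingredients you flag (the identification of the reduced von Neumann algebra and the fact that $\sigma$-weakly closed two-sided ideals are cut down by central projections) are indeed the standard ones, and your handling of the degenerate case via $\pi(\unit)$ being central is clean.
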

We say that a state $\omega$ is a factor state
if associated von Neumann algebra is a factor.
Physically, it can be regarded as a {\it homogeneous} state, in the sense that
it cannot be written as a mixture of macroscopically distinct states (like a convex combination of
all spin up state and all spin down states).
A pure state is of course homogeneous, which even forbids mixture of macroscopically same states.

If an automorphism $\alpha$ on a $C^*$-algebra $\mkA$ is implementable (by a unitary operator $u$) 
in a GNS-representation
$(\caH_\omega,\pi_\omega)$ of a state $\omega$ on $\mkA$,
then $\alpha$ preserves the set of $\pi_\omega$-normal states.
Indeed, let $\varphi$ be a $\pi_\omega$-normal state.
Then there is a density operator $\rho$ on $\caH_\omega$
satisfying
\[
\varphi(A):=\Tr\lmk \rho \pi_\omega(A)\rmk,\quad
A\in{\mkA}.
\]
We then have
\begin{align*}
\varphi\circ\alpha(A)
=\Tr\lmk \rho \pi_\omega\circ\alpha(A)\rmk
=\Tr\lmk \rho  u\pi_\omega(A)u^*\rmk
=\Tr\lmk \lmk u^*\rho  u\rmk\pi_\omega(A)\rmk,\quad A\in{\mkA}.
\end{align*}
Hence the state $\varphi\circ\alpha$ is given by
a density operator $u^*\rho  u$.
This means that $\varphi\circ\alpha$ is $\omega$-normal.
From the above interpretation of macroscopic equivalence,
$\alpha$ being implementable on the GNS representation of $\omega$
means that it does not move $\omega$ {\it macroscopically}.

We also prepare tensor product of $C^*$-algebras.
Let $\mkA$, $\mkB$ be $C^{*}$-algebras with faithful representations
$(\caH_{\mkA},\pi_{\mkA})$, $(\caH_{\mkB},\pi_{\mkB})$.
The the spatial $C^{*}$-norm $\lV\cdot\rV$ on algebraic tensor product 
$\mkA\odot \mkB$ is defined by
\begin{align*}
\lV \sum_{i} a_{i}\otimes b_{i}\rV
:=\lV
\sum_{i} \pi_{\mkA}(a_{i})\otimes \pi_{\mkB}(b_{i})
\rV_{\caB(\caH_{\mkA}\otimes\caH_{\mkB})}.
\end{align*}
The right-hand side means the norm as an element of $\caB(\caH_{\mkA}\otimes\caH_{\mkB})$.
The completion of $\mkA\odot \mkB$ with respect to this norm
is denoted by $\mkA\otimes\mkB$.
This is the spacial tensor product of $\mkA$ and $\mkB$.
For states $\omega_{\mkA}$, $\omega_{\mkB}$ 
on $\mkA$, $\mkB$, there exists a unique state
$\omega_{\mkA}\otimes \omega_{\mkB}$ on  $\mkA\otimes\mkB$
such that
\begin{align}
\lmk
\omega_{\mkA}\otimes \omega_{\mkB}
\rmk
(A\otimes B)
=\omega_{\mkA}(A)\omega_{\mkB}(B),\quad A\in\mkA,\; B\in\mkB.
\end{align}
Furthermore, for representations
$(\caH_\mkA,\pi_\mkA)$, $(\caH_\mkB,\pi_\mkB)$
of $\mkA,\mkB$,
there is a unique representation $\pi_{\mkA}\otimes\pi_{\mkB}$ of $\mkA\otimes \mkB$
on $\caH_{\mkA}\otimes\caH_{\mkB}$
such that
\[
\lmk
\pi_{\mkA}\otimes\pi_{\mkB}
\rmk(A\otimes B)
=\pi_{\mkA}(A)\otimes\pi_{\mkB}(B),\quad
A\in\mkA, \quad B\in\mkB.
\]

In the analysis of SPT-phases we encounter the following situation.
\begin{lem}\label{elementlem}
Let $\omega_{\mkA}$, $\omega_{\mkB}$ 
be pure states on $C^*$-algebras $\mkA$, $\mkB$,
with GNS triples $(\caH_{\mkA},\pi_{\mkA},\Omega_{\mkA})$ and 
$(\caH_{\mkB},\pi_{\mkB},\Omega_{\mkB})$.
Let $\alpha_{\mkA}$, $\alpha_{\mkB}$ be automorphisms on $\mkA$, 
$\mkB$
such that 
\begin{align*}
\omega_{\mkA}\alpha_{\mkA}\otimes \omega_{\mkB}\alpha_{\mkB}
=\lmk \omega_{\mkA}\otimes \omega_{\mkB}\rmk
\circ\Ad V,
\end{align*}
for some unitary $V\in \mkA\otimes\mkB$.
Then there is a unitary $u_{\mkB}$ on $\caH_\mkB$ such that
\begin{align*}
\Ad(u_\mkB)\circ\pi_{\mkB}(A)=\pi_{\mkB}\circ\alpha_{\mkB}(A),\quad A\in \mkB.
\end{align*}
\end{lem}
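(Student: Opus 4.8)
The plan is to compare Gelfand--Naimark--Segal triples for the single state that appears on both sides of the hypothesis, extract a unitary intertwiner on $\caH_{\mkA}\otimes\caH_{\mkB}$, and then ``disentangle'' it on the second tensor leg using Wigner's theorem. First I would record the consequences of purity: by Theorem~\ref{purethm}, $\pi_{\mkA}(\mkA)''=\caB(\caH_{\mkA})$ and $\pi_{\mkB}(\mkB)''=\caB(\caH_{\mkB})$, and since $\alpha_{\mkB}$ is surjective we have $(\pi_{\mkB}\circ\alpha_{\mkB})(\mkB)=\pi_{\mkB}(\mkB)$ (and likewise for $\mkA$).

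Next I would write down two GNS triples for the state $\psi:=\omega_{\mkA}\alpha_{\mkA}\otimes\omega_{\mkB}\alpha_{\mkB}=(\omega_{\mkA}\otimes\omega_{\mkB})\circ\Ad V$. On one hand, $(\caH_{\mkA},\pi_{\mkA}\circ\alpha_{\mkA},\Omega_{\mkA})$ is a GNS triple for $\omega_{\mkA}\circ\alpha_{\mkA}$ (cyclicity survives because $\alpha_{\mkA}$ is onto), and likewise for $\mkB$, so the tensor triple $\big(\caH_{\mkA}\otimes\caH_{\mkB},\ (\pi_{\mkA}\circ\alpha_{\mkA})\otimes(\pi_{\mkB}\circ\alpha_{\mkB}),\ \Omega_{\mkA}\otimes\Omega_{\mkB}\big)$ is a GNS triple for $\psi$. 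On the other hand, $(\caH_{\mkA}\otimes\caH_{\mkB},\ \pi_{\mkA}\otimes\pi_{\mkB},\ \Omega_{\mkA}\otimes\Omega_{\mkB})$ is a GNS triple for $\omega_{\mkA}\otimes\omega_{\mkB}$, and precomposing with the inner automorphism $\Ad V$ shows $\big(\caH_{\mkA}\otimes\caH_{\mkB},\ (\pi_{\mkA}\otimes\pi_{\mkB})\circ\Ad V,\ \Omega_{\mkA}\otimes\Omega_{\mkB}\big)$ is also a GNS triple for $\psi$. By uniqueness of the GNS representation there is a unitary $W$ on $\caH_{\mkA}\otimes\caH_{\mkB}$ with
\[
W\,\big[(\pi_{\mkA}\circ\alpha_{\mkA})\otimes(\pi_{\mkB}\circ\alpha_{\mkB})\big](x)\,W^{*}=(\pi_{\mkA}\otimes\pi_{\mkB})\big(VxV^{*}\big),\qquad x\in\mkA\otimes\mkB .
\]
Setting $\widetilde W:=(\pi_{\mkA}\otimes\pi_{\mkB})(V)^{*}W$, which is again unitary, and using that $\pi_{\mkA}\otimes\pi_{\mkB}$ is a homomorphism with $(\pi_{\mkA}\otimes\pi_{\mkB})(V)$ unitary, I obtain $\widetilde W\,[(\pi_{\mkA}\circ\alpha_{\mkA})\otimes(\pi_{\mkB}\circ\alpha_{\mkB})](x)\,\widetilde W^{*}=(\pi_{\mkA}\otimes\pi_{\mkB})(x)$ for all $x\in\mkA\otimes\mkB$.

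Now comes the disentangling step. Putting $x=\unit\otimes A$ with $A\in\mkB$ (and using $\alpha_{\mkA}(\unit)=\unit$) gives
\[
\widetilde W\,\big(\unit\otimes\pi_{\mkB}(\alpha_{\mkB}(A))\big)\,\widetilde W^{*}=\unit\otimes\pi_{\mkB}(A),\qquad A\in\mkB .
\]
Since $\{\pi_{\mkB}(\alpha_{\mkB}(A)):A\in\mkB\}=\pi_{\mkB}(\mkB)$, this says $\Ad\widetilde W$ maps $\unit\otimes\pi_{\mkB}(\mkB)$ onto $\unit\otimes\pi_{\mkB}(\mkB)$; taking double commutants and using $\pi_{\mkB}(\mkB)''=\caB(\caH_{\mkB})$ together with the identity $(\unit\otimes M)''=\unit\otimes M''$ for von Neumann algebras, $\Ad\widetilde W$ restricts to a $*$-automorphism of $\unit\otimes\caB(\caH_{\mkB})\cong\caB(\caH_{\mkB})$. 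By Theorem~\ref{wigner} this automorphism is $\Ad v$ for some unitary $v$ on $\caH_{\mkB}$, i.e. $\widetilde W(\unit\otimes y)\widetilde W^{*}=\unit\otimes vyv^{*}$ for all $y\in\caB(\caH_{\mkB})$. Comparing this at $y=\pi_{\mkB}(\alpha_{\mkB}(A))$ with the displayed identity yields $\unit\otimes\pi_{\mkB}(A)=\unit\otimes v\,\pi_{\mkB}(\alpha_{\mkB}(A))\,v^{*}$, hence $\pi_{\mkB}\circ\alpha_{\mkB}(A)=v^{*}\pi_{\mkB}(A)v=\Ad(v^{*})\big(\pi_{\mkB}(A)\big)$; so $u_{\mkB}:=v^{*}$ is the required unitary.

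The main obstacle is the last paragraph: one has to see that the global intertwiner $\widetilde W$, which a priori only mixes the two tensor factors, actually induces an honest $*$-automorphism of $\caB(\caH_{\mkB})$ on the second leg. This is precisely where purity of $\omega_{\mkB}$ is indispensable --- it forces $\pi_{\mkB}(\mkB)$ to be weakly dense in $\caB(\caH_{\mkB})$, so that normalization of $\unit\otimes\pi_{\mkB}(\mkB)$ by $\widetilde W$ upgrades to normalization of the whole type $I$ factor $\unit\otimes\caB(\caH_{\mkB})$, which in turn lets Theorem~\ref{wigner} produce a unitary on $\caH_{\mkB}$ itself rather than merely an abstract automorphism. (By symmetry the same argument produces a unitary on $\caH_{\mkA}$ implementing $\alpha_{\mkA}$, although this is not needed here.)
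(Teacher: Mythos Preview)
Your proof is correct and follows essentially the same route as the paper's: obtain a single unitary on $\caH_{\mkA}\otimes\caH_{\mkB}$ intertwining $\pi_{\mkA}\otimes\pi_{\mkB}$ with $(\pi_{\mkA}\otimes\pi_{\mkB})\circ(\alpha_{\mkA}\otimes\alpha_{\mkB})$, observe that it normalizes $\unit\otimes\pi_{\mkB}(\mkB)''=\unit\otimes\caB(\caH_{\mkB})$ by purity, and apply Theorem~\ref{wigner} to extract $u_{\mkB}$. The only cosmetic difference is that the paper produces the global unitary by invoking Theorem~\ref{gnsthm} (implementation of an invariance automorphism) whereas you invoke uniqueness of the GNS triple directly; your $\widetilde W$ is simply the adjoint of the paper's $w$, and the remaining steps match line for line.
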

\begin{proof}
First, note that $(\caH_{\mkA}\otimes\caH_{\mkB},
\pi_{\mkA}\otimes\pi_{\mkB}, \Omega_{\mkA}\otimes\Omega_{\mkB})$
is a GNS representation of $\omega_{\mkA}\otimes \omega_{\mkB}$.
By the assumption,
$\omega_{\mkA}\otimes \omega_{\mkB}$ is invariant under
\[
\lmk \alpha_{\mkA}\otimes\alpha_{\mkB}\rmk\circ \Ad(V^*).
\]
Then, from Theorem \ref{gnsthm}, there is a unitary $w$
on $\caH_{\mkA}\otimes\caH_{\mkB}$ such that
\begin{align*}
\Ad(w) \lmk \pi_{\mkA}\otimes\pi_{\mkB}\rmk
=\lmk \pi_{\mkA}\otimes\pi_{\mkB}\rmk \lmk \alpha_{\mkA}\otimes\alpha_{\mkB}\rmk.
\end{align*}
(We absorbed $\Ad V^*$ part to $\Ad(w)$.)
From this, we get
\[
\Ad(w)\lmk \unit_{\caH_{\mkA}} \otimes \pi_{\mkB}(\mkB)''\rmk
\subset \unit_{\caH_{\mkA}} \otimes \pi_{\mkB}(\mkB)'',\quad
\Ad(w^*)\lmk \unit_{\caH_{\mkA}} \otimes \pi_{\mkB}(\mkB)''\rmk
\subset \unit_{\caH_{\mkA}} \otimes \pi_{\mkB}(\mkB)''.
\]
Hence we obtain
\[
\Ad(w)\lmk \unit_{\caH_{\mkA}} \otimes \pi_{\mkB}(\mkB)''\rmk
=\unit_{\caH_{\mkA}} \otimes \pi_{\mkB}(\mkB)''.
\]
Now, because $\omega_{\mkB}$ is a pure state, we have
$\pi_{\mkB}(\mkB)''=\caB(\caH_{\mkB})$ by
Theorem \ref{purethm}.
This means there is a $*$-automorphism $\tau$ on $\caB(\caH_{\mkB})$
such that
\[
\Ad(w)\lmk\unit_{\caH_{\mkA}}\otimes x\rmk
=\unit_{\caH_{\mkA}}\otimes\tau (x),\quad x\in \caB(\caH_{\mkB}).
\]
By Theorem \ref{wigner}, a $*$-isomorphism $\tau:  \caB(\caH_{\mkB})\to  \caB(\caH_{\mkB})$
is implemented by a unitary $u_\mkB\in \caH_{\mkB}$, i.e.
\[
\tau(x)=\Ad(u_\mkB)(x),\quad x\in \caB(\caH_{\mkB}).
\]
Hence from all the equations above, we obtain
\begin{align*}
\begin{split}
&\lmk \unit_{\caH_{\mkA}}\otimes\pi_{\mkB}\alpha_{\mkB}(A)\rmk
=\Ad(w) \lmk \unit_{\caH_{\mkA}}\otimes\pi_{\mkB}(A)\rmk\\
&=\unit_{\caH_{\mkA}}\otimes\tau\pi_{\mkB}(A)
=\unit_{\caH_{\mkA}}\otimes \Ad(u_\mkB)\pi_{\mkB}(A),\quad A\in\mkB.
\end{split}
\end{align*}
This proves the Lemma.
\end{proof}
The following Lemma is sometimes useful.
\begin{lem}\label{ih}
Let $\mkA$, $\mkB$ be $C^*$-algebras.
Let $\omega$ be a pure state on $\mkA\otimes\mkB$.
Then the GNS representation $\pi_{\omega\vert_{\mkA}}$ of $\omega\vert_{\mkA}$ (the restriction of $\omega$ to
$\mkA\simeq\mkA\otimes \bbC\unit_{\mkB}$ ), 
and the restriction of the GNS representation $\pi_\omega$ of $\omega$ to $\mkA$,
 $\pi_{\omega}\vert_{\mkA}$
 are quasi-equivalent.
\end{lem}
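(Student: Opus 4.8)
The plan is to reduce the claim to Theorem~\ref{facthm}: I will show that the von Neumann algebra generated by the restricted representation $\pi_{\omega}\vert_{\mkA}$ is a \emph{factor}, and then recognize $\pi_{\omega\vert_{\mkA}}$ as a subrepresentation of $\pi_{\omega}\vert_{\mkA}$. Write $(\caH_{\omega},\pi_{\omega},\Omega_{\omega})$ for the GNS triple of $\omega$, and set $\caM:=\pi_{\omega}\lmk\mkA\otimes\bbC\unit_{\mkB}\rmk''$ and $\caN:=\pi_{\omega}\lmk\bbC\unit_{\mkA}\otimes\mkB\rmk''$.

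Since $\mkA\otimes\bbC\unit_{\mkB}$ and $\bbC\unit_{\mkA}\otimes\mkB$ commute elementwise in $\mkA\otimes\mkB$, their images under $\pi_{\omega}$ commute, so $\caN\subseteq\caM'$. On the other hand $\pi_{\omega}\lmk\mkA\otimes\mkB\rmk$ is generated as a $C^{*}$-algebra by the products $\pi_{\omega}(a\otimes\unit_{\mkB})\,\pi_{\omega}(\unit_{\mkA}\otimes b)$, hence is contained in the von Neumann algebra $\caM\vee\caN$; taking double commutants and using that $\omega$ is pure, so that $\pi_{\omega}(\mkA\otimes\mkB)''=\caB(\caH_{\omega})$ by Theorem~\ref{purethm}, we obtain $\caM\vee\caN=\caB(\caH_{\omega})$. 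Now I claim $\caM$ is a factor: if $z\in\caZ(\caM)=\caM\cap\caM'$, then $z$ commutes with $\caM$ (as $z\in\caM'$) and $z$ commutes with $\caN$ (as $z\in\caM$ and $\caN\subseteq\caM'$), hence $z$ commutes with $\caM\vee\caN=\caB(\caH_{\omega})$, so $z\in\bbC\unit_{\caH_{\omega}}$. Thus $\caM$ has trivial center.

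To finish, let $\caK:=\overline{\pi_{\omega}\lmk\mkA\otimes\bbC\unit_{\mkB}\rmk\Omega_{\omega}}$, a closed subspace invariant under $\pi_{\omega}\vert_{\mkA}$. Restricting $\pi_{\omega}(\,\cdot\,\otimes\unit_{\mkB})$ to $\caK$ gives a representation of $\mkA$ with cyclic vector $\Omega_{\omega}$ and $\braket{\Omega_{\omega}}{\pi_{\omega}(a\otimes\unit_{\mkB})\Omega_{\omega}}=\omega(a\otimes\unit_{\mkB})=\omega\vert_{\mkA}(a)$, so by the uniqueness part of the GNS construction it is unitarily equivalent, hence quasi-equivalent, to $\pi_{\omega\vert_{\mkA}}$. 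Since $\caM=\lmk\pi_{\omega}\vert_{\mkA}\rmk(\mkA)''$ is a factor, Theorem~\ref{facthm} gives that this subrepresentation of $\pi_{\omega}\vert_{\mkA}$ on $\caK$ is quasi-equivalent to $\pi_{\omega}\vert_{\mkA}$ itself; transitivity of quasi-equivalence then yields the lemma.

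I expect the only real care to be needed in checking that the compression to $\caK$ is honestly a GNS triple for $\omega\vert_{\mkA}$, and in the norm-density bookkeeping behind $\caM\vee\caN=\caB(\caH_{\omega})$ (that $\pi_\omega$ is contractive and $\caM\vee\caN$ norm-closed). The factor argument itself is immediate once the two commuting subalgebras are in hand, and no quantitative estimates enter; the essential point is simply that purity of $\omega$ forces $\caM$ and $\caN$ to be each other's ``co-generators'' of $\caB(\caH_\omega)$, which pins down the center of $\caM$.
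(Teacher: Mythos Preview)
Your proof is correct and follows essentially the same approach as the paper: show $\pi_\omega(\mkA)''$ is a factor using purity of $\omega$, then identify $\pi_{\omega\vert_{\mkA}}$ with the cyclic subrepresentation on $\caK=\overline{\pi_\omega(\mkA)\Omega_\omega}$ and invoke Theorem~\ref{facthm}. The only cosmetic difference is in the factor step: the paper writes the one-line inclusion $\caZ(\caM)\subset\pi_\omega(\mkB)'\cap\pi_\omega(\mkA)'=\pi_\omega(\mkA\otimes\mkB)'=\bbC\unit$ directly, whereas you route the same observation through $\caM\vee\caN=\caB(\caH_\omega)$; these are the same argument.
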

\begin{proof}
Let $(\caH_\omega, \pi_\omega,\Omega_\omega)$ be a GNS representation of $\omega$.
First we show that $\pi_{\omega}\lmk\mkA \rmk ''$
is a factor.
To see this, note that
\begin{align}
\pi_{\omega}\lmk\mkA \rmk''\cap \pi_{\omega}\lmk\mkA \rmk'
\subset \pi_{\omega}\lmk\mkB \rmk'\cap  \pi_{\omega}\lmk\mkA \rmk'
=\pi_\omega\lmk \mkA\otimes\mkB\rmk'
=\bbC\unit,
\end{align}
because $\omega$ is pure. (Recall Theorem \ref{purethm}.)
Hence $\pi_{\omega}\lmk\mkA \rmk ''$
is a factor.
Let 
\begin{align*}
&\caK:=\overline{\pi_\omega(\mkA)\Omega_\omega},\quad 
p: \;\; \text{projection onto } \caK,\quad \pi:=\pi_\omega (\cdot )p\vert_{\mkA}
\end{align*}
The projection $p$ belongs to $\pi_{\omega}\lmk\mkA \rmk '$, and
 one can check that 
$(\caK, \pi,\Omega_\omega)$ is a GNS representation of $\omega\vert_{\mkA}$.
Note that $\pi$ is a sub-representation of $\pi_\omega\vert_{\mkA}$.
Because $\pi_{\omega}\lmk\mkA \rmk ''$
is a factor,
 Theorem \ref{facthm} guarantees that $\pi$ and  $\pi_{\omega}\vert_{\mkA}$ are quasi-equvalent.
\end{proof}

\subsection{Quantum spin systems}\label{quantumspinsec}

Let $d\in \bbN$ and $\nu\in\bbN$ be fixed.
Physically, $\frac{d-1}2$ denotes 
the size of on-site spin (spin quantum number)  and 
$\nu$ denotes the spacial dimension.
We denote by ${\mathfrak S}_{\bbZ^\nu}$, the set of all finite subsets of $\bbZ^{\nu}$.
For each finite subset $\Lambda\in {\mathfrak S}_{\bbZ^\nu}$, 
we associate a finite dimensional $C^{*}$-algebra
\begin{align*}
\caA_{\Lambda}:=\bigotimes_{\Lambda}\Mat_{d}.
\end{align*}
Here, $\Mat_{d}$ is the algebra of $d\times d$ -matrices.
If $\Lambda_{1},\Lambda_{2}\in {\mathfrak S}_{\bbZ^\nu}$
satisfy an inclusion relation $\Lambda_{1}\subset\Lambda_{2}$,
then we may embed $\caA_{\Lambda_{1}}$ into $\caA_{\Lambda_{2}}$ as
\begin{align*}
\iota_{\Lambda_{1},\Lambda_{2}}: \caA_{\Lambda_{1}}\to \caA_{\Lambda_{2}},\quad
\iota_{\Lambda_{1},\Lambda_{2}}(A):=A\otimes\unit_{\Lambda_{2}\setminus \Lambda_{1}},\quad
A\in \caA_{\Lambda_{1}}.
\end{align*}
Clearly, we have $\iota_{\Lambda_{2},\Lambda_{3}}\iota_{\Lambda_{1},\Lambda_{2}}
=\iota_{\Lambda_{1},\Lambda_{3}}$ if $\Lambda_{1}\subset\Lambda_{2}\subset\Lambda_{3}$.
Then the $\nu$-dimensional quantum spin system $\caA_{\bbZ^\nu}$ is the 
$C^{*}$-inductive limit of this inductive net $(\iota_{\Lambda_{1}, \Lambda_{2}}, \caA_{\Lambda})$. Namely,
$\caA_{\bbZ^\nu}$ is the unique $C^{*}$-algebra satisfying the following conditions.
\begin{description}
\item[(i)]
There is a net of $*$-homomorphisms 
$\mu_{\Lambda} :\caA_{\Lambda}\to \caA_{\bbZ^\nu}$, with $\Lambda\in {\mathfrak S}_{\bbZ^\nu}$,
such that $\mu_{\Lambda_{1}}=\mu_{\Lambda_{2}}\circ \iota_{\Lambda_{1},\Lambda_{2}}$
if
$\Lambda_{1}\subset\Lambda_{2}$, and 
\item[(ii)]
If $(\caB, \{ \lambda_{\Lambda}\}_{\Lambda\in {\mathfrak S}_{\bbZ^\nu}})$ is a system,
where $\caB$ is a $C^{*}$-algebra, $\lambda_{\Lambda}: \caA_{\Lambda}\to \caB$
is a $*$-homomorphism for each $\Lambda\in {\mathfrak S}_{\bbZ^\nu}$, and where $\lambda_{\Lambda_{1}}=\lambda_{\Lambda_{2}}\circ \iota_{\Lambda_{1},\Lambda_{2}}$ for any
$\Lambda_{1},\Lambda_{2}\in {\mathfrak S}_{\bbZ^\nu}$
with $\Lambda_{1}\subset\Lambda_{2}$,
then there is one and only one $*$-homomorphism $\lambda: \caA_{\bbZ^\nu}\to \caB$
such that $\lambda_{\Lambda}=\lambda\circ \mu_{\Lambda}$.
\end{description}
The subset $\cup_{\Lambda\in {\mathfrak S}_{\bbZ^\nu}}\mu_{\Lambda}\lmk \caA_{\Lambda}\rmk$,
which is dense in
$\caA_{\bbZ^\nu}$, is called a local algebra, and we denote it by $\caA_{\rm loc}$.
Because $\mu_\Lambda$ is injective, we identify
$\caA_\Lambda$ and $\mu_{\Lambda}\lmk \caA_{\Lambda}\rmk$
and omit the symbol $\mu_{\Lambda}$
For each infinite subset $\Gamma$, we may define $\caA_\Gamma$
in exactly the same manner.
The $C^*$-algebra $\caA_\Gamma$ can be regarded as a $C^*$-subalgebra
of $\caA_{\bbZ^\nu}$ that we will regard it so.
We say an element $A$ has a support in $\Gamma$ if
it belongs to $\caA_\Gamma$.
If an automorphism $\alpha$ acts trivially on $\caA_{\Gamma^c}$
for some $\Gamma\subset \bbZ^{\nu}$,
we say that $\alpha$ has a support in $\Gamma$.
In the analysis of SPT-phases, this support becomes important.

Corresponding to the space translation of $\bbZ^\nu$,
we may define a $\bbZ^\nu$-action $\tau: \bbZ^{\nu}\to \Aut\lmk \caA_{\bbZ^\nu}\rmk$
 on $\caA_{\bbZ^\nu}$
by
\[
\tau_{\bf y}\lmk A^{(\bf x)} \rmk
= \lmk A^{(\bf x+y)} \rmk
\]
for $A\in\Mat_d$ and ${\bf x,y}\in\bbZ^\nu$.
Here $ A^{(\bf x)}$ indicates an element in
$\caA_{\bbZ^\nu}$ with $A$ in the $\bf x$-site of the tensor product of 
$\caA_{\bbZ^\nu}$ and the unit in any other site.
This $\tau$ is called the space translation on $\caA_{\bbZ^\nu}$.

In operator algebraic framework, physical models are specified with a map
called {\it interactions}.
An interaction $\Phi$ is a map 
\[
\Phi: {\mathfrak S}_{\bbZ^\nu}\to \caA_{\rm loc}
\]
satisfying
\begin{align*}
\Phi(X) = \Phi(X)^*\in \caA_{X}
\end{align*}
for all $X \in {\mathfrak S}_{\bbZ^\nu}$. 
Physically, this $\Phi(X)$ indicates an interaction term between 
spins inside of $X$.
An interaction $\Phi$ is said to be translationally invariant,
if it satisfies the following condition
\[
\tau_{\bf x}\lmk \Phi(X)\rmk
=\Phi(X+{\bf x}),\quad A\in{\mathfrak S}_{\bbZ^\nu},\quad {\bf x}\in\bbZ^\nu.
\]

The easiest type of interaction is an {\it on-site interaction}, satisfying
\begin{align}\label{onsitedef}
\Phi(X)=0,\quad\text{if}\quad |X|\neq 1.
\end{align}
It means, the only possibly non-zero interaction terms are of the form
$\Phi(\{\bf x\})$, with ${\bf x}\in\bbZ^\nu$.
In this case, all interaction terms commute with each other.

Physically, we are more interested in interactions which do have non-zero
interaction terms between different sites of $\bbZ^{\nu}$.
For example, let 
$\{S_j\}_{j=1,2,3}$ be generators of the irreducible representation of ${\mathfrak{su}(2)}$ on $\bbC^d$.
Then an interaction of $\caA_{\bbZ^\nu}$ given by 
\begin{align}\label{heisenberg}
\Phi(\{x,x+1\})=\sum_{j=1}^3\,S^{(x)}_jS^{(x+1)}_j,
\end{align}
is called antiferromagnetic Heisenberg chain and has been extensively studied.

Now, given an interaction,
we would like to define a dynamics on $\caA_{\bbZ^\nu}$ out of it.
In order for that, we need to assume that $\Phi$ is ``suitably local''.
The simplest one among such good conditions is {\it uniformly bounded and finite range}.
An interaction is 
of finite range if there exists  an $m\in {\mathbb N}$ such that
$\Phi(X)=0$ for $X$ with a diameter larger than $m$.
It is  uniformly bounded if it satisfies
\begin{align*}
\sup_{X\in  {\mathfrak S}_{\bbZ^\nu}}\lV
\Phi(X)
\rV<\infty.
\end{align*}
We can extensively relax these conditions. See \cite{nsy}.

Given a good interaction, we may define a $C^*$-dynamics, i.e., strongly continuous one parameter group of
automorphisms on $\caA_{\bbZ^\nu}$.
For an interaction $\Phi$ and a finite set $\Lambda\subset {\bbZ^\nu}$, we define the local Hamiltonian
by
\begin{equation}\label{GenHamiltonian}
\lmk H_{\Phi}\rmk_{\Lambda}:=\sum_{X\subset{\Lambda}}\Phi(X).
\end{equation}
For a uniformly bounded finite range interaction $\Phi$, the limit
\begin{equation}\label{dyn}
\alpha_t^{\Phi}(A)=\lim_{\Lambda\to\bbZ^\nu} e^{it(H_{\Phi})_{\Lambda}} Ae^{-it(H_{\Phi})_{\Lambda}},\quad
t\in \bbR,\quad A\in\caA_{\bbZ^\nu}
\end{equation}
exists and defines a dynamics $\alpha^\Phi$ on $\caA_{\bbZ^\nu}$.
The reason for us to consider the dynamics $\alpha^{\Phi}$
instead of Hamiltonians, is because there is no
mathematically meaningful limit of local Hamiltonians $\lmk H_{\Phi}\rmk_{\Lambda}$
as $\Lambda\to\bbZ^\nu$ in (\ref{GenHamiltonian}).
But
the limit (\ref{dyn}) makes sense.
For the same reason, a ground state is defined in terms of the dynamics $\alpha_{\Phi}$.
\begin{defn}\label{groundstatedef}
Let $\delta_\Phi$ be the generator of $\alpha^\Phi$.
A state $\omega$ on $\caA_{\bbZ^\nu}$ is called an $\alpha^\Phi$-ground state
if the inequality
\[
-i\omega\lmk A^*{\delta_\Phi}\lmk A\rmk\rmk\ge 0
\]
holds
for any element $A$ in the domain $\caD({\delta_\Phi})$ of ${\delta_\Phi}$.
\end{defn}
We occasionally say ground state of $\Phi$ instead of $\alpha^{\Phi}$-ground state.

Let $(\caH_\varphi,\pi_\varphi,\Omega_\varphi)$ be the GNS triple of a $\alpha^\Phi$-ground state $\varphi$.
Then there exists a unique positive operator $H_{\varphi,\Phi}$ on $\caH_\varphi$ such that
$e^{itH_{\varphi,\Phi}}\pi_\varphi(A)\Omega_\varphi=\pi_\varphi(\tau_t^\Phi(A))\Omega_\varphi$,
for all $A\in\caA$ and $t\in\mathbb R$.
We call this $H_{\varphi,\Phi}$ the bulk Hamiltonian associated with $\varphi$.
Note that $\Omega_\varphi$ is an eigenvector of $H_{\varphi,\Phi}$ with eigenvalue $0$. See \cite{BR2} for the general theory.

Let us consider the corresponding condition for a finite quantum system $\Mat_n$ with dynamics
given by a Hamiltonian $H$ 
\begin{align}\label{finitealpha}
\alpha_t(A)=e^{it H} A e^{-itH},\quad t\in{\mathbb R}, \quad A\in{\Mat_n}.
\end{align}
Let $P$ be the spectral projection of $H$ corresponding to the lowest eigenvalue $E_{0}$.
Recall that 
a state $\omega$ on $\Mat_{n}$ is given by a density matrix $\rho$ 
with the formula $\omega(A)=\Tr\rho A$.
Let $s(\omega)$ be the range projection of this $\rho$.
Then one can check that $\omega$
 is an $\alpha$-ground state if and only if
the support $s(\omega)$ of $\omega$ satisfies $s(\omega)\le P$.
Recall that this is the very definition of ground state, in finite quantum mechanics.
In fact note that the generator $\delta$ of $\alpha$ in (\ref{finitealpha}) is
$\delta(A)=i[H,A]$.
If $s(\omega)\le P$, then we have
\begin{align*}
-i\omega\lmk A^*{\delta}\lmk A\rmk\rmk=
\omega\lmk A^{*}(H-E_{0})A\rmk\ge 0,\quad A\in\Mat_{n}.
\end{align*}
Conversely, suppose that $-i\omega\lmk A^*{\delta}\lmk A\rmk\rmk\ge 0$
for all $A\in\Mat_{n}$. For unit eigenvectors $\xi,\eta$ of $H$ with $H\xi=E_{0}\xi$, $H\eta=E\eta$, for $E>E_{0}$,
set $A\in\Mat_{n}$ to be a matrix satisfying $A\zeta=\braket{\eta}{\zeta}\xi$ for any $\zeta\in\bbC^{n}$.
Substituting this $A$, we get 
\begin{align*}
0\le -i\omega\lmk A^*{\delta}\lmk A\rmk\rmk
=\lmk E_{0}-E\rmk\braket{\eta}{\rho\eta}
\end{align*}
Because $E_{0}-E<0$, this means that $\braket{\eta}{\rho\eta}=0$ for any such $\eta$.
Hence we conclude that $\rho P=\rho$, namely, $s(\omega)\le P$.

Therefore, our definition in operator algebraic framework can be regarded
as a generalization of this physical definition of a ground state.

Note, in general, there can be many states satisfying the condition in Definition \ref{groundstatedef}.
Namely, the ground state does not need to be unique.
However, what we are interested in this lecture is the situation that it is unique.
Furthermore, we require the existence of gap.
\begin{defn}
Suppose that there is a unique $\alpha^\Phi$-ground state $\omega_\Phi$.
Then we say $\Phi$ has a unique gapped ground state in the bulk if
there exists a $\gamma>0$ such that
\begin{align}\label{gapeq1}
-i\omega_\Phi\lmk A^*{\delta_\Phi}\lmk A\rmk\rmk\ge \gamma
\omega_\Phi(A^*A),\quad \text{for all }A\in \caD({\delta_\Phi}) \;\text{with} \; \omega_\Phi(A)=0.
\end{align}
\end{defn}
Let us consider the corresponding condition for a finite system $\Mat_n$ with dynamics
(\ref{finitealpha}).
Then the above condition corresponds to the situation
that ``the lowest eigenvalue of $H$ is non-degenerated and 
the difference between the lowest eigenvalue and the second lowest eigenvalue is at least $\gamma$''.

Let $\Lambda_N:=[-N,N]^\nu\cap \bbZ^\nu$.
Let $\omega_{N,\Phi}$ be a ground state of the local Hamiltonian
$ \lmk H_{\Phi}\rmk_{\Lambda_N}$ on $\caA_{\Lambda_N}$.
Then, any wk$*$-accumulation point of $\omega_{N,\Phi}$
for $N\to\infty$ is an $\alpha^\Phi$-ground state.
For this reason, there is always an $\alpha^\Phi$-ground state.
Let $\sigma\lmk \lmk H_{\Phi}\rmk_{\Lambda_N}\rmk$
be the spectrum of $ \lmk H_{\Phi}\rmk_{\Lambda_N}$,
and $E_N$ its infimum.
Suppose that there is a $\gamma>0$
such that 
\[
\sigma\lmk \lmk H_{\Phi}\rmk_{\Lambda_N}\rmk
\setminus \{E_N\}\subset [E_N+\gamma,\infty)
\]
holds
for any $N\in\bbN$.
If furthermore the $\alpha^{\Phi}$-ground state $\omega_{\Phi}$ is unique,
then we have
\[
\omega_{\Phi}={\mathrm {wk}}*-\lim_{N\to\infty} \omega_{N,\Phi}
\]
and it is a unique gapped ground state with gap at least $\gamma$.

One remarkable property of unique gapped ground state is the exponential decay of correlation functions.
\begin{thm}[\cite{hk} \cite{ns06} \cite{ns09}]\label{expdec}
Let $\Phi$ be a uniformly bounded finite
range interaction with a {unique gapped ground state} $\omega_{\Phi}$.
Then the {correlation functions of $\omega_\Phi$
decay exponentially fast} : there exists $\mu>0$ and a constant
$C>0$ such that for all $A\in\caA_X$,
$B\in \caA_Y$, with finite $X,Y\subset\bbZ^\nu$
\[
\lv
\omega_{\Phi}\lmk A B\rmk
-\omega_{\Phi}(A)\omega_{\Phi}(B)
\rv\le C \lV A\rV\lV B\rV |X|{e^{-\mu d(X,Y)}}.
\]
\end{thm}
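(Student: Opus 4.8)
The plan is to combine the spectral gap of the bulk Hamiltonian with a Lieb--Robinson bound for $\alpha^{\Phi}$: the gap is used to reconstruct the correlator from a frequency-filtered version of $B$, and the Lieb--Robinson bound to localize that filter in space. Throughout, $\eta:=e^{-\gamma^{2}/(2\delta^{2})}\lV A\rV\lV B\rV$ denotes a running error term, with $\delta\in(0,\gamma)$ a scale to be fixed at the end.

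I would first reduce. The bound is trivial when $d(X,Y)=0$ (the left side is $\le 2\lV A\rV\lV B\rV$), so assume $X\cap Y=\varnothing$; and replacing $A$ by $A-\omega_{\Phi}(A)\unit\in\caA_{X}$ — which costs a factor $2$ in $\lV A\rV$ and leaves $d(X,Y)$ untouched — I may assume $\omega_{\Phi}(A)=0$, so that what must be bounded is $\omega_{\Phi}(AB)$. Then pass to the GNS triple $(\caH,\pi,\Omega)$ of $\omega_{\Phi}$ and the associated bulk Hamiltonian $H\ge0$, which satisfies $e^{itH}\pi(C)\Omega=\pi(\alpha_{t}^{\Phi}(C))\Omega$ for $C\in\caA_{\bbZ^{\nu}}$ and $H\Omega=0$. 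Uniqueness of the ground state together with the gap hypothesis (\ref{gapeq1}) gives $\sigma(H)\setminus\{0\}\subset[\gamma,\infty)$; write $E$ for the spectral measure of $H$ and $P_{0}=\ketbra{\Omega}{\Omega}$. Since $\omega_{\Phi}(A)=0$, the vectors $\pi(A)\Omega$ and $\pi(A^{*})\Omega$ both lie in $\Ran(\unit-P_{0})=\Ran E([\gamma,\infty))$ — this is the only place the centering of $A$ is used.

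The two ingredients I would invoke are: (i) \emph{Gap.} For $C\in\caA_{\rm loc}$ the function $t\mapsto\omega_{\Phi}(A\,\alpha_{t}^{\Phi}(C))=\braket{\pi(A^{*})\Omega}{e^{itH}\pi(C)\Omega}=\int_{[\gamma,\infty)}e^{it\lambda}\,d\braket{\pi(A^{*})\Omega}{E(\lambda)\pi(C)\Omega}$ has its spectral measure supported in $[\gamma,\infty)$ (because $\pi(A^{*})\Omega\perp\Omega$) with total variation $\le\lV A\rV\lV C\rV$; equivalently it extends analytically to $\{\Im t>0\}$ with modulus $\le e^{-\gamma\,\Im t}\lV A\rV\lV C\rV$, and at $t=0$ it equals $\omega_{\Phi}(AC)$; the same holds with $A^{*}$ in place of $A$. (ii) \emph{Lieb--Robinson.} Since $\Phi$ is uniformly bounded of finite range there are $v,\mu_{0},C_{0}>0$ depending only on $\Phi$ with $\lV[\alpha_{t}^{\Phi}(A),B]\rV\le C_{0}\lV A\rV\lV B\rV\,|X|\,e^{-\mu_{0}(d(X,Y)-v|t|)}$ for all $t$ (and $\le 2\lV A\rV\lV B\rV$ always), and moreover $[\alpha_{t}^{\Phi}(A),B]$ vanishes to order $\gtrsim d(X,Y)$ as $t\to0$; this is classical (cf.\ \cite{nsy}).

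The heart of the argument is to glue these together. Take $g_{\delta}(\lambda)=\tfrac12\big(1+\operatorname{erf}(\lambda/(\sqrt{2}\,\delta))\big)$, a smoothing of $\unit_{(0,\infty)}$ with $\sup_{[\gamma,\infty)}|1-g_{\delta}|\le e^{-\gamma^{2}/(2\delta^{2})}$ and $\sup_{(-\infty,-\gamma]}|g_{\delta}|\le e^{-\gamma^{2}/(2\delta^{2})}$; its inverse Fourier transform $w_{\delta}$ is a point mass of weight $\tfrac12$ at the origin plus the function $\frac{e^{-\delta^{2}t^{2}/2}}{2\pi it}$ (principal value at $0$), the point mass — wherever it occurs below — multiplying $[A,\alpha_{0}^{\Phi}(B)]=[A,B]=0$ and dropping out. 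Because $\pi(B)\Omega$ differs from $E([\gamma,\infty))\pi(B)\Omega$ only by a multiple of $\Omega$, and $\Omega\perp\pi(A^{*})\Omega$, functional calculus would give $\omega_{\Phi}(AB)=\braket{\pi(A^{*})\Omega}{g_{\delta}(H)\pi(B)\Omega}+O(\eta)=\int_{\bbR}w_{\delta}(t)\,\omega_{\Phi}(A\,\alpha_{t}^{\Phi}(B))\,dt+O(\eta)$; on the other hand, applying (i) to $A^{*}$ — so that $\omega_{\Phi}(\alpha_{t}^{\Phi}(B)A)=\overline{\omega_{\Phi}(A^{*}\alpha_{t}^{\Phi}(B^{*}))}$ has spectral measure in $[\gamma,\infty)$ — and using $\sup_{(-\infty,-\gamma]}|g_{\delta}|\le e^{-\gamma^{2}/(2\delta^{2})}$ would give $\int_{\bbR}w_{\delta}(t)\,\omega_{\Phi}(\alpha_{t}^{\Phi}(B)A)\,dt=O(\eta)$. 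Subtracting the two (the point-mass parts cancel, and $[A,B]=0$) leaves the key identity
\[
\omega_{\Phi}(AB)=\int_{\bbR}\frac{e^{-\delta^{2}t^{2}/2}}{2\pi it}\,\omega_{\Phi}\!\big([A,\alpha_{t}^{\Phi}(B)]\big)\,dt+O(\eta),
\]
the integrand being harmless at $t=0$ by the vanishing in (ii). Now I would bound the integral with (ii), splitting at $|t|=d(X,Y)/v$: the part $|t|<d(X,Y)/v$ is $\lesssim\lV A\rV\lV B\rV|X|\,e^{-\mu_{0}d(X,Y)}\int e^{\mu_{0}v|t|-\delta^{2}t^{2}/2}\,dt\lesssim\lV A\rV\lV B\rV|X|\,e^{-\mu_{0}d(X,Y)+\mu_{0}^{2}v^{2}/(2\delta^{2})}$, and the part $|t|\ge d(X,Y)/v$ is $\lesssim\lV A\rV\lV B\rV\,e^{-\delta^{2}d(X,Y)^{2}/(2v^{2})}$. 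Combined with $O(\eta)$ this yields $|\omega_{\Phi}(AB)|\lesssim\lV A\rV\lV B\rV|X|\big(e^{-\gamma^{2}/(2\delta^{2})}+e^{-\mu_{0}d(X,Y)+\mu_{0}^{2}v^{2}/(2\delta^{2})}+e^{-\delta^{2}d(X,Y)^{2}/(2v^{2})}\big)$; choosing $\delta^{2}=c_{0}/d(X,Y)$ with $c_{0}>\mu_{0}v^{2}/2$ makes each term $\lesssim e^{-\mu\,d(X,Y)}$ for a suitable $\mu=\mu(\gamma,v,\mu_{0})>0$, and undoing the centering of $A$ gives the stated bound.

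The hard part is exactly the gluing step: Lieb--Robinson controls only commutators, so one must first turn the static correlator $\omega_{\Phi}(AB)$ into one, and the gap — the fact that $(\unit-P_{0})\pi(B)\Omega$ carries energy $\ge\gamma$ — is what makes a time-smearing of $B$ reconstruct $\omega_{\Phi}(AB)$ up to exponentially small error. The tension is that the same weight $w_{\delta}$ must simultaneously approximate the spectral projector $E([\gamma,\infty))$ well (wanting $\delta$ small) and decay fast enough — genuinely Gaussian, not merely sub-exponential — to be paired with the factor $e^{\mu_{0}v|t|}$ and with the crude tail bound (wanting $\delta$ not too small); the optimal balance is what ties the correlation length to $v/\gamma$. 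The remaining points — the Lieb--Robinson bound itself for uniformly bounded finite-range $\Phi$, the functional-calculus manipulations with the distributional kernel $w_{\delta}$, and the $t\to0$ behaviour of the commutator — are routine, and since the statement only asserts the existence of some $\mu>0$, I would not bother optimizing the rate.
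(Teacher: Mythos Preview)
The paper does not give its own proof of this theorem: it is stated with attribution to \cite{hk}, \cite{ns06}, \cite{ns09} and used as a black box. Your proposal is essentially the standard argument from those references (closest in spirit to the Nachtergaele--Sims formulation): use a Gaussian-smoothed step function whose Fourier weight turns the gap into an approximate identity on $E([\gamma,\infty))$, subtract the two time-orderings to produce a commutator, and bound the result by Lieb--Robinson with the scale $\delta$ chosen to balance the spectral and spatial errors. So there is nothing to compare against within the paper, and your outline matches the cited proofs.

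One small remark: the claim that $[A,\alpha_t^{\Phi}(B)]$ ``vanishes to order $\gtrsim d(X,Y)$ as $t\to0$'' is stronger than you need and not quite what Lieb--Robinson gives for generic finite-range $\Phi$; what you actually use (and what holds) is that the commutator vanishes at $t=0$ and the standard bound $\lV[\alpha_t^{\Phi}(A),B]\rV\le C_0\lV A\rV\lV B\rV|X|e^{-\mu_0 d(X,Y)}(e^{\mu_0 v|t|}-1)$ supplies linear vanishing, which together with the oddness of the principal-value kernel is enough to make the integral absolutely convergent near $t=0$. With that adjustment the argument is fine.
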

This means $\omega_\Phi$ is { ``almost like a product state"}.

In the theory of SPT-phases, we are interested in interactions satisfying some given symmetry. 
In this lecture, we consider on-site finite group symmetry in
one and two-dimensional quantum spin systems
and reflection symmetry in one dimensional quantum systems.

On-site finite group symmetry is given as follows.
We fix a  finite group $G$ and a (projective) unitary representation $U$ of $G$
on $\bbC^d$.
Then there is a unique automorphism $\beta_{g}$ satisfying
\begin{align*}
\beta_g(A)=\lmk \bigotimes_{x\in\Lambda} U(g)\rmk A\lmk \bigotimes_{x\in\Lambda} U(g)^*\rmk,\;
g\in G,\; \Lambda\in{\mathfrak S}_{\bbZ^\nu},\; A\in\caA_{\Lambda}.
\end{align*}
Clearly, this gives an action of $G$ on $\caA_{\bbZ^\nu}$, i.e., $\beta_{g}\beta_{h}=\beta_{gh}$
for $g,h\in G$.
We call this action of $G$, an on-site symmetry given by $G$ and $U$.
Likewise, for each subset $\Gamma\subset\bbZ^\nu$,
we may define an action $\beta_g^\Gamma$.
We say an interaction $\Phi$ is $\beta$-invariant
if $\beta_g(\Phi(X))=\Phi(X)$
for all $X\in {\mathfrak S}_{\bbZ^\nu}$ and $g\in G$.
For a ground state $\varphi$ of a $\beta$-invariant interaction $\Phi$, one can check that
$\varphi\circ\beta_{g}$ is also a ground state of $\Phi$.
Therefore, if a $\beta$-invariant interaction $\Phi$
has a unique ground state $\omega_{\Phi}$, the ground state is
$\beta$-invariant, $\omega_{\Phi}\circ\beta_{g}=\omega_{\Phi}$.

For example, let $U$ be a projective representation of
$G=\bbZ_2\times\bbZ_2$ on $\bbC^d$, given by
\begin{align*}
&U\lmk [1],[0] \rmk=e^{i\pi \hS_1},\quad U\lmk [0],[1] \rmk=e^{i\pi \hS_2},\quad U\lmk [1],[1] \rmk=e^{i\pi \hS_3}
\end{align*}
Here, $\{S_j\}_{j=1,2,3}$ are generators of the irreducible representation of ${\mathfrak{su}(2)}$ on $\bbC^d$.
If $d$ is odd, this gives a genuine representation while 
if $d$ is even, it is a projective representation.
One can check that antiferromagnetic Heisenberg chain (\ref{heisenberg})
is invariant under this symmetry.

Reflection symmetry in one dimensional quantum systems is given as follows.
Let $R:\bbZ\to\bbZ$ be the reflection : $R(i):=-i-1$, $i\in\bbZ$, and $\Theta_{R}\in\Aut(\caA_{\bbZ})$
an automorphism corresponding to $R$.
More precisely, $\Theta_R$ is defined as follows.
By $Q^{(j)}$, $j\in\bbZ$, we denote the element of $\caA_{\bbZ}$
with $Q\in\Mat_d$ in the $j$-th component of the tensor product of $\caA_{\bbZ}$ and the unit in any other component.
The reflection $\Theta_{R}$ is the unique  $*$-automorphism on $\caA_{\bbZ}$ 
which satisfies
\begin{align*}
\Theta_{R}\lmk Q^{(j)}\rmk=Q^{(-j-1)},\quad \text{for all}\; Q\in\Mat_{d}\;\text{and}\; j \in\bbZ.
\end{align*}
An interaction $\Phi$ is reflection invariant
if $\Theta_{R}(\Phi(X))=\Phi(R(X))$
for all $X\in {\mathfrak S}_\bbZ$.
One can check that antiferromagnetic Heisenberg chain (\ref{heisenberg})
is invariant under this symmetry.

\subsection{SPT-phases}
We denote the set of all uniformly bounded finite range interactions with
unique gapped ground state by $\caP_{U.G.}$.
First, we introduce an equivalence relation $\sim$ in this set $\caP_{U.G.}$
as follows.:
\begin{defn}[Definition of $\sim$]\label{simdef}
Two interactions $\Phi_0,\Phi_1\in \caP_{U.G.}$ are equivalent (denoted as $ \Phi_0\sim\Phi_1$)
if there is a path 
$\caP_{U.G.}$
in $\Phi: [0,1]\to \caP_{U.G.}$
satisfying the following conditions.
\begin{description}
\item[(i)] The path connects given interactions, i.e., $\Phi(0)=\Phi_{0}$ and $\Phi(1)=\Phi_{1}$.
\item[(ii)] For each $X\in{\mathfrak S}_{\bbZ^{\nu}}$,
the matrix-valued function $[0,1]\ni s\mapsto \Phi_s(X)\in\caA_{X}$ is smooth,
\item[(iii)] The gap is uniformly bounded from below by some $\gamma>0$ along the path.
Namely, for each $s\in[0,1]$, there is a $\gamma_{s}>0$ satisfying (\ref{gapeq1})
for $\Phi=\Phi(s)$, and $\inf_{s\in [0,1]}\gamma_{s}>0$.
\item[(iv)] The path of expectation values $[0,1]\ni s\mapsto \omega_s(A)\in\bbC$ of sub-exponentially localized
elements
$A\in\caA_{\bbZ^\nu}$ with respect to the ground state $\omega_s$ is regular
(in the sense that it is differentiable and the derivative is not too large compared to
some norm of $A$)
 with respect to $s\in[0,1]$.
\end{description}
\end{defn}
For more precise description of the conditions on the path,  see subsection \ref{autosec} Assumption \ref{assump} Assumption \ref{infautoassump}.
A part of these assumptions  are for technical reason, namely we would like to apply
Theorem \ref{autoinf}.
Physically, this condition means that {\it there is no phase transition along the path}.

Among many interactions in $\caP_{U.G.}$, what corresponds to SPT is
the ones {\it with short-range entanglement}.
Recall the definition of on-site interaction (\ref{onsitedef}).
Fix some on-site interaction $\Phi_{0}$ which belongs to
$\caP_{U.G.}$.
Such an interaction can be constructed as follows.
For each $x\in\bbZ^{\nu}$, fix any unit vector $\xi_{x}\in\bbC^{d}$.
Let $p_{x}$ be an orthogonal projection on $\bbC^{d}$ onto the subspace
$\bbC \xi_{x}$.
Set $\Phi_{0}$ as
\begin{align*}
\Phi_{0}\lmk \{x\}\rmk
=\unit-p_{x},\quad x\in\bbZ^{\nu},
\end{align*}
and $\Phi_{0}(X)=0$ for any other $X\in{\mathfrak S}_{\bbZ^{\nu}}$.
Then $\Phi_{0}\in \caP_{U.G.}$
and its unique ground state is of the form
\begin{align*}
\omega_{\Phi_{0}}=\bigotimes_{x\in\bbZ^{\nu}}\rho_{\xi_{x}},
\end{align*}
where $\rho_{\xi_{x}}$ is a vector state on $\Mat_{d}$
given by $\rho_{\xi_{x}}(A)=\braket{\xi_{x}}{A\xi_{x}}$.
Note that any on-site interaction can be smoothly deformed into
an on-site interaction of this form inside of $\caP_{U.G.}$,
just by smoothly flattering the spectrum of each $\Phi(\{x\})$.
Furthermore, any on-site interactions of this form can be deformed
into each other by rotating $\xi_{x}$ smoothly.
Therefore, the following set does not depend on the choice of $\Phi_{0}$:
\begin{align*}
\caP_{U.G.}^{0}:=
\left\{
\Phi\in \caP_{U.G.}\mid 
\Phi\sim \Phi_0.
\right\}.
\end{align*}
As we see in the next subsection, this set physically represents 
interactions whose unique gapped ground state is {\it short-range entangled}.
This excludes
 exotic states with “topological order” (or long-range entanglement) such as the ground states of the Kitaev toric code model \cite{Kitaev}. In general, a pure state with a non-trivial super selection sector
 has long-range entanglement \cite{NO}.

Now we introduce the on-site  symmetry $\beta$ to the game.
We are interested in 
the set 
\begin{align}\label{pugb}
\caP_{U.G.{{\beta}}}^{0}:=
\left\{
\Phi\in \caP_{U.G.}^{0}\mid 
\beta\text{-invariant}
\right\}.
\end{align}
We introduce an equivalence relation $\sim_{\beta}$ on
this set $\caP_{U.G.{{\beta}}}^{0}$.
\begin{defn}[Definition of $\sim_{\beta}$]\label{simbdef}
Two interactions $\Phi_0,\Phi_1\in  \caP_{U.G.{{\beta}}}^{0}$ are equivalent (denoted as $ \Phi_0\sim_{\beta}\Phi_1$)
if there is a path 
in $\Phi: [0,1]\to \caP_{U.G.{{\beta}}}^{0}$
satisfying the conditions (i)-(iv) of Definition \ref{simdef}.
\end{defn}
The equivalence class of $ \caP_{U.G.{{\beta}}}^{0}$
with this classification are the SPT-phases.
\begin{defn}[Symmetry Protected Topological (SPT) phases]
Each equivalence class of ${\caP}_{U.G.{\beta}}^{0}$
with respect to $\sim_{\beta}$ is called 
Symmetry Protected Topological (SPT) phases.
\end{defn}
This notion of SPT-phases was introduced by Gu and Wen in \cite{GuWen2009}.
This is the object we would like to classify.
Analogously, we can define SPT phases $ \caP_{U.G.{{\Theta_{R}}}}^{0}$
for the reflection symmetry $\Theta_{R}$.

\subsection{Split property}
The split property, studied extensively in operator algebra has turned out to be quite  important 
in analysis of SPT phases.
In this section we introduce the  notion.

Let us consider an inclusion of two von Neumann algebras $\caN\subset \caM$.
We say the inclusion is {\it split}, if there is a type $I$ factor $\caR$
such that $\caN\subset \caR\subset \caM$.
The inclusion which shows up in this lecture is the simplest
case, that is, $\caN$ itself is a type $I$ factor.
In particular, we will use the following terminology in 
this lecture.
\begin{defn}
Let $\Gamma_1, \Gamma_2 \subset\bbZ^{\nu}$ be non-empty disjoint
 subsets of $\bbZ^\nu$.
 A pure state $\omega$ on $\caA_{\Gamma_1\cup\Gamma_2}$
satisfies the split property with respect to 
$\Gamma_1$-$\Gamma_2$, if 
the von Neumann algebra $\pi_\omega(\caA_{\Gamma_1})''$
is a type $I$ factor.
\end{defn}
It is beneficial to heed in mind the following Theorems about type $I$ factors.
\begin{thm}[Theorem 1.31 V\cite{takesaki}]\label{ubh}
Let $\caM$ be a type $I$ factor acting on a Hilbert space $\caH$.
Then there are Hilbert spaces $\caH_1$, $\caH_2$
and a unitary $U:\caH\to \caH_1\otimes\caH_2$
such that
\begin{align*}
\Ad(U)(\caM)=\caB(\caH_1)\otimes\bbC\unit_{\caH_2}.
\end{align*}
\end{thm}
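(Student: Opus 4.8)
The statement is the structure theorem for type $I$ factors, equivalently the classification of normal (faithful) representations of $\caB(\caK)$ up to unitary equivalence. The plan is to use the defining property — $\caM$ is $*$-isomorphic to $\caB(\caK)$ for some Hilbert space $\caK$ — to transport a full system of matrix units into $\caM$, and then to build $U$ explicitly out of those matrix units.

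\textit{Step 1 (matrix units).} Fix a $*$-isomorphism $\theta:\caB(\caK)\to\caM$, which is automatically normal. Choose an orthonormal basis $\{\xi_i\}_{i\in I}$ of $\caK$ and set $e_{ij}:=\theta(\ketbra{\xi_i}{\xi_j})\in\caM$. Then $e_{ij}^*=e_{ji}$, $e_{ij}e_{kl}=\delta_{jk}e_{il}$, and since $\theta$ carries the strongly convergent sum $\sum_i\ketbra{\xi_i}{\xi_i}=\unit_{\caK}$ to a strongly convergent sum, $\sum_{i\in I}e_{ii}=\unit_{\caH}$ strongly. In particular $p:=e_{11}$ is a minimal projection of $\caM$. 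Moreover the linear span of $\{e_{ij}\}$ is the normal image of the finite-rank operators of $\caB(\caK)$, which are $\sigma$-weakly dense, so $\{e_{ij}\}$ generate $\caM$ as a von Neumann algebra.

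\textit{Step 2 (building $U$).} Put $\caH_2:=e_{11}\caH$ and $\caH_1:=\ell^2(I)$ with standard basis $\{\delta_i\}_{i\in I}$, and define $V:\caH_1\otimes\caH_2\to\caH$ on simple tensors by $V(\delta_i\otimes\eta):=e_{i1}\eta$ for $\eta\in\caH_2$. Using $e_{1i}e_{j1}=\delta_{ij}e_{11}$ and $\eta,\eta'\in e_{11}\caH$ one gets $\braket{e_{i1}\eta}{e_{j1}\eta'}=\delta_{ij}\braket{\eta}{\eta'}$, so $V$ extends to an isometry; and for any $\xi\in\caH$ we have $\xi=\sum_i e_{ii}\xi=\sum_i e_{i1}(e_{1i}\xi)$ with $e_{1i}\xi\in\caH_2$, so $\Ran V$ is dense, hence (being closed) all of $\caH$. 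Thus $V$ is unitary; set $U:=V^*:\caH\to\caH_1\otimes\caH_2$.

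\textit{Step 3 (conjugation).} For $k,l\in I$ and $\eta\in\caH_2$, $\;V^*e_{kl}V(\delta_i\otimes\eta)=V^*(\delta_{li}e_{k1}\eta)=\delta_{li}(\delta_k\otimes\eta)$, i.e. $\Ad(U)(e_{kl})=\ketbra{\delta_k}{\delta_l}\otimes\unit_{\caH_2}$. Since $\Ad(U)$ is a normal $*$-isomorphism of $\caB(\caH)$ onto $\caB(\caH_1\otimes\caH_2)$, it maps the von Neumann algebra generated by $\{e_{kl}\}$, namely $\caM$, onto the von Neumann algebra generated by $\{\ketbra{\delta_k}{\delta_l}\otimes\unit_{\caH_2}\}$. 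The operators $\ketbra{\delta_k}{\delta_l}$ span the finite-rank operators of $\caB(\caH_1)$, whose weak closure is $\caB(\caH_1)$, so this last von Neumann algebra is $\caB(\caH_1)\otimes\bbC\unit_{\caH_2}$, and we conclude $\Ad(U)(\caM)=\caB(\caH_1)\otimes\bbC\unit_{\caH_2}$.

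The genuinely delicate points are the normality of $\theta$ (used so that $\sum_i e_{ii}=\unit$ when $I$ is infinite) and the final passage from the generating matrix units to the whole von Neumann algebra, which relies on normality of $\Ad(U)$ together with the double commutant theorem; the rest is bookkeeping with matrix units and a short isometry check. One could alternatively, once the tensor splitting $\caH\cong\caH_1\otimes\caH_2$ identifying $\caM$ with a copy of $\caB(\caH_1)$ is produced, invoke Theorem \ref{wigner} to pin down the conjugation, but the direct computation in Step 3 already delivers it.
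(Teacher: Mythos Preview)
The paper does not prove this theorem: it is stated with the attribution ``Theorem 1.31 V\cite{takesaki}'' and is quoted as a background fact from Takesaki's textbook, with no proof given in the paper itself. So there is no in-paper proof to compare against.

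Your argument is correct and is essentially the standard proof one finds in the literature (including Takesaki): pull back a complete system of matrix units through a $*$-isomorphism $\theta:\caB(\caK)\to\caM$, use them to build the unitary $V:\ell^2(I)\otimes e_{11}\caH\to\caH$, and then conjugate. The two points you flag as delicate --- automatic normality of a $*$-isomorphism between von Neumann algebras (so that $\sum_i e_{ii}=\unit$ and the matrix units generate $\caM$ as a von Neumann algebra), and the passage from generators to the full algebra via normality of $\Ad(U)$ --- are exactly the places that need care, and you handle them correctly. One small cosmetic remark: you introduce $p:=e_{11}$ and call it minimal but never use minimality; the proof goes through with $e_{11}$ being any rank-one image, and the minimality observation is incidental.
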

\begin{thm}[Theorem 2.30 V\cite{takesaki}]
Let $\caM_1$, $\caM_2$ be von Neumann algebras
acting on Hilbert spaces $\caH_1$, $\caH_2$.
Let $\caM$ be a von Neumann 
algebra acting on $\caH_1\otimes\caH_2$
generated by the algebraic tensor product $\caM_1{\odot}\caM_2$.
Then $\caM$ is of type $I$ if and only if $\caM_1$ and $\caM_2$
are both of type $I$.
\end{thm}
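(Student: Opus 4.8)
The plan is to use the standard intrinsic characterisation of type $I$ von Neumann algebras: a von Neumann algebra $\caN$ is of type $I$ if and only if it has an abelian projection with central support $\unit$, equivalently if and only if every nonzero central projection of $\caN$ majorises a nonzero abelian projection. Throughout, let $\caM$ denote the von Neumann algebra generated by $\caM_1\odot\caM_2$ on $\caH_1\otimes\caH_2$ (the algebra in the statement), and I will use freely the routine tensor‑product facts: $\caZ(\caM)=\caZ(\caM_1)\otimes\caZ(\caM_2)$; for projections $p\in\caM_1$ and $q\in\caM_2$ one has $(p\otimes q)\,\caM\,(p\otimes q)=p\caM_1 p\otimes q\caM_2 q$ and the central support of $p\otimes q$ in $\caM$ equals $c_{\caM_1}(p)\otimes c_{\caM_2}(q)$ (identify the central support of a projection with the projection onto the closure of its $\caM$‑orbit, and note $\overline{\caM(p\otimes q)(\caH_1\otimes\caH_2)}=\overline{\caM_1 p\caH_1}\otimes\overline{\caM_2 q\caH_2}$); and the fact that a reduction $z\caM$ of a type $I$ algebra by a central projection $z$, as well as any corner $p\caM p$, is again of type $I$.

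For the implication ``$\caM_1$ and $\caM_2$ of type $I$ $\Rightarrow$ $\caM$ of type $I$'', choose abelian projections $e_i\in\caM_i$ with $c(e_i)=\unit$. Then $e:=e_1\otimes e_2$ satisfies $e\caM e=e_1\caM_1 e_1\otimes e_2\caM_2 e_2$, a tensor product of abelian von Neumann algebras, hence itself abelian; thus $e$ is an abelian projection of $\caM$, and by the central‑support identity $c_\caM(e)=c(e_1)\otimes c(e_2)=\unit$. Therefore $\caM$ is of type $I$.

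For the converse, assume $\caM$ is of type $I$ and show $\caM_1$ is of type $I$ (the argument for $\caM_2$ is symmetric). Replacing $\caM_2$ by the corner $q\caM_2 q$, where $q$ is the support projection of some normal state on $\caM_2$, we may assume $\caM_2$ is $\sigma$‑finite: this is harmless, since $\caM_1\otimes q\caM_2 q=(\unit\otimes q)\caM(\unit\otimes q)$ is a corner of $\caM$ and hence still type $I$, so it suffices to treat the pair $(\caM_1,q\caM_2 q)$. Fix a faithful normal state $\omega_2$ on $\caM_2$ and form the slice map $E:=\mathrm{id}\otimes\omega_2\colon\caM\to\caM_1\otimes\unit$. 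One checks directly that $E$ is faithful, normal, unital, completely positive and satisfies the bimodule property $E\big((a\otimes\unit)\,x\,(b\otimes\unit)\big)=(a\otimes\unit)\,E(x)\,(b\otimes\unit)$ for $a,b\in\caM_1$; that is, $E$ is a faithful normal conditional expectation of the type $I$ algebra $\caM$ onto a copy of $\caM_1$. It then remains to invoke that the range of a faithful normal conditional expectation defined on a type $I$ von Neumann algebra is again of type $I$, which gives that $\caM_1$ is type $I$. (Alternatively, one may first pass by central decomposition to the case in which $\caM_1$ and $\caM_2$ are factors; then $\caM$ is a type $I$ factor, so by Theorem \ref{ubh} we may write $\caH_1\otimes\caH_2\cong\caK\otimes\caL$ with $\caM=\caB(\caK)\otimes\unit_\caL$, a minimal projection $f$ of $\caM$ yields a normal state $\phi$ on $\caM_1$ via $f(a\otimes\unit)f=\phi(a)f$, and the crux becomes to show $\phi$ is pure — for then the GNS representation of $\caM_1$ attached to $\phi$ is irreducible and normal, displaying a type $I$ direct summand of $\caM_1$ — whence, reassembling over the central decomposition, $\caM_1$ is type $I$.)

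The one genuinely substantial step is the input used at the end of the converse: that the type $I$ structure of $\caM$ passes to $\caM_1$ across the conditional expectation $E$ — concretely, that $\caM_1$ inherits an abelian projection with full central support (equivalently, in the factorial reduction, that the state $\phi$ is necessarily pure). A self‑contained proof of this rests on the comparison theory of projections — the comparability theorem together with the halving lemma — applied to the abelian projections of $\caM$ and their images under $E$; this is where the real work lies, and I expect it to be the main obstacle. Everything else — the two reductions, the center, corner and central‑support identities for tensor products, and the ``easy'' direction — is routine bookkeeping.
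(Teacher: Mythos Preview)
The paper does not give its own proof of this theorem; it is quoted verbatim as Theorem V.2.30 of Takesaki and used as a black box, so there is nothing in the paper to compare your argument against.

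On the substance of your sketch: the forward direction is correct and is the standard argument --- $e_1\otimes e_2$ is abelian in $\caM$ with central support $c(e_1)\otimes c(e_2)=\unit$.

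For the converse you have not really given a proof but a reduction to another statement, and that statement is the problem. The assertion ``the range of a faithful normal conditional expectation on a type I von Neumann algebra is again type I'' is not a standard lemma one can simply invoke; in the generality you state it, it is at least as hard as the theorem you are trying to prove, and your proposal gives no indication of how comparison theory or the halving lemma would actually produce an abelian projection in $\caM_1$ from one in $\caM$ via $E$. (Note that $E$ does not send projections to projections, so ``images under $E$'' of abelian projections are not immediately useful.) Your alternative route through central decomposition and the minimal projection $f$ has the same defect: showing that the induced state $\phi$ on $\caM_1$ is pure is precisely the content of the result in the factor case, and you do not indicate how to do it. Takesaki's own argument does not pass through conditional expectations at all; it works directly with the lattice of projections and the type decomposition, and if you want a self-contained proof that is the place to look.
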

Using such properties, we can show the following.
\begin{prop}\label{splitcharac}
For a pure state $\omega$ on $\caA_{\Gamma_1\cup\Gamma_2}$, following conditions are equivalent.
\begin{description}
\item[(i)] The state $\omega$ satisfies the split property with respect to 
$\Gamma_1$-$\Gamma_2$
\item[(ii)] There are states $\omega_1$, $\omega_2$
on $\caA_{\Gamma_1}$, $\caA_{\Gamma_2}$
such that $\omega$ and $\omega_1\otimes\omega_2$ are quasi-equivalent.
\item[(iii)] There are irreducible representations
$(\caH_1,\pi_1)$, $(\caH_2,\pi_2)$ of $\caA_{\Gamma_1}$,
$\caA_{\Gamma_2}$
and a unit vector $\Omega\in\caH_1\otimes\caH_2$
such that $(\caH_1\otimes\caH_2,\pi_1\otimes\pi_2, \Omega)$
is a GNS-triple of $\omega$.
\end{description}
\end{prop}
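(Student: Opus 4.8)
The plan is to prove the cycle of implications (i) $\Rightarrow$ (iii) $\Rightarrow$ (ii) $\Rightarrow$ (i). Throughout I write $\mkA=\caA_{\Gamma_1\cup\Gamma_2}$ and use the identification $\mkA\simeq\caA_{\Gamma_1}\otimes\caA_{\Gamma_2}$, so that in the GNS representation $(\caH_\omega,\pi_\omega,\Omega_\omega)$ of $\omega$ the algebras $\pi_\omega(\caA_{\Gamma_1})$ and $\pi_\omega(\caA_{\Gamma_2})$ commute and together generate $\pi_\omega(\mkA)$; since $\omega$ is pure, $\pi_\omega(\mkA)''=\caB(\caH_\omega)$ and $\pi_\omega(\mkA)'=\bbC\unit$ by Theorem \ref{purethm}. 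Exactly as in the proof of Lemma \ref{ih}, $\caM_1:=\pi_\omega(\caA_{\Gamma_1})''$ is automatically a factor, so the content of (i) is precisely that this factor is type $I$.

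For (i) $\Rightarrow$ (iii): assuming $\caM_1$ is a type $I$ factor, apply Theorem \ref{ubh} to obtain Hilbert spaces $\caK_1,\caK_2$ and a unitary $U:\caH_\omega\to\caK_1\otimes\caK_2$ with $\Ad(U)(\caM_1)=\caB(\caK_1)\otimes\bbC\unit_{\caK_2}$. Because $\pi_\omega(\caA_{\Gamma_2})$ commutes with $\caM_1$, its image under $\Ad(U)$ lies in $(\caB(\caK_1)\otimes\bbC\unit_{\caK_2})'=\bbC\unit_{\caK_1}\otimes\caB(\caK_2)$; and since $\caM_1$ together with $\pi_\omega(\caA_{\Gamma_2})''$ generate $\caB(\caH_\omega)$, a commutant computation forces $\Ad(U)\bigl(\pi_\omega(\caA_{\Gamma_2})''\bigr)$ to be all of $\bbC\unit_{\caK_1}\otimes\caB(\caK_2)$. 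Hence one may define representations $\pi_1:\caA_{\Gamma_1}\to\caB(\caK_1)$ and $\pi_2:\caA_{\Gamma_2}\to\caB(\caK_2)$ by $\pi_1(A)\otimes\unit_{\caK_2}=\Ad(U)\pi_\omega(A)$ and $\unit_{\caK_1}\otimes\pi_2(B)=\Ad(U)\pi_\omega(B)$; these satisfy $\pi_i(\caA_{\Gamma_i})''=\caB(\caK_i)$, hence are irreducible by Theorem \ref{purethm}. For $A\in\caA_{\Gamma_1}$, $B\in\caA_{\Gamma_2}$ one gets $\Ad(U)\pi_\omega(A\otimes B)=\pi_1(A)\otimes\pi_2(B)=(\pi_1\otimes\pi_2)(A\otimes B)$, and since $\caA_{\Gamma_1}\odot\caA_{\Gamma_2}$ is dense in $\mkA$ and both maps are norm continuous, $\Ad(U)\circ\pi_\omega=\pi_1\otimes\pi_2$ on $\mkA$. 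Putting $\Omega:=U\Omega_\omega$, the triple $(\caK_1\otimes\caK_2,\pi_1\otimes\pi_2,\Omega)$ is then a GNS triple of $\omega$, which is (iii).

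For (iii) $\Rightarrow$ (ii): given irreducible $\pi_1,\pi_2$ and $\Omega$ as in (iii), choose any unit vectors $\xi_i\in\caH_i$; by irreducibility each $\xi_i$ is cyclic, so $\omega_i:=\braket{\xi_i}{\pi_i(\cdot)\xi_i}$ is a state with GNS triple $(\caH_i,\pi_i,\xi_i)$, and then $(\caH_1\otimes\caH_2,\pi_1\otimes\pi_2,\xi_1\otimes\xi_2)$ is a GNS triple of $\omega_1\otimes\omega_2$. Thus $\omega$ and $\omega_1\otimes\omega_2$ have the same GNS representation $\pi_1\otimes\pi_2$ up to unitary equivalence, hence are quasi-equivalent (Theorem \ref{qethm}, with $\tau=\id$). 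For (ii) $\Rightarrow$ (i): if $\omega\sim_{q.e.}\omega_1\otimes\omega_2$, Theorem \ref{qethm} gives an isomorphism $\tau:\pi_\omega(\mkA)''\to\pi_{\omega_1\otimes\omega_2}(\mkA)''$ with $\tau\circ\pi_\omega=\pi_{\omega_1\otimes\omega_2}$; here $\pi_{\omega_1\otimes\omega_2}(\mkA)''$ is the von Neumann tensor product $\pi_{\omega_1}(\caA_{\Gamma_1})''\mathbin{\overline\otimes}\pi_{\omega_2}(\caA_{\Gamma_2})''$ on $\caH_{\omega_1}\otimes\caH_{\omega_2}$. Since the left side $\pi_\omega(\mkA)''=\caB(\caH_\omega)$ is type $I$ and $\tau$ is an isomorphism, this tensor product is type $I$, so by Theorem 2.30 V of \cite{takesaki} each factor, in particular $\pi_{\omega_1}(\caA_{\Gamma_1})''$, is type $I$. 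A $*$-isomorphism of von Neumann algebras is normal, so $\tau\bigl(\pi_\omega(\caA_{\Gamma_1})''\bigr)$ equals the weak closure of $\pi_{\omega_1}(\caA_{\Gamma_1})\otimes\bbC\unit$, namely $\pi_{\omega_1}(\caA_{\Gamma_1})''\otimes\bbC\unit_{\caH_{\omega_2}}\simeq\pi_{\omega_1}(\caA_{\Gamma_1})''$; hence $\pi_\omega(\caA_{\Gamma_1})''$ is isomorphic to a type $I$ algebra, and being a factor it is a type $I$ factor, giving (i).

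The main obstacle is the step (i) $\Rightarrow$ (iii): once $\caM_1$ has been diagonalized via Theorem \ref{ubh}, one must argue carefully — combining commutation with the generation property through commutant computations — that $\pi_\omega(\caA_{\Gamma_2})''$ is forced to sit \emph{exactly} on $\bbC\unit_{\caK_1}\otimes\caB(\caK_2)$, and then check that the products $A\otimes B$ genuinely determine $\Ad(U)\circ\pi_\omega$ as $\pi_1\otimes\pi_2$ on all of $\mkA$. The other steps are routine once one is careful about normality of the intertwining isomorphism and about the identifications of von Neumann tensor products.
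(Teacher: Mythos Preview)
Your proof is correct and follows precisely the approach the paper intends: the paper does not give a detailed proof but merely says ``Using such properties, we can show the following,'' referring to Theorem~\ref{ubh} and Theorem~2.30~V of \cite{takesaki}, and your argument uses exactly these two results in the expected way (the former for (i)~$\Rightarrow$~(iii), the latter for (ii)~$\Rightarrow$~(i)). The commutant computation forcing $\Ad(U)\bigl(\pi_\omega(\caA_{\Gamma_2})''\bigr)=\bbC\unit_{\caK_1}\otimes\caB(\caK_2)$ and the appeal to normality of $\tau$ in the last step are the right details to spell out.
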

Recalling the ``physical" interpretation that ``two states are quasi-equivalent means they are macroscopically the same", the above proposition tells us that
$\omega$ satisfies the split property with respect to 
$\Gamma_1$-$\Gamma_2$
if it is macroscopically product state with respect to the $\Gamma_1$-$\Gamma_2$ cut.

For a pure state with split property, the following holds.
\begin{lem}\label{splitlem5}{\cite{2dSPT}}
Let $\Gamma_1, \Gamma_2 \subset\bbZ^{\nu}$ be non-empty disjoint
 subsets of $\bbZ^\nu$.
Let $\omega$ be a pure state on ${\caA_{\Gamma_1}}\otimes{\caA_{\Gamma_2}}$ and
$\varphi_{{\caA_{\Gamma_1}}}$,  $\varphi_{{\caA_{\Gamma_2}}}$ states on ${\caA_{\Gamma_1}}$,
${\caA_{\Gamma_2}}$ respectively.
Assume that $\omega$ is quasi-equivalent to $\varphi_{{\caA_{\Gamma_1}}}\otimes\varphi_{{\caA_{\Gamma_2}}}$.
Then for any pure states $\psi_{{\caA_{\Gamma_1}}}$, $\psi_{{\caA_{\Gamma_2}}}$
on ${\caA_{\Gamma_1}}$, ${\caA_{\Gamma_2}}$, there are automorphisms 
$\gamma_{{\caA_{\Gamma_1}}}\in\Aut\lmk {\caA_{\Gamma_1}}\rmk$,
$\gamma_{{\caA_{\Gamma_2}}}\in\Aut\lmk {\caA_{\Gamma_2}}\rmk$ and
a unitary $u\in\caU\lmk{\caA_{\Gamma_1}}\otimes{\caA_{\Gamma_2}}\rmk$
such that 
\begin{align}
\omega=\lmk
\lmk
\psi_{{\caA_{\Gamma_1}}}\circ\gamma_{{\caA_{\Gamma_1}}}\rmk\otimes\lmk
\psi_{{\caA_{\Gamma_2}}}\circ\gamma_{{\caA_{\Gamma_2}}}
\rmk
\rmk\circ\Ad(u).
\end{align}
If $\psi_{{\caA_{\Gamma_1}}}$ and $\varphi_{{\caA_{\Gamma_1}}}$ are quasi-equivalent,
then we may set $\gamma_{{\caA_{\Gamma_1}}}=\id_{{\caA_{\Gamma_1}}}$.
\end{lem}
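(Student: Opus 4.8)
The plan is to use the split property to put $\omega$ into a factorised form on a tensor-product Hilbert space, to absorb the discrepancy between $\psi_{\caA_{\Gamma_i}}$ and the ``correct'' pure state of $\caA_{\Gamma_i}$ into the automorphism $\gamma_i$ via homogeneity of the pure state space, and finally to absorb the remaining discrepancy of cyclic vectors into $\Ad(u)$ via Kadison's transitivity theorem. Concretely: since $\omega$ is quasi-equivalent to the product state $\varphi_{\caA_{\Gamma_1}}\otimes\varphi_{\caA_{\Gamma_2}}$, condition (ii) of Proposition \ref{splitcharac} holds, hence so do (i) and (iii); in particular there are irreducible representations $(\caH_1,\pi_1)$ of $\caA_{\Gamma_1}$ and $(\caH_2,\pi_2)$ of $\caA_{\Gamma_2}$ and a unit vector $\Omega\in\caH_1\otimes\caH_2$ such that $(\caH_1\otimes\caH_2,\pi_1\otimes\pi_2,\Omega)$ is a GNS triple of $\omega$. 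Fix unit vectors $e_i\in\caH_i$; since every nonzero vector of an irreducible representation is cyclic, $(\caH_i,\pi_i,e_i)$ is a GNS triple of the pure state $\phi_i:=\braket{e_i}{\pi_i(\cdot)e_i}$ on $\caA_{\Gamma_i}$. As $\caA_{\Gamma_i}$ is a full matrix algebra or a UHF algebra, hence a separable simple unital $C^*$-algebra, its automorphism group acts transitively on its pure state space, so one may choose $\gamma_i\in\Aut(\caA_{\Gamma_i})$ with $\psi_{\caA_{\Gamma_i}}\circ\gamma_i=\phi_i$.

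With this choice $(\caH_i,\pi_i,e_i)$ is a GNS triple of $\psi_{\caA_{\Gamma_i}}\circ\gamma_i$, and hence $(\caH_1\otimes\caH_2,\pi_1\otimes\pi_2,e_1\otimes e_2)$ is a GNS triple of $(\psi_{\caA_{\Gamma_1}}\circ\gamma_1)\otimes(\psi_{\caA_{\Gamma_2}}\circ\gamma_2)$ (its vector state factorises correctly and $e_1\otimes e_2$ is cyclic because each $e_i$ is). The representation $\pi_1\otimes\pi_2$ of $\caA_{\Gamma_1}\otimes\caA_{\Gamma_2}$ is irreducible, since $(\pi_1\otimes\pi_2)(\caA_{\Gamma_1}\otimes\caA_{\Gamma_2})''\supseteq\caB(\caH_1)\otimes\caB(\caH_2)=\caB(\caH_1\otimes\caH_2)$. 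As $\Omega$ and $e_1\otimes e_2$ are both unit vectors, Kadison's transitivity theorem supplies a unitary $u\in\caU(\caA_{\Gamma_1}\otimes\caA_{\Gamma_2})$ with $(\pi_1\otimes\pi_2)(u)(e_1\otimes e_2)=\Omega$, whence
\begin{align*}
\omega(a)=\braket{\Omega}{(\pi_1\otimes\pi_2)(a)\Omega}=\braket{e_1\otimes e_2}{(\pi_1\otimes\pi_2)(u^*au)(e_1\otimes e_2)}=\lmk(\psi_{\caA_{\Gamma_1}}\circ\gamma_1)\otimes(\psi_{\caA_{\Gamma_2}}\circ\gamma_2)\rmk(u^*au)
\end{align*}
for all $a\in\caA_{\Gamma_1}\otimes\caA_{\Gamma_2}$; that is, $\omega=\lmk(\psi_{\caA_{\Gamma_1}}\circ\gamma_1)\otimes(\psi_{\caA_{\Gamma_2}}\circ\gamma_2)\rmk\circ\Ad(u^*)$, and relabelling $u^*$ as $u$ gives the asserted identity.

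For the last sentence, assume in addition $\psi_{\caA_{\Gamma_1}}\sim_{q.e.}\varphi_{\caA_{\Gamma_1}}$. It suffices to arrange that $\pi_1$ is unitarily equivalent to the GNS representation of $\psi_{\caA_{\Gamma_1}}$, for then one takes $\phi_1=\psi_{\caA_{\Gamma_1}}$, $\gamma_1=\id$, and for $e_1$ the image of the GNS vector of $\psi_{\caA_{\Gamma_1}}$ under the unitary equivalence, and reruns the argument above. Now $\pi_1$ is quasi-equivalent to $(\pi_1\otimes\pi_2)\vert_{\caA_{\Gamma_1}}=\pi_\omega\vert_{\caA_{\Gamma_1}}$, which by Lemma \ref{ih} is quasi-equivalent to $\pi_{\omega\vert_{\caA_{\Gamma_1}}}$; moreover quasi-equivalence of two states of $\caA_{\Gamma_1}\otimes\caA_{\Gamma_2}$ descends to their restrictions to $\caA_{\Gamma_1}$ (a $*$-isomorphism of von Neumann algebras intertwining the two representations is automatically normal, so it restricts to an isomorphism between the weak closures of the images of $\caA_{\Gamma_1}$; combine with Theorem \ref{qethm} and Lemma \ref{ih}). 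Hence $\omega\vert_{\caA_{\Gamma_1}}\sim_{q.e.}(\varphi_{\caA_{\Gamma_1}}\otimes\varphi_{\caA_{\Gamma_2}})\vert_{\caA_{\Gamma_1}}=\varphi_{\caA_{\Gamma_1}}\sim_{q.e.}\psi_{\caA_{\Gamma_1}}$, so $\pi_1\sim_{q.e.}\pi_{\psi_{\caA_{\Gamma_1}}}$, and since both are irreducible this is a unitary equivalence.

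I expect the only genuinely non-routine ingredient to be the homogeneity of the pure state space of $\caA_{\Gamma_i}$ under $\Aut(\caA_{\Gamma_i})$ — for infinite $\Gamma_i$ this is the Kishimoto--Ozawa--Sakai homogeneity theorem for separable simple $C^*$-algebras, and for finite $\Gamma_i$ it is elementary (Kadison transitivity). Everything else is bookkeeping with GNS triples; the only step that deserves care is that quasi-equivalence of states passes to restrictions to a sub-$C^*$-algebra, which rests on the automatic normality of $*$-isomorphisms of von Neumann algebras.
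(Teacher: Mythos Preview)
The paper does not give its own proof of this lemma; it is simply stated with a citation to \cite{2dSPT}. So there is nothing to compare against in the present paper.

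That said, your argument is correct and is essentially the standard one. The three ingredients --- Proposition~\ref{splitcharac}(iii) to obtain the irreducible tensor-product GNS triple, the Kishimoto--Ozawa--Sakai homogeneity theorem to absorb the change of pure state into an automorphism on each factor, and Kadison transitivity to adjust the cyclic vector by an inner automorphism --- are exactly what one expects (and what the argument in \cite{2dSPT} uses). Your treatment of the last clause is also fine: the chain $\pi_1\sim_{q.e.}\pi_\omega\vert_{\caA_{\Gamma_1}}\sim_{q.e.}\pi_{\varphi_{\caA_{\Gamma_1}}\otimes\varphi_{\caA_{\Gamma_2}}}\vert_{\caA_{\Gamma_1}}\sim_{q.e.}\pi_{\varphi_{\caA_{\Gamma_1}}}\sim_{q.e.}\pi_{\psi_{\caA_{\Gamma_1}}}$ is valid, using that the GNS of a tensor product of states is the tensor product of the GNS representations, so the restriction to $\caA_{\Gamma_1}$ is an amplification of $\pi_{\varphi_{\caA_{\Gamma_1}}}$; irreducibility of both endpoints then upgrades quasi-equivalence to unitary equivalence.
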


\subsection{Automorphic equivalence}\label{autosec}
For the investigation of the equivalence relation $\sim$, automorphic equivalence
on smooth path of interactions plays an important role.
Automorphic equivalence started as Hasting's adiabatic Lemma \cite{h1}\cite{hw}.
After seminal mathematical development in \cite{bmns} \cite{nsy},
it is now a strong mathematical machinery to analyze 
gapped state phases.
It basically says the following :  given a smooth path of gapped local Hamiltonians,
there is a {\it concrete} smooth path of automorphisms
interpolating ground state spaces.
Original Hastings's adiabatic Lemma was about finite systems.
In the framework given by \cite{bmns} \cite{nsy},
 the existence of gap for {\it  finite} systems is assumed as well, and 
the thermodynamic limit taken afterwords.
An infinite version, assuming the gap in the {\it infinite} system
was derived in \cite{mo}.

In this subsection, we review  Hasting's original adiabatic Lemma,
and introduce its infinite dimensional version.
Let 
\[
H: [0,1]\ni s\to H(s)\in\Mat_n
\]
be a smooth path of self-adjoint matrices of size $n$.
In physical settings this $H(s)$ is a Hamiltonian.
We consider the situation that there is a gap
between the lowest eigenvalue of $H(s)$
and the rest of the spectrum, which is uniform in $s$.
What we are interested in is the lowest eigenvalue eigenspace
of the Hamiltonian i.e, the ground state space.
For each parameter $s\in [0,1]$,
let $E(s)$ be the lowest eigenvalue of $H(s)$
and $P(s)$ be the spectral projection of $H(s)$
corresponding to the eigenvalue $E(s)$.
As we saw in subsection \ref{quantumspinsec},
a ground state of $H(s)$
is a positive linear functional on $\Mat_n$
with support under $P(s)$.
We assume that there is a constant $\gamma>0$
such that 
$\sigma(H(s))\setminus \{E(s)\}\subset [E(s)+\gamma,\infty)$
for any $s\in[0,1]$.
Here $\sigma(H(s))$ indicates the spectrum of $H(s)$.
Namely, we require that there is no cross-over of the ground state energy along the path.
The Hastings adiabatic Lemma is that there is a path of unitaries
following an {\it explicit differential equation}, which
maps $P(0)$, to $P(s)$.
\begin{thm}[Hastings\cite{h1}]\label{Hastings}
In the setting given above,
there is a smooth path of unitaries $U:[0,1]\to \Mat_n$
such that
\begin{align*}
\begin{split}
&P(s)=U(s) P(0) U(s)^*,\quad s\in[0,1],\\
&\frac{d}{ds} U(s)
=i D(s) U(s),\quad U(0)=\unit,\\
&D(s):=
\int_{-\infty}^\infty dt W_\gamma(t) e^{it H(s)} H'(s) e^{-it H(s)}.
\end{split}
\end{align*}
Here, $W_\gamma$ is a $L^1$-function on $\bbR$
with Fourier transformation 
satisfying
\begin{align}\label{fou}
\hat W_\gamma(k)=-\frac{i}{\sqrt{2\pi} k},\quad
k\notin (-\gamma,\gamma).
\end{align}
\end{thm}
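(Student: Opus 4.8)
The plan is to treat the statement as a \emph{solvability-plus-verification} problem: first produce the weight $W_\gamma$, then \emph{define} $D(s)$ and $U(s)$ by the stated integral and ODE, and finally verify the intertwining $P(s)=U(s)P(0)U(s)^*$ by showing that $\tfrac{d}{ds}\big(U(s)^*P(s)U(s)\big)=0$. Everything reduces to one algebraic identity, $\dot P(s)=i[D(s),P(s)]$, and that is the step where the uniform gap hypothesis is actually used.

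First I would construct $W_\gamma$. Working with the symmetric convention $\hat f(k)=\tfrac1{\sqrt{2\pi}}\int f(t)e^{-ikt}\,dt$, I pick a $C^\infty$ odd function $m$ on $\bbR$ with $m(k)=1/k$ for $|k|\ge\gamma$, and let $W_\gamma$ be the inverse Fourier transform of $k\mapsto-\tfrac{i}{\sqrt{2\pi}}m(k)$. Since $m$ is smooth on all of $\bbR$ and each $m^{(j)}$ decays like $|k|^{-(j+1)}$, one has $m^{(j)}\in L^2(\bbR)$ for every $j$, hence $t^jW_\gamma\in L^1(\bbR)$ for every $j$; in particular $W_\gamma\in L^1$ with faster-than-polynomial decay, it is real and odd because $m$ is real and odd, and by construction $\hat W_\gamma(k)=-i/(\sqrt{2\pi}k)$ for $|k|\ge\gamma$. (A sharper choice with sub-exponential tails is produced in the same way; this is standard, cf.\ \cite{bmns}\cite{nsy}.) Then $D(s)$ is well defined, because $\|e^{itH(s)}H'(s)e^{-itH(s)}\|=\|H'(s)\|$ is uniformly bounded on $[0,1]$ and $W_\gamma\in L^1$; since $s\mapsto H(s)$ is smooth and $\|\partial_s^\ell e^{itH(s)}\|$ grows only polynomially in $|t|$, dominated convergence (using $t^jW_\gamma\in L^1$) lets me differentiate under the integral, so $s\mapsto D(s)$ is smooth. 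It is self-adjoint because $(e^{itH}H'e^{-itH})^*=e^{itH}H'e^{-itH}$ and $W_\gamma$ is real. Hence the linear ODE $\dot U=iDU$, $U(0)=\unit$, has a unique smooth solution, and from $\tfrac{d}{ds}(UU^*)=i[D,UU^*]$ and $\tfrac{d}{ds}(U^*U)=0$ together with the initial condition one gets $U(s)U(s)^*=U(s)^*U(s)=\unit$; so $U$ is a smooth path of unitaries.

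For the key identity, fix $s$, write $P=P(s)$, $Q=\unit-P$, $E=E(s)$, and diagonalize $H(s)=EP+\sum_j\lambda_jQ_j$ with $\lambda_j\ge E+\gamma$ and $\sum_jQ_j=Q$; here $P(s),E(s)$ are smooth by the usual Riesz-projection argument, valid uniformly because of the gap. Since $e^{itH}$ commutes with $P$, $[D,P]=\int dt\,W_\gamma(t)\,e^{itH}[H',P]e^{-itH}$ with $[H',P]=QH'P-PH'Q$. Using $e^{itH}Q_j=e^{it\lambda_j}Q_j$ and $e^{-itH}P=e^{-itE}P$ one gets $e^{itH}Q_jH'Pe^{-itH}=e^{it(\lambda_j-E)}Q_jH'P$, and since $\lambda_j-E\ge\gamma$ lies outside $(-\gamma,\gamma)$, the Fourier condition (\ref{fou}) gives $\int dt\,W_\gamma(t)e^{it(\lambda_j-E)}=i/(\lambda_j-E)$; the $PH'Q$ block similarly produces frequencies $E-\lambda_j\le-\gamma$ and the factor $-i/(\lambda_j-E)$. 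Summing over $j$,
\begin{align*}
i[D(s),P(s)]=-(H-E)^{-1}_Q\,QH'P-PH'Q\,(H-E)^{-1}_Q,
\end{align*}
where $(H-E)^{-1}_Q:=\sum_j(\lambda_j-E)^{-1}Q_j$. On the other hand, differentiating $H(s)P(s)=E(s)P(s)$ and compressing with $Q$ on the left (using $QP=0$ and $[Q,H]=0$) gives $(H-E)Q\dot P=-QH'P$, hence $Q\dot P=-(H-E)^{-1}_QQH'P$; combined with $P\dot PP=Q\dot PQ=0$ and $P\dot PQ=(Q\dot PP)^*$ this yields exactly $\dot P=-PH'Q(H-E)^{-1}_Q-(H-E)^{-1}_QQH'P$. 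Comparing, $\dot P(s)=i[D(s),P(s)]$, and therefore
\begin{align*}
\tfrac{d}{ds}\big(U(s)^*P(s)U(s)\big)=U(s)^*\big(\dot P(s)-i[D(s),P(s)]\big)U(s)=0,
\end{align*}
so $U(s)^*P(s)U(s)=U(0)^*P(0)U(0)=P(0)$ for all $s$, i.e.\ $P(s)=U(s)P(0)U(s)^*$.

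I expect the main obstacle to be this last identity, specifically the bookkeeping of the block decomposition: one must use that $e^{itH}$ commutes with every spectral projection of $H$, check that the frequencies $\pm(\lambda_j-E)$ on the off-diagonal blocks always avoid $(-\gamma,\gamma)$ — which is precisely where the uniform gap enters and makes (\ref{fou}) applicable — and then recognize the resulting operator as the first-order perturbation formula for $\dot P$. The analytic points (existence of $W_\gamma\in L^1$, smoothness of $D(s)$, unitarity of $U(s)$) are routine but genuinely rely on the decay of $W_\gamma$, which is why the choice of $m$ deserves a little care.
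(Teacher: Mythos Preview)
Your proof is correct and follows the same overall strategy as the paper: both reduce the intertwining $P(s)=U(s)P(0)U(s)^*$ to the single identity $\dot P(s)=i[D(s),P(s)]$, and both verify that identity by computing each side in the off-diagonal block decomposition with respect to $P(s)$ and invoking the Fourier condition (\ref{fou}) on the frequencies $\pm(\lambda_j-E)\notin(-\gamma,\gamma)$.

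The one technical difference is in how $\dot P$ is computed. The paper writes $P(s)$ as a Riesz contour integral $P(s)=\tfrac{1}{2\pi i}\int_\Gamma(z-H(s))^{-1}\,dz$, differentiates under the integral, and then evaluates the residues after inserting $P+(1-P)$ on each side of $H'$. You instead diagonalize $H(s)$ and differentiate the eigenvalue equation $HP=EP$, then compress with $Q$. Both routes land on $\dot P=-(H-E)^{-1}_QQH'P-PH'Q(H-E)^{-1}_Q$; yours is arguably more elementary in finite dimensions, while the resolvent approach has the advantage that it carries over verbatim when one later wants to avoid an explicit spectral decomposition (which is the direction the paper heads for the infinite-volume version). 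You also supply details the paper omits---an explicit construction of $W_\gamma\in L^1$, smoothness of $D(s)$ via dominated convergence, and the ODE argument for unitarity of $U(s)$---which is all fine.
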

This theorem says that there is a smooth path of unitaries
which maps ground state space $P(0)$ to $P(s)$.
An important point for application
is that   the path of unitaries is given by quite a concrete differential equation.
For the choice of $W_\gamma$,
there is some freedom.
The proof goes as follows.
Let $s_0\in [0,1]$ and $\Gamma$ be a contour on $\bbC$
having $E(s)$ inside and 
and $\sigma(H(s))\setminus \{E(s)\}$ outside
for $s$ near $s_0$.
Then our $P(s)$ can be represented as
\begin{align*}
P(s):=
\frac{1}{2\pi i}
\int_\Gamma dz (z-H(s))^{-1}
\end{align*}
near $s_0$.
Now we derive a differential equation satisfied by $P(s)$.
In order to do so, we first simply differentiate the above representation of
$P(s)$.
We then obtain
\begin{align}\label{dp}
\frac{d}{ds} P(s)
=\frac{1}{2\pi i}
\int_\Gamma dz (z-H(s))^{-1}H'(s)(z-H(s))^{-1}.
\end{align}
Now we make a very simple but key observation about a
path of orthogonal projetions. That is
\begin{align}\label{ppp}
P(s) P'(s) P(s)=(1-P(s)) P'(s) (1-P(s))=0.
\end{align}
This can be obtained by differentiating $P(s)^2=P(s)$,
\begin{align*}
P(S)P'(s)+P'(s)P(s)=P'(s).
\end{align*}
Multiplying $P(s)$ (resp. $1-P(s)$) from left and right,
we obtain (\ref{ppp}).
Therefore, we can rewrite (\ref{dp}), picking up off-diagonal
terms:
\begin{align*}
\begin{split}
&\frac{d}{ds} P(s)\\
&=\frac{1}{2\pi i}
\int_\Gamma dz
\lmk
\begin{gathered}
 P(s)(z-H(s))^{-1}H'(s)(z-H(s))^{-1}\lmk 1-P(s)\rmk\\
+(1-P(s))(z-H(s))^{-1}H'(s)(z-H(s))^{-1} P(s)
\end{gathered}
\rmk\\
&=
P(s)
H'(s)\frac{1}{2\pi i} \lmk \int_\Gamma dz(z-E(s))^{-1}(z-H(s))^{-1}\lmk 1-P(s)\rmk\rmk\\
&\quad +\frac{1}{2\pi i}
\lmk \int_\Gamma dz
(1-P(s))(z-E(s))^{-1}(z-H(s))^{-1}\rmk H'(s) P(s)\\
&=
P(s)
H'(s) (E(s)-H(s))^{-1}\lmk 1-P(s)\rmk +
(1-P(s))(E(s)-H(s))^{-1}H'(s) P(s).
\end{split}
\end{align*}
In the last line we used the fact that 
$\Gamma$
has $E(s)$ inside and 
and $\sigma(H(s))\setminus \{E(s)\}$ outside.
 On the other hand, using the property of the Fourier transform (\ref{fou}),
 the last term can be written in terms of $D(s)$ as
 \begin{align*}
& P(s)
H'(s) (E(s)-H(s))^{-1}\lmk 1-P(s)\rmk +
(1-P(s))(E(s)-H(s))^{-1}H'(s) P(s)\\
&=i\left[D(s), P(s)\right].
 \end{align*}
 Hence $P(s)$ satisfies the differential equation
 \begin{align*}
 \frac{d}{ds} P(s)=i\left[D(s), P(s)\right].
 \end{align*}
 This proves the Theorem.
 
Now we would like to apply this Theorem to the Heisenberg dynamics given by time-dependent interactions. 
First, let us introduce paths of interactions.
\begin{assum}\label{assump}
Let  $\Phi (\cdot~ ; s) : \mathfrak{S}_{\mathbb{Z}^\nu } \to \mathcal{A}_{\rm loc}$ be a family of uniformly bounded, finite range interactions parameterized by $s\in [0,1]$. We assume the following:
\begin{description}
\item[(i)]
For each $X\in{\mathfrak S}_{\bbZ^\nu}$, the map
$[0,1]\ni s\to \Phi(X;s)\in\caA_{X}$ is continuous and piecewise $C^1$.
We denote by $\dot{\Phi}(X;s)$ 
the corresponding derivatives.
The interaction obtained by differentiation is denoted by $\dot\Phi(s)$, for each $s\in[0,1]$.
\item[(ii)]
There is a number $R\in\nan$
such that $X \in {\mathfrak S}_{\bbZ^\nu}$ and $\diam{X}\ge R$ imply $\Phi(X;s)=0$, for all $s\in[0,1]$.
\item[(iii)] Interactions are bounded as follows
\begin{align*}
\sup_{s\in[0,1]}\sup_{X\in {\mathfrak S}_{\bbZ^\nu}}
\lmk
\lV
\Phi\lmk X;s\rmk
\rV+|X|\lV
\dot{\Phi} \lmk X;s\rmk
\rV
\rmk<\infty.
\end{align*}
\item[(iv)]
Setting
\begin{align*}
b(\varepsilon):=\sup_{Z\in{\mathfrak S}_{\bbZ^\nu}}
\sup_{s,s_0 \in[0,1],0<| s-s_0|<\varepsilon}
\lV
\frac{\Phi(Z;s)-\Phi(Z;s_0)}{s-s_0}-\dot{\Phi}(Z;s_0)
\rV
\end{align*}
for each $\varepsilon>0$, we have
$\lim_{\varepsilon\to 0} b(\varepsilon)=0$.

\end{description}
\end{assum}

We would like to apply Theorem \ref{Hastings} to path of local Hamiltonians 
 satisfying Assumption \ref{assump}.
For each $s\in[0,1]$ we have local Hamiltonians
\begin{align*}
H_{\Lambda,\Phi}(s):=\sum_{Z\subset\Lambda}\Phi(Z,s),\quad s\in[0,1],\quad \Lambda\in{\mathfrak S}_{\bbZ^2}.
\end{align*}
In addition to Assumption \ref{assump}, we 
need to require a {\it uniform} gap in the spectrum of these local Hamiltonians.
\begin{assum}\label{thermo}
\begin{description}
\item[(i)]There is $\gamma>0$
such that 
\begin{align*}
\sigma\lmk H_{\Lambda,\Phi}(s)\rmk\setminus
\left\{
\inf \sigma\lmk H_{\Lambda,\Phi}(s)\rmk
\right\}
\subset
\left[
\inf \sigma\lmk H_{\Lambda,\Phi}(s)\rmk+\gamma,\infty
\right),\quad \Lambda\in{\mathfrak S},\quad s\in[0,1].
\end{align*}
\item[(ii)]
For each $s\in[0,1]$, there exists a unique $\tau_{\Phi(s)}$-ground state
$\omega_{\Phi(s)}$. 

\end{description}
\end{assum}
Applying Theorem \ref{Hastings} to the path of local Hamiltonians $H_{\Lambda,\Phi}(t)$,
we obtain a path of unitaries $V_\Lambda(s)$ in $\caA_{\Lambda}$  given by a differential equation
\begin{align}\label{vh}
\frac{d}{ds} V_\Lambda(s)
=i \lmk \int_{-\infty}^\infty dt W_\gamma(t) e^{it H_{\Lambda,\Phi}(s)}H_{\Lambda,\dot{\Phi}}(s) e^{-it H_{\Lambda,\Phi}(s)} \rmk V_\Lambda(s),
\end{align}
with $V(0)=\unit$,
interpolating ground state projections.
Because
\[
 e^{it H_{\Lambda,\Phi}(s)}H_{\Lambda,\dot{\Phi}}(s) e^{-it H_{\Lambda,\Phi}(s)}
 =\sum_{Z\subset\Lambda}\lmk  e^{it H_{\Lambda,\Phi}(s)} \dot\Phi(Z,t) e^{-it H_{\Lambda,\Phi}(s)}\rmk,\quad t\in[0,1],
\]
is the summation of ``almost local" terms thanks to the Lieb-Robinson bound \cite{nsy},
it can be written in terms of interaction,
not finite range, but satisfying some locality.
As a result, the following theorem is obtained.
\begin{thm}[\cite{bmns}\cite{nsy}]\label{autonsy}
Let $\Phi$ be a path of interactions satisfying Assumption \ref{assump} and Assumption \ref{thermo}.
Then the thermodynamic limit
\[
\alpha_s(A):=\lim_{\Lambda\to \bbZ^\nu} \Ad\lmk V_{\Lambda}(s)\rmk(A),\quad
A\in\caA_{\bbZ^\nu}
\]
exists and defines a strongly continuous family of automorphisms, satisfying
\[
\omega_{\Phi(s)}=\omega_{\Phi(0)}\circ \alpha_s,\quad s\in[0,1].
\]
\end{thm}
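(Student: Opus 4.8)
\medskip
\noindent\textbf{Proof strategy.}\quad
The plan is to proceed in four steps: (1) rewrite the finite-volume quasi-adiabatic generator of (\ref{vh}) as a sum over a uniformly bounded \emph{quasi-local} interaction; (2) extract the thermodynamic limit $\alpha_s$ via the Lieb-Robinson machinery; (3) deduce strong continuity in $s$; and (4) push the finite-volume intertwining of ground-state projections from Theorem \ref{Hastings} to the limit. Fix $s\in[0,1]$. By (\ref{vh}) the unitary $V_\Lambda(s)$ is generated by
\begin{align*}
D_\Lambda(s) = \int_{-\infty}^{\infty} dt\, W_\gamma(t) \sum_{Z\subset\Lambda} \Ad\lmk e^{itH_{\Lambda,\Phi}(s)}\rmk\lmk \dot{\Phi}(Z;s)\rmk .
\end{align*}
One chooses $W_\gamma$ so that, in addition to (\ref{fou}), it decays faster than any polynomial (indeed subexponentially). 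For each fixed $Z$ the Lieb-Robinson bound for $\Phi(\cdot;s)$, whose hypotheses are provided by Assumption \ref{assump}(ii),(iii), approximates $\Ad\lmk e^{itH_{\Lambda,\Phi}(s)}\rmk\lmk \dot{\Phi}(Z;s)\rmk$ in norm by an element supported in a ball of radius proportional to $\lv t\rv$ about $Z$, with error decaying faster than any polynomial in $\lv t\rv$; convolving against $W_\gamma$, integrating in $t$, and conditionally expecting each resulting term onto successive balls about $Z$, one re-sums $D_\Lambda(s)=\sum_{Z}\Psi_\Lambda(Z;s)$ for an interaction $\Psi_\Lambda(\cdot;s)$ whose decay rate and bounds are uniform in $\Lambda$ and $s$, and which by Assumption \ref{assump}(i),(iv) is continuous and piecewise $C^{1}$ in $s$ with uniformly controlled $s$-derivative. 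This is precisely the analysis carried out in \cite{bmns,nsy}.

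Given such a $\Psi_\Lambda(\cdot;s)$, its decay is strong enough for the standard theory of time-dependent quasi-local dynamics to apply. For $A\in\caA_{\rm loc}$ the net $\Ad\lmk V_\Lambda(s)\rmk(A)$ --- the value at parameter $s$ of the flow generated by $i[D_\Lambda(\cdot),\,\cdot\,]$ issued from $A$ --- is Cauchy as $\Lambda\nearrow\bbZ^\nu$: for $\Lambda\subset\Lambda'$ a Duhamel/telescoping identity writes $\Ad\lmk V_{\Lambda'}(s)\rmk(A)-\Ad\lmk V_\Lambda(s)\rmk(A)$ as an $s$-integral of commutators of $\Psi_{\Lambda'}-\Psi_\Lambda$ (supported near $\Lambda'\setminus\Lambda$, hence far from $\supp A$ once $\Lambda$ is large) with the flowed observable, itself quasi-local by the Lieb-Robinson bound for $\Psi$, and the resulting estimate tends to $0$. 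Since $\lV\Ad\lmk V_\Lambda(s)\rmk\rV=1$, the limit extends to a unital $*$-endomorphism $\alpha_s$ of $\caA_{\bbZ^\nu}$, and running the flow backwards furnishes a two-sided inverse, so $\alpha_s\in\Aut\lmk\caA_{\bbZ^\nu}\rmk$. For strong continuity, the defining ODE bounds $\lV\Ad\lmk V_\Lambda(s)\rmk(A)-\Ad\lmk V_\Lambda(s')\rmk(A)\rV$ by $\lv s-s'\rv$ times a constant depending only on $A$ and the uniform bounds on $\Psi$ (piecewise in $s$, by Assumption \ref{assump}(i)); letting $\Lambda\nearrow\bbZ^\nu$ and using density of $\caA_{\rm loc}$ together with $\lV\alpha_s\rV=1$ yields norm-continuity of $s\mapsto\alpha_s(A)$ for every $A\in\caA_{\bbZ^\nu}$.

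It remains to identify the ground states. Theorem \ref{Hastings} applied to $s\mapsto H_{\Lambda,\Phi}(s)$ --- legitimate by Assumption \ref{thermo}(i) --- gives $P_\Lambda(s)=V_\Lambda(s)P_\Lambda(0)V_\Lambda(s)^{*}$ for the spectral projections onto the lowest energy, so $\Ad\lmk V_\Lambda(s)\rmk$ intertwines the finite-volume ground states of $H_{\Lambda,\Phi(0)}$ and $H_{\Lambda,\Phi(s)}$; in particular $\omega_{N,\Phi(s)}$ coincides, up to composition with $\Ad\lmk V_{\Lambda_N}(s)\rmk^{\pm 1}$, with a finite-volume ground state of $H_{\Lambda_N,\Phi(0)}$. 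Now let $N\to\infty$: by Assumption \ref{thermo} and the remarks preceding Theorem \ref{expdec}, $\omega_{N,\Phi(s)}\to\omega_{\Phi(s)}$ and $\omega_{N,\Phi(0)}\to\omega_{\Phi(0)}$ in the weak-$*$ topology, while Step (2) gives $\Ad\lmk V_{\Lambda_N}(s)\rmk(A)\to\alpha_s(A)$, and likewise for the backward flow, for $A\in\caA_{\rm loc}$. A routine triangle-inequality estimate combining these two facts transports the intertwining to the infinite volume, and since weak-$*$ limit points of finite-volume ground states are ground states of $\Phi(s)$, uniqueness of the latter forces $\omega_{\Phi(s)}=\omega_{\Phi(0)}\circ\alpha_s$.

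\noindent\emph{Where the difficulty lies.} Everything hinges on Step (1) and the quasi-locality estimates it feeds into Step (2): one must show that, after integrating the instantaneous Heisenberg evolution against $W_\gamma$, the quasi-adiabatic generator is still an interaction whose decay is good enough --- \emph{uniformly} in the cutoff $\Lambda$ and in $s$ --- for the Lieb-Robinson bounds and the thermodynamic-limit construction to go through. This is the technical core of \cite{bmns,nsy}; once it is in hand, Steps (2)--(4) are essentially bookkeeping. A secondary point of care, in Step (4), is interchanging the thermodynamic limit of the unitaries with the weak-$*$ limit of the finite-volume ground states --- precisely what the uniform-on-local-observables convergence from Step (2) is for.
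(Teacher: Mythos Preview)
Your proposal is correct and follows precisely the route the paper indicates: the paper does not give a self-contained proof of this theorem but attributes it to \cite{bmns,nsy}, sketching only that the Lieb-Robinson bound renders the evolved terms $e^{itH_{\Lambda,\Phi}(s)}\dot\Phi(Z;s)e^{-itH_{\Lambda,\Phi}(s)}$ ``almost local'' so that $D_\Lambda(s)$ can be recast as a quasi-local interaction, after which the thermodynamic limit follows. Your Steps~(1)--(4) are exactly a fleshed-out version of this outline, and your identification of the technical core (the uniform quasi-locality estimate on the generator) matches what the paper signals by pointing to \cite{nsy}.
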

Note that in Assumption \ref{thermo}, 
we assume the existence of the uniform gap in finite systems.
This is against our policy that we should start from infinite systems directly.
Motivated by this, we introduced an infinite-version of Theorem \ref{autonsy}.
There, we assume the gap in the bulk. We now explain about this infinite version.

For $\Gamma\subset\bbZ^{\nu}$, we denote by  $\Pi_{\Gamma}:\caA\to \caA_{\Gamma}$ the conditional expectation with respect to the trace state $\tau$, i.e., a linear map satisfying
\begin{align*}
\Pi_\Gamma(a\otimes b)
=a\tau(b),\quad a\in \caA_{\Gamma},\quad b\in\caA_{\Gamma^c}.
\end{align*}
For each $N\in \bbN$, recall that  $\Lambda_N:=[-N,N]^\nu\cap \bbZ^\nu$.
Let us consider the following subset of $\caA$.
\begin{defn}
Let $f:(0,\infty)\to (0,\infty)$ be a continuous decreasing function 
with $\lim_{t\to\infty}f(t)=0$.
For each $A\in\caA$, let
\begin{align}
\lV A\rV_f:=\lV A\rV
+ \sup_{N\in \nan}\lmk\frac{\lV
A-\bbE_{\Lambda_N}(A)
\rV}
{f(N)}
\rmk.
\end{align}
We denote by $\caD_f$ the set of all $A\in\caA$ such that
$\lV A\rV_f<\infty$.
\end{defn}
The set $\caD_f$ is a $*$-algebra which is a Banach space with respect to 
the norm $\lV\cdot\rV_f$.
Instead of  Assumption \ref{thermo}, we assume the following.
\begin{assum}\label{infautoassump}
\begin{description}
\item[(i)] For each $s\in[0,1]$, there exists a unique $\tau_{\Phi(s)}$-ground state
$\omega_{\Phi(s)}$. 
\item[(ii)] 
There exists a $\gamma>0$ such that
\begin{align}\label{gapeq}
-i\omega_{\Phi(s)}\lmk A^*{\delta_{\Phi(s)}}\lmk A\rmk\rmk\ge \gamma
\omega_{\Phi(s)}(A^*A),\quad \text{for all }A\in \caD({\delta_{\Phi(s)}}) \;\text{with} \; \omega_{\Phi(s)}(A)=0,
\end{align}
 for
all $s\in[0,1]$.
\item[(iii)]
There exists $0<\beta<1$ satisfying the following:
Set $\zeta(t):=e^{-t^{ \beta}}$.
Then for each $A\in D_\zeta$, 
\[
[0,1]\ni s\mapsto \omega_{\Phi(s)}(A)
\]
 is differentiable with respect to $s$, and there is a constant
$C_\zeta$ such that:
\begin{align}\label{dcon}
\lv
\frac{d{\omega_{\Phi (s)}}(A)}{ds}
\rv
\le C_\zeta\lV A\rV_\zeta,
\end{align}
for any $A\in D_\zeta$.
\end{description}
\end{assum}
One can show that  Assumption \ref{thermo} implies Assumption \ref{infautoassump}.
Namely, Assumption \ref{infautoassump} is a weaker condition.
Nevertheless, we obtained the following.
\begin{thm}[\cite{mo}]\label{autoinf}
Let $\Phi$ be a path of interactions satisfying Assumption \ref{assump} and Assumption \ref{infautoassump}.
Then the conclusion of Theorem \ref{autonsy} holds.
\end{thm}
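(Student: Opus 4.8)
The plan is to run the finite-volume spectral-flow construction of Theorem~\ref{Hastings} while keeping the \emph{bulk} gap $\gamma$ of Assumption~\ref{infautoassump}(ii) fixed throughout every estimate, and then to replace the one step in the proof of Theorem~\ref{autonsy} that genuinely needs a finite-volume gap --- ``the spectral flow sends the finite-volume ground state to the finite-volume ground state'' --- by an argument in the thermodynamic limit that uses only the bulk gap together with the regularity hypothesis Assumption~\ref{infautoassump}(iii). First I would apply Theorem~\ref{Hastings} to the local Hamiltonians $H_{\Lambda,\Phi}(s)$ with the kernel $W_\gamma$ taken from the bulk $\gamma$, obtaining the unitaries $V_\Lambda(s)$ solving (\ref{vh}); set $D_\Lambda(s):=\int_{-\infty}^{\infty}dt\,W_\gamma(t)\,e^{itH_{\Lambda,\Phi}(s)}H_{\Lambda,\dot\Phi}(s)e^{-itH_{\Lambda,\Phi}(s)}$. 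Since $H_{\Lambda,\dot\Phi}(s)=\sum_{Z\subset\Lambda}\dot\Phi(Z;s)$ and each summand $e^{itH_{\Lambda,\Phi}(s)}\dot\Phi(Z;s)e^{-itH_{\Lambda,\Phi}(s)}$ is quasi-local by the Lieb--Robinson bound and Assumption~\ref{assump}, integrating against $W_\gamma\in L^1$ and regrouping exhibits $D_\Lambda(s)$ as the local Hamiltonian of an interaction $\Psi(\cdot\,;s)$ whose terms decay faster than any polynomial, uniformly in $\Lambda$ and $s$ --- exactly as in \cite{bmns}\cite{nsy}. The existence-of-dynamics machinery used there then produces the thermodynamic limit $\alpha_s(A):=\lim_{\Lambda\to\bbZ^\nu}\Ad(V_\Lambda(s))(A)$, a strongly continuous family of automorphisms, with the differential relation $\tfrac{d}{ds}\alpha_s(A)=\delta_{\Psi(s)}(\alpha_s(A))$ on a norm-dense domain (for instance $\caD_\zeta$), where $\delta_{\Psi(s)}$ is the closed derivation generated by $\Psi(\cdot\,;s)$. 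Nothing new enters this step beyond holding $\gamma$ fixed instead of using a finite-volume gap.

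Set $\widetilde\omega_s:=\omega_{\Phi(0)}\circ\alpha_s$. Because the $\alpha^{\Phi(s)}$-ground state is unique by Assumption~\ref{infautoassump}(i), the conclusion $\omega_{\Phi(s)}=\widetilde\omega_s$ follows once the difference $\nu_s:=\widetilde\omega_s-\omega_{\Phi(s)}$ is shown to vanish identically; note $\nu_0=0$. The strategy is to prove that $s\mapsto\nu_s$ solves a linear transport equation $\tfrac{d}{ds}\nu_s=\nu_s\circ\delta_s$ on $\caD_\zeta$, with $\delta_s:=\alpha_s^{-1}\circ\delta_{\Psi(s)}\circ\alpha_s$ --- conjugating a quasi-local derivation by a quasi-local automorphism again yields a derivation of an interaction with terms decaying faster than any polynomial, by another Lieb--Robinson estimate, and $s\mapsto\delta_s$ is piecewise continuous in the relevant sense --- and then to invoke uniqueness for this equation, a Gronwall estimate built on that uniform quasi-locality and continuity, to get $\nu_s\equiv0$. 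From the differential relation for $\alpha_s$ in the first paragraph one reads off $\tfrac{d}{ds}\widetilde\omega_s(A)=\widetilde\omega_s(\delta_s(A))$ at once, so the transport equation for $\nu_s$ is equivalent to the \emph{adiabatic identity} $\tfrac{d}{ds}\omega_{\Phi(s)}(A)=\omega_{\Phi(s)}\bigl(\delta_s(A)\bigr)$ for $A\in\caD_\zeta$.

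The adiabatic identity is the heart of the matter, and it is precisely where the bulk gap must do the work of a finite-volume gap. I would prove it by a quasi-adiabatic-continuation argument carried out in the GNS representation $(\caH_{\omega_{\Phi(s)}},\pi_{\omega_{\Phi(s)}},\Omega_{\omega_{\Phi(s)}})$ of $\omega_{\Phi(s)}$, in which, by Assumption~\ref{infautoassump}(ii), the bulk Hamiltonian $H_{\omega_{\Phi(s)},\Phi(s)}$ has a \emph{genuine} spectral gap $\gamma$ above the (one-dimensional, by Assumption~\ref{infautoassump}(i)) ground space $\bbC\Omega_{\omega_{\Phi(s)}}$. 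In that representation $\delta_s$ is implemented by a commutator with an operator of Hastings' form, $\int_{-\infty}^{\infty}dt\,W_\gamma(t)\,e^{itH_{\omega_{\Phi(s)},\Phi(s)}}\,B_s\,e^{-itH_{\omega_{\Phi(s)},\Phi(s)}}$, where $B_s$ is the formal sum $\sum_{Z}\pi_{\omega_{\Phi(s)}}\!\bigl(\alpha_s^{-1}(\dot\Phi(Z;s))\bigr)$ --- extensive, but each summand and each $e^{itH}(\cdot)e^{-itH}$-evolved summand a bona fide quasi-local operator, thanks to Lieb--Robinson. The algebraic observation underlying Hastings' theorem --- that, thanks to the Fourier-support condition (\ref{fou}) and the gap, the ``off-diagonal'' part of such an operator relative to the ground projection is the reduced-resolvent expression $P(\,\cdot\,)(E_0-H)^{-1}(1-P)+\text{h.c.}$ occurring in the computation of $\tfrac{d}{ds}P(s)$ reproduced above --- then identifies $\omega_{\Phi(s)}(\delta_s(A))$ with $\tfrac{d}{ds}\omega_{\Phi(s)}(A)$, now with the infinite-volume gap playing the role played by the uniform finite-volume gap in Theorem~\ref{autonsy}. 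Assumption~\ref{infautoassump}(iii) is exactly what makes the left-hand side well defined and supplies the bound $\bigl|\tfrac{d}{ds}\omega_{\Phi(s)}(A)\bigr|\le C_\zeta\lV A\rV_\zeta$ needed to control all the approximations uniformly over $\caD_\zeta$, while Theorem~\ref{expdec} and the quasi-locality estimates of \cite{nsy} handle the passage to the infinite volume.

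Once the adiabatic identity is established, $s\mapsto\nu_s$ solves $\tfrac{d}{ds}\nu_s=\nu_s\circ\delta_s$ on $\caD_\zeta$ with $\nu_0=0$, hence $\nu_s\equiv0$ on the norm-dense set $\caD_\zeta$ and therefore $\omega_{\Phi(s)}=\omega_{\Phi(0)}\circ\alpha_s$ on all of $\caA_{\bbZ^\nu}$, for every $s\in[0,1]$; together with the construction of $\alpha_s$ in the first paragraph this is exactly the conclusion of Theorem~\ref{autonsy}. The step I expect to be the main obstacle is the adiabatic identity of the third paragraph: making rigorous, in the infinite system and with only the bulk gap available, the claim that the spectral-flow generator reproduces the $s$-derivative of the ground state --- equivalently, that no ``sub-gap'' error survives the thermodynamic limit --- and justifying the interchanges of limit, integral and derivative that this requires. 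It is there that one genuinely uses the infinite-volume gap rather than a uniform finite-volume one, and there that Assumption~\ref{infautoassump}(iii) is indispensable.
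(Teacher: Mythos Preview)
Your overall architecture is right: construct $\alpha_s$ by quasi-local methods (this part is unchanged from \cite{bmns,nsy}) and reduce everything to the adiabatic identity $\dot\omega_{\Phi(s)}(A)=-\omega_{\Phi(s)}\bigl(i[D(s),A]\bigr)$ on a dense domain. Once that identity holds, $\tfrac{d}{ds}\bigl(\omega_{\Phi(s)}\circ\alpha_s^{-1}\bigr)=0$ follows in one line, so your transport-equation/Gronwall layer is an unnecessary detour; but that is minor. (Incidentally, the clean form of the identity does not involve $\alpha_s$: your $\delta_s=\alpha_s^{-1}\delta_{\Psi(s)}\alpha_s$ is an artifact of the convention $\tfrac{d}{ds}\alpha_s=\delta_{\Psi(s)}\circ\alpha_s$; with the generator on the other side the identity reads $\dot\omega_{\Phi(s)}=\omega_{\Phi(s)}\circ\delta_{\Psi(s)}$, intrinsic to the path.)

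The genuine gap is in your third paragraph, and it is exactly the one the paper flags. You propose to prove the adiabatic identity in the GNS representation of $\omega_{\Phi(s)}$ by invoking the reduced-resolvent computation $P'(s)=i[D(s),P(s)]$ from the finite-dimensional Hastings argument. But that computation is obtained by \emph{differentiating the ground projection} in a fixed Hilbert space, and in the infinite system there is none: the GNS representations of $\omega_{\Phi(s)}$ for distinct $s$ are generically mutually singular, so $P(s)$ and $P(s')$ live in different $\caB(\caH)$ and the symbol $P'(s)$ has no meaning. Working inside a single $\caH_{\omega_{\Phi(s)}}$ you can certainly form $\omega_{\Phi(s)}(\delta_s(A))$, but nothing in that one Hilbert space computes $\tfrac{d}{ds}\omega_{\Phi(s)}(A)$, which compares states at \emph{different} $s$. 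This is precisely what the paper means by ``the proof of Hastings's theorem fully uses the fact that they are working on operators on the same Hilbert space \ldots\ This requires a new idea.''

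The paper's replacement avoids projections altogether and works purely at the level of states. One introduces the Hastings smoothing $I_s(A):=\int w_\gamma(t)\,\alpha^{\Phi(s)}_t(A)\,dt$ with $\supp\hat w_\gamma\subset(-\gamma,\gamma)$; the bulk gap yields the factorization (\ref{3m}), $\omega_{\Phi(s)}(B^*I_s(A))=\omega_{\Phi(s)}(B^*)\,\omega_{\Phi(s)}(A)$, which makes sense without any Hilbert-space identification across $s$. Differentiating (\ref{3m}) in $s$ --- this is where Assumption~\ref{infautoassump}(iii) is used --- and substituting for $B$ an approximate identity of the \emph{left kernel} of $\omega_{\Phi(s)}$ (the infinite-volume surrogate for $\unit-P(s)$) gives $\dot\omega_{\Phi(s)}(I_s(A))=0$. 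Since $A-I_s(A)$ is of the form $\delta_{\Phi(s)}(\text{something quasi-local})$, differentiating the ground-state equation $\omega_{\Phi(s)}\circ\delta_{\Phi(s)}=0$ produces (\ref{6m}) and lets one rewrite $\dot\omega_{\Phi(s)}(A)=\dot\omega_{\Phi(s)}(A-I_s(A))$ entirely in terms of $\omega_{\Phi(s)}$ and $\dot\Phi(s)$, arriving at the adiabatic identity. Every object in this chain --- $I_s$, the factorization (\ref{3m}), the left-kernel approximate identity, the derivation $\delta_{\Phi(s)}$ --- is defined state by state in $s$, so the argument never compares Hilbert spaces at different parameter values. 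That is the missing idea in your proposal.
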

Although the result of Theorem \ref{autoinf} is analogous to that of 
Theorem \ref{autonsy}, there is a crucial difference about the proof.
The proof of Hasting's theorem fully uses the fact that they are working on operators
{\it on the same Hilbert space}.
In particular, the property $P(s)^2=P(s)$
of the orthogonal projection was a key input for
the analysis.
On the other hand, for infinite version, the GNS representation of
$\omega_{\Phi(s)}$ for different value of $s$ s are mutually singular in general.
This requires a new idea.

In order to explain the point of the proof of Theorem \ref{autoinf},
we now prove Hastings's Lemma following the strategy of Theorem \ref{autoinf}.
Assume that $P(s)$ in the setting of Theorem \ref{Hastings}
to be of one rank (namely, each $H(s)$ has a unique ground state), and set $\omega_s:=\Tr P(s)(\cdot)$, and $\alpha_s:=\Ad(U(s)^*)$.
It suffices to show that 
\[
\frac{d}{ds} \omega_{s}\alpha_s^{-1}(A)
=0,
\]
for any $A$.
From the definition of $\alpha_s$,
this means
\begin{align}\label{7m}
0=\dot\omega_s\circ\alpha_s^{-1}(A)
+\omega_s\lmk
i\left[
D(s), \alpha_s^{-1}(A)
\right]
\rmk.
\end{align}
The question is what is $\dot\omega_s$.
To consider it,
we recall the fact that $\omega_s$ is a ground state of $H(s)$.
This means it is invariant under $i[H(s),\cdot]$.
Therefore, we have
\[
\omega_s\circ \lmk i[H(s),\cdot] \rmk=0.
\]
Differentiating this, we obtain
\begin{align}\label{6m}
\dot\omega_s\lmk i[H(s),A]\rmk
+\omega_s\lmk i[\dot H(s),A]\rmk =0.
\end{align}
If $i[H(s),\cdot]$ was invertible, this would give us a representation
of $\dot\omega_s$ in terms of $\omega_s$.
Unfortunately, $i[H(s),\cdot]$ is not invertible.
In order to solve the issue, an operation called Hasting's operator
$I_s$ plays an important role.
Let 
$w_\gamma$ be a non-negative $L^1$-function 
such that
\[
\supp\hat w_\gamma\subset (-\gamma,\gamma),\quad\text{and}\quad
\int dt w_\gamma(t)=1.
\]
In fact we may take this $\omega_\gamma$ and $W_\gamma$ (in Theorem \ref{Hastings}), as even and odd functions respectively
so that vanishing at infinity and 
\[
\frac{d}{dt} W_\gamma(t)=-\omega_{\gamma}(t),\quad t\in[0,\infty).
\]
(See \cite{nsy} section 6.)
Set
\[
I_s\lmk A\rmk
:=\int dt w_\gamma(t) \Ad\lmk e^{iH(s)t} \rmk(A).
\]
Because of the support condition of $\hat w_\gamma$,
we have
\begin{align}\label{3m}
\omega_s\lmk B^* I_s(A)\rmk
=\omega_s(B^*)\omega_s(A),
\end{align}
for any $A,B$.
Indeed, we have
\begin{align*}
&\omega_s\lmk B^* I_s(A)\rmk
=\int dt w_\gamma(t) 
\Tr\lmk P(s)
B^* e^{i\lmk H(s)-E(s\rmk t} A
\rmk\\
&=\Tr\lmk P(s)B^*
\lmk
\hat w_\gamma( -H(s)+E(s))
\rmk
 A\rmk.
\end{align*}
Now, note that because of the spectral gap
$\sigma(H(s))\setminus \{E(s)\}\subset [E(s)+\gamma,\infty)$,
and the support property of $\hat w_{\gamma}$,
$\supp\hat w_\gamma\subset (-\gamma,\gamma)$,
$\hat w_\gamma( H(s)-E(s))$ is proportional to $P(s)$,$\hat w_\gamma( H(s)-E(s))=cP(s)$.
Setting $A=B=\unit$, we see that $c=1$.
Hence we obtain (\ref{3m}).
The same argument allows us to prove (\ref{3m}) (for $\omega_s=\omega_{\Phi(s)}$) in infinite system, on GNS Hilbert space
of $\omega_{\Phi(s)}$,
using the bulk Hamiltonian $H_{\omega_{\Phi(s)},\Phi(s)}$ instead of $H(s)$ 
and an orthogonal projection onto $\bbC\Omega_{\Omega_{\Phi(s)}}$
instead of $P(s)$. The point is everything in the argument is well-defined even in infinite setting.

In particular, from (\ref{3m}), with $A$ replaced by $A^*$ and $B$ by $\lmk I_s(A)-  \omega_s(A)\rmk^*$,
 we have
\begin{align*}
\omega_s\lmk  \lmk I_s(A)-  \omega_s(A)\rmk \lmk I_s(A)-  \omega_s(A)\rmk^* \rmk=0.
\end{align*}
This means 
\begin{align}\label{5m}
\lmk\unit-P(s)\rmk\lmk I_s(A)-  \omega_s(A)\rmk
=\lmk I_s(A)-  \omega_s(A)\rmk.
\end{align}
Now differentiating (\ref{3m}), we have
\begin{align*}
\dot \omega_s\lmk B^* I_s(A)\rmk +\omega_s\lmk B^*\dot  I_s(A)\rmk
=\dot \omega_s(B^*)\omega_s(A) +\omega_s(B^*)\dot \omega_s(A).
\end{align*}
Substituting $B=\unit-P(s)$
to this equation, and recalling (\ref{5m}), we obtain 
\begin{align*}
\dot \omega_s(I_s(A))= \dot \omega_s \lmk I_s(A)-  \omega_s(A)\rmk 
=
\dot \omega_s\lmk \unit-P(s) \lmk I_s(A)-  \omega_s(A)\rmk \rmk 
=0.
\end{align*}
Here we used the clear fact that $\dot\omega_s(\unit)=0$.
The same argument allows us to prove $\dot \omega_{\Phi(s)}( I_s(A))=0$,
for infinite system if we replace the role of $\unit-P(s)$
by approximate identity of the left kernel 
associated to $\omega_{\Phi(s)}$.

Hence we have obtained $\dot \omega_s(I_s(A))=0$.
A good thing about it is that the difference between $I_s(A)$
and $A$
is given by
\begin{align*}
A-I_s(A)
=-\int dt w_\gamma(t) 
\int_{0}^t du\; i\left[H(s), \Ad\lmk e^{iH(s)u} \rmk(A)\right],
\end{align*}
and $i[H(s),\cdot]$ shows up.
Substituting this to $\dot\omega_s$,  and recalling $\dot \omega_s(I_s(A))=0$,
we obtain
\begin{align*}
\begin{split}
&\dot \omega_s\lmk A\rmk=
\dot \omega_s\lmk A-I_s(A)\rmk \\
&=-\int dt w_\gamma(t) 
\int_{0}^t du\; 
\dot \omega_s\lmk i\left[H(s), \Ad\lmk e^{iH(s)u} \rmk(A)\right]\rmk.
\end{split}
\end{align*}
Now we apply (\ref{6m}) to the last term.
Then we obtain
\begin{align*}
&\dot \omega_s\lmk A\rmk=
\int dt w_\gamma(t) 
\int_{0}^t du\; 
\omega_s\lmk i\left[\dot H(s), \Ad\lmk e^{iH(s)u} \rmk(A)\right]\rmk\\
&=
\int dt W_\gamma(t) 
\omega_s\lmk i\left[\dot H(s), \Ad\lmk e^{iH(s)t} \rmk(A)\right]\rmk\\
&=
-\int dt W_\gamma(t) 
\omega_s\lmk i\left[\Ad\lmk e^{iH(s)t}\lmk \dot H(s)\rmk \rmk(A)\right]\rmk\\
&=
-\omega_s\lmk
i\left[
D(s), A
\right]
\rmk.
\end{align*}
Now we use this formula to prove the desired equation (\ref{7m}).
Indeed we have
\begin{align*}
&\dot\omega_s\circ\alpha_s^{-1}(A)
+\omega_s\lmk
i\left[
D(s), \alpha_s^{-1}(A)
\right]
\rmk\\
&=
-\omega_s\lmk
i\left[
D(s), \alpha_s^{-1}(A)
\right]
\rmk
+\omega_s\lmk
i\left[
D(s), \alpha_s^{-1}(A)
\right]
\rmk=0.
\end{align*}
Hence we have proven the Theorem.
The important part here is that we use only objects, which has its infinite version.
For example the commutator $[H(s),\cdot]$, still makes sense
although $H(s)$ itself does not have any meaning in infinite system.

As a dynamics given by (time-dependent) local interactions,
the automorphism in Theorem \ref{autoinf}
satisfies some {\it quasi local} property.
For example, it satisfies the Lieb-Robinson bound.
Among such properties, what is important in analysis of SPT-phases is what we call
{\it factorization property}.

Factorization property in $1$-dimensional system is 
the following.
\begin{prop}\cite{TRI}
Let $\nu=1$.
The automorphism $\alpha_s$ in Theorem \ref{autoinf} satisfies a factorization property such that
\[
\alpha_s=\inn\circ\lmk\alpha_L\otimes\alpha_R\rmk,
\]
where $\alpha_L$ is an automorphism on $\caA_{(-\infty,-1]}$,
$\alpha_R$ an automorphism on $\caA_{[0,\infty)}$.
\end{prop}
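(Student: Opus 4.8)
The plan, following \cite{TRI}, is to realize $\alpha_s$ as the dynamics generated by a time-dependent quasi-local interaction, split that interaction into a piece supported in $(-\infty,-1]$, a piece supported in $[0,\infty)$, and a boundary piece localized across the bond between $-1$ and $0$, and then absorb the boundary piece into an inner automorphism. Concretely, I would first use the quasi-locality of $\alpha_s$ recorded just after Theorem \ref{autoinf}: $\alpha_s$ is the thermodynamic limit of $\Ad(V_\Lambda(s))$ for the Hastings unitaries of \eqref{vh}, and its generator can be expressed through a time-dependent interaction $\Psi(\,\cdot\,;s)$ which, although not of finite range, decays fast. I would record the precise statement that there are a decay function $F$ with $\sum_{n\ge1}n\,F(n)<\infty$ and a constant $C$ such that
\begin{align}\label{prop-lr}
\sup_{s\in[0,1]}\ \sup_{x,y\in\bbZ}\ \frac{1}{F(|x-y|)}\sum_{Z\ni x,\,y}\|\Psi(Z;s)\|\ \le\ C,
\end{align}
this decay following from the Lieb--Robinson bound together with the super-polynomial decay of the Hastings weight $W_\gamma$ (cf.\ \cite{nsy},\cite{bmns}); in this formulation $\tfrac{d}{ds}\alpha_s=\delta_{\Psi(s)}\circ\alpha_s$, where $\delta_{\Psi(s)}$ denotes the derivation associated to $\Psi(s)$.

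Next I would set $L:=(-\infty,-1]$, $R:=[0,\infty)$ and decompose $\Psi(\,\cdot\,;s)=\Psi_L(\,\cdot\,;s)+\Psi_R(\,\cdot\,;s)+\Psi_\partial(\,\cdot\,;s)$, where $\Psi_L(Z;s)$ equals $\Psi(Z;s)$ for $Z\subset L$ and $0$ otherwise, $\Psi_R$ is defined symmetrically, and $\Psi_\partial(Z;s)$ equals $\Psi(Z;s)$ precisely when $Z$ meets both $L$ and $R$. The interactions $\Psi_L,\Psi_R$ inherit the bound \eqref{prop-lr}, so by the standard theory of quasi-local dynamics (cf.\ \cite{BR2},\cite{nsy}) they generate strongly continuous families $\alpha_L^s\in\Aut(\caA_{(-\infty,-1]})$, $\alpha_R^s\in\Aut(\caA_{[0,\infty)})$, and $\gamma_s:=\alpha_L^s\otimes\alpha_R^s\in\Aut(\caA_\bbZ)$ satisfies $\tfrac{d}{ds}\gamma_s=\delta_{\Psi_L(s)+\Psi_R(s)}\circ\gamma_s$. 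The decisive one-dimensional input is that the boundary piece sums to a genuine element $b(s):=\sum_{Z\cap L\ne\emptyset,\,Z\cap R\ne\emptyset}\Psi(Z;s)$ of $\caA_\bbZ$: since each such $Z$ contains some $x\le-1$ and some $y\ge0$,
\begin{align}\label{prop-bdy}
\|b(s)\|\ \le\ \sum_{x\le-1}\ \sum_{y\ge0}\ \sum_{Z\ni x,\,y}\|\Psi(Z;s)\|\ \le\ C\sum_{x\le-1}\sum_{y\ge0}F(|x-y|)\ =\ C\sum_{n\ge1}n\,F(n)\ <\ \infty,
\end{align}
uniformly in $s$, while $s\mapsto b(s)$ is piecewise norm-continuous by Assumption \ref{assump}. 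Hence $\delta_{\Psi_\partial(s)}=i[\,b(s),\,\cdot\,]$ is a bounded inner derivation and $\delta_{\Psi(s)}=\delta_{\Psi_L(s)+\Psi_R(s)}+i[\,b(s),\,\cdot\,]$.

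Then I would peel off the boundary piece by a cocycle argument. Writing $\alpha_s=\gamma_s\circ\beta_s$, differentiating, and combining $\tfrac{d}{ds}\alpha_s=\delta_{\Psi(s)}\circ\alpha_s$, $\tfrac{d}{ds}\gamma_s=\delta_{\Psi_L(s)+\Psi_R(s)}\circ\gamma_s$ with the decomposition of $\delta_{\Psi(s)}$ just obtained, one finds $\dot\beta_s=i[\,\gamma_s^{-1}(b(s)),\,\beta_s(\,\cdot\,)\,]$ with $\beta_0=\id$. Since $s\mapsto\gamma_s^{-1}(b(s))$ is a piecewise norm-continuous path of self-adjoint elements of $\caA_\bbZ$ with $\sup_s\|\gamma_s^{-1}(b(s))\|=\sup_s\|b(s)\|<\infty$, its time-ordered exponential $v_s$ — the norm-convergent Dyson series solving $\dot v_s=i\,\gamma_s^{-1}(b(s))\,v_s$, $v_0=\unit$ — is a well-defined unitary in $\caU(\caA_\bbZ)$, and $\beta_s=\Ad(v_s)$. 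Therefore $\alpha_s=\gamma_s\circ\Ad(v_s)=\Ad(\gamma_s(v_s))\circ\gamma_s$, and since $\gamma_s(v_s)\in\caU(\caA_\bbZ)$ this is precisely $\inn\circ(\alpha_L^s\otimes\alpha_R^s)$; taking $\alpha_L:=\alpha_L^s$ and $\alpha_R:=\alpha_R^s$ finishes the argument.

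The step I expect to be the main obstacle is the claim underlying \eqref{prop-bdy}: that the generating interaction $\Psi(\,\cdot\,;s)$ really obeys a decay estimate of the form \eqref{prop-lr} with $\sum_n n\,F(n)<\infty$, uniformly in $s$, and with $s\mapsto b(s)$ norm-continuous. Verifying this forces one back into the construction of $\alpha_s$ (Hastings' operator, the Lieb--Robinson bound, and the super-polynomial decay of $W_\gamma$) in \cite{mo},\cite{bmns},\cite{nsy} in order to extract quantitative locality bounds for $\Psi(\,\cdot\,;s)$; by contrast the manipulations in the previous paragraph are routine Duhamel bookkeeping. It is precisely this summability that breaks down for $\nu\ge2$, where the cut contains infinitely many sites and the analogous boundary sum diverges, which is why the higher-dimensional factorization has to be formulated differently.
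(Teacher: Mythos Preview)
The paper does not supply its own proof of this proposition; it merely states the result with a citation to \cite{TRI}. Your argument is the standard one and matches the approach in that reference: express the spectral-flow automorphism as the dynamics generated by a time-dependent quasi-local interaction, split the interaction at the bond $\{-1,0\}$, observe that in one dimension the crossing terms sum absolutely to a self-adjoint element of $\caA_\bbZ$, and absorb the resulting inner derivation via a Dyson expansion into a unitary cocycle. The Duhamel bookkeeping you wrote out is correct, and your identification of the only substantive analytic input---the summability $\sum_n nF(n)<\infty$ for the decay function governing $\Psi(\cdot;s)$---is exactly right; this is guaranteed by the sub-exponential decay of the Hastings weight $W_\gamma$ together with the Lieb--Robinson bound, as worked out in \cite{bmns,nsy,mo}. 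Your closing remark about why the same argument fails for $\nu\ge 2$ (infinitely many boundary sites) is also on point and explains why the paper states a weaker factorization \eqref{tfac} in that case.
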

A two-dimensional version we use is as follows.
For $0<\theta<\frac\pi 2$, a cone $C_\theta$ is defined by
\begin{align}\label{ctdef}
C_\theta:=
\left\{
(x,y)\in\bbZ^2\mid
|y|\le \tan \theta\cdot |x|
\right\}.
\end{align}
Furthermore, $H_L$, $H_R$,$H_U$, $H_D$  denotes half left/right and upper/lower planes,
and $C_{\theta,L}:=C_\theta\cap H_L$, $C_{\theta,R}:=C_\theta\cap H_R$,
$C_{\theta,U}:=C_\theta\cap H_U$, $C_{\theta,D}:=C_\theta\cap H_D$.
\begin{prop}
Let $\nu=2$.
For any $0<\theta<\frac\pi 2$,
there is $\alpha_L\in\Aut\caA_{H_L}$, $\alpha_R\in\Aut\caA_{H_R}$,
and $\Theta\in\Aut \caA_{\lmk C_\theta\rmk^c}$ such that
\begin{align}\label{tfac}
\alpha_s=\inn\circ\lmk\alpha_L\otimes\alpha_R\rmk\circ\Theta.
\end{align}
\end{prop}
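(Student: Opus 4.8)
The plan is to follow the explicit construction behind Theorem~\ref{autoinf} and to imitate the one-dimensional argument of the preceding Proposition (\cite{TRI}); the only genuinely new point is geometric. Cutting the plane along the $y$-axis, i.e.\ along the boundary between $H_L$ and $H_R$, produces a residual automorphism localised near that whole line, not near a single point, so it need not be inner. What rescues the statement is that $\lmk C_\theta\rmk^c$ contains the punctured $y$-axis: the part of the residual automorphism living far out along the axis can be kept inside $\Aut\caA_{\lmk C_\theta\rmk^c}$, and only the part near the origin --- where $C_\theta$ forces the $y$-axis neighbourhood to pinch --- must be absorbed into the inner factor.

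First I would record that, as in the proof of Theorem~\ref{autoinf} (see \cite{bmns}, \cite{nsy}, \cite{mo}), $\alpha_s$ is obtained by integrating a time-dependent interaction $\Psi(\cdot\,;u)$ on $\bbZ^2$ over $u\in[0,s]$, where the terms of $\Psi$ decay in diameter faster than any polynomial, uniformly in $u$ (this is the standard form of the Hastings generator once the Lieb--Robinson bound is used to localise $e^{it H_{\Lambda,\Phi}}H_{\Lambda,\dot\Phi}e^{-it H_{\Lambda,\Phi}}$). Then I would decompose $\Psi(\cdot\,;u)=\Psi_L(\cdot\,;u)+\Psi_R(\cdot\,;u)+\Psi_\partial(\cdot\,;u)$ according to whether a term is supported in $H_L$, in $H_R$, or in neither. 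The flows $\beta^L_u$ of $\Psi_L$ and $\beta^R_u$ of $\Psi_R$ are the spin dynamics of the sublattices $H_L$ and $H_R$, so $\beta^L_u\in\Aut\caA_{H_L}$, $\beta^R_u\in\Aut\caA_{H_R}$, and (the two generators commuting by disjoint supports) $\beta^L_u\otimes\beta^R_u$ is the flow of $\Psi_L+\Psi_R$. By the interaction-picture (Duhamel) identity, $\alpha_s=\lmk\beta^L_s\otimes\beta^R_s\rmk\circ\hat\gamma_s$, where $\hat\gamma$ is the flow of the generator obtained by conjugating the $\Psi_\partial$-generator with $\lmk\beta^L_u\otimes\beta^R_u\rmk^{-1}$. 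Since $\beta^L_u\otimes\beta^R_u$ is quasi-local with super-polynomial tails and every support carrying a term of $\Psi_\partial$ straddles the $y$-axis, this conjugated generator can, after re-summation into interaction form, be written as $\sum_Y\lcm\hat\Psi(Y;u),\cdot\rcm$ with $\hat\Psi$ a super-polynomially decaying interaction every term of which still straddles the $y$-axis.

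Next I would split $\hat\Psi(\cdot\,;u)=\hat\Psi^{\mathrm{out}}(\cdot\,;u)+\hat\Psi^{\mathrm{in}}(\cdot\,;u)$, where $\hat\Psi^{\mathrm{out}}$ collects the terms supported in $\lmk C_\theta\rmk^c$ and $\hat\Psi^{\mathrm{in}}$ the terms whose support meets $C_\theta$. A set $Y$ that straddles the $y$-axis and meets $C_\theta$ is contained in a ball about the origin of radius $O(\diam Y)$, so the super-polynomial decay forces $\sup_u\sum_Y\lV\hat\Psi^{\mathrm{in}}(Y;u)\rV<\infty$: the at-most-polynomially-many candidate supports of diameter $\le m$ near the origin are beaten by the decay. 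Hence $\hat\Psi^{\mathrm{in}}$ generates a bounded derivation, so (Duhamel once more) $\hat\gamma_s=\hat\gamma^{\mathrm{out}}_s\circ\inn$, where $\hat\gamma^{\mathrm{out}}_s$, being the spin dynamics of the sublattice $\lmk C_\theta\rmk^c$, lies in $\Aut\caA_{\lmk C_\theta\rmk^c}$. Setting $\Theta:=\hat\gamma^{\mathrm{out}}_s$ and commuting the inner factor through $\hat\gamma^{\mathrm{out}}_s$ and then through $\beta^L_s\otimes\beta^R_s$ (it remains inner, both maps carrying unitaries to unitaries) gives $\alpha_s=\inn\circ\lmk\beta^L_s\otimes\beta^R_s\rmk\circ\Theta$, so $\alpha_L:=\beta^L_s$, $\alpha_R:=\beta^R_s$ and $\Theta$ are as required.

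The hard part will be the estimate $\sup_u\sum_Y\lV\hat\Psi^{\mathrm{in}}(Y;u)\rV<\infty$: this is where one genuinely needs more than absolute summability of the interaction, since along the $y$-axis the number of candidate supports grows without bound and only the faster-than-polynomial decay of the Hastings interaction --- together with the super-polynomial Lieb--Robinson tails used to produce $\hat\Psi$ out of $\Psi_\partial$ --- controls the near-origin accumulation. The remaining technical points --- that $\beta^L_u$, $\beta^R_u$ and $\hat\gamma^{\mathrm{out}}_s$ are \emph{exactly}, not merely approximately, supported in $H_L$, $H_R$ and $\lmk C_\theta\rmk^c$ (true because each is the dynamics of an interaction living on the corresponding sublattice), and that re-summing the conjugated $\Psi_\partial$-generator into interaction form preserves super-polynomial decay --- are handled by the quasi-locality machinery of \cite{nsy}.
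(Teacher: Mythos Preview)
The paper does not actually prove this proposition; it is stated without argument, the one-dimensional analogue being cited to \cite{TRI} and the two-dimensional machinery implicitly to \cite{2dSPT} and \cite{nsy}. Your sketch is correct and is exactly the expected argument: split the Hastings interaction along the $y$-axis, peel off the decoupled left/right flows by Duhamel, observe that the residual interaction is concentrated near the $y$-axis, then split that residual into a piece supported in $\lmk C_\theta\rmk^c$ (giving $\Theta$) and a piece whose supports must also meet $C_\theta$ and hence, by your geometric observation, sit in a ball of radius $O(\diam Y)$ about the origin --- forcing absolute summability and hence innerness. The only point worth tightening is the phrase ``every term of which still straddles the $y$-axis'' for $\hat\Psi$: after conjugation by $\lmk\beta^L_u\otimes\beta^R_u\rmk^{-1}$ and re-summation the terms are merely \emph{localised near} the axis (with sub-exponential tails in the distance to it), not literally straddling; this is harmless for the subsequent estimate but should be stated as such.
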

Actually, $\alpha_s$ can be cut in  many directions simultaneously.
Using such properties, we can also show the following.
\begin{lem}\label{alex}
For any 
$0<\theta'<\theta<\frac\pi 2$,
and $\eta_L'\in \Aut \caA_{C_{\theta', L}}$, $\eta_R'\in \Aut \caA_{C_{\theta', R}}$,
there are automorphisms
$\eta_L\in \Aut \caA_{C_{\theta, L}}$, $\eta_R\in \Aut \caA_{C_{\theta, R}}$
such that 
\[
\alpha_s\circ \lmk \eta_L'\otimes\eta_R'\rmk\circ \alpha_s^{-1}=\inn\lmk
\eta_L\otimes\eta_R
\rmk.
\]
\end{lem}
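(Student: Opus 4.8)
The plan is to derive the identity from a single use of the factorization property of $\alpha_s$, in the ``cut in many directions simultaneously'' form mentioned after the two-dimensional Proposition: conjugating an automorphism supported in the small cone $C_{\theta'}$ by the quasi-local $\alpha_s$ spreads its support only slightly, and the discrepancy from the larger cone $C_\theta$ will be absorbed into an inner automorphism and into the $\Theta$-part of the factorization. Concretely, I would first establish the following refinement of the Proposition: cutting $\alpha_s$ simultaneously along the $y$-axis (the common boundary of $H_L$, $H_R$) and along the two boundary rays of $C_\theta$, one may write
\[
\alpha_s=\Ad(W)\circ\lmk \zeta_L\otimes\zeta_R\rmk\circ\Theta,
\]
where now $\zeta_L\in\Aut\caA_{C_{\theta,L}}$ and $\zeta_R\in\Aut\caA_{C_{\theta,R}}$ are localized in the half-cones (not merely in $H_L$, $H_R$), $\Theta\in\Aut\caA_{\lmk C_\theta\rmk^c}$, and $W$ is a unitary in $\caA_{\bbZ^2}$. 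Relative to the Proposition as stated, the correlations of $\alpha_s$ straddling the rays $\partial C_\theta$ are carried by $\Theta$ (still supported in $\lmk C_\theta\rmk^c$), while those straddling the $y$-axis near the origin live in a bounded region and go into $\Ad(W)$; this is precisely what the Lieb--Robinson bounds for $\alpha_s$ permit, and it is the content of cutting $\alpha_s$ in several directions at once.

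Granting this, the rest is bookkeeping with disjoint supports. Since $\theta'<\theta$ we have $C_{\theta'}\subset C_\theta$, hence $C_{\theta'}\cap\lmk C_\theta\rmk^c=\emptyset$; as $\eta_L'\otimes\eta_R'$ is supported in $C_{\theta'}$ and $\Theta$ in $\lmk C_\theta\rmk^c$, the two automorphisms commute, so $\Theta\circ\lmk\eta_L'\otimes\eta_R'\rmk\circ\Theta^{-1}=\eta_L'\otimes\eta_R'$. Using the displayed factorization for $\alpha_s$ together with $\alpha_s^{-1}=\Theta^{-1}\circ\lmk \zeta_L^{-1}\otimes\zeta_R^{-1}\rmk\circ\Ad(W^{-1})$, I obtain
\[
\alpha_s\circ\lmk\eta_L'\otimes\eta_R'\rmk\circ\alpha_s^{-1}
=\Ad(W)\circ\lmk \zeta_L\otimes\zeta_R\rmk\circ\lmk\eta_L'\otimes\eta_R'\rmk\circ\lmk \zeta_L^{-1}\otimes\zeta_R^{-1}\rmk\circ\Ad(W^{-1}).
\]

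Now, viewing $\eta_L'$ and $\eta_R'$ as automorphisms of $\caA_{C_{\theta,L}}$, $\caA_{C_{\theta,R}}$ acting trivially on the complementary half-cone pieces (legitimate since $C_{\theta',L}\subset C_{\theta,L}$ and $C_{\theta',R}\subset C_{\theta,R}$), I set $\eta_L:=\zeta_L\circ\eta_L'\circ\zeta_L^{-1}\in\Aut\caA_{C_{\theta,L}}$ and $\eta_R:=\zeta_R\circ\eta_R'\circ\zeta_R^{-1}\in\Aut\caA_{C_{\theta,R}}$. Then the three inner factors in the display above compose across the $C_{\theta,L}$--$C_{\theta,R}$ cut to $\eta_L\otimes\eta_R$, so that $\alpha_s\circ\lmk\eta_L'\otimes\eta_R'\rmk\circ\alpha_s^{-1}=\Ad(W)\circ\lmk\eta_L\otimes\eta_R\rmk\circ\Ad(W^{-1})$. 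Finally, for any automorphism $\gamma$ of $\caA_{\bbZ^2}$ one has $\Ad(W)\circ\gamma\circ\Ad(W^{-1})=\Ad\lmk W\gamma(W)^*\rmk\circ\gamma$, with $W\gamma(W)^*$ again a unitary of $\caA_{\bbZ^2}$; applying this with $\gamma=\eta_L\otimes\eta_R$ gives $\alpha_s\circ\lmk\eta_L'\otimes\eta_R'\rmk\circ\alpha_s^{-1}=\inn\lmk\eta_L\otimes\eta_R\rmk$, which is the assertion.

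The main obstacle is the first step: producing the refined factorization in which the bulk automorphisms $\zeta_L,\zeta_R$ are confined to the half-cones $C_{\theta,L}$, $C_{\theta,R}$ rather than to the half-planes, with every cross-boundary correction collected either into a genuine inner automorphism or into a $\Theta$ supported in $\lmk C_\theta\rmk^c$ and hence disjoint from $C_{\theta'}$. This requires the simultaneous multi-directional form of the factorization property and a careful accounting of which defect unitaries live in which subalgebra near the origin; everything afterwards is the elementary algebra of automorphisms with disjoint supports together with the identity $\Ad(W)\gamma\Ad(W)^{-1}=\Ad\lmk W\gamma(W)^*\rmk\gamma$.
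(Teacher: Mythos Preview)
The paper does not supply a proof of this lemma; it only says that $\alpha_s$ ``can be cut in many directions simultaneously'' and that the lemma follows ``using such properties''. Your proposal is precisely a fleshing-out of that hint: you invoke the simultaneous multi-cut factorization to obtain $\alpha_s=\Ad(W)\circ(\zeta_L\otimes\zeta_R)\circ\Theta$ with $\zeta_\sigma$ supported in the half-cones $C_{\theta,\sigma}$ and $\Theta$ in $(C_\theta)^c$, then use the disjointness $C_{\theta'}\subset C_\theta$ to cancel $\Theta$ against $\eta_L'\otimes\eta_R'$, and finish with the identity $\Ad(W)\gamma\Ad(W^{-1})=\Ad\lmk W\gamma(W)^*\rmk\gamma$. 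This is exactly the intended route, and the algebra you wrote is correct; your own caveat that the refined factorization is the substantive analytic step (requiring the Lieb--Robinson/quasi-locality machinery from \cite{2dSPT}) is apt, since that is where the actual work sits and is what the paper defers to the reference.
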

Factorization property is quite simple but strong analytical property.
It also enables us to show that the existence of a non-trivial super selection sector results in
long-range entanglement \cite{NO}.

\section{SPT-phases in $1$-dimensional systems}\label{1dimsec}
In this section, we study SPT-phases  in $1$-dimensional systems.
We will see that there is a $H^2(G,\Uo)$-valued index for SPT with
on-site symmetry, and $\bbZ_2$-valued index for SPT with reflection symmetry.
They generalize the indices considered for Matrix product states.
\subsection{$H^{2}(G,\Uo)$ and projective representations}\label{onsiteone}

For SPT-phases in one dimension with on-site symmetry,
physicists conjectured that there is a $H^{2}(G,\Uo)$-valued invariant
of the classification.
In this subsection, we recall the definition of $H^{2}(G,\Uo)$ and its relation to projective representations.

A $2$-cochain $C^2(G,\Uo)$ is a map from 
$G^{\times 2}$ to $\Uo$.
It is said to be a $2$-cocycle if the relation 
\begin{align*}
\sigma(gh,k) \sigma(g,h) 
=
\sigma(g,hk) \sigma(h,k)
\end{align*}
holds.
The set of all $2$-cocycles are denoted by $Z^2(G,\Uo)$.
We introduce an equivalence relation to $Z^2(G,\Uo)$ by
$\sigma'\simeq \sigma$ if there exists $c:G\mapsto \bbT$ such that
\begin{align*}
\sigma'(g,h)=
\sigma(g,h) \frac{c(gh) }{ c(g)  c(h) }.
\end{align*}
The equivalence classes forms a group, which is the second group cohomology 
$H^2(G,{U(1)})$.

The second group cohomology $H^2(G,{U(1)})$
shows up naturally, when we consider projective representations.
A projective representation of $G$ on a Hilbert space $\caH$
is a map from $G$ to a unitary operators $\caU(\caH)$
on $\caH$
satisfying
\begin{align*}
u(g)u(h)=\sigma(g,h) u(gh),\quad g,h\in G.
\end{align*}
Here, $\sigma(g,h)$ is a $\Uo$-phase.
Using the associativity of $u$, we can show that $\sigma$
is a $2$-cocycle.
Indeed,
we have
\begin{align*}
u(g)u(h)u(k)=\sigma(g,h) u(gh)u(k)
=\sigma(g,h)\sigma(gh,k) u(ghk),
\end{align*}
and
\begin{align*}
u(g)u(h)u(k)
=\sigma(h,k) u(g)u(hk)
=\sigma(h,k) \sigma(g,hk) u(ghk).
\end{align*}
Comparing these equations, we obtain
$2$-cocycle relation for $\sigma$.

Note that given a group action $G$ on $\caB(\caH)$, for a Hilbert space $\caH$,
we get some projective representation.
Let
$\gamma :G\to \Aut\lmk\caB(\caH)\rmk$
be a group homomorphism.
Then 
by Theorem \ref{wigner}, there is a unitary $u_g$ on $\caH$
such that
\begin{align*}
\Ad(u_g)(x)=\gamma_g(x),\quad x\in \caB(\caH).
\end{align*}
Note that there is some $\Uo$-ambiguity about the choice of $u_g$:
for any map $c: G\to \Uo$,
we also have
\begin{align}\label{cu}
\Ad(c_gu_g)(x)=\gamma_g(x),\quad x\in \caB(\caH).
\end{align}
This $u : G\to \caU\lmk\caH\rmk$ is a projective group representation. 
Indeed, we have
\begin{align*}
\Ad(u_g u_h)(x)=\gamma_g\gamma_h(x)
=\gamma_{gh}(x)=\Ad(u_{gh})(x),\quad x\in \caB(\caH).
\end{align*}
This means
there is some $\sigma(g,h)\in\Uo(1)$
such that 
\begin{align*}
u_g u_h
=\sigma(g,h) u_{gh}.
\end{align*}
Namely, $u$ is a projective representation, and hence
$\sigma$ is a $2$-cocycle.
The $\Uo$-ambiguity of the choice $c$
(\ref{cu}) results in ambiguity of the $2$-cocycle 
$\sigma$. 
If we choose $c_g u_g$ instead of $u_g$,
then the $2$-cocycle obtained is
\begin{align*}
\tilde \sigma(g,h)=\sigma(g,h) c_gc_h\bar{c_{gh}}.
\end{align*}
However, from this equation, we see that the resulting second cohomology class
does not depend on such a choice, namely we have
\[
\lcm \tilde \sigma\rcm_{H^2(G,\Uo)}=\lcm \sigma\rcm_{H^2(G,\Uo)}.
\]
Hence, given a group action $\gamma$ on $\caB(\caH)$, 
we obtain an associated second cohomology class without ambiguity.

\subsection{Split property of unique gapped ground state in $1$-d}

An important property of the unique gapped ground state in one dimension
is the split property.
Using this property, one can easily define indices, which turns out to be invariants of the classification.
More precisely, the following Theorem is crucial in order to define the SPT-index
in one-dimensional systems. 
\begin{thm}[Matsui '13]\label{matsuisplit}
Let $\omega_{\Phi}$ be a {\it  unique gapped $\alpha^\Phi$-ground state}
for $\Phi\in \caP_{U.G.}$.
Then $\omega_{\Phi}$ satisfies the split property with respect to $\caA_L$ and $\caA_R$, i.e.,
$\pi_{\omega_{\Phi, R}}(\caA_R)''$ is a type I factor.
\end{thm}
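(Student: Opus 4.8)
The plan is to isolate the one genuinely analytic ingredient and route everything else through the operator-algebraic facts already collected in the excerpt. Write $\omega:=\omega_\Phi$, and let $\omega_L:=\omega\vert_{\caA_L}$ and $\omega_R:=\omega\vert_{\caA_R}$ be its restrictions (so $\omega_R$ is the state denoted $\omega_{\Phi,R}$ in the statement), with GNS triples $(\caH_\omega,\pi_\omega,\Omega_\omega)$, $(\caH_L,\pi_L,\Omega_L)$ and $(\caH_R,\pi_R,\Omega_R)$. I claim that the theorem reduces to the single assertion that \emph{$\omega$, viewed as a state on $\caA\cong\caA_L\otimes\caA_R$, is $\pi_L\otimes\pi_R$-normal}; granting this, the rest is formal.

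For the reduction, recall first that a unique $\alpha^\Phi$-ground state is pure (standard; see \cite{BR2}). Since $\caA_L$ and $\caA_R$ commute we have $\pi_\omega(\caA_L)''\subset\pi_\omega(\caA_R)'$, so the center of $\pi_\omega(\caA_L)''$ satisfies $\pi_\omega(\caA_L)''\cap\pi_\omega(\caA_L)'\subset\pi_\omega(\caA_R)'\cap\pi_\omega(\caA_L)'=\pi_\omega(\caA)'=\bbC\unit$, the last equality by purity and Theorem~\ref{purethm}; hence $\pi_\omega(\caA_L)''$, and by the same argument $\pi_\omega(\caA_R)''$, are factors. By Lemma~\ref{ih}, $\pi_L\sim_{q.e.}\pi_\omega\vert_{\caA_L}$ and $\pi_R\sim_{q.e.}\pi_\omega\vert_{\caA_R}$, so by Theorem~\ref{qethm} the algebras $\pi_L(\caA_L)''$ and $\pi_R(\caA_R)''$ are factors as well; consequently $\pi_L\otimes\pi_R$ generates a factor, a tensor product of factor von Neumann algebras being a factor. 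Now if $\omega$ is $\pi_L\otimes\pi_R$-normal, then $\pi_\omega$ is quasi-equivalent to a subrepresentation of $\pi_L\otimes\pi_R$, and since the latter generates a factor, Theorem~\ref{facthm} forces $\pi_\omega\sim_{q.e.}\pi_L\otimes\pi_R=\pi_{\omega_L\otimes\omega_R}$, i.e.\ $\omega\sim_{q.e.}\omega_L\otimes\omega_R$. Applying Proposition~\ref{splitcharac} with $\caA_R$ in the role of $\caA_{\Gamma_1}$ then yields the split property, so that $\pi_\omega(\caA_R)''$ is a type~$I$ factor; transporting this along the isomorphism of Theorem~\ref{qethm} furnished by Lemma~\ref{ih} shows that $\pi_{\omega_{\Phi,R}}(\caA_R)''$ is a type~$I$ factor, which is the assertion.

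It remains to prove the crux: $\omega$ is $\pi_L\otimes\pi_R$-normal. This is where the exponential decay of correlations of the unique gapped ground state (Theorem~\ref{expdec}) is used, and it is the real content of the theorem. The idea is that clustering constrains the finite-volume structure of $\omega$ tightly enough to produce, in the limit, a density operator on $\caH_L\otimes\caH_R$ implementing $\omega$. Concretely, exponential decay of correlations yields the one-dimensional area law --- a bound, uniform in $N$, on the entanglement entropy of $\omega$ across the $\caA_L$-$\caA_R$ cut, i.e.\ on the von Neumann entropies of the reduced density matrices of $\omega$ on the intervals $[0,N]$. A uniform entropy bound means that these reduced density matrices, and the joint ones on $\caA_{[-N,-1]\cup[0,N]}$, are --- up to arbitrarily small trace-norm error --- supported on a fixed finite number of eigenvectors; the corresponding sequence of normal states in the representation $\pi_L\otimes\pi_R$ is then relatively $\sigma$-weakly compact, and one checks that its limit is a density operator on $\caH_L\otimes\caH_R$ representing $\omega$.

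The main obstacle is this last step. The algebraic reductions above are routine given the results in the excerpt, but passing from the norm-topology decay estimate on local observables (Theorem~\ref{expdec}) to a genuine $\sigma$-weak, trace-class statement about the product representation $\pi_L\otimes\pi_R$ --- so that one obtains \emph{normality}, not merely ``approximate factorization across the cut'' --- is exactly Matsui's theorem, and it requires the two nontrivial infinite-volume inputs: first the area-law entropy bound extracted from Theorem~\ref{expdec}, and then the compactness and limit argument for the reduced density matrices.
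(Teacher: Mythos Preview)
The paper does not prove this theorem; it is stated as Matsui's result and only remarked later that it ``is a corollary of Hasting's area law \cite{area}. Hasting's area law is quite a big theorem.'' So there is nothing to compare your argument against line-by-line.

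Your reduction --- that it suffices to show $\omega$ is $\pi_L\otimes\pi_R$-normal, and that this together with the factoriality of $\pi_L(\caA_L)''$, $\pi_R(\caA_R)''$ (via Lemma~\ref{ih} and purity) forces $\omega\sim_{q.e.}\omega_L\otimes\omega_R$ and hence the split property by Proposition~\ref{splitcharac} --- is correct and is exactly the operator-algebraic framing Matsui uses. Your identification of the analytic core is also right: the chain is \emph{gap $\Rightarrow$ area law $\Rightarrow$ bounded entanglement entropy across the cut $\Rightarrow$ normality with respect to $\pi_L\otimes\pi_R$}.

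What is missing is that you do not actually carry out either hard step, and you say so yourself in the last paragraph. Two remarks on what would need to be filled in. First, ``exponential decay of correlations yields the one-dimensional area law'' is not a one-line consequence of Theorem~\ref{expdec}; Hastings' original area-law proof \cite{area} uses the spectral gap directly (not merely clustering), and the later route from exponential decay alone (Brand\~ao--Horodecki) is itself a substantial theorem. Second, bounded entropy does not literally mean the reduced density matrices are supported on a fixed finite number of vectors; it gives uniform control on the Schmidt coefficients, and Matsui's argument in \cite{Matsui3} is precisely the passage from that uniform control to the existence of a density operator on $\caH_L\otimes\caH_R$ implementing $\omega$. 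Your compactness sketch is the right shape, but the actual limit argument --- identifying the weak-$*$ accumulation point as a \emph{normal} state and not just a state --- is where the work lies. As written, your proposal is a faithful outline of Matsui's proof rather than a proof.
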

Thanks to this theorem, known for {\it any} unique gapped ground state, 
we can classify
\begin{align*}
\caP_{U.G.{{\beta}}}:=
\left\{
\Phi\in \caP_{U.G.}\mid 
\beta\text{-invariant}
\right\}.
\end{align*}
not only the SPT-phases $\caP_{U.G.{{\beta}}}^0$ (\ref{pugb}).
Although it is expected that $\caP_{U.G.}^0= \caP_{U.G.}$ for one dimensional systems,
there is no result proving it.
Therefore, at least at this stage of research,
classification of $\caP_{U.G.}$ is stronger result than
classification of $\caP_{U.G.}^0$.

\subsection{Matrix product state}\label{mps}
Analysis of SPT-phases was initially carried out for matrix product states.
In this section, we introduce matrix product states.

In general, even constructing an example of gapped Hamiltonians is a highly non-trivial problem.
The non-commutativity of quantum systems makes proving the existence of spectral gap
 a very hard problem
. 
This is the case already for local Hamiltonians.
If the interaction terms $\Phi(X)$
in local Hamiltonians
\begin{align*}
\lmk H_{\Phi}\rmk_{\Lambda}:=\sum_{X\subset{\Lambda}}\Phi(X).
\end{align*}
 mutually commute, we can consider the joint distribution. But typically, they don't commute.

Fortunately, for one-dimensional systems, there is a recipe and technics
by Fannes-Nachtergaele-Werner \cite{Fannes:1992vq} and \cite{Nachtergaele:1996vc} to construct a class of Hamiltonians with unique gapped ground state.
This is sometimes called
Matrix product state (MPS) Hamiltonians.
In this section, we explain this recipe.

Recently there was great progress in higher dimensions. 
The technics from  \cite{Fannes:1992vq} and \cite{Nachtergaele:1996vc}  has turned out to be
useful even in higher dimensional systems.
See \cite{allny} \cite{ln} and \cite{lsw}. 

Throughout this section,
we fix some orthonormal basis of $\cc^d$, $\{\psi_{\mu}\}_{\mu=1}^d$,
and denote by $e_{\mu\nu}$, $\mu,\nu=1,\ldots d$
the system of matrix unit corresponding to it.
The interactions given by \cite{Fannes:1992vq} is a type of interactions so called parent Hamiltonians.
A parent Hamiltonian is made out of a sequence of subspaces of $\bigotimes_{i=0}^{l-1} \bbC^d$, $l\in\nan$, which satisfies
the condition called the intersection property.
\begin{defn}
We say that a sequence of subspaces $\{\caD_N\}_{N\in \nan}$, $\caD_N\subset \bigotimes_{i=0}^{N-1}\cc^d$, $N\in\nan$, satisfies the intersection property,
if there exists an $m\in\nan$, such that the relation 
\begin{equation*}
\caD_N = \bigcap_{x=0}^{N-m} (\bbC^{d})^{\otimes x}\otimes \caD_{m}\otimes (\bbC^{d})^{\otimes N-m-x}, 
\end{equation*} holds for all $N\ge m$.
\end{defn}
Let $\caD_N\subset \bigotimes_{i=0}^{N-1}\cc^d$, $N\in\nan$
be subspaces satisfying the intersection property as above.
Let $h_m$ be the orthogonal projection onto the orthogonal complement of $\caD_m$ in $\otimes_{i=0}^{m-1}\cc^d$.
We set
\begin{align*}
\Phi(X):=
\left\{
\begin{gathered}
\tau_x\lmk h_m\rmk ,
\quad \text{if}\; X=[x,x+m-1]\\
0,\quad \text{otherwise}
\end{gathered}
\right..
\end{align*}
Here $\tau_x$ denotes the space translation on $\bbZ$.
Then, by the intersection property, we see that
\[
\ker \lmk \lmk H_\Phi\rmk_{[0,N-1]}\rmk=\caD_N
\]
for all $N\ge m$.
Hamiltonians given in this way are called parent Hamiltonians.
Parent Hamiltonians are nice in the sense that
1.we know that the lowest eigenvalues of local Hamiltonians are zero, and
2.we know how the ground state
spaces of local Hamiltonians, i.e., 
$
\ker \lmk \lmk H_\Phi\rmk_{[0,N-1]}\rmk=\caD_N
$
 look like.
Now we would like to construct a sequence of spaces 1.
satisfying the intersection property, and 2.
 the corresponding parent local Hamiltonian
$\ker \lmk \lmk H_\Phi\rmk_{[0,N-1]}\rmk$ is gapped with respect to the open boundary conditions.
This is carried out by a linear map determined by a $d$-tuple of matrices.
First, we prepare an $d$-tuple of $k\times k$ matrices
$\vv:=(v_1,\ldots,v_d)$ and $m\in\nan$.
Define a subspace $\caG_{m,\vv}$
of 
$\bigotimes_{i=0}^{m-1} \bbC^d$ by the range of 
the following map $\Gamma_{m,\vv}:\Mat_k
\to \bigotimes_{i=0}^{m-1} \bbC^d$,
\begin{align*}
\Gamma_{m,\vv}\lmk X\rmk
=\sum_{\mu_0,\ldots,\mu_{m-1}\in \{1,\cdots,d\}}\lmk\Tr X \lmk
v_{\mu_0}v_{\mu_1}\cdots v_{\mu_{m-1}}
\rmk^*\rmk\bigotimes_{i=0}^{m-1}\psi_{\mu_i},\quad X\in\Mat_k.
\end{align*}
Let $h_{m,\vv}$ be the orthogonal projection onto $\caG_{m,\vv}^{\perp}$ in $\otimes_{i=0}^{m-1}\cc^d$.
Then we set
\begin{align*}
\Phi_{m,\vv}(X):=
\left\{
\begin{gathered}
\tau_x\lmk h_{m,\vv}\rmk ,
\quad \text{if}\; X=[x,x+m-1]\\
0,\quad \text{otherwise}
\end{gathered}
\right..
\end{align*}
Of course, for a random choice of $\vv$, the spaces
$\{\caG_{N,\vv}\}_N$ does not satisfy the intersection property, and
the local Hamiltonian given by $h_{m,\vv}$ does not have a gap.
We are interested in some sufficient condition for $\vv$ to satisfy these properties.
The sufficient condition which was introduced in \cite{Fannes:1992vq} is 
the following equivalent properties.
\begin{thm}\label{thmprim}
For $\vv := (v_1,\ldots,v_d)\in\Mat_k^{\times d}$, let $T_\vv:\Mat_k\to \Mat_k$ be the completely positive map given by
\begin{align*}
T_\vv(A)=\sum_{i=1}^{d}v_iAv_i^* ,\quad A\in\Mat_k.
\end{align*} 
Assume that the spectral radius $r_\vv$ of $T_\vv$,  is $1$.
The following properties are equivalent.
\begin{enumerate}
\item
There exist a unique {faithful} state $\varphi_\vv$ and a {strictly positive} element
$e_\vv\in\Mat_k$ satisfying 
 \[
\lim_{N\to\infty} T_\vv^N(A)=\varphi_\vv(A) e_\vv,\quad
A\in \Mat_k.
\]
\item The spectrum $\sigma\lmk T_\vv\rmk$ of $T_\vv$ satisfies
{$\sigma\lmk T_\vv\rmk \cap\bbT=\{1\}$.} $1$ is a non degenerate eigenvalue of $T_\vv$. There exist a faithful $T_\vv$-invariant state $\varphi_\vv$
and a strictly positive $T_\vv$-invariant element $e_\vv\in\Mat_k$.
\item
There exists an $m\in\nan$
such that 
{${\mathcal K}_m(\vv)=\Mat_k$}, where
\begin{equation*}
{\mathcal K}_m(\vv) :=\spn\left\{v_{\mu_1}v_{\mu_{2}}\cdots v_{\mu_{m}}\mid
(\mu_1,\mu_2,\ldots,\mu_{l})\subset\{1,\ldots, d\}^{\times m}\right\}.
\end{equation*}

\end{enumerate}
\end{thm}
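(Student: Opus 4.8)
The three conditions form a Perron--Frobenius package for the completely positive map $T_\vv(A)=\sum_i v_iAv_i^*$, translated at one point into the Kraus operators, and I would prove the cycle $(1)\Rightarrow(2)\Rightarrow(3)\Rightarrow(1)$. Throughout, for an integer $m$ write $w_j$ for the length-$m$ products $v_{\mu_1}\cdots v_{\mu_m}$, so that $T_\vv^m(A)=\sum_j w_jAw_j^*$ and $\mathcal K_m(\vv)=\spn\{w_j\}$. For $(1)\Rightarrow(2)$: since $\Mat_k$ is finite dimensional, $T_\vv^N(A)\to\varphi_\vv(A)e_\vv$ for all $A$ is norm convergence of $T_\vv^N$ to the rank-one map $P\colon A\mapsto\varphi_\vv(A)e_\vv$. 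Applying $T_\vv$ on the left and on the right of the limit gives $T_\vv(e_\vv)=e_\vv$ and $\varphi_\vv\circ T_\vv=\varphi_\vv$, so $e_\vv$ and $\varphi_\vv$ are the desired strictly positive invariant element and faithful invariant state. From $P^2=P$ and $T_\vv P=PT_\vv=P$ one sees $P$ is the spectral projection of $T_\vv$ at the eigenvalue $1$; since $\rank P=1$ that eigenvalue is non-degenerate (boundedness of $\{T_\vv^N\}$ rules out a Jordan block), and a peripheral spectral value other than $1$ would obstruct convergence of $T_\vv^N$, so $\sigma(T_\vv)\cap\bbT=\{1\}$, which is $(2)$.

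For $(3)\Rightarrow(1)$: first, $\mathcal K_m(\vv)=\Mat_k$ makes $T_\vv^m$ positivity-improving, since $\braket{\xi}{T_\vv^m(A)\xi}=\sum_j\lV A^{1/2}w_j^*\xi\rV^2$ vanishes only when $\spn\{w_j^*\xi\}\subseteq\Ker A$, and spanning of $\Mat_k$ by $\{w_j\}$ forces $\spn\{w_j^*\xi\}=\bbC^k$ for every $\xi\neq0$, hence $A=0$. A positivity-improving map is irreducible (a common nonzero proper invariant subspace of the $v_i$ would trap every $w_j^*\eta$ in its orthogonal complement), and $\mathcal K_m(\vv)=\Mat_k$ also excludes periodicity: a nontrivial cyclic decomposition $\bbC^k=\bigoplus_l H_l$ with $v_iH_l\subseteq H_{l+1}$ would confine every $w_j$, hence $\mathcal K_m(\vv)$, to $\bigoplus_l\Hom(H_l,H_{l+m})$, which is a proper subspace of $\Mat_k$ as soon as there are at least two summands. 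Then the Perron--Frobenius theorem for irreducible, aperiodic completely positive maps of spectral radius $1$ applies: $1$ is a simple eigenvalue of $T_\vv$ with a strictly positive eigenvector $e_\vv$, the corresponding left eigenfunctional is a faithful state $\varphi_\vv$, the rest of $\sigma(T_\vv)$ lies in the open unit disc, and $T_\vv^N(A)\to\varphi_\vv(A)e_\vv$; this is $(1)$.

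For $(2)\Rightarrow(3)$, the substantive direction: using $e_\vv$, set $\tilde v_i:=e_\vv^{-1/2}v_ie_\vv^{1/2}$, so that $\sum_i\tilde v_i\tilde v_i^*=\unit$; then $T_{\tilde\vv}$ is similar to $T_\vv$ (same spectral data), still has a faithful invariant state, and $\mathcal K_m(\tilde\vv)=e_\vv^{-1/2}\mathcal K_m(\vv)e_\vv^{1/2}$, so we may assume $T_\vv$ unital, with fixed point $\unit$, $1$ simple, $\sigma(T_\vv)\cap\bbT=\{1\}$, and a faithful invariant state $\varphi_\vv$. Step one (irreducibility): if $p$ projects onto a common nonzero proper invariant subspace of the $v_i$, then $T_\vv(p)$ has range in that subspace, so $T_\vv(p)=pT_\vv(p)p\le p$ (using $T_\vv(p)\le T_\vv(\unit)=\unit$), whence $T_\vv(\unit-p)\ge\unit-p$; applying $\varphi_\vv$ to this positive element and using $\varphi_\vv\circ T_\vv=\varphi_\vv$ and faithfulness forces $T_\vv(\unit-p)=\unit-p$, i.e.\ $\unit-p$ is a non-scalar fixed point, contradicting simplicity of $1$. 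Step two: by Burnside's theorem an irreducible family of matrices generates $\Mat_k$ as an algebra, so $\sum_{m\ge1}\mathcal K_m(\vv)=\Mat_k$ and hence $\sum_{m=1}^{N_1}\mathcal K_m(\vv)=\Mat_k$ for some finite $N_1$. Step three: irreducibility together with $\sigma(T_\vv)\cap\bbT=\{1\}$ makes $T_\vv$ primitive, so $T_\vv^N(A)\to\varphi_\vv(A)\unit$; feeding in $A=\ketbra{\xi}{\xi}$ and using that $\varphi_\vv(\ketbra{\xi}{\xi})$ is bounded below on the unit sphere shows $T_\vv^N$ is positivity-improving for all $N$ beyond some $n_1$. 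Step four: combine the spanning at level $N_1$ with the positivity-improving property at large levels to conclude $\mathcal K_m(\vv)=\Mat_k$ for some, and then all large, $m$.

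I expect Step four to be the main obstacle: passing from ``$\sum_{m\le N_1}\mathcal K_m(\vv)=\Mat_k$ together with $T_\vv^N$ positivity-improving for large $N$'' to a single $\mathcal K_m(\vv)=\Mat_k$ is a quantum analogue of Wielandt's inequality for primitive nonnegative matrices, and the honest route is a stabilization argument for the (non-nested) family $\{\mathcal K_m(\vv)\}$---exploiting that multiplying a positivity-improving block onto a spanning block keeps it spanning---rather than a dimension count. The Perron--Frobenius inputs invoked in $(3)\Rightarrow(1)$ and in Step three, namely simplicity and strict positivity of the leading eigendata and the mixing limit for primitive completely positive maps, are classical and may be quoted.
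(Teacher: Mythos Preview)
The paper does not prove this theorem. It is stated without proof as a known Perron--Frobenius characterization of primitivity, in the spirit of Fannes--Nachtergaele--Werner; the text immediately following the statement simply names the condition ``primitive'' and moves on. There is therefore no proof in the paper to compare your attempt against.

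On your attempt itself: the arguments for $(1)\Rightarrow(2)$ and $(3)\Rightarrow(1)$ are correct and standard. The genuine gap is exactly where you flag it, Step four of $(2)\Rightarrow(3)$. Your proposed mechanism, ``multiplying a positivity-improving block onto a spanning block keeps it spanning,'' is circular as written: you do not yet possess any single $\mathcal K_a(\vv)=\Mat_k$ to serve as the spanning block, only $\sum_{m\le N_1}\mathcal K_m(\vv)=\Mat_k$, and the invertible elements you could extract from positivity-improving $\mathcal K_b$'s sit in \emph{different} $\mathcal K_b$'s for different values of $m$, so the pieces do not combine into one $\mathcal K_M$.

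There is a much shorter route that avoids the Wielandt-type combinatorics entirely. The Kraus rank of $T_\vv^m$ equals $\dim\mathcal K_m(\vv)$, and this is the rank of the Choi matrix $C_m:=(T_\vv^m\otimes\id)\bigl(\sum_{i,j}e_{ij}\otimes e_{ij}\bigr)$. From $(1)$ (which you already have from $(2)$), $T_\vv^m\to P$ in norm, where $P(A)=\varphi_\vv(A)\,e_\vv$, hence $C_m\to C_P$. A direct computation gives $C_P=e_\vv\otimes\rho_\vv^{\,T}$ with $\rho_\vv$ the density matrix of $\varphi_\vv$; since $e_\vv>0$ and $\varphi_\vv$ is faithful, $C_P$ has full rank $k^2$. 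Full rank is an open condition on positive matrices, so $\rank C_m=k^2$ for all large $m$, i.e.\ $\mathcal K_m(\vv)=\Mat_k$. This replaces your Steps two through four in one stroke and uses precisely the strict positivity hypotheses in $(2)$.
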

When $d$-tuple of matrices $\vv$ satisfies these (equivalent) conditions, we say
that it is {\it primitive}.

Given a primitive $d$-tuple $\vv$, we can construct a matrix product state (MPS), which was initially
called a finitely correlated state.
Let $\varphi_\vv$ be the state, and $e_\vv$ the positive element
given in (1) of the above theorem.
Then the formula
\begin{align}
\omega_{\vv}\lmk
\bigotimes_{i=1}^{l}
e_{\mu_i,\nu_i}
\rmk=
\varphi_\vv\lmk v_{\mu_1}\cdots v_{\mu_l} e_{\vv} v_{\nu_l}^*\cdots v_{\nu_1}^*\rmk,\quad
\mu_i,\nu_i\in\SSS,\quad i=1,\ldots,l,\quad l\in\nan,
\end{align}
defines a translationally invariant state on $\caA_{\bbZ}$.
This is the matrix product state generated by $\vv$.

Here is the Theorem of Fannes-Nachtergaele-Werner :
\begin{thm}[\cite{Fannes:1992vq}]\label{fnw}
Assume that $\Gamma_{m-1,\vv}$ is injective.
Then ${\Phi_{m,\vv}}$ has a unique gapped ground state $\omega_{\vv}$.
\end{thm}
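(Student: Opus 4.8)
The plan is to use the intersection property as the bridge between the primitivity hypothesis and the conclusions ``unique ground state'' and ``gapped.'' First I would establish that under the assumption that $\vv$ is primitive (which by Theorem \ref{thmprim} holds here, since injectivity of $\Gamma_{m-1,\vv}$ forces $\caK_{m-1}(\vv)=\Mat_k$ and in particular $\caK_{m'}(\vv)=\Mat_k$ for all $m'\ge m-1$), the family of subspaces $\{\caG_{N,\vv}\}_{N\ge m}$ satisfies the intersection property with the given $m$. Concretely, one checks $\caG_{N,\vv}=\bigcap_{x=0}^{N-m}(\bbC^d)^{\otimes x}\otimes\caG_{m,\vv}\otimes(\bbC^d)^{\otimes N-m-x}$: the inclusion $\subseteq$ is automatic from the definition of $\Gamma_{N,\vv}$ by splitting the product of $v$'s and using that $\Tr$ is cyclic, while the reverse inclusion is where primitivity enters — a vector in the right-hand side corresponds to a family of matrices related by the matrices $v_{\mu}$, and the span condition $\caK_{m}(\vv)=\Mat_k$ lets one propagate local information to a single global matrix $X\in\Mat_k$. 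Once the intersection property is in hand, the discussion preceding the theorem gives $\ker\lmk\lmk H_{\Phi_{m,\vv}}\rmk_{[0,N-1]}\rmk=\caG_{N,\vv}$ for all $N\ge m$, so the lowest eigenvalue of each local Hamiltonian is $0$ with eigenspace exactly $\caG_{N,\vv}$.

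Next I would identify the ground state. The state $\omega_\vv$ defined by the finitely-correlated-state formula is translationally invariant, and one verifies directly from the formula, using $T_\vv^N(A)\to\varphi_\vv(A)e_\vv$ and the definition of $h_{m,\vv}$ as the projection onto $\caG_{m,\vv}^{\perp}$, that $\omega_\vv\lmk\tau_x(h_{m,\vv})\rmk=0$ for every $x$; hence $\omega_\vv\lmk\lmk H_{\Phi_{m,\vv}}\rmk_{[0,N-1]}\rmk=0$, so $\omega_\vv$ is an $\alpha^{\Phi_{m,\vv}}$-ground state in the sense of Definition \ref{groundstatedef}. For \emph{uniqueness} of the ground state, the key point is that any $\alpha^{\Phi_{m,\vv}}$-ground state $\psi$ must annihilate every $\tau_x(h_{m,\vv})$, hence restricts on each interval $[0,N-1]$ to a state supported in $\caG_{N,\vv}$; injectivity of $\Gamma_{m-1,\vv}$ (equivalently of all $\Gamma_{N,\vv}$ for $N\ge m-1$, again by primitivity) means $\caG_{N,\vv}$ is the graph of an invertible correspondence with $\Mat_k$, and pushing $N\to\infty$ together with the mixing $T_\vv^N\to\varphi_\vv(\cdot)e_\vv$ pins down all correlation functions of $\psi$ to coincide with those of $\omega_\vv$.

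The real work is the gap. Here I would follow the Fannes–Nachtergaele–Werner martingale/overlap method: estimate, for the orthogonal projections $G_{N}$ onto $\caG_{N,\vv}$ inside $\bigotimes_{i=0}^{N-1}\bbC^d$, the norm of the product $G_{[0,m-1]}G_{[1,m]}\cdots$ type overlaps, showing that consecutive projection ranges are ``almost orthogonal'' away from their common intersection, with an error controlled by the second-largest eigenvalue modulus of $T_\vv$ — which is $<1$ precisely by primitivity (item (2) of Theorem \ref{thmprim}: $\sigma(T_\vv)\cap\bbT=\{1\}$, a non-degenerate eigenvalue). This gives a bound $\|G_N-\text{(product of local }h\text{'s projections)}\|$-type inequality that, via the standard Knabe/martingale argument, yields a lower bound on $\lmk H_{\Phi_{m,\vv}}\rmk_{[0,N-1]}$ restricted to $\caG_{N,\vv}^{\perp}$ that is uniform in $N$. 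Combined with the criterion recalled after Definition \ref{gapeq1} — a uniform finite-volume spectral gap plus uniqueness of the ground state implies a bulk gap — this establishes that $\omega_\vv$ is the unique gapped ground state. I expect the gap estimate, specifically the bookkeeping of the exponential decay of the overlaps in terms of the subdominant spectrum of $T_\vv$, to be the main obstacle; everything else is a more or less mechanical consequence of the intersection property and the mixing property in Theorem \ref{thmprim}.
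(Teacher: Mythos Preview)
The paper does not give its own proof of this theorem: it is stated as a citation to Fannes--Nachtergaele--Werner, followed only by a Remark, so there is no in-paper argument to compare against. Your outline (intersection property from injectivity/primitivity, frustration-freeness of $\omega_{\vv}$, and the martingale/overlap estimate for a uniform finite-volume gap) is precisely the FNW strategy and is a reasonable sketch of how one would reproduce their result.

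There is, however, one genuine gap in your plan, and the paper itself flags it in the Remark right after the statement. For uniqueness you assert that ``any $\alpha^{\Phi_{m,\vv}}$-ground state $\psi$ must annihilate every $\tau_x(h_{m,\vv})$'', i.e.\ that every ground state in the sense of Definition~\ref{groundstatedef} is frustration-free. This is not automatic: the operator-algebraic ground-state condition $-i\psi(A^*\delta_\Phi(A))\ge 0$ does not by itself force $\psi(\Phi(X))=0$ for each $X$. The original FNW paper establishes uniqueness only among translation-invariant states (equivalently, among frustration-free states), and bridging from that to uniqueness in the sense of Definition~\ref{groundstatedef} is exactly the ``additional analysis'' the Remark refers to \cite{Matsui1}, \cite{Ogata3}. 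Concretely one must rule out boundary/edge ground states of the infinite system, which requires a separate argument (e.g.\ using the gap and the structure of $\caG_{N,\vv}$ to show that any weak-$*$ limit of finite-volume ground states, and then any $\alpha^{\Phi}$-ground state, coincides with $\omega_{\vv}$). Your sketch skips this step; the rest of the plan is sound.
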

\begin{rem}
In fact, the uniqueness of the ground state in our sense (Definition \ref{groundstatedef}
) requires some additional analysis. See
\cite{Matsui1}, 
\cite{Ogata3}.
\end{rem}
Hence, we got a class of gapped Hamiltonians by the FNW-recipe.
It is nice to have a systematic way of constructing gapped Hamiltonians.
However, the power of MPS is not only that.
If an interaction satisfies some (a bit strong) conditions,
we can show that the ground state is MPS.
\begin{thm}[\cite{Matsui1}(See also \cite{Ogata3} for primitivity)]\label{mat}
Let $\Phi$ be a translation invariant finite range interaction
with a
 unique ground state $\omega_{\Phi}$ on $\caA_{\bbZ}$.
 Suppose that
there exists a constant $c\in\nan$ such that
\begin{align}\label{ubr}
1\le \dim \ker\lmk \lmk H_\Phi\rmk_{[1,N]}\rmk\le c,
\end{align}
for all $N\in\nan$.
Then $\omega_{\Phi}$ is a MPS, namely there is a primitive $\vv$
such that
$\omega_{\Phi}=\omega_{\vv}$.
\end{thm}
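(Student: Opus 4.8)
The strategy is to identify $\omega_{\Phi}$ as a \emph{pure finitely correlated state} and then feed it into the structure theory behind Theorem \ref{thmprim} and Theorem \ref{fnw}. The hypothesis (\ref{ubr}) says precisely that $\Phi$ is frustration-free, i.e. $\lmk H_\Phi\rmk_{[1,N]}\ge 0$ with $\caD_N:=\ker\lmk\lmk H_\Phi\rmk_{[1,N]}\rmk$ the zero-energy subspace, and that the local ground-state degeneracy $\dim\caD_N$ is bounded by $c$ uniformly in $N$. Two consequences. First, since $\omega_\Phi$ is the unique $\alpha^\Phi$-ground state and the ground states of a fixed dynamics form a face of the state space, $\omega_\Phi$ is an extreme point, hence \emph{pure}. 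Second, $\omega_\Phi$ is itself frustration-free: the finite-volume ground states $\omega_{N,\Phi}$ have energy $0$, so $\omega_{N,\Phi}(\Phi(X))=0$ for each fixed $X$, and passing to the limit (which is $\omega_\Phi$) gives $\omega_\Phi(\Phi(X))=0$ for all $X$. Hence the density matrix $\rho_N$ representing $\omega_\Phi\vert_{\caA_{[1,N]}}$ is supported in $\caD_N$, so $\rank\rho_N\le\dim\caD_N\le c$ for all $N$: all interval marginals of $\omega_\Phi$ have rank uniformly bounded by $c$.

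Next I would combine translation invariance with this uniform rank bound to produce the finitely correlated structure. By the Fannes--Nachtergaele--Werner analysis \cite{Fannes:1992vq}, uniform boundedness of the marginal ranks forces the linear span of the functionals $\omega_\Phi(a\,\cdot)\vert_{\caA_{[1,\infty)}}$, $a\in\caA_{(-\infty,0]}$, to be finite dimensional; equivalently $\omega_\Phi$ is a finitely correlated state, described by a minimal triple $\lmk\caB,\mathbb{E},\rho\rmk$ with $\caB$ a finite-dimensional $C^*$-algebra (of dimension controlled by $c$), $\mathbb{E}:\Mat_d\otimes\caB\to\caB$ a unital completely positive map, $\rho$ an $\mathbb{E}\lmk\unit\otimes\cdot\rmk$-invariant state, and $\omega_\Phi$ recovered through the transfer-operator formula.

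Then I would upgrade ``finitely correlated $+$ pure'' to ``matrix product''. Purity of $\omega_\Phi$ makes the associated von Neumann algebra a type $I$ factor; by minimality of the triple this forces $\caB$ to have trivial center, so $\caB\cong\Mat_k$ and $\omega_\Phi$ is purely generated. A Kraus (Stinespring) decomposition of $\mathbb{E}$ then produces a $d$-tuple $\vv=(v_1,\dots,v_d)\in\Mat_k^{\times d}$ with spectral radius $r_\vv=1$ and $\omega_\Phi=\omega_\vv$. Finally, purity does more than ergodicity: it forces the peripheral spectrum $\sigma(T_\vv)\cap\bbT$ to equal $\{1\}$ with $1$ non-degenerate, so by the equivalences of Theorem \ref{thmprim} the tuple $\vv$ is primitive, which is the claim. (As a byproduct one recovers exponential decay of correlations of $\omega_\Phi$, although this was not assumed at the outset.)

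I expect the main obstacle to be this last step: transferring purity/factoriality of the infinite-volume state down to the assertion that the \emph{minimal} auxiliary algebra $\caB$ is a full matrix algebra, and --- more delicately --- ruling out nontrivial peripheral spectrum of $T_\vv$, which is exactly where purity is used in an essential way beyond mere translation-ergodicity. This primitivity step is the content of \cite{Ogata3}; the identification of the minimal finitely correlated representation and the passage from bounded marginal rank to finite-dimensionality of $\caB$ are carried out in \cite{Matsui1} and \cite{Fannes:1992vq}.
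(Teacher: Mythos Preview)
Your proposal is correct and follows a genuinely different route from the one sketched in the paper. You go through the classical FNW structure theory: bounded marginal rank $\Rightarrow$ finitely correlated $\Rightarrow$ (by purity) purely generated with auxiliary algebra $\Mat_k$ $\Rightarrow$ (by purity again) primitive. This is essentially the line of argument in \cite{Matsui1} and \cite{fnwpure}, with the primitivity upgrade from \cite{Ogata3}. The paper instead invokes the split property (Theorem~\ref{matsuisplit}) to obtain an irreducible representation $(\caK,\pi)$ of $\caA_R$, extends the space translation to an endomorphism of $\caB(\caK)$, and then applies Arveson's theorem (Theorem~\ref{arv}) to produce Cuntz isometries $S_\mu$; the tuple $\vv$ arises by restricting the $S_\mu$ to the finite-rank support of the density operator representing $\omega_\Phi\vert_{\caA_R}$. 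What each buys: your argument is more self-contained and does not pass through the split property or Hastings' area law, so it matches the hypotheses of the theorem as stated (no spectral gap is assumed). The paper's Cuntz-algebra route, by contrast, exhibits the MPS tensors as compressions of canonical operators implementing the shift, which is the picture used later (e.g.\ around (\ref{stachi})--(\ref{ssss}) and in the discussion of the reflection index). One small point to tighten in your write-up: the deduction ``$\omega_{N,\Phi}(\Phi(X))=0$ for each fixed $X$'' uses that the individual interaction terms are nonnegative, which is the implicit frustration-free setup here; you should state that assumption rather than leave it to the limit argument.
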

This theorem follows from the following Theorem from operator algebra theory.
\begin{thm}[\cite{arv}]\label{arv}
Let $\caH$ be a separable infinite dimensional Hilbert space, and $d\in \nan$.
Let $\Xi : B(\caH)\to B(\caH)$ be a
unital endomorphism of $B(\caH)$
such that $
\lmk \Xi\lmk B(\caH)\rmk\rmk'
$
is isomorphic to $\Mat_d$.
Then there exist $S_i\in B(\caH)$, $i=1,\ldots,d$ such that
\begin{align}\label{eq:cuntz}
S_i^*S_j=\delta_{ij},\quad
\sum_{j=1}^d S_j x S_j^*=\Xi(x),\quad x\in B(\caH).
\end{align}
\end{thm}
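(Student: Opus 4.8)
The plan is to reduce the assertion, by way of the structure theory of type~$I$ factors and their commutants, to Wigner's theorem (Theorem~\ref{wigner}): the hypothesis on the commutant says precisely that $\caH$ is, up to unitary equivalence, of the form $\caK\otimes\bbC^d$ with $\Xi(B(\caH))=B(\caK)\otimes\bbC\unit_d$, and then the $S_i$ are assembled from the standard inclusions $\caK\hookrightarrow\caK\otimes\bbC^d$ and a unitary implementing $\Xi$.

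First I would record that, since $\Xi$ is normal (as is customary for $*$-endomorphisms of $B(\caH)$, and is what is used below), $\ker\Xi$ is a weakly closed two-sided ideal of the factor $B(\caH)$; being proper (as $\Xi$ is unital) it is $\{0\}$, so $\Xi$ is a $*$-isomorphism of $B(\caH)$ onto the von Neumann subalgebra $\caM:=\Xi(B(\caH))$, which is therefore a type~$I$ factor.

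Next I would apply Theorem~\ref{ubh} to $\caM$: there are Hilbert spaces $\caK,\caL$ and a unitary $U:\caH\to\caK\otimes\caL$ with $\Ad(U)(\caM)=B(\caK)\otimes\bbC\unit_{\caL}$. Taking commutants and using the commutation theorem for tensor products, $\Ad(U)(\caM')=\bbC\unit_{\caK}\otimes B(\caL)\cong B(\caL)$; since $\caM'\cong\Mat_d$ by hypothesis, this forces $\dim\caL=d$, and $\caK$ is infinite-dimensional because $\caH$ is. Identifying $\caL$ with $\bbC^d$, and replacing $\Xi$ by the conjugate $y\mapsto U\,\Xi(U^*yU)\,U^*$ (if the $S_i$ solve the conjugated problem then $U^*S_iU$ solve the original one), we may assume $\caH=\caK\otimes\bbC^d$ and $\caM=B(\caK)\otimes\bbC\unit_d$. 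In particular $\Xi(x)=\Xi_0(x)\otimes\unit_d$ for a unique $*$-isomorphism $\Xi_0:B(\caH)\to B(\caK)$.

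By Theorem~\ref{wigner}, $\Xi_0$ is implemented by a unitary $V:\caH\to\caK$, i.e.\ $\Xi_0(x)=VxV^*$. Fixing an orthonormal basis $f_1,\dots,f_d$ of $\bbC^d$, let $\iota_i:\caK\to\caK\otimes\bbC^d=\caH$ be the isometry $\iota_i\xi:=\xi\otimes f_i$, so that $\iota_i^*\iota_j=\delta_{ij}\unit_{\caK}$ and $\sum_i\iota_i\iota_i^*=\unit_{\caH}$, and set $S_i:=\iota_iV\in B(\caH)$. Then $S_i^*S_j=V^*\iota_i^*\iota_jV=\delta_{ij}\unit_{\caH}$, while for $x\in B(\caH)$
\[
\sum_{j=1}^{d}S_jxS_j^{*}=\sum_{j=1}^{d}\iota_j\,(VxV^{*})\,\iota_j^{*}=\sum_{j=1}^{d}\iota_j\,\Xi_0(x)\,\iota_j^{*}=\Xi_0(x)\otimes\unit_d=\Xi(x),
\]
which is (\ref{eq:cuntz}). (Alternatively, the $S_i$ can be produced directly: fix a system of matrix units $\{e_{ij}\}$ in $\caM'\cong\Mat_d$, use Theorem~\ref{wigner} to implement $\Xi$ compressed to the minimal projection $e_{11}$ of $\caM'$ by an isometry $S_1$, and put $S_i:=e_{i1}S_1$.) All of this is routine computation; there is no hard analytic step once Theorems~\ref{ubh} and~\ref{wigner} are in hand. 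The point that requires care is the reduction to the case where $\Xi(B(\caH))$ is weakly closed — that is, the normality of $\Xi$ — since Theorem~\ref{ubh} concerns von Neumann type~$I$ factors; conceptually, the hypothesis $\Xi(B(\caH))'\cong\Mat_d$ serves only to fix $\dim\caL=d$, which is exactly what makes precisely $d$ Cuntz isometries appear.
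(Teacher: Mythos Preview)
The paper does not prove this theorem; it is quoted from Arveson~\cite{arv} and used as a black box in the discussion following Theorem~\ref{mat}. So there is no proof in the paper to compare against.

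Your argument is correct and is essentially the standard one: identify $\Xi(B(\caH))$ as a type~$I$ subfactor, use Theorem~\ref{ubh} to put it in the form $B(\caK)\otimes\bbC\unit_d$, implement the resulting isomorphism $B(\caH)\to B(\caK)$ by a unitary via Theorem~\ref{wigner}, and read off the $S_i$. The alternative you sketch at the end (pick matrix units $e_{ij}$ in $\caM'$, build $S_1$ as an isometry onto $e_{11}\caH$ implementing the compression of $\Xi$, and set $S_i=e_{i1}S_1$) is exactly how Arveson phrases it and avoids the explicit tensor decomposition, but the content is the same.

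The one point you rightly flag is normality of $\Xi$. The theorem as stated in the paper says only ``unital endomorphism''; in Arveson's setting (and in every application here, where $\Xi$ is the extension of the shift to $\pi(\caA_R)''=B(\caK)$) the endomorphism is normal by construction, so this is the intended hypothesis. Without normality $\Xi(B(\caH))$ need not be weakly closed and Theorem~\ref{ubh} would not apply directly; one would have to pass to the bicommutant and argue separately that $\Xi$ lands in it, which is an unnecessary complication given the intended use.
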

Note that $S_{j}$s in the theorem form a representation of Cuntz algebra \cite{BR2}.

Let $\omega_{\Phi}$ be a unique gapped ground state for an interaction $\Phi$ which satisfies the conditions in Theorem \ref{mat}.
By Theorem \ref{matsuisplit},
there is an irreducible representation $(\caK,\pi
)$
of $\caA_R$, quasi-equivalent to the GNS representation of $\omega_\Phi\vert_{\caA_R}$.
Because $\omega_{\Phi}$ is translationally invariant,
the space translation extends uniquely to
an endomorphism on $\pi
(\caA_R)''=\caB(\caK)$.
Applying Theorem \ref{arv} to it, 
we obtain
operators $\cs_\mu\in B(\caH_\omega)$ with $\mu=1,\ldots d$
satisfying the following:
\begin{align}\label{stachi}
\begin{split}
&\cs_{\mu}^*\cs_{\nu}=\delta_{\mu\nu}\unit,\\
&\sum_{\mu=1,\ldots,d} \cs_{\mu} \pi
(A) \cs_{\mu}^*=\pi
\circ\tau_{1}(A),\quad A\in \caA_R .\\
&\pi
\lmk e_{\mu\nu}\otimes\unit_{[1,\infty)}\rmk
=\cs_\mu \cs_\nu^*\quad
\text{for all} \quad \mu,\nu=1,\ldots,d.
\end{split}
\end{align}
Here $ e_{\mu\nu}\otimes\unit_{[1,\infty)}$ indicates an element $e_{\mu\nu}$ in $\caA_{\{0\}}=\Mat_{d}$
embedded into $\caA_R$.
From this, we have
\begin{align}\label{ssss}
\pi
\lmk\bigotimes_{k=0}^{l-1}e_{\mu_k,\nu_k}\rmk
=\cs_{\mu_0}\cdots \cs_{\mu_{l-1}}\cs_{\nu_{l-1}}^*\cdots \cs_{\nu_0}^*,
\end{align}for all $l\in\nan$, $\mu_k,\nu_k=1,\ldots,d$.
Because $\pi$ and the GNS representation of $\omega_\Phi\vert_{\caA_R}$
are quasi-equivalent, there is a density operator on $\caK$
such that
\[
\omega_{\Phi}(A)=\Tr\rho\pi(A),\quad A\in\caA_R.
\]
Because of the first inequality of (\ref{ubr})
and the uniqueness of the ground state, $\omega_{\Phi}$ has to be frustration-free,
namely, it should satisfy $\omega_{\Phi}(\Phi(X))=0$ for any $X\in \mathfrak S_{\bbZ}$.
From this and the second inequality of (\ref{ubr}), one can show that
the support  of $\rho$ is of finite rank. (See \cite{Matsui1}.)
The $n$-tuple $\vv$ generating the finitely correlated state $\omega$, in Theorem \ref{mat} is given as a
restriction of $S_i$s to this range of $\rho$.

Recall that it is believed that any interactions $\Phi_1,\Phi_2$ with unique gapped ground state in
one-dimension are equivalent $\Phi_1\sim\Phi_2$.
Although in general, it is still an open problem,
for the class of interactions satisfying the conditions in  Theorem \ref{mat}, we can say it is true.
\begin{thm}[\cite{Ogata3}]
Let $\Phi_1,\Phi_2$ be uniformly bounded finite range interactions satisfying
the assumptions in Theorem \ref{mat}.
Then we have $\Phi_1\sim\Phi_2$.
Furthermore, they can be connected without breaking the space translation invariance.
\end{thm}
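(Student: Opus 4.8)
The plan is to reduce everything, via the matrix–product–state machinery of Section \ref{mps}, to a finite-dimensional statement about tuples of matrices. By Theorem \ref{mat}, the unique ground states are matrix product states, $\omega_{\Phi_1}=\omega_{\vv_1}$ and $\omega_{\Phi_2}=\omega_{\vv_2}$ for primitive $d$-tuples $\vv_1\in\Mat_{k_1}^{\times d}$, $\vv_2\in\Mat_{k_2}^{\times d}$. I would then (a) connect each $\Phi_i$ to a translation-invariant parent Hamiltonian of $\omega_{\vv_i}$, and (b) connect the two parent Hamiltonians to one another through a manifestly translation-invariant family. Throughout, membership in $\caP_{U.G.}$ and the conditions (i)--(iv) of Definition \ref{simdef} must be checked along each segment.

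\emph{Step 1 (normalise, and compare finite-volume kernels).} Replacing $\Phi_i(X)$ by $\Phi_i(X)-\lmk\inf\sigma\lmk\Phi_i(X)\rmk\rmk\unit$ changes neither the dynamics $\alpha^{\Phi_i}$ nor the bulk gap, so I may assume all $\Phi_i(X)\ge 0$; since $\omega_{\Phi_i}$ is the unique ground state and $\dim\ker\lmk\lmk H_{\Phi_i}\rmk_{[1,N]}\rmk\ge 1$, the interaction is frustration-free, $\lmk H_{\Phi_i}\rmk_{[1,N]}\ge 0$, and $\caD^{(i)}_N:=\ker\lmk\lmk H_{\Phi_i}\rmk_{[1,N]}\rmk$ has dimension $\le c$. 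These subspaces satisfy the intersection property, and for $N$ large $\caD^{(i)}_N$ coincides with the Fannes--Nachtergaele--Werner ground-state space $\caG_{N,\vv_i}$ of the parent Hamiltonian $\Phi_{m_i,\vv_i}$ (Theorems \ref{thmprim}, \ref{fnw}).

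\emph{Step 2 ($\Phi_i\sim\Phi_{m_i,\vv_i}$).} Take the affine path $\Phi_i^s:=(1-s)\Phi_i+s\Phi_{m_i,\vv_i}$. For $\Lambda$ large, $\lmk H_{\Phi_i^s}\rmk_\Lambda=(1-s)A+sB$ with $A,B\ge 0$ and $\ker A=\ker B=\caD^{(i)}_{|\Lambda|}$, so $\ker\lmk H_{\Phi_i^s}\rmk_\Lambda=\caD^{(i)}_{|\Lambda|}$ and on the orthogonal complement of this kernel the operator is bounded below by $\max\lmk(1-s)\gamma_A,\,s\gamma_B\rmk\ge\tfrac12\min(\gamma_A,\gamma_B)>0$, where $\gamma_B$ is the uniform finite-volume gap of $\Phi_{m_i,\vv_i}$ from Theorem \ref{fnw} and $\gamma_A$ the corresponding gap of $\Phi_i$ (which such frustration-free, bounded-kernel interactions possess). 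The standard frustration-free criterion then promotes this uniform finite-volume gap to a uniform bulk gap, with common bulk ground state $\omega_{\vv_i}$; condition (iv) of Definition \ref{simdef} is trivial since $\omega_{\Phi_i^s}\equiv\omega_{\vv_i}$, and smoothness of $s\mapsto\Phi_i^s(X)$ is clear. Hence $\Phi_i\sim\Phi_{m_i,\vv_i}$, translation-invariantly.

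\emph{Step 3 (connecting the parent Hamiltonians) and the obstacle.} After blocking $L$ sites (still a translation-invariant interaction on $\caA_{\bbZ}$), pad $\vv_1^{(L)},\vv_2^{(L)}$ to the common bond dimension $K=k_1+k_2$ and consider the manifestly translation-invariant family $\Phi_{m',\vv(t)}$ with
\[
\vv(t)=\frac{1}{\sqrt{r(t)}}\lmk\begin{smallmatrix}\cos(\pi t/2)\,\vv_1^{(L)}&g(t)\,W\\ g(t)\,W'&\sin(\pi t/2)\,\vv_2^{(L)}\end{smallmatrix}\rmk,
\]
where $W,W'$ and the scalar $g$ (with $g>0$ on $(0,1)$, $g(0)=g(1)=0$) are chosen so that $\vv(t)$ is primitive for $t\in(0,1)$, i.e.\ $\caK_{m'}(\vv(t))=\Mat_K$, and $r(t)$ renormalises the spectral radius of $T_{\vv(t)}$ to $1$. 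At $t=0,1$ the off-diagonal blocks vanish and the range of $\Gamma_{m',\vv(t)}$ collapses, so $\Phi_{m',\vv(0)}$ and $\Phi_{m',\vv(1)}$ are the parent Hamiltonians $\Phi_{m'L,\vv_1}$, $\Phi_{m'L,\vv_2}$, which are $\sim$-equivalent to $\Phi_{m_1,\vv_1}$, $\Phi_{m_2,\vv_2}$ by another application of Step 2. On any $[\delta,1-\delta]$ the $\vv(t)$ range over a compact set of primitive tuples, so a single $m'$ works with $\Gamma_{m'-1,\vv(t)}$ injective, and Theorem \ref{fnw} gives a gap depending continuously and positively on $t$; conditions (ii)--(iv) (and Assumption \ref{infautoassump}) along this interior segment follow from the explicit formula for $\omega_{\vv(t)}$ together with uniform exponential clustering (Theorem \ref{expdec}), the correlation lengths being uniformly bounded. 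Concatenating gives $\Phi_1\sim\Phi_2$ through translation-invariant interactions. The main obstacle is the behaviour of $\Phi_{m',\vv(t)}$ as $t\to 0,1$: there the matrix product state degenerates, the $T_{\vv(t)}$-invariant element $e_{\vv(t)}$ loses strict positivity, and the textbook FNW gap estimate degenerates although the true bulk gap does not; one must engineer $W,W',g,L,m'$ so that the bulk spectral gap of $\Phi_{m',\vv(t)}$ stays uniformly bounded below on all of $[0,1]$ and $t\mapsto\Phi_{m',\vv(t)}(X)$ is regular at the endpoints, which requires a refined gap bound near the degeneration rather than the soft compactness argument valid in the interior.
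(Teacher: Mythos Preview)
The paper does not prove this theorem; it is quoted from \cite{Ogata3} as an external result, so there is no in-paper argument to compare against. Your outline is in the right spirit---reduce to parent Hamiltonians of injective MPS and then interpolate at the level of the generating tuples---but as written it is not a proof, and the difficulties you flag are exactly where the work in \cite{Ogata3} lies.

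Two concrete issues. First, in Step~2 you invoke a uniform finite-volume gap $\gamma_A$ for the original $\Phi_i$; the hypotheses of Theorem~\ref{mat} do \emph{not} give you this---they give a unique bulk ground state and a bound on $\dim\ker\lmk H_{\Phi_i}\rmk_{[1,N]}$, nothing about the first excited level. That $\Phi_i$ is gapped (even in finite volume) is something you must prove, and it is in fact one of the outputs of the analysis, obtained by comparing $\Phi_i$ to the FNW parent Hamiltonian once you know their ground spaces agree on long enough intervals. Your convex-combination bound $\max((1-s)\gamma_A,s\gamma_B)$ is also wrong as stated: for $A,B\ge 0$ with common kernel one only has $(1-s)A+sB\ge (1-s)\gamma_A P^\perp$ and $\ge s\gamma_B P^\perp$ separately, hence a lower bound $\min((1-s)\gamma_A,s\gamma_B)$, which vanishes at the endpoints; the correct argument uses that both $A$ and $B$ are already bounded below on $\caD_N^\perp$, giving $(1-s)\gamma_A+s\gamma_B$ from below, which is fine.

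Second, and more seriously, your Step~3 candidly identifies but does not resolve the central difficulty: as $t\to 0,1$ your tuple $\vv(t)$ ceases to be primitive on $\Mat_K$ (the invariant state $\varphi_{\vv(t)}$ loses faithfulness, $e_{\vv(t)}$ loses strict positivity), the FNW gap estimate degenerates, and you need a gap bound that survives the collapse of the bond dimension. The actual proof in \cite{Ogata3} does not try to push a single block-matrix path through this degeneration. Instead it works with a carefully chosen class of interactions (Class~A in that paper), proves directly that any two members are connected by a path of Class~A interactions with a uniform gap, and handles the change of bond dimension by an explicit ``padding'' move that keeps primitivity on the larger matrix algebra throughout, rather than letting off-diagonal couplings $g(t)$ vanish. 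The delicate uniform-gap estimate along that path is the technical core of \cite{Ogata3} and cannot be replaced by a compactness argument on $(0,1)$ plus hope at the endpoints.
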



In the physicist's approach on SPT-phases, the uniqueness theorem of
MPS gets important.
In quantum information theory, it is called the {\it fundamental theorem of MPS}.
It is due to Fannes-Nachtergaele-Werner again.
\begin{thm}[\cite{fnwpure}]\label{fnwpure}
Suppose that primitive $\vv\in\Mat_k^{\times d}$ and
$\tilde \vv\in\Mat_{\tilde k}^{\times d}$
give the same state $\omega_{\vv}=\omega_{\tilde\vv}$.
Then $k=\tilde k$ and there is a $c\in\bbT$,
a unitary $k\times k $ matrix $U$,
such that
\[
\tilde v_\mu=c U v_{\mu} U^*,\quad \mu=1,\ldots, d.
\]
\end{thm}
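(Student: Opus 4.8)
The plan is to recover the generating tuple from the common state $\omega:=\omega_\vv=\omega_{\tilde\vv}$ in a way that is canonical up to a single unitary conjugation and an overall phase, and then read off the assertion. A preliminary normalization: a global phase $v_\mu\mapsto c v_\mu$ ($c\in\bbT$) and a gauge change $v_\mu\mapsto Yv_\mu Y^{-1}$ ($Y\in\GL_k$) both leave $\omega_\vv$ unchanged -- for the latter one checks $e_\vv\mapsto Ye_\vv Y^{*}$ and (writing $\varphi_\vv=\Tr(\rho_\vv\,\cdot\,)$) $\rho_\vv\mapsto Y^{-*}\rho_\vv Y^{-1}$ up to a scalar, so that the defining expression $\Tr(\rho_\vv v_{\mu_1}\cdots v_{\mu_l}e_\vv v_{\nu_l}^{*}\cdots v_{\nu_1}^{*})$ is invariant -- and both preserve primitivity. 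So I may assume $\vv$ and $\tilde\vv$ are in canonical form: $\sum_\mu v_\mu v_\mu^{*}=\unit_k$ (i.e.\ $e_\vv=\unit$) with $\rho_\vv$ a full-rank density matrix (automatic, $\varphi_\vv$ being faithful), and likewise for $\tilde\vv$. The residual gauge freedom then consists precisely of unitaries, since preserving $e_\vv=\unit$ forces $YY^{*}=\unit$.

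Next, the recovery. By Theorem \ref{fnw} the state $\omega$ is the unique gapped ground state of a parent Hamiltonian $\Phi_{m,\vv}$ (take $m$ large enough that $\Gamma_{m-1,\vv}$ is injective, which primitivity allows); it is pure and translation invariant, and $1\le\dim\ker\big((H_{\Phi_{m,\vv}})_{[1,N]}\big)\le k^2$, so the hypotheses of Theorem \ref{mat} hold. Running the construction of the paragraph preceding the statement -- Theorem \ref{matsuisplit} followed by Theorem \ref{arv} -- I obtain an irreducible representation $(\caK,\pi)$ of $\caA_R$ quasi-equivalent to the GNS representation of $\omega|_{\caA_R}$, an endomorphism of $\caB(\caK)=\pi(\caA_R)''$ implementing the translation, operators $S_\mu\in\caB(\caK)$ satisfying (\ref{stachi}), and a finite-rank density operator $\rho$ on $\caK$ with $\omega(A)=\Tr(\rho\pi(A))$ for $A\in\caA_R$; the generating tuple is then the compression of $(S_\mu)$ to $\Ran\rho$.

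The core point is that all of this data is determined by $\omega$ alone, up to one unitary and an overall phase. Indeed $(\caK,\pi)$ is unique up to unitary equivalence, because two irreducible representations that are quasi-equivalent are unitarily equivalent (Theorem \ref{qethm} together with Theorem \ref{purethm} and Theorem \ref{wigner}). Given $(\caK,\pi)$ the implementing endomorphism is unique, $\rho$ is unique, and any two solutions $S_\mu$, $S_\mu'$ of (\ref{stachi}) satisfy $S_\mu'=S_\mu W$ for a single unitary $W$ on $\caK$ (from $S_\mu^{*}S_\mu'=S_\nu^{*}S_\nu'$, which follows from $S_\mu S_\nu^{*}=\pi(e_{\mu\nu})=S_\mu' S_\nu'^{*}$); the covariance relation in (\ref{stachi}) then forces $W\pi(A)W^{*}=\pi(A)$ for all $A\in\caA_R$, so $W\in\pi(\caA_R)'=\bbC\unit$ by irreducibility, i.e.\ $W$ is a phase. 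Transporting a unitary $U$ identifying two choices of $(\caK,\pi)$ and a phase $c\in\bbT$ through the compression step, the recovered tuple changes only by $v_\mu\mapsto c\,U_0 v_\mu U_0^{*}$ with $U_0:=U|_{\Ran\rho}$ unitary. Moreover $k=\tilde k$: evaluating the $N$-site reduced density matrix $\rho^{(N)}$ of $\omega$ for $N$ large gives $\rank\rho^{(N)}=k^2$ from the $\vv$-side and $=\tilde k^2$ from the $\tilde\vv$-side (using injectivity of $\Gamma_{N,\vv}$, resp.\ $\Gamma_{N,\tilde\vv}$, from primitivity).

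It remains to check that the canonical-form tuple $\vv$ one started with really is a unitary conjugate of the tuple produced by this recovery applied to $\omega_\vv$; granting this, the same holds for $\tilde\vv$, and since both recoveries process the same $\omega$ they yield the same tuple up to the ambiguity above, which is exactly $\tilde v_\mu=c\,U v_\mu U^{*}$ with $U$ unitary. This identification is the step I expect to be the main obstacle: it requires Fannes--Nachtergaele--Werner's explicit model of the GNS representation of a primitive finitely correlated state restricted to the half-chain, together with the observation that, in canonical form, the environment density operator $\rho$ restricts so as to return $\vv$ on the nose -- the normalization of the first step being what upgrades ``gauge equivalence'' to ``unitary equivalence'' here. (A purely finite-dimensional alternative for this last step: equality of the states forces $\rho^{(N)}=\tilde\rho^{(N)}$ for all $N$; using injectivity of $\Gamma_{N,\vv}$ and $\Gamma_{N,\tilde\vv}$ for $N$ large one extracts a nonzero $X$ with $\sum_\mu v_\mu X\tilde v_\mu^{*}=X$, shows $X$ is invertible by primitivity of both tuples, and deduces from the canonical form that $X$ is a scalar multiple of a unitary.)
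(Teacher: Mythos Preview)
The paper does not prove this theorem; it is quoted from \cite{fnwpure} and used as a black box (the ``fundamental theorem of MPS''). So there is no in-paper proof to compare against, and the question is simply whether your argument stands on its own.

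It does not, because you explicitly leave the decisive step unfinished. After setting up the Arveson--Cuntz recovery of a generating tuple from $\omega$, you write that identifying the recovered tuple with the original canonical-form $\vv$ ``is the step I expect to be the main obstacle'' and then describe what it \emph{requires} rather than carrying it out. Everything before that point is fine --- the gauge normalization, the uniqueness of $(\caK,\pi)$ up to unitary, the argument that two Cuntz families satisfying (\ref{stachi}) differ by a scalar $W$, and the rank count giving $k=\tilde k$ --- but none of it touches the actual content of the theorem until you close that loop. As written, you have shown that \emph{some} tuple can be canonically extracted from $\omega$, not that it is the one you started with.

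Your parenthetical ``purely finite-dimensional alternative'' is in fact the standard route (and essentially what \cite{fnwpure} does): from $\omega_\vv=\omega_{\tilde\vv}$ one gets, for all words, $\varphi_\vv(v_{\mu_1}\cdots v_{\mu_l}\,e_\vv\, v_{\nu_l}^{*}\cdots v_{\nu_1}^{*})=\varphi_{\tilde\vv}(\tilde v_{\mu_1}\cdots \tilde v_{\mu_l}\,e_{\tilde\vv}\,\tilde v_{\nu_l}^{*}\cdots \tilde v_{\nu_1}^{*})$; injectivity of $\Gamma_{N,\vv}$ and $\Gamma_{N,\tilde\vv}$ for large $N$ then produces a nonzero intertwiner $X$ with $v_\mu X=c\,X\tilde v_\mu$ (equivalently $\sum_\mu v_\mu X\tilde v_\mu^{*}=c\,X$ in canonical form), primitivity forces $X$ invertible, and the canonical normalizations $\sum v_\mu v_\mu^{*}=\unit=\sum\tilde v_\mu\tilde v_\mu^{*}$ force $X^{*}X$ to be a fixed point of $T_{\tilde\vv}$, hence scalar, so $X$ is a multiple of a unitary. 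If you want a self-contained proof, drop the Arveson detour and write this paragraph out; it is shorter and it is complete.
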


\subsection{$H^{2}(G,\Uo)$-index for MPS with on-site symmetry}\label{mpsonsite}

Now let us consider MPS which are invariant under the on-site symmetry $\beta$ given by a unitary representation $U$ of a finite group $G$.
In \cite{po}, \cite{po2}, \cite{Perez-Garcia2008}
a $H^2(G,\Uo)$-valued index was obtained by the following argument from $\beta$-invariant MPS.
In this section we explain about this.

Let $\omega_{\vv}$ be a $\beta$-invariant MPS
given by a primitive $\vv\in\Mat_k^{\times d}$.
Because $\omega_{\vv}$ is $\beta_g$-invariant,
we can check that $d$-tuple $\tilde \vv(g)$ given by
\[
\tilde v_{\mu}(g):=\sum_{\nu=1}^d U(g)_{\nu\mu} v_\nu,\quad g\in G,\quad
\mu=1,\ldots,d
\]
also gives the same state.
Namely, we have
\[
\omega_{\tilde \vv(g)} =\omega_{\vv}.
\]
Then, by the fundamental theorem of MPS,
there is $c(g)\in\bbT$, a $k\times k$
unitary matrix $u(g)$, such that
\begin{align*}
c(g)u(g) v_\mu u(g)^*
=\tilde v_\mu (g)
=\sum_{\nu=1}^d U(g)_{\nu\mu} v_\nu.
\end{align*}
Now, let us consider $\tilde v_\mu(gh)$.
From above, we have
 \begin{align*}
 \tilde v_\mu(gh)=c(gh)u(gh) v_\mu u(gh)^*.
 \end{align*}
 On the other hand, by the definition of $ \tilde v_\mu(gh)$,
 we have
 \begin{align*}
  &\tilde v_\mu(gh)
  =\sum_{\nu=1}^d U(gh)_{\nu\mu} v_\nu
  =\sum_{\nu,\lambda=1}^d U(g)_{\nu\lambda}
  U(h)_{\lambda\mu} v_\nu\\
&=\sum_{\lambda=1}^d  U(h)_{\lambda\mu}
\tilde v_\lambda(g)
=c(g)u(g) \sum_{\lambda=1}^d  U(h)_{\lambda\mu} v_\lambda u(g)^*\\
&=c(g)c(h) u(g)u(h) v_\mu u(h)^* u(g)^*.
 \end{align*} 
 Hence we obtain
 \begin{align*}
 c(gh)u(gh) v_\mu u(gh)^*
 =c(g)c(h) u(g)u(h) v_\mu u(h)^* u(g)^*,
 \end{align*}
 for all $\mu=1,\ldots, d$.
 Recalling (3) of Theorem \ref{thmprim},
 we then see that $c(gh)=c(g)c(h)$ and
 \begin{align*}
 u(gh) x u(gh)^*
 =u(g)u(h) x u(h)^* u(g)^*,
 \end{align*}
 for all $x\in\Mat_k$.
 This means there is some $\sigma(g,h)\in\Uo$
 such that
 \begin{align*}
 u(g)u(h)=\sigma(g,h) u(gh),\quad g,h\in G.
 \end{align*}
 Hence we obtain a projective representation, and a second cohomology $[\sigma]_{H^2(G,\Uo)}\in H^2(G,\Uo)$.
Here we considered linear representation of $G$ 
but anti-linear ones can be treated analogously.
This is the index obtained in  \cite{po}, \cite{po2}, \cite{Perez-Garcia2008}.

\subsection{$H^{2}(G,\Uo)$-index for SPT with on-site symmetry}

The derivation of $H^2(G,\Uo)$-valued index given in the previous subsection is elegant and nice,
but there is a crucial problem for our use.
That is, it is defined only on a subset of 
$\caP_{U.G.\beta}^0$, i.e., MPS. 
In order to think about classification problem, we definitely need some index which are defined
{\it all} the $\caP_{U.G.\beta}^0$.

Such an index can be derived from split property.
In fact, the existence of a projective representation associated to state with split property
is automatic as we saw in subsection \ref{onsiteone}, and it
 has been noticed for operator algebraist from long time ago \cite{Matsui2}.
Thanks to Matsui's theorem, we know that our unique ground state $\omega_\Phi$
for $\beta$-invariant $\Phi\in \caP_{U.G.}$
satisfies the split property.
From Theorem \ref{matsuisplit},
and Proposition \ref{splitcharac},
the GNS triple of $\omega_{\Phi}$
is of the form 
$(\caK_L\otimes\caK_R,\rho_L\otimes\rho_R, \Omega)$
with irreducible representations
$(\caK_L,\rho_L)$, $(\caK_R,\rho_R)$ of $\caA_{L}$,
$\caA_{R}$
and a unit vector $\Omega\in\caK_L\otimes\caK_R$.
In this representation, we have
\begin{align*}
\lmk \lmk \rho_L\otimes\rho_R\rmk\lmk\caA_R\rmk\rmk ''
=\bbC\unit_{\caK_L}\otimes \caB(\caK_R),\quad
\lmk \lmk \rho_L\otimes\rho_R\rmk\lmk\caA_L\rmk\rmk''
=\caB(\caK_L)\otimes\bbC\unit_{\caK_R}.
\end{align*}
From now on, we use this structure of GNS triple repeatedly.
Furthermore, because $\omega_{\Phi}$ is a unique gapped ground state
of $\beta$-invariant interaction, it is $\beta$-invariant.
Therefore, by Theorem \ref{gnsthm}, there is a unitary representation $V$
of $G$
on $\caK_L\otimes\caK_R$
such that
\begin{align*}
\Ad\lmk V_g\rmk\circ\lmk \rho_L\otimes\rho_R\rmk
= \rho_L\beta_g^L\otimes\rho_R\beta_g^R,\quad
V_g\Omega=\Omega.
\end{align*}
If we restrict this to $\caA_R$,
then we get
\begin{align*}
\Ad\lmk V_g\rmk\lmk\unit_{\caK_L}\otimes \rho_R(A)\rmk
=\unit_{\caK_L}\otimes \rho_R\beta_g^R(A),\quad A\in\caA_R.
\end{align*}
From this and irreducibility of $\rho_R$, we obtain a $*$-automorphism
$\gamma_g$ of  $\caB(\caK_R)$ such that
\begin{align*}
&\Ad \lmk V_g\rmk\lmk \unit_{\caK_L}\otimes x\rmk
=\unit_{\caK_L}\otimes \gamma_g(x),\quad x\in\caB(\caK_R),\\
&\gamma_g\rho_R(A)= \rho_R\beta_g^R(A),\quad A\in\caA_R.
\end{align*}
Note that $\gamma : G\to \Aut\lmk \caB(\caK_R)\rmk$ is a group action.
By Theorem \ref{wigner}, there is a unitary $u_R(g)$ on $\caK_R$
such that
\begin{align*}
\Ad(u_R(g))(x)=\gamma_g(x),\quad x\in \caB(\caK_R).
\end{align*}
Because $\gamma$ is a group action $u_R : G\to \caU\lmk \caK_R\rmk$
is a projective representation.
Hence we got a projective representation
$u_R$  of $G$ on $\caK_R$
such that
\begin{align*}
\Ad(u_R(g))\rho_R(A)=\rho_R\beta_g^R(A),\quad A\in\caA_R.
\end{align*}

Let $h_R$ be the second group cohomology class associated to the projective
representation.
Using Theorem  \ref{wigner}, one can show that this
value $h_R$ is independent of the choice of
$\caK_R,\rho_R,\caK_L,\rho_L$ (see \cite{TRI}).
This value turned out to be the desired invariant.
\begin{thm}[\cite{TRI}]\label{h2sta}
This second cohomology class $h_{\Phi}:=h_R\in H^2(G,\Uo)$
is an invariant of $\sim_\beta$.
\end{thm}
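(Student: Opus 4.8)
\noindent The plan is to transport the projective representation $u_R$ through the automorphism produced by Theorem~\ref{autoinf}.

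\emph{Step 1: reduction to one $G$-equivariant factorized automorphism.} Suppose $\Phi_0\sim_\beta\Phi_1$ and fix a path $\{\Phi(s)\}_{s\in[0,1]}$ in $\caP_{U.G.\beta}^{0}$ realizing the equivalence. Theorem~\ref{autoinf} yields automorphisms $\alpha_s$ with $\omega_{\Phi(s)}=\omega_{\Phi(0)}\circ\alpha_s$. Since every $\Phi(s)$ and every $\dot\Phi(s)$ is $\beta$-invariant and each box $\Lambda_N$ is $\beta$-invariant, the operators $H_{\Lambda_N,\Phi}(s)$, $H_{\Lambda_N,\dot\Phi}(s)$, the generator in (\ref{vh}), and hence the unitaries $V_{\Lambda_N}(s)$, are all $\beta$-invariant, so the limit $\alpha_s$ commutes with $\beta_g$ for every $g\in G$. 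Write $\omega_0=\omega_{\Phi_0}$, $\omega_1=\omega_{\Phi_1}$, $\alpha=\alpha_1$, so $\omega_1=\omega_0\circ\alpha$ and $\alpha\circ\beta_g=\beta_g\circ\alpha$. By the one-dimensional factorization property, $\alpha=\Ad(W)\circ(\alpha_L\otimes\alpha_R)$ for a unitary $W\in\caA_{\bbZ}$ and automorphisms $\alpha_L\in\Aut\caA_{(-\infty,-1]}$, $\alpha_R\in\Aut\caA_{[0,\infty)}$; set $\caA_L=\caA_{(-\infty,-1]}$, $\caA_R=\caA_{[0,\infty)}$. The one step that requires genuine work is to promote this to a \emph{$G$-equivariant} factorization: because the quasi-local interaction generating $\alpha$ is $\beta$-invariant and the cut $L\,|\,R$ is $\beta$-invariant, one can split that interaction $\beta$-equivariantly at the cut and thereby choose $\alpha_L,\alpha_R$ with $\alpha_L\circ\beta^{L}_g=\beta^{L}_g\circ\alpha_L$ and $\alpha_R\circ\beta^{R}_g=\beta^{R}_g\circ\alpha_R$ for all $g$, where $\beta^{L},\beta^{R}$ are the on-site $G$-actions on $\caA_L,\caA_R$. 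I expect this to be the main obstacle; everything afterwards is bookkeeping.

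\emph{Step 2: a common GNS realization.} By Matsui's Theorem~\ref{matsuisplit} and Proposition~\ref{splitcharac}, $\omega_0$ admits a product-form GNS triple $(\caK_L\otimes\caK_R,\ \rho_L\otimes\rho_R,\ \Omega)$ with $\rho_L,\rho_R$ irreducible, and, $\omega_0$ being pure and $\beta$-invariant, Theorem~\ref{gnsthm} furnishes a genuine unitary representation $g\mapsto V_g$ on $\caK_L\otimes\caK_R$ with $\Ad(V_g)\circ(\rho_L\otimes\rho_R)=(\rho_L\beta^{L}_g)\otimes(\rho_R\beta^{R}_g)$ and $V_g\Omega=\Omega$. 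As in the construction preceding Theorem~\ref{h2sta}, $\Ad(V_g)$ restricts to an automorphism of $\unit\otimes\caB(\caK_R)=(\rho_L\otimes\rho_R)(\caA_R)''$ equal to $\Ad(\unit\otimes u_R(g))$ for a projective representation $u_R$ with cocycle $\sigma_R$, so $h_{\Phi_0}=[\sigma_R]$; likewise $\Ad(V_g)$ restricts to $\caB(\caK_L)\otimes\unit$ as $\Ad(u_L(g)\otimes\unit)$ for a projective representation $u_L$ with $\Ad(u_L(g))\rho_L=\rho_L\beta^{L}_g$. Using $\alpha=\Ad(W)\circ(\alpha_L\otimes\alpha_R)$ rewrite the GNS representation of $\omega_1$ as
\[
\pi_1:=(\rho_L\otimes\rho_R)\circ\alpha=\Ad(\mathbf W)\circ\big[(\rho_L\alpha_L)\otimes(\rho_R\alpha_R)\big],\qquad \mathbf W:=(\rho_L\otimes\rho_R)(W),
\]
so $\big(\caK_L\otimes\caK_R,\ (\rho_L\alpha_L)\otimes(\rho_R\alpha_R),\ \mathbf W^{*}\Omega\big)$ is once more a product-form GNS triple of $\omega_1$, on the same Hilbert space, with $\big[(\rho_L\alpha_L)\otimes(\rho_R\alpha_R)\big](\caA_R)''=\unit\otimes\caB(\caK_R)$, the very algebra carrying $u_R$. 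From the equivariance of Step~1 one checks $(\rho_R\alpha_R)\circ\beta^{R}_g=\Ad(u_R(g))\circ(\rho_R\alpha_R)$ and $(\rho_L\alpha_L)\circ\beta^{L}_g=\Ad(u_L(g))\circ(\rho_L\alpha_L)$, whence the same $V_g$ implements $\beta_g$ in the representation $(\rho_L\alpha_L)\otimes(\rho_R\alpha_R)$; and since $\alpha$ commutes with $\beta$, $\mathbf W^{*}V_g\mathbf W$ also implements $\beta_g$ there and fixes the cyclic vector $\mathbf W^{*}\Omega$.

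\emph{Step 3: the cocycles coincide.} The representation $(\rho_L\alpha_L)\otimes(\rho_R\alpha_R)$ is irreducible (a product of irreducibles), so, $\omega_1$ being pure, Theorem~\ref{purethm} forces $\mathbf W^{*}V_g\mathbf W=c_g V_g$ for scalars $c_g\in\bbT$. Hence $\Ad(\mathbf W^{*}V_g\mathbf W)$ and $\Ad(V_g)$ agree on $\unit\otimes\caB(\caK_R)$, both acting as $\Ad(\unit\otimes u_R(g))$. Running the index construction for $\omega_1$ from the product-form GNS triple of Step~2, with $\mathbf W^{*}V_g\mathbf W$ as the implementing representation (it is the one fixing the cyclic vector), therefore returns the same projective representation $u_R$ and the same $2$-cocycle $\sigma_R$; and by the independence of the index of all auxiliary choices (established in \cite{TRI}) this computes $h_{\Phi_1}$. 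Thus $h_{\Phi_1}=[\sigma_R]=h_{\Phi_0}$, as claimed. If only the plain, non-equivariant factorization of Step~1 is available, $V_g$ need not implement $\beta_g$ in the new product GNS and one must instead control $\mathbf W$-conjugation directly: using the Lieb--Robinson decay of $W$ to write $\mathbf W$ as a product unitary times a correction localized near the cut, and noting that such a correction carries only genuine on-site $G$-representations and hence cannot move the cohomology class. The equivariant route above is cleaner, so securing the equivariant factorization is the step I would invest the most in.
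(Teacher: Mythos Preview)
Your proof is correct and rests on the same key input as the paper's: the $G$-equivariant factorization $\alpha=\inn\circ(\alpha_L\otimes\alpha_R)$ with $\alpha_\sigma$ commuting with $\beta_g^\sigma$, which the paper isolates as Proposition~\ref{facinv}. You rightly identify this as the substantive step; the paper obtains it exactly as you suggest, from the $\beta$-invariance of the time-dependent interaction generating $\alpha_s$.

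The execution differs slightly. The paper never builds a joint product GNS for $\omega_1$: it works entirely on the right half-chain, using Lemma~\ref{ih} and quasi-equivalence to get $\rho_{R,2}\sim_{q.e.}\rho_{R,1}\circ\alpha_R$, hence $\rho_{R,2}=\Ad(W)\circ\rho_{R,1}\circ\alpha_R$ for some unitary $W:\caK_{R,1}\to\caK_{R,2}$, and then transports $u_{R,1}$ through $W$ to conclude $u_{R,2}(g)=c_g\,Wu_{R,1}(g)W^*$. Your route keeps both tensor factors around and realizes $\omega_1$ concretely on $\caK_L\otimes\caK_R$; this is fine but introduces redundant work in Step~3. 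In particular, once you have established $(\rho_R\alpha_R)\circ\beta_g^R=\Ad(u_R(g))\circ(\rho_R\alpha_R)$ in Step~2, that equation alone already says $u_R$ is a valid right projective representation for the $\omega_1$-index, and the detour through $\mathbf W^*V_g\mathbf W=c_gV_g$ is unnecessary. The paper's version is a bit leaner for this reason, but the mathematical content is the same.
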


Now, let us investigate what this $H^2(G,\Uo)$-valued index represents.
By the same argument as the derivation of $u_R$, there is 
a projective representation $u_L$ of $G$
 on $\caK_L$
such that
\begin{align*}
\Ad(u_L(g))\rho_L(A)=\rho_L\beta_g^L(A),\quad A\in\caA_L.
\end{align*}
Let $\sigma_L$, $\sigma_R$ be $2$-cocycles associated to
projective representations $u_L$, $u_R$.
Now both of $V_g$ and $u_L(g)\otimes u_R(g)$
implements $\beta_g$
\begin{align*}
\Ad\lmk V_g\rmk\circ\lmk \rho_L\otimes\rho_R\rmk
= \rho_L\beta_g^L\otimes\rho_R\beta_g^R
=\Ad(u_L(g)\otimes u_R(g)) \circ\lmk \rho_L\otimes\rho_R\rmk.
\end{align*}
Because $ \rho_L\otimes\rho_R$ is irreducible,
this means 
that 
\[
V_g=c_g(u_L(g)\otimes u_R(g)),\quad g\in G,
\] 
with some $\Uo$-phase $c_g$.
Absorbing this $c_g$ to $u_L(g)$, we may assume that $c_g=1$.
Note that it does not change the 2nd cohomology class associated to
$u_L$.
Hence we obtain
\begin{align}\label{Vuu}
V_g=u_L(g)\otimes u_R(g),\quad g\in G.
\end{align}
From the fact that $V$ is a genuine representation,
while $u_L$, $u_R$ are projective representations
with $2$-cocycles $\sigma_L$, $\sigma_R$, we get
\begin{align*}
&u_L(gh)\otimes u_R(gh)
=V_{gh}
=V_gV_h\\
&=u_L(g) u_L(h)\otimes u_R(g)u_R(h)
=\sigma_L(g, h)\sigma_R(g,h)
u_L(gh)\otimes u_{R}(gh).
\end{align*}
Hence we obtain the relation
\begin{align*}
\sigma_L(g,h)\sigma_R(g,h)=\unit.
\end{align*}
Let us think about what this means.
First, $V$ is a {\it genuine} representation
corresponding to the $\beta$-action
of $G$ on the {\it whole chain}.
It splits into left and right, to projective representations
$u_L(g)$ and $u_R(g)$.
Let us consider the special case that  $\omega_{\Phi}$ is of product form with respect to left-right cut,
$\omega_{\Phi}=\omega_{\Phi,L}\otimes\omega_{\Phi,R}$.
Note that $\omega_{\Phi, L}$/ $\omega_{\Phi, R}$ is a pure $\beta^L$/$\beta^R$-invariant 
state on $\caA_L$/$\caA_R$.
Let $(\caH_{\omega_{\Phi,L}}, \pi_{\omega_{\Phi,L}},\Omega_{\omega_{\Phi,L}})$,
$(\caH_{\omega_{\Phi,R}}, \pi_{\omega_{\Phi,R}},\Omega_{\omega_{\Phi,R}})$
be  GNS representations of $\omega_{\Phi, L}$, $\omega_{\Phi, R}$.
Because $\omega_{\Phi, L}$, $\omega_{\Phi, R}$ are pure,
the representations $\pi_{\omega_{\Phi,L}}$, $\pi_{\omega_{\Phi,R}}$
are irreducible.
Because of the $\beta^L$/$\beta^R$-invariance,
there are genuine unitary representations $u_L$, $u_R$
of $G$ on $\caH_{\omega_{\Phi,L}}$, $\caH_{\omega_{\Phi,R}}$,
implementing $\beta^L$/$\beta^R$, and fixing $\Omega_{\omega_{\Phi,L}}$, $\Omega_{\omega_{\Phi,R}}$.
The triple 
\[
\lmk
\caH_{\omega_{\Phi,L}}\otimes \caH_{\omega_{\Phi,R}},
\pi_{\omega_{\Phi,L}}\otimes \pi_{\omega_{\Phi,R}},
\Omega_{\omega_{\Phi,L}}\otimes \Omega_{\omega_{\Phi,R}}
\rmk
\]
is a GNS triple of $\omega_{\Phi}=\omega_{\Phi,L}\otimes\omega_{\Phi,R}$,
and 
\[
V_g:=u_{L}(g)\otimes u_{R}(g)
\]
 is the genuine representation
implementing $\beta$ on $\caA_{\bbZ}$ and 
$V_g\lmk \Omega_{\omega_{\Phi,L}}\otimes \Omega_{\omega_{\Phi,R}}\rmk=\Omega_{\omega_{\Phi,L}}\otimes \Omega_{\omega_{\Phi,R}}$.
Hence in this product form case,
the genuine representation $V$ split into left and right genuine representations
$u_L,u_R$.
In terms of cohomology,
$1\in H^2(G,\Uo)$ split into $1\cdot 1$.
Namely, when there is no correlation at all between left and right half infinite chains,
the cohomology split is trivial.
From this, we can say that
$h_R$ indicates
how left-half and right-half are correlated,
in terms of the group action.

Lastly, we explain how to prove the stability of the index, Theorem \ref{h2sta}.
Namely, we would like to show that our index is invariant with respect to $\sim_\beta$.
Let $\Phi_1,\Phi_2\in\caP_{U.G.}$ be $\beta$-invariant
interactions such that $\Phi_1\sim_\beta\Phi_2$.
A key observation is the following.
\begin{prop}\cite{TRI}\label{facinv}
If $\Phi_1\sim_\beta \Phi_2$, then 
there are automorphisms $\alpha_L\in\Aut(\caA_L)$ and $\alpha_R\in\Aut(\caA_R)$ satisfying
\begin{description}
\item[(i)]
$\omega_{\Phi_2}$ and $\omega_{\Phi_1}\circ\lmk\alpha_{L}\otimes \alpha_{R}\rmk$
are quasi-equivalent,
\item[(ii)]
$\alpha_L\circ\beta_{g,L}=\beta_{g,L}\circ \alpha_L$, and
$\alpha_R\circ\beta_{g,R}=\beta_{g,R}\circ \alpha_R$
for all $g\in G$.
\end{description}
\end{prop}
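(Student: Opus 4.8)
The plan is to build $\alpha_L$ and $\alpha_R$ out of the automorphic equivalence supplied by Theorem~\ref{autoinf}, together with a $\beta$-equivariant version of the one-dimensional factorization property. Since $\Phi_1\sim_\beta\Phi_2$, I would first fix a path $\Phi(\cdot)$ in $\caP_{U.G.\beta}^{0}$ joining $\Phi_1$ to $\Phi_2$ and satisfying Assumption~\ref{assump} and Assumption~\ref{infautoassump}, and apply Theorem~\ref{autoinf}. This yields a strongly continuous family of automorphisms $\alpha_s$ of $\caA_{\bbZ}$ with $\omega_{\Phi(s)}=\omega_{\Phi(0)}\circ\alpha_s$ for all $s$; writing $\alpha:=\alpha_1$ we get $\omega_{\Phi_2}=\omega_{\Phi_1}\circ\alpha$.

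Next I would check that $\alpha$ commutes with each $\beta_g$. Recall that $\alpha_s=\lim_{\Lambda\to\bbZ}\Ad(V_\Lambda(s))$, where $V_\Lambda(s)$ solves the Hastings ODE~\eqref{vh} with $V_\Lambda(0)=\unit$ and a generator built from $H_{\Lambda,\Phi}(s)$ and $H_{\Lambda,\dot\Phi}(s)$. Each $\Phi(s)$ is $\beta$-invariant, hence so is $\dot\Phi(s)$ (a pointwise norm limit of difference quotients of $\beta$-invariant elements), so both $H_{\Lambda,\Phi}(s)$ and $H_{\Lambda,\dot\Phi}(s)$ are $\beta_g^\Lambda$-invariant, and therefore so is the whole generator $\int dt\,W_\gamma(t)\,e^{itH_{\Lambda,\Phi}(s)}H_{\Lambda,\dot\Phi}(s)e^{-itH_{\Lambda,\Phi}(s)}$. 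Applying the automorphism $\beta_g^\Lambda$ to the ODE and using uniqueness of solutions gives $\beta_g^\Lambda(V_\Lambda(s))=V_\Lambda(s)$, hence $\Ad(V_\Lambda(s))\circ\beta_g^\Lambda=\beta_g^\Lambda\circ\Ad(V_\Lambda(s))$ on $\caA_{\rm loc}$; letting $\Lambda\to\bbZ$ yields $\alpha\circ\beta_g=\beta_g\circ\alpha$ for every $g\in G$.

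Then I would apply the one-dimensional factorization property to $\alpha$ in its $\beta$-equivariant form: because $\alpha$ commutes with $\beta$ — equivalently, because the approximating unitaries $V_\Lambda(s)$ are $\beta_g^\Lambda$-invariant — the left and right pieces of the factorization can be chosen $\beta$-invariant (this is the content of the relevant statement in \cite{TRI}), giving $\alpha_L\in\Aut(\caA_L)$, $\alpha_R\in\Aut(\caA_R)$ and a unitary $V\in\caA_{\bbZ}$ with $\alpha=\Ad(V)\circ(\alpha_L\otimes\alpha_R)$ and $\alpha_L\circ\beta_{g,L}=\beta_{g,L}\circ\alpha_L$, $\alpha_R\circ\beta_{g,R}=\beta_{g,R}\circ\alpha_R$ for all $g\in G$; this is condition (ii). For condition (i) I would rewrite, using $\Ad(V)\circ(\alpha_L\otimes\alpha_R)=(\alpha_L\otimes\alpha_R)\circ\Ad\bigl((\alpha_L\otimes\alpha_R)^{-1}(V)\bigr)$,
\[
\omega_{\Phi_2}=\omega_{\Phi_1}\circ\Ad(V)\circ(\alpha_L\otimes\alpha_R)=\bigl(\omega_{\Phi_1}\circ(\alpha_L\otimes\alpha_R)\bigr)\circ\Ad\bigl((\alpha_L\otimes\alpha_R)^{-1}(V)\bigr),
\]
and since $(\alpha_L\otimes\alpha_R)^{-1}(V)$ is a unitary of $\caA_{\bbZ}$, the GNS representations of $\omega_{\Phi_2}$ and of $\omega_{\Phi_1}\circ(\alpha_L\otimes\alpha_R)$ are unitarily equivalent, in particular quasi-equivalent.

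The main obstacle is the $\beta$-equivariant factorization used in the third step. If one only invokes the plain factorization property, one gets $\alpha=\Ad(V)\circ(\tilde\alpha_L\otimes\tilde\alpha_R)$ with a priori no symmetry of $\tilde\alpha_L,\tilde\alpha_R$; the commutation $\alpha\circ\beta_g=\beta_g\circ\alpha$ then only forces $(\beta_{g,L}^{-1}\tilde\alpha_L\beta_{g,L}\tilde\alpha_L^{-1})\otimes(\beta_{g,R}^{-1}\tilde\alpha_R\beta_{g,R}\tilde\alpha_R^{-1})$ to be inner on $\caA_{\bbZ}$, and splitting an approximant of the implementing unitary across the $L$--$R$ cut shows that each factor is \emph{approximately} inner in its half-chain algebra — but upgrading this to the exact intertwining relations in (ii) requires a cocycle argument in the variable $g$. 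It is cleaner, as in \cite{TRI}, to run the construction of the factorization from the outset with $\beta$-invariant building blocks, which is possible precisely because the $V_\Lambda(s)$ are themselves $\beta$-invariant.
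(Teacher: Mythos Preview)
Your proposal is correct and follows essentially the same approach as the paper: the paper's own justification is simply that ``this proposition is nothing but the factorization property of the automorphic equivalence explained in subsection~\ref{autosec}'' (with details deferred to \cite{TRI}), and you have spelled out precisely that argument --- automorphic equivalence gives $\omega_{\Phi_2}=\omega_{\Phi_1}\circ\alpha$, the $\beta$-invariance of the path makes the generating unitaries $V_\Lambda(s)$ $\beta$-invariant, and running the factorization with these $\beta$-invariant building blocks yields $\alpha_L,\alpha_R$ commuting with $\beta_{g,L},\beta_{g,R}$. Your identification of the $\beta$-equivariant factorization as the only nontrivial point, and your explanation of why the naive splitting of an a posteriori inner automorphism is insufficient, are both on target.
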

{(i)}  says that we can move from $\omega_{\Phi_1}$ to $\omega_{\Phi_2}$ without
changing the correlation between left-right chains, by $\alpha_L\otimes\alpha_R$.
{(ii)} says that each $\alpha_L$ and $\alpha_R$ do not 
change the representations of each sides.
Hence
{ left-half and right-half correlation
in terms of the group action}, can not be changed by $\alpha_{L}\otimes \alpha_{R}$.
This proposition is nothing but the factorization property of the automorphic equivalence 
explained in subsection \ref{autosec}.

To proceed the proof, let
 $u_{R,1}$, $u_{R,2}$, $(\caK_{R,1},\rho_{R,1})$, $(\caK_{R,2},\rho_{R,2})$
  be the projective representations and irreducible representations given above,
corresponding to $\omega_{\Phi_1}$ and $\omega_{\Phi_2}$.
They satisfy
\begin{align*}
\Ad\lmk {u_{R,1}(g)}\rmk\circ \rho_{R,1}(A)=\rho_{R,1}\circ\beta_{g,R}(A),\quad
\Ad\lmk {u_{R,2}(g)}\rmk \circ \rho_{R,2}(A)=\rho_{R,2}\circ\beta_{g,R}(A),
\end{align*}
for $A\in\caA_R$.
Let $\pi_{1}$, $\pi_2$ be GNS representations of $\omega_{\Phi_1}$ and $\omega_{\Phi_2}$.
Then we have
\begin{align*}
\pi_2\sim_{\rm q.e.} \pi_1\circ\lmk\alpha_L\otimes \alpha_R
\rmk,
\end{align*}
by (i) of Proposition \ref{facinv}.
From this, we have
\begin{align*}
\rho_{R,2}\sim_{\rm q.e.} \left.\pi_2\right\vert_{\caA_R}\sim_{\rm q.e.}\left.\pi_1\right\vert_{\caA_R}\circ\alpha_R\sim_{\rm q.e.} \rho_{R,1}\circ\alpha_R.
\end{align*}
Hence there exists a $*$-isomorphism $\tau :\caB(\caK_1)\to\caB(\caK_2)$
such that
\begin{align*}
\rho_{R,2}=\tau\circ\rho_{R,1}\circ\alpha_R.
\end{align*}
By Theorem \ref{wigner} and the irreducibility of $\rho_{R,1}, \rho_{R,2}$,  there exists a
unitary $W : \caK_1\to\caK_2$ such that $\tau=\Ad(W)$.
Using this, we get
\begin{align*}
&\Ad(u_{R,2}(g))\circ\rho_{2,R}
=\rho_{2,R}\circ\beta_{g,R}
=
\tau\circ\rho_{R,1}\circ\alpha_R\circ\beta_{g,R}
\end{align*}
Now recall that $\alpha_R$ and $\beta_{g,R}
$ commutes.
Therefore, 
\begin{align*}
&\tau\circ\rho_{R,1}\circ\alpha_R\circ\beta_{g,R}
=
\tau\circ\rho_{R,1}\circ\beta_{g,R}\circ\alpha_R\\
&=
\tau\circ \Ad\lmk {u_{R,1}(g)}\rmk\circ\rho_{R,1}\circ\alpha_R
=
\tau\circ \Ad\lmk {u_{R,1}(g)}\rmk\circ\tau^{-1}\circ\rho_{R,2}\\
&=\Ad\lmk\Ad_W(u_{R,1}(g))\rmk\circ\rho_{R,2}
\end{align*}
\\
Hence there is $c_g\in U(1)$ such that
\begin{align*}
u_{R,2}(g)=c_g\Ad_W(u_{R,1}(g))
\quad g\in G.
\end{align*}
 From this, $u_{R,1}$ and $u_{R,2}$ have the {same second cohomology class}.
 The same proof goes through for anti-linear actions.

\subsection{Lieb-Schultz-Mattis (LSM) type theorems}
The cohomology valued index $h_{\Phi,\beta}$ is important as an invariant of our classification of SPT phases.
However, there is still another use of it, that is, Lieb-Schultz-Mattis (LSM) type theorem which we explain in this subsection.

It is a hard problem to decide if a given interaction has a unique gapped ground state or not.
For example, Haldane conjectured in 1983 that
Antiferromagnetic Heisenberg chain  
\begin{align*}
\Phi_{Heisenberg}(\{x,x+1\})=\sum_{j=1}^3\,S^{(x)}_jS^{(x+1)}_j,
\end{align*}
does not have a unique gapped ground state if $d$ is {even},
while it does if $d$ is {odd}.
The latter part of the conjecture is still an open problem.
Lieb-Schultz-Mattis \cite{LSM} and  Affleck-Lieb \cite{AL}
 proved affirmatively the first part of the conjecture.
They proved the following : if interaction has 
$\Uo$ on-site symmetry  and space translation symmetry 
and the spin-size $d$ is even, then it cannot have a unique gapped ground state.
They used the continuity of $\Uo$ in an ingenious way, to prove the existence of
low-energy excitation.
There has been a lot of progress after that \cite{HL}\cite{Os}\cite{NS}\cite{Matsui2}, but
traditionally, the {symmetry} has been some {continuous} ones.
Recently, it was claimed by physicists that Lieb-Schultz-Mattis type theorem holds for some 
{discrete} symmetries. It is proved for matrix product states parent Hamiltonians
\cite{ChenGuWEn2011},
\cite{WPVZ}
\cite{fuji}.

Using our $H^2(G,\Uo)$-valued index, we can prove their claim.
Let $G$ be a finite group and let $\beta_g:=\bigotimes_{\bbZ}\Ad(U_g)$
with $U$ a projective representation of $G$
on $\bbC^d$.
\begin{thm}[\cite{ot}, \cite{OTT}]
If the second cohomology class $h$ of $U$ is non-trivial, then there is no
translation invariant interaction in $\caP_{U.G.\beta}$.
\end{thm}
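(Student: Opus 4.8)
The plan is to argue by contradiction. Suppose $\Phi\in\caP_{U.G.\beta}$ is translation invariant, and let $\omega:=\omega_\Phi$ be its unique gapped ground state; since $\Phi$ is $\beta$-invariant and translation invariant and the ground state is unique, $\omega$ is $\beta$-invariant, translation invariant, and pure. By Matsui's theorem (Theorem \ref{matsuisplit}), $\omega$ has the split property with respect to $\caA_{(-\infty,-1]}$ and $\caA_{[0,\infty)}$. I would then follow the construction preceding Theorem \ref{h2sta}: by Proposition \ref{splitcharac} realize the GNS triple of $\omega$ on $\caK_L\otimes\caK_R$ with $(\caK_R,\rho_R)$ an irreducible representation of $\caA_{[0,\infty)}$, and from the $\beta$-invariance of $\omega$ produce (Theorem \ref{gnsthm}, Theorem \ref{wigner}) a projective unitary representation $u_R$ of $G$ on $\caK_R$ with $\Ad(u_R(g))\circ\rho_R=\rho_R\circ\beta_g^{[0,\infty)}$; its $2$-cocycle $\sigma_R$ has a well-defined class $h_R:=[\sigma_R]\in H^2(G,\Uo)$ (independence of all choices is part of the construction behind Theorem \ref{h2sta}, cf.\ \cite{TRI}). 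By Lemma \ref{ih}, $h_R$ depends only on the half-chain pair $(\omega|_{\caA_{[0,\infty)}},\beta^{[0,\infty)})$, which lets me pass freely between the whole-chain and half-chain pictures.

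The heart of the proof is two identities in the abelian group $H^2(G,\Uo)$. First, I claim $h_R=h\cdot h_M$, where $h_M$ is the analogous index of $(\omega|_{\caA_{[1,\infty)}},\beta^{[1,\infty)})$. Since $\caA_{[0,\infty)}=\caA_{\{0\}}\otimes\caA_{[1,\infty)}=\Mat_d\otimes\caA_{[1,\infty)}$ and $\Mat_d$ has a unique irreducible representation, the irreducible representation $(\caK_R,\rho_R)$ is necessarily of tensor-product form $\caK_R=\bbC^d\otimes\caM$, $\rho_R=\iota\otimes\rho_M$, with $\iota$ the defining representation of $\Mat_d$ and $(\caM,\rho_M)$ an irreducible representation of $\caA_{[1,\infty)}$ (which by Lemma \ref{ih} is quasi-equivalent to the GNS representation of $\omega|_{\caA_{[1,\infty)}}$). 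On the site-$0$ factor the symmetry $\beta_g^{\{0\}}=\Ad(U(g))$ is implemented by the given projective representation $U$, of cocycle class $h$; on $\caM$ the symmetry $\beta_g^{[1,\infty)}$ is implemented by a projective representation $u_M$, of class $h_M$. Both $u_R(g)$ and $U(g)\otimes u_M(g)$ implement $\beta_g^{[0,\infty)}$ on the irreducible $\rho_R$, hence differ by a $\Uo$-phase; therefore $\sigma_R\simeq\sigma_U\sigma_M$ and $h_R=h\cdot h_M$.

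Second, I claim $h_M=h_R$. The translation $\tau_1$ restricts to a $*$-isomorphism of $\caA_{[0,\infty)}$ onto $\caA_{[1,\infty)}$ that intertwines $\beta_g^{[0,\infty)}$ with $\beta_g^{[1,\infty)}$, and translation invariance of $\omega$ gives $\omega|_{\caA_{[1,\infty)}}\circ\tau_1=\omega|_{\caA_{[0,\infty)}}$. Transporting the entire package — irreducible representation, implementing projective representation, $2$-cocycle — through $\tau_1$ identifies the index of $(\omega|_{\caA_{[1,\infty)}},\beta^{[1,\infty)})$ with that of $(\omega|_{\caA_{[0,\infty)}},\beta^{[0,\infty)})$, i.e.\ with $h_R$; thus $h_M=h_R$. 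Combining the two identities gives $h_R=h\cdot h_R$, hence $h=\unit$ in $H^2(G,\Uo)$, contradicting the hypothesis that $h$ is non-trivial. This proves the theorem for unitary $U$; when $U$ is anti-unitary the same argument goes through with the anti-linear version of the cohomology machinery.

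I expect the first identity, $h_R=h\cdot h_M$, to be the main obstacle: one must check that the decomposition $\caK_R\cong\bbC^d\otimes\caM$ can be chosen $G$-equivariantly, so that the implementing unitaries — and hence their cocycles — genuinely factor as a tensor product. This calls for some care with the quasi-equivalences provided by Lemma \ref{ih} and Theorem \ref{qethm} and with the residual $\Uo$-ambiguity in the choice of implementing projective representations, exactly the bookkeeping that underlies the well-definedness of the index in Theorem \ref{h2sta}. Once that is in place, the translation identity $h_M=h_R$ and the cancellation in $H^2(G,\Uo)$ are immediate.
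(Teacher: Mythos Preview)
Your proof is correct and follows essentially the same approach as the paper: both derive the two identities $h_R=h\cdot h_M$ (from peeling off the site-$0$ tensor factor) and $h_M=h_R$ (from translation invariance), then cancel to get $h=1$. You supply more detail on the tensor-product decomposition and the equivariance bookkeeping than the paper's terse two-line argument, but the structure is identical.
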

\begin{proof}
Suppose that there exists an interaction $\Phi\in \caP_{U.G.\beta}$ which is translation invariant.
Let 
$h_{\Phi,\beta}^x\in H^2(G, U(1))$ be an
 index associated to $\left.\omega_{\Phi}\right\vert_{\caA_{[x,\infty)\cap \bbZ}}$
 for each $x\in\bbZ$. (Note that we can cut the chain anywhere.)
Then we have
\[
h_{\Phi,\beta}^ {x}=h+h_{\Phi,\beta}^ {x+1},\quad\text{and}\quad
h_{\Phi,\beta}^ {x}=h_{\Phi,\beta}^{x+1}.
\]
Hence $h=0$.
\end{proof}
Recall that antiferromagnetic Heisenberg chain $\Phi_{Heisenberg}$
has the $\bbZ_2\times\bbZ_2$-symmetry.
There, cohomology class of the projective representation $U$ of $\bbZ_2\times\bbZ_2$ is trivial if $d$ is odd,
nontrivial if $d$ is even.
Hence we have $\Phi_{Heisenberg}\notin \caP_{U.G.\beta}$, if $d$ is even.
This gives an alternative proof for 
the first half of Haldane conjecture.
Analogously, we can prove the following, using the $H^2(G,\Uo)$-index.
\begin{thm}[\cite{OTT}]
If there is $\Phi\in \caP_{U.G.\beta}$, origin-reflection invariant, 
then $h=2 h_1$ with some $h_1\in H^2(G,\Uo)$.
\end{thm}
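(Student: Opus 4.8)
The plan is to run the argument of the translation LSM theorem just proved, but with reflection invariance in place of translation invariance, keeping careful track of which boundary site gets absorbed into which half-chain.

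Let $\Theta\in\Aut(\caA_\bbZ)$ implement the origin-reflection $i\mapsto -i$, so that $\Theta(Q^{(j)})=Q^{(-j)}$ for $Q\in\Mat_d$, $j\in\bbZ$; thus $\Theta$ fixes the site $0$ and restricts to a $*$-isomorphism of $\caA_{[1,\infty)}$ onto $\caA_{(-\infty,-1]}$. First I would record two preliminaries. Since $\Phi$ is reflection invariant, $\alpha^\Phi$ commutes with $\Theta$, so $\omega_\Phi\circ\Theta$ is again an $\alpha^\Phi$-ground state and uniqueness forces $\omega_\Phi\circ\Theta=\omega_\Phi$; and, the symmetry being on-site, $\Theta\circ\beta_g=\beta_g\circ\Theta$ for every $g\in G$. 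For $x\in\bbZ$ write $h^x\in H^2(G,\Uo)$ for the index attached to $\omega_\Phi\vert_{\caA_{[x,\infty)}}$ — well defined at every cut by Matsui's Theorem \ref{matsuisplit} together with the construction of the cohomology index from the split property — and $\tilde h^x\in H^2(G,\Uo)$ for the analogous left-half index of $\omega_\Phi\vert_{\caA_{(-\infty,x]}}$.

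Next I would assemble three identities. (a) The shift relation $h^0=h+h^1$, exactly as in the translation case: passing from $\caA_{[1,\infty)}$ to $\caA_{[0,\infty)}=\caA_{\{0\}}\otimes\caA_{[1,\infty)}$ adjoins the single site $0$, whose on-site action $\Ad(U_g)$ carries the cohomology class $h$. (b) The left-right relation $\tilde h^{-1}+h^0=0$ at the cut between $-1$ and $0$; this is the relation $\sigma_L(g,h)\sigma_R(g,h)=\unit$ (equivalently $[\sigma_L]+[\sigma_R]=0$ in $H^2(G,\Uo)$) established earlier, coming from the factorization $V_g=u_L(g)\otimes u_R(g)$ of the genuine representation implementing $\beta$ on the whole chain. (c) The reflection relation $h^1=\tilde h^{-1}$: since $\omega_\Phi\circ\Theta=\omega_\Phi$ and $\Theta$ intertwines $\beta_g^{[1,\infty)}$ with $\beta_g^{(-\infty,-1]}$, the isomorphism $\Theta\vert_{\caA_{[1,\infty)}}$ carries the state $\omega_\Phi\vert_{\caA_{[1,\infty)}}$ with its $G$-action onto that of $\omega_\Phi\vert_{\caA_{(-\infty,-1]}}$, so transporting the GNS irreducible representation and the projective implementer of $\beta$ through $\Theta$ identifies the two indices. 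Combining these, $h^1=\tilde h^{-1}=-h^0=-(h+h^1)$, whence $h=-2h^1$, so the theorem holds with $h_1:=-h^1=-h_{\Phi,\beta}^{1}$.

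The delicate step is (c): $\Theta$ reverses the orientation of the half-line — it sends the site nearest the cut to the site nearest the cut and infinity to infinity — so one must check that what is transported through $\Theta$ is genuinely the index of the right half and not its inverse. This is safe because $\Theta$ is an honest $*$-automorphism intertwining both the state and the (linear) $G$-action, and the index is built functorially from precisely these data via Theorem \ref{wigner}; no dualization enters, and the same remark shows the argument is unchanged when some elements of $G$ act anti-linearly, since $\Theta$ commutes with all of them.
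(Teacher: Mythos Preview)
Your argument is correct and is precisely the ``analogous'' proof the paper has in mind: it reuses the shift identity $h^{0}=h+h^{1}$ from the translation LSM argument, the left--right complementarity $\tilde h^{-1}+h^{0}=0$ coming from $\sigma_L\sigma_R=1$, and replaces translation invariance by the reflection identity $h^{1}=\tilde h^{-1}$ obtained by transporting the right-half data through the $*$-isomorphism $\Theta\vert_{\caA_{[1,\infty)}}$. Your justification of step~(c) is the right one: since $\Theta$ is $\bbC$-linear, intertwines $\beta_g$, and carries $\omega_\Phi\vert_{\caA_{[1,\infty)}}$ to $\omega_\Phi\vert_{\caA_{(-\infty,-1]}}$, the very same projective unitary $u_R(g)$ implements $\beta_g$ in the transported representation, so the $2$-cocycles literally coincide.
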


\subsection{A classification of pure states on quantum spin chains satisfying the 
split property with on-site finite group symmetries}

We obtained an invariant of the classification of SPT phases for
on-site finite group symmetry,
It is an open problem if they are {\it complete invariant}.

However, for on-site symmetry, we still can say that it is
a complete invariant of some (philosophically)
analogous classification.
That is, a classification of pure states on quantum spin chains satisfying the 
split property with on-site finite group symmetries,

The classification we consider here is the following.
\begin{defn}[\cite{GS}]
Let $\omega_{0}$, $\omega_{1}$ be pure $\beta$-invariant states on $\caA_{\bbZ}$
satisfying the split property with respect to $\caA_{L}$, $\caA_{R}$.\\
We write $\omega_{0}\sim_{{\rm split},\beta}\omega_{1}$
if there exist automorphisms $\alpha_{L}$ 
and $\alpha_{R}$ on $\caA_L$, $\caA_R$ such that
\begin{description}
\item[(i)]
$\omega_{1}$ and $\omega_{0}\circ\lmk\alpha_{L}\otimes \alpha_{R}\rmk$
are quasi-equivalent,
\item[(ii)]
$\alpha_L\circ\beta_{g,L}=\beta_{g,L}\circ \alpha_L$, and
$\alpha_R\circ\beta_{g,R}=\beta_{g,R}\circ \alpha_R$
for all $g\in G$.
\end{description}
\end{defn}
How can this be (philosophically)
analogous to our original classification?
The interpretation comes from rough identification
that 
\begin{center}
gapped \\
$\approx$  ground state is almost product with respect to
left-right cut\\
 $\approx$
 ground state has the split property.
\end{center}
These $\approx$ are not $=$.
However, philosophically, they share the concept.
From this analogy, {\it the existence of smooth path of 
interactions with unique gapped ground state}
is rephrased as {\it  the existence of
 a map which does not change the correlation
between left and right chains}.
In (i) of the classification, we are considering an automorphism
$\alpha_{L}\otimes \alpha_{R}$
of product form. This is a kind of map which does not change 
the correlation
between left and right chains at all.
Hence, we regard this (i) as an analog of $\sim$.
With this interpretation, adding the condition (ii)
corresponds to $\sim_\beta$, that is, the requirement
the symmetry should be preserved along the path.

Note that the $H^2(G,\Uo)$-valued index is actually defined
not only for $\omega_{\Phi}$ with $\beta$-invariant $\Phi\in\caP_{U.G.}$,
but it is also defined
for a $\beta$-invariant pure split state $\omega$.
Let us denote that $H^2(G,\Uo)$-valued index by $h(\omega)$,
for $\beta$-invariant pure split state $\omega$.

Then we obtain the following.
\begin{thm}\cite{GS}\label{geometry}
The $H^2(G,\Uo)$-valued index is a complete invariant of
$\sim_{{\rm split},\beta}$.
\end{thm}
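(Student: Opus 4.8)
The statement asserts an equivalence: the $H^{2}(G,\Uo)$-index is invariant under $\sim_{{\rm split},\beta}$, and conversely two $\beta$-invariant pure split states with the same index are $\sim_{{\rm split},\beta}$-equivalent. The plan is to obtain the first implication essentially for free and to put all the work into the second. \textbf{Invariance.} The conditions (i), (ii) in the definition of $\sim_{{\rm split},\beta}$ are \emph{verbatim} the conclusion of Proposition \ref{facinv}, so the computation in the proof of Theorem \ref{h2sta} applies without change: from $\omega_{1}\sim_{{\rm q.e.}}\omega_{0}\circ\lmk\alpha_{L}\otimes\alpha_{R}\rmk$ one restricts to $\caA_{R}$ (Lemma \ref{ih}) to get $\rho_{R,1}\sim_{{\rm q.e.}}\rho_{R,0}\circ\alpha_{R}$, hence by Theorem \ref{qethm}, Theorem \ref{wigner} and irreducibility a unitary $W$ with $\rho_{R,1}=\Ad(W)\circ\rho_{R,0}\circ\alpha_{R}$; since $\alpha_{R}\beta_{g,R}=\beta_{g,R}\alpha_{R}$ this forces $u_{R,1}(g)=c_{g}\,\Ad(W)\lmk u_{R,0}(g)\rmk$ for phases $c_{g}\in\Uo$, so $u_{R,0}$ and $u_{R,1}$ define the same class, i.e. $h(\omega_{0})=h(\omega_{1})$. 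One also records that $\sim_{{\rm split},\beta}$ is an equivalence relation: reflexive with $\alpha_{L}=\alpha_{R}=\id$; symmetric by passing to $\alpha_{L}^{-1},\alpha_{R}^{-1}$ and using that quasi-equivalence is symmetric and is preserved by precomposition with automorphisms; transitive by composition.

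\textbf{Completeness, reduction to a half-chain.} Let $\omega_{0},\omega_{1}$ be $\beta$-invariant pure split states with $h(\omega_{0})=h(\omega_{1})$. By Proposition \ref{splitcharac} each $\omega_{i}$ is quasi-equivalent to a product $\varphi_{L,i}\otimes\varphi_{R,i}$ with $\varphi_{R,i}$ (resp. $\varphi_{L,i}$) quasi-equivalent to the irreducible right (resp. left) half-chain GNS representation $\rho_{R,i}$ (resp. $\rho_{L,i}$). Because $\rho_{R,i}\circ\beta_{g,R}=\Ad\lmk u_{R,i}(g)\rmk\circ\rho_{R,i}$ is unitarily equivalent to $\rho_{R,i}$, the state $\varphi_{R,i}\circ\beta_{g,R}$ is quasi-equivalent to $\varphi_{R,i}$ for every $g$; replacing $\varphi_{R,i}$ by the $G$-average $\tfrac{1}{|G|}\sum_{g}\varphi_{R,i}\circ\beta_{g,R}$ (still quasi-equivalent to $\varphi_{R,i}$, since a convex combination of mutually quasi-equivalent states is quasi-equivalent to them) and similarly for $\varphi_{L,i}$, we may assume $\varphi_{L,i}$ and $\varphi_{R,i}$ are themselves $\beta$-invariant. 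It now suffices to produce $\alpha_{R}\in\Aut\caA_{R}$ and $\alpha_{L}\in\Aut\caA_{L}$ commuting with $\beta_{g,R}$ and $\beta_{g,L}$ for all $g$ and satisfying $\varphi_{R,1}\sim_{{\rm q.e.}}\varphi_{R,0}\circ\alpha_{R}$, $\varphi_{L,1}\sim_{{\rm q.e.}}\varphi_{L,0}\circ\alpha_{L}$: indeed then $\omega_{1}\sim_{{\rm q.e.}}\varphi_{L,1}\otimes\varphi_{R,1}\sim_{{\rm q.e.}}\lmk\varphi_{L,0}\otimes\varphi_{R,0}\rmk\circ\lmk\alpha_{L}\otimes\alpha_{R}\rmk\sim_{{\rm q.e.}}\omega_{0}\circ\lmk\alpha_{L}\otimes\alpha_{R}\rmk$, which is precisely $\omega_{0}\sim_{{\rm split},\beta}\omega_{1}$. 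Finally, since $V_{g}=u_{L}(g)\otimes u_{R}(g)$ is a genuine representation, the cocycles $\sigma_{L}$, $\sigma_{R}$ have mutually inverse classes in $H^{2}(G,\Uo)$; hence equality of the right indices forces equality of the left ones, and it is enough to construct $\alpha_{R}$, with $\alpha_{L}$ obtained by the mirror argument.

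\textbf{The equivariant half-chain lemma, and the main obstacle.} What remains is the core statement: if $\varphi_{0},\varphi_{1}$ are $\beta_{R}$-invariant states on $\caA_{R}$, each quasi-equivalent to an irreducible representation, with the same associated class $h\in H^{2}(G,\Uo)$, then there is $\alpha_{R}\in\Aut\caA_{R}$ with $\alpha_{R}\beta_{g,R}=\beta_{g,R}\alpha_{R}$ for all $g$ and $\varphi_{1}\sim_{{\rm q.e.}}\varphi_{0}\circ\alpha_{R}$. I would prove this by first trivializing the cocycle. Writing $(\rho_{i},u_{i})$ for the irreducible representation of $\caA_{R}$ and its covariant projective representation of $G$, with a common cocycle $\sigma$ of class $h$ after a $\Uo$-correction, fix a finite-dimensional projective representation $(w,\bar\sigma)$ of $G$ on $\bbC^{D}$ whose cocycle class is inverse to $h$; then $u_{i}(g)\otimes w(g)$ is projective with coboundary cocycle $\sigma\bar\sigma$, and after dividing by a suitable $\Uo$-valued function of $g$ one gets a \emph{genuine} unitary representation $V_{i}$ of $G$ on $\caK_{i}\otimes\bbC^{D}$ implementing $\beta_{g,R}$ in the amplified representation $\rho_{i}\otimes\unit_{\bbC^{D}}$ of $\caA_{R}$. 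One is thereby reduced to comparing two genuine covariant representations which, as representations of $\caA_{R}$ alone, are quasi-equivalent to finite amplifications of irreducibles; the residual finite data is matched by absorbing the auxiliary $\bbC^{D}$ back into the chain — blocking finitely many sites of $\caA_{R}$ and pushing the extra tensor factor off to infinity, an operation that commutes with $\beta_{R}$ precisely because $\beta_{R}$ is the infinite tensor power of a single on-site action — and then by invoking an equivariant strengthening of the homogeneity of pure states on a UHF algebra: that any two $\beta_{R}$-invariant states with unitarily equivalent genuine covariant structures are intertwined by a $\beta_{R}$-commuting automorphism. This step is the main obstacle, with two sharp points: (a) the on-site dimension $d$ of $\caA_{R}$ is fixed, so the $D$-dimensional auxiliary system cannot simply be adjoined, and its absorption through the chain must be organized so as to keep commutation with $\beta_{R}$; and (b) one needs the equivariant transitivity of $\Aut\caA_{R}$ on the relevant $\beta_{R}$-invariant states sharp enough that, once the cocycle is trivialized, no further invariant survives. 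Granting (a) and (b), everything in the reduction and in the invariance direction is routine.
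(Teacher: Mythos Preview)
Your invariance direction is correct and matches the paper exactly: it is the computation from Theorem~\ref{h2sta} with Proposition~\ref{facinv} replaced by hypotheses (i)--(ii). Your reduction to a half-chain statement is also sound, including the averaging trick to make $\varphi_{R,i}$ genuinely $\beta_{R}$-invariant, and the observation that the left index is determined by the right one via $\sigma_{L}\sigma_{R}=1$.

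The gap is exactly where you locate it: obstacles (a) and (b) are not resolved, and both are substantial. For (a), absorbing $\bbC^{D}$ into the chain requires not just that $\Mat_{D}\otimes\caA_{R}\simeq\caA_{R}$ as $C^{*}$-algebras (already a constraint on $D$ versus $d^{\infty}$), but that the action $\beta_{R}\otimes\Ad(w)$ be conjugate to $\beta_{R}$ by an automorphism --- a statement about cocycle conjugacy of $G$-actions on UHF algebras that does not follow from anything in the setup. For (b), the ``equivariant transitivity'' you invoke is precisely the theorem you are trying to prove, specialized to trivial cocycle; it does not reduce to the non-equivariant homogeneity results of \cite{powers,brat,fkk,kos} in any obvious way.

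The paper's route avoids both obstacles simultaneously by a different encoding. Rather than trivialize the cocycle, one builds it into the algebra: form the twisted crossed product $C^{*}\bigl(\Sigma^{(\sigma)}\bigr)$ of the twisted dynamical system $\Sigma^{(\sigma)}=(G,\caA_{R},\beta_{R},\sigma)$. Each $\beta_{R}$-invariant pure split state with index $[\sigma]$ yields a covariant representation of $\Sigma^{(\sigma)}$, hence a representation of $C^{*}\bigl(\Sigma^{(\sigma)}\bigr)$. One then runs the \emph{non-equivariant} homogeneity machinery of \cite{fkk,kos} on $C^{*}\bigl(\Sigma^{(\sigma)}\bigr)$; the automorphism of the crossed product so obtained is then shown to restrict to a $\beta_{R}$-commuting automorphism of $\caA_{R}$. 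This buys you: no auxiliary tensor factor to absorb (obstacle (a) disappears), and the equivariance is carried by the crossed-product structure rather than demanded of the transitivity result (obstacle (b) becomes the standard homogeneity theorem applied to a larger, but still separable simple, algebra). Your tensoring-with-$w$ idea is a natural first instinct, but the twisted crossed product is the device that makes the argument close.
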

Homogeneity of pure state space of separable simple $C^*$-algebras
has been studied extensively from old time in the operator algebra community
\cite{powers}, \cite{brat}, \cite{fkk}, \cite{kos}.
For the proof of Theorem \ref{geometry},
we used the machinery developed in these papers.
The point of our analysis, compared to theirs is how to encode the $H^2(G,\Uo)$
to the problem.
We encode
 the information of $[\sigma]_{H^2(G,U(1))}$ 
to the twisted dynamical system $\Sigma^{(\sigma)}=(G,\caA_R,\beta_{R},\sigma)$, and
consider the twisted crossed product 
$C^*(\Sigma^{(\sigma)})$ of $\Sigma^{(\sigma)}$.
Then
each $\omega$ with $h(\omega)=[\sigma]_{H^2(G, U(1))}$
gives a covariant representation of $\Sigma^{(\sigma)}$.
We then consider the homogeneity problem for $C^*\lmk \Sigma^{(\sigma)}\rmk$.

\subsection{An easy derivation of $H^{2}(G,\Uo)$}
In subsection \ref{onsiteone}, we derived
$H^{2}(G,\Uo)$-valued index for 
SPT phases with on-site symmetry, using the split propety,
{\it proven for all the unique gapped ground states}.
It relied on Matsui's Theorem, which is a corollary 
of Hasting's area law \cite{area}. Hasting's area law is quite a big theorem.
However, if you are interested in {\it only on SPT phases},
then there is a handy way to derive the $H^{2}(G,\Uo)$-valued index.
The point is, if $\Phi$ is in a SPT phase, by definition,
we have $\Phi\sim\Phi_0$, with the fixed reference on-site interaction $\Phi_0$.
Using the automorphic equivalence,
this means there is an automorphism $\alpha$ given by local interactions
such that $\omega_{\Phi}=\omega_{\Phi_0}\circ\alpha$.
Recall that $\omega_{\Phi_0}$ is a pure product state.
Therefore, there are pure states $\omega_L$, $\omega_R$ on $\caA_L$, $\caA_R$ such that
\[
\omega_{\Phi_0}=\omega_L\otimes\omega_R.
\]
On the other hand, as we saw in subsection \ref{autosec},
 $\alpha$ satisfies
a factorization property
\[
\alpha=\inn \circ\lmk
\alpha_L\otimes\alpha_R
\rmk.
\]
Now, because $\omega_{\Phi}$ is a unique ground state of $\beta$-invariant interaction
$\Phi$, it is $\beta$-invariant.
From this and above decompositions of $\omega_{\Phi_0}$, $\alpha$,
we get
\begin{align*}
&\lmk \omega_L\alpha_L\otimes\omega_R\alpha_R\rmk\circ\inn
=
\omega_\Phi=\omega_{\Phi}\circ\beta_g
=\lmk \omega_L\alpha_L\beta_g^{L}\otimes\omega_R\alpha_R\beta_g^R\rmk\circ\inn.
\end{align*}
Because $\omega_L$, $\omega_R$ are pure states, we may apply
Lemma \ref{elementlem}.
Let $(\caH_R,\pi_R,\Omega_R)$ be a GNS representation of $\omega_R$.
Because $\omega_R$ is pure, $\pi_R$ is irreducible.
 Applying this Lemma, we obtain
  a unitary $u_{g}$ on $\caH_R$ such that
\begin{align*}
\Ad(u_g)\circ\pi_{R}\circ\alpha_R(A)=\pi_{R}\circ\alpha_{R}\circ\beta_g^R(A),\quad A\in 
{\caA_R}.
\end{align*}
Note that 
\begin{align*}
&\Ad(u_gu_h)\circ\pi_{R}\circ\alpha_R(A)=\pi_{R}\circ\alpha_{R}\circ\beta_g^R\beta_h^R(A)\\
&=\pi_{R}\circ\alpha_{R}\circ\beta_{gh}^R(A)
=\Ad(u_{gh})\circ \pi_R\circ\alpha_R(A),
\end{align*}
for each $A\in {\caA_R}$ and $g,h\in G$.
Because $\pi_R$ is irreducible, this implies
\[
\Ad(u_gu_h)(x)=\Ad(u_{gh})(x)
\]
for all $x\in \caB(\caH)$.
Hence there is some $\sigma(g,h)\in \Uo$
such that 
\[
u_gu_h=\sigma(g,h)u_{gh},\quad g,h\in G.
\]
Namely, $u_g$ is a projective representation.
In fact this is the very projective representation $u_R(g)$ we found in
subsection \ref{onsiteone}.

\subsection{SPT phases with the reflection symmetry}
At the end of this section, we consider SPT-phases with the reflection symmetry.
Let us first explain the $\bbZ_2$-index introduced by Pollmann.et.al for the reflection symmetry \cite{po}\cite{po2}.
 Let $\omega_{\vv}$ be a $\Theta_R$-invariant MPS
given by a primitive $\vv\in\Mat_k^{\times d}$. 
Let $\rho_\vv$ be the density matrix corresponding to $\varphi_\vv$. (Recall Theorem \ref{thmprim}.)
Replacing $v_\mu$ with $e_{\vv}^{\frac 12}v_\mu e_{\vv}^{-\frac 12}$,
we may assume that $e_\vv=\unit$.
Because $\omega_{\vv}$ is $\Theta_R$-invariant,
we can check that
\[
\tilde v_{\mu}:=\rho_{\vv}^{-\frac 12} \bar{v_{\mu}}^* \rho_{\vv}^{\frac 12},\quad 
\mu=1,\ldots,d
\]
also give the same state.
Here, $\bar\cdot$ indicates the complex conjugation with respect to a CONS of $\Mat_k$
diagonalizing $\rho_{\vv}$.
Namely, we have
\[
\omega_{\tilde \vv} =\omega_{\vv}.
\]
Then, using the fundamental theorem of MPS (Theorem \ref{fnw}), one can show that
there is a $c\in\bbT$, a $k\times k$
anti-unitary matrix $\theta$, such that
\begin{align}\label{vfanda}
v_\mu=c\rho_{\vv}^{-\frac 12} \theta v_\mu^* \theta \rho_{\vv}^{\frac 12},\quad
\mu=1,\ldots,d.
\end{align}
(See \cite{RI} Lemma 5.1 for the detailed proof.)
As in section \ref{mpsonsite}, using the primitivity of $\vv$,
we can conclude that 
\[
\theta^2=\unit \quad\text{or}\quad \theta^2=-\unit.
\]
This signature is Pollmann et.al.'s $\bbZ_2$-index.

Now, as before, we would like to extend it to general $\omega_\Phi$, for
$\Theta_R$-invariant
$\Phi\in\caP_{U.G.}$.
Let $\Phi\in\caP_{U.G.}$ be a $\Theta_R$-invariant interaction.
By the {split property} and the {reflection symmetry} of $\omega_{\Phi}$,
the GNS of $\omega_\Phi$ is of the form
\begin{align*}
(\caH\otimes\caH, \hat\pi:=\lmk \pi\circ\Theta_R\vert_{\caA_L}\rmk \otimes \pi,\Omega).
\end{align*}
Let  $\Gamma$ be the unitary implementing
$\Theta_R$, i.e.,
\begin{align*}
\Gamma\hat\pi(A)\Omega=\hat\pi\circ\Theta_R(A)\Omega,\quad A\in\caA.
\end{align*}
Then
there is a $\sigma=\pm 1$ such that
\begin{align*}
\Gamma\lmk\xi\otimes \eta\rmk={\sigma}\lmk\eta\otimes \xi\rmk,\quad
\text{for all}\quad \xi,\eta\in\caH.
\end{align*}
This $\sigma$ is our $\bbZ_2$-valued index.
\begin{defn}[O '19]
 Define the index of $\Phi\in \caP_{U.G.,\Theta_R}$ by {$\sigma_{\Phi,\Theta_R}:=\sigma$}.
\end{defn}

From the left-right symmetric form, 
\begin{align*}
(\caH\otimes\caH, \hat\pi:=\lmk \pi\circ\Theta_R\vert_{\caA_L}\rmk \otimes \pi,\Omega),
\end{align*}
the index $\sigma$ allows a characterization with the {Tomita-Takesaki theory}.
We briefly explain it here.
For simplicity, we assume that the GNS-vector $\Omega$
is cyclic and separating for $\bbC\unit\otimes\caB(\caH)$ in $\caH\otimes\caH$.
(Otherwise, we restrict $\bbC\unit\otimes\caB(\caH)$ to the support of $\Omega$.)
By this simplification, there is an anti-liner operator
on $\caH\otimes\caH$
with domain $\lmk \bbC\unit\otimes\caB(\caH)\rmk\Omega$
such that
\begin{align}\label{takesdef}
S\lmk \unit\otimes x\rmk\Omega:=
\lmk \unit\otimes x^*\rmk\Omega,\quad x\in \caB(\caH).
\end{align}
This $S$ is closable.
Let
\[
\bar S=J\Delta^{\frac 12}
\]
be the polar decomposition of $\bar S$.
The anti-linear operator $J$ is called the modular conjugation
and the positive operator $\Delta$ is called the modular operator.
By a general theory of Tomita-Takesaki theory (see \cite{BR1}),
we have
\begin{align}\label{tomtake}
\Delta\Omega=J\Omega=\Omega.
\end{align}
Then the following holds.
\begin{prop}[\cite{RI}]\label{tanoshi}
There is an anti-unitary $\theta$ on $\caH$ such that
\begin{align*}
J\lmk\unit\otimes x\rmk J^*=\theta x\theta^*\otimes \unit,\quad
\theta^2=\sigma_{\Phi,\Theta_R}\unit.
\end{align*}
\end{prop}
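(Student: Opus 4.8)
The plan is to read off the modular data of the pair $\lmk\caM,\Omega\rmk$ with $\caM:=\bbC\unit\otimes\caB(\caH)=\hat\pi(\caA_R)''$, and then to compare the modular conjugation $J$ with the implementer $\Gamma$ of $\Theta_R$, exploiting that $\Gamma$ is the tensor flip up to the sign $\sigma:=\sigma_{\Phi,\Theta_R}$. For the existence of $\theta$: since $\pi$ is irreducible we have $\caM=\bbC\unit\otimes\caB(\caH)$ and $\caM'=\caB(\caH)\otimes\bbC\unit$, and $\Omega$ is cyclic and separating for $\caM$ by the standing assumption. By Tomita--Takesaki theory $J\caM J^{*}=\caM'$, so the map $x\mapsto y(x)$ determined by $J\lmk\unit\otimes x\rmk J^{*}=y(x)\otimes\unit$ is an anti-linear $*$-isomorphism of $\caB(\caH)$ onto itself; composing it with a conjugation of $\caH$ produces a linear $*$-automorphism, which by Theorem \ref{wigner} is $\Ad(u)$ for a unitary $u$, and undoing the conjugation yields an anti-unitary $\theta$ on $\caH$ with $y(x)=\theta x\theta^{*}$. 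This is the first claimed identity. Here $\theta$ is determined only up to a phase, and since $\theta$ is anti-linear that phase cancels in $\theta^{2}$, so $\theta^{2}$ is well defined.

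Next I would show that $\Gamma$ commutes with $J$. From $\Gamma(\xi\otimes\eta)=\sigma(\eta\otimes\xi)$ one reads off $\Gamma^{2}=\unit$, $\Gamma^{*}=\Gamma$, $\Gamma\lmk\unit\otimes x\rmk\Gamma=x\otimes\unit$, hence $\Gamma\caM\Gamma=\caM'$; also $\Gamma\Omega=\Omega$ is immediate from the defining property of $\Gamma$. Since $\Gamma$ is unitary with $\Gamma\Omega=\Omega$ and $\Gamma\caM\Gamma=\caM'$, a short computation shows $\Gamma$ carries the closed Tomita operator of $\lmk\caM,\Omega\rmk$ onto that of $\lmk\caM',\Omega\rmk$; comparing polar decompositions and using that $\caM$ and $\caM'$ have the same modular conjugation \cite{BR1}, we get $\Gamma J\Gamma=J$. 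Then $\Gamma J$ is an anti-unitary involution, $(\Gamma J)^{2}=\Gamma^{2}J^{2}=\unit$. Conjugating $\unit\otimes x$ by $\Gamma J$ and using the identities above shows that $\Gamma J$ and $\unit\otimes\theta$ induce the same anti-automorphism of $\caM$, so $(\unit\otimes\theta)^{-1}(\Gamma J)$ is a linear unitary commuting with $\caM$; hence it lies in $\caM'=\caB(\caH)\otimes\bbC\unit$ and $\Gamma J=(\unit\otimes\theta)(z\otimes\unit)$ for some unitary $z$ on $\caH$. From $\unit=(\Gamma J)^{2}=z^{2}\otimes\theta^{2}$ one obtains $z^{2}=\mu\unit$ and $\theta^{2}=\mu^{-1}\unit$; anti-linearity of $\theta$ forces $\mu\in\br$, and unitarity forces $|\mu|=1$, so $\theta^{2}=c\unit$ with $c\in\{1,-1\}$.

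The heart of the argument is to identify $c$ with $\sigma$. I would fix a Schmidt decomposition $\Omega=\sum_{j}\sqrt{\lambda_{j}}\,e_{j}\otimes f_{j}$, with $\{e_{j}\},\{f_{j}\}$ orthonormal bases of $\caH$ and $\lambda_{j}>0$ (positivity comes from $\Omega$ being separating for $\caM$). A direct calculation of the modular data of $\lmk\caM,\Omega\rmk$ (both $S_{\caM}$ and $\Delta^{1/2}$ being anti-diagonal, resp.\ diagonal, in this basis) gives $J(e_{i}\otimes f_{j})=e_{j}\otimes f_{i}$, whence $\theta$ coincides, up to a phase, with the anti-unitary sending each $f_{j}$ to $e_{j}$, and therefore $\theta^{2}=M\overline{M}$ in the basis $\{f_{j}\}$, where $M_{kj}:=\braket{f_{k}}{e_{j}}$ is the (unitary) change-of-basis matrix and $\overline{M}$ its entrywise complex conjugate. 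In particular the relation $\theta^{2}=c\unit$ of the previous step reads $M\overline{M}=c\unit$, i.e.\ $\overline{M}=cM^{*}$. On the other hand, $\Gamma\Omega=\Omega$ combined with $\Gamma(\xi\otimes\eta)=\sigma(\eta\otimes\xi)$ says exactly that $\Omega$ is an eigenvector of the tensor flip with eigenvalue $\sigma$; in the Schmidt basis this is $MD\overline{M}=\sigma D$ with $D:=\mathrm{diag}\lmk\sqrt{\lambda_{j}}\rmk>0$. Substituting $\overline{M}=cM^{*}$ gives $c\,MDM^{*}=\sigma D$, hence $MDM^{*}=c\sigma\,D$; since $MDM^{*}$ and $D$ are strictly positive while $c\sigma\in\{1,-1\}$, necessarily $c\sigma=1$, so $c=\sigma$ and $\theta^{2}=\sigma\unit=\sigma_{\Phi,\Theta_{R}}\unit$, which finishes the proof.

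The existence of $\theta$ and the Tomita--Takesaki bookkeeping are routine; the genuine obstacle is the last step, namely encoding the two a priori unrelated facts ``$\theta^{2}=\pm\unit$'' and ``$\Omega$ is flip-(anti)symmetric'' into the single matrix identity $MDM^{*}=c\sigma D$ and then using the strict positivity of the marginal $D$ — which is precisely what the cyclic-and-separating hypothesis provides — to pin the sign. Apart from this, the only care needed is the usual one with anti-linear and unbounded operators.
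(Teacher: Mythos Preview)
The paper does not actually prove this proposition here; it is merely quoted from \cite{RI}. So there is no proof in the present paper to compare against.

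Your argument is correct and self-contained. The existence of $\theta$ from $J\caM J^{*}=\caM'$ and the anti-linear Wigner theorem is standard. Your observation that $\Gamma$ fixes $\Omega$ and exchanges $\caM$ with $\caM'$, hence conjugates $S_{\caM}$ to $S_{\caM'}$ and therefore commutes with $J$ (using $J_{\caM'}=J_{\caM}$), is the right structural input; it yields $(\Gamma J)^{2}=\unit$ and the decomposition $\Gamma J=(\unit\otimes\theta)(z\otimes\unit)$, forcing $\theta^{2}=c\unit$ with $c=\pm1$. The decisive step, identifying $c$ with $\sigma$, is handled cleanly by the Schmidt calculation: with $\Omega=\sum_{j}\sqrt{\lambda_{j}}\,e_{j}\otimes f_{j}$ and $M_{kj}=\braket{f_{k}}{e_{j}}$, the flip-invariance $\Gamma\Omega=\Omega$ reads $MD=\sigma DM^{T}$, i.e.\ $MD\overline{M}=\sigma D$ after multiplying by $(M^{T})^{-1}=\overline{M}$, and the explicit form $J(e_{i}\otimes f_{j})=e_{j}\otimes f_{i}$ gives $\theta^{2}=M\overline{M}=c\unit$; combining yields $MDM^{*}=c\sigma D$, and strict positivity of $D$ (equivalently, of the reduced density of $\Omega$, guaranteed by the cyclic-and-separating hypothesis) forces $c\sigma=1$. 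One small remark: it is the \emph{cyclicity} of $\Omega$ for $\caM$ that makes $\{e_{j}\}$ a basis and the \emph{separating} property that makes $\{f_{j}\}$ a basis, so both halves of the standing assumption are used, not just separability as you phrased it.
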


This characterization can be used to show that our 
index is the generalization of Pollmann et.al.'s index on MPS.
Roughly speaking, it can be understood as follows.
Let us consider a $\Theta_R$-invariant MPS $\omega_\vv$ given by a primitive $\vv$.
For simplicity, we still pretend that the GNS-vector $\Omega$ of $\omega_{\vv}$
is cyclic and separating for $\bbC\unit\otimes\caB(\caH)$ in $\caH\otimes\caH$,
although it is not true for MPS.(See the original paper \cite{RI} for proper formulation.)
In this case, we have
\begin{align}\label{mod}
\Delta^{\frac 12}\lmk \unit \otimes x\rmk\Omega
=\lmk \unit \otimes \rho_\vv^\frac 12 x \rho_\vv^{-\frac 12}\rmk\Omega,
\quad
x\in B(\caH).
\end{align}

Recall that we may take $v_\mu$ as a restriction of $S_\mu$
in (\ref{stachi}) to some finite dimensional subspace.
For simplicity, let us omit the difference between $S_\mu$ and $v_\mu$.
The point is, in our setting, there are two ways to send operators
in $\bbC\unit\otimes B(\caH)$
to operators in $\caB(\caH)\otimes \bbC\unit$.
One is by modular theory,
that is,
\begin{align*}
\begin{split}
&\lmk
\unit\otimes \rho_\vv^{\frac 12} \theta S_\mu\theta^* \rho_\vv^{-\frac 12}
\rmk\Omega
=
J\Delta^{\frac 12}
\lmk
\unit\otimes \rho_\vv^{-\frac 12} \theta S_\mu^*\theta^* \rho_\vv^{\frac 12}
\rmk\Omega\\
&=J\lmk
\unit\otimes \theta S_\mu^*\theta^*
\rmk\Omega
=J\lmk
\unit\otimes \theta S_\mu^*\theta^*
\rmk J^*\Omega\\
&=\lmk S_\mu^*\otimes \unit \rmk\Omega.
\end{split}
\end{align*}
Here, we used (\ref{takesdef}),  (\ref{mod}), (\ref{tomtake}), and Proposition \ref{tanoshi}.
On the other hand, using the fact that $\omega_\vv$
is translationally invariant, we can show that there is some $a\in\bbT$
such that
\begin{align*}
a \lmk \unit \otimes S_\mu^*\rmk\Omega=
\lmk S_\mu^*\otimes \unit \rmk\Omega,\quad \mu=1,\ldots,d.
\end{align*}
To understand it heuristically, recall that the left side of $\caH\otimes \caH$ corresponds to
left infinite chain, while the right-hand side corresponds to
the right infinite chain.
Comparing these equation, and the assumption that $\Omega$ is separating
for $\bbC\unit\otimes\caB(\caH)$, we obtain
\begin{align*}
\rho_\vv^{\frac 12} \theta S_\mu\theta^* \rho_\vv^{-\frac 12}=a S_\mu^*,\quad \mu=1,\ldots,d.
\end{align*}
This corresponds to (\ref{vfanda}).
Therefore, the signature of $\theta^2$ is the Pollmann et.al.'s index.
However, from Proposition \ref{tanoshi},
we have $\theta^2=\sigma_{\Phi,\Theta_R}\unit$.
Hence our index coincides with Pollman's index in MPS.

\section{SPT-phases in $2$-dimensional systems}
In this section we explain the derivation of $H^3(G,\Uo)$-valued index
for $2$-dimensional SPT obtained in \cite{2dSPT}.
Before starting the analysis, we fix some notation.
We denote by $H_L$, $H_R$,
$H_U$, $H_D$, the
left, right, upper, lower half-planes i.e.,
\begin{align*}
H_L:=\left\{ (x,y)\in\bbZ^2\mid x\le -1\right\},\quad
H_R:=\left\{ (x,y)\in\bbZ^2\mid 0\le x\right\},\\
H_U:=\left\{ (x,y)\in\bbZ^2\mid 0\le y\right\},\quad
H_D:=\left\{ (x,y)\in\bbZ^2\mid y\le -1\right\}.
\end{align*}
For each $\Gamma\subset \bbZ^\nu$, let $\beta_g^\Gamma$
be the restriction of $\beta_g$ to $\caA_{\Gamma}$.
With a bit abuse of notation, we also use the same symbol $\beta_g^\Gamma$
to denote $\beta_g^\Gamma\otimes\id_{\caA_{\Gamma^c}}$.
We use notation $\beta_g:=\beta_g^{\bbZ^2}$, $\beta_g^U:=\beta_g^{H_U}$,
$\beta_g^{RU}:=\beta_g^{H_R\cap H_U}$, $\beta_g^{LU}:=\beta_g^{H_L\cap H_U}$.
Throughout this section, we fix a trivial $\Phi_0\in\caP_{U.G.}^0$.
Recall that its unique gapped ground state $\omega_{\Phi_0}$ is of product form
\begin{align*}
\omega_{\Phi_{0}}=\bigotimes_{x\in\bbZ^{\nu}}\rho_{\xi_{x}}.
\end{align*}
We denote by $\omega_R$, $\omega_L$, $\omega_U$, $\omega_D$,
the restriction of $\omega_{\Phi_{0}}$ to $\caA_{H_R}$, $\caA_{H_L}$, $\caA_{H_U}$, $\caA_{H_D}$
respectively.
Because $\omega_{\Phi_{0}}$ is a pure state of a product form,
both of $\omega_R$ and $\omega_L$ are pure.
Let  $(\caH_L,\pi_L,\Omega_L)$,
$(\caH_R,\pi_R,\Omega_R)$ be GNS-triples of $\omega_L$, $\omega_R$, respectively.
Then $(\caH_0,\pi_0,\Omega_0)$,
with $\caH_0=\caH_L\otimes\caH_R$,
$\pi_0:=\pi_L\otimes\pi_R$,
$\Omega_0:=\Omega_L\otimes\Omega_R$
is a GNS triple for $\omega_{\Phi_0}$.
Note that all of $\pi_L$, $\pi_R$, $\pi_0$
are irreducible because
$\omega_L$, $\omega_R$, $\omega_0$
are pure.

\subsection{Derivation of $H^3(G,\Uo)$ out of cocycle action}\label{gker}
In one dimensional system, we obtained a $H^2(G,\Uo)$-valued 
index out of projective representations.
According to physicists, we are supposed to find 
a  $H^3(G,\Uo)$-valued index.
How can we find such an index?

We can find a hint in literatures of operator algebra
on the classification of cocycle actions.
\cite{Connes}, \cite{jones}.

Let $\{\gamma_g\}_{g\in G}$ be automorphisms on  a factor $\caM$.
Assume that there are unitaries $\{ u(g,h)\}_{g,h\in G}\subset\caU(\caM)$ such that
\begin{align}\label{grel}
\gamma_g\gamma_h=
\Ad\lmk u(g,h)\rmk\gamma_{gh}.
\end{align}
This $\gamma$ is called a cocycle action.
Using this relation twice, we obtain 
\begin{align*}
\gamma_g\gamma_h\gamma_k
=\Ad\lmk u(g,h)\rmk\gamma_{gh}\gamma_k
=\Ad\lmk u(g,h) u(gh,k)\rmk\gamma_{ghk}.
\end{align*}
Here we used the relation (\ref{grel})
to $\gamma_g\gamma_h$ and then to $\gamma_{gh}\gamma_k$.
On the other hand, we have
\begin{align*}
&\gamma_g\gamma_h\gamma_k
=\gamma_g \Ad\lmk u(h,k)\rmk\gamma_{hk}
=\Ad\lmk
\gamma_g\lmk u(h,k)\rmk
\rmk\gamma_g \gamma_{hk}\\
&=\Ad\lmk
\gamma_g\lmk u(h,k)\rmk u (g,hk)
\rmk\gamma_{ghk}.
\end{align*}
Here we used the relation (\ref{grel})
to $\gamma_h\gamma_k$ and then $\gamma_{g}\gamma_{hk}$.
By the associativity, of course above two equations should give the same value.
Hence we get
\begin{align*}
\Ad\lmk u(g,h) u(gh,k)\rmk(a)
=\Ad\lmk
\gamma_g\lmk u(h,k)\rmk u (g,hk)
\rmk(a),\quad a\in\caM.
\end{align*}
This means
\begin{align*}
\lmk u(g,h) u(gh,k)\rmk^*
\lmk
\gamma_g\lmk u(h,k)\rmk u (g,hk)
\rmk
\in\caM\cap \caM'=\bbC\unit.
\end{align*}
Namely,
there exists some phase $c(g,h,k)$
such that 
\begin{align}\label{3coc}
u(g,h) u(gh,k)
=c(g,h,k)\gamma_g\lmk u(h,k)\rmk u (g,hk),\quad
g,h,k\in G.
\end{align}
Using this relation repeatedly, we can show that 
$c: G^{\times 3}\to \Uo$ satisfies the $3$-cocycle relation.
Indeed, for $g,h,k,f\in G$, by repeated use of (\ref{3coc}), we get
\begin{align}\label{temp}
& u(g,h) u(gh,k)u(ghk,f)
=\lcm  u(g,h) u(gh,k)\rcm u(ghk,f)\notag\\
& = c(g,h,k)
 \gamma_g\lmk u(h,k)\rmk 
u(g,hk) u(ghk,f)
\notag\\
& = c(g,h,k)
 \gamma_g\lmk u(h,k) \rmk
 \cdot
\lcm  u(g,hk)u(ghk,f)\rcm
\notag\\
& =c(g,h,k)
 \gamma_g\lmk u(h,k) 
 \rmk\cdot
c(g,hk,f) \lmk \gamma_g\lmk  u(hk,f)\rmk
 \lmk u(g,hkf)\rmk \rmk\notag\\
 &=
  c(g,h,k) c(g,hk,f)
 \lmk 
 \gamma_g\lcm u(h,k)  u(hk,f)\rcm \rmk\cdot
 \lmk u(g,hkf)\rmk \notag\\
 &=c(g,h,k)
 c(g,hk,f) c(h,k,f)
 \gamma_g\lmk
 \gamma_h \lmk u(k,f)\rmk \lmk u(h,kf)\rmk\rmk
 \cdot
 \lmk u(g,hkf)\rmk\notag\\
 &=c(g,h,k)
 c(g,hk,f) c(h,k,f)
 \cdot \gamma_g\gamma_h \lmk  u(k,f)\rmk \cdot 
 \lcm\lmk \gamma_g\lmk u(h,kf)\rmk \rmk
 u(g,hkf)\rcm\notag\\
& =c(g,h,k)
 c(g,hk,f) c(h,k,f) \overline{c(g,h,kf)}
 \cdot \left\{\gamma_g\gamma_h  \lmk u(k,f)\rmk \rmk\}\cdot 
\lmk
u(g,h)u(gh,kf)
\rmk.\notag\\
\end{align}
Here, (and below) we apply  (\ref{3coc}) for terms in $\lcm\cdot\rcm$ to get the succeeding equality.
Using (\ref{grel}) for $\gamma_g\gamma_h$
and again (\ref{3coc}), we obtain
\begin{align}
(\ref{temp})
&=c(g,h,k)
 c(g,hk,f) c(h,k,f)\overline{c(g,h,kf)}
 \lmk
\Ad\lmk   u(g,h) \rmk\gamma_{gh}
\rmk
\lmk u(k,f)\rmk 
\lmk
 u(g,h)u(gh,kf)
\rmk\notag\\
&=c(g,h,k)
 c(g,hk,f) c(h,k,f)\overline{c(g,h,kf)}
 \lmk  u(g,h)\rmk 
 \lcm \gamma_{gh}\lmk u(k,f)\rmk
 \lmk
 u(gh,kf)
\rmk\rcm\notag\\
&=c(g,h,k)
 c(g,hk,f) c(h,k,f)\overline{c(g,h,kf)}\overline{c(gh,k,f)}
  \lmk u(g,h)u(gh, k)u(ghk,f)\rmk.\notag
  \end{align}
Hence, we obtain
\begin{align}
c(g,h,k)
 c(g,hk,f) c(h,k,f)\overline{c(g,h,kf)}\overline{c(gh,k,f)}=1,\quad\text{for all}\quad
 g,h,k,f\in G.\notag
\end{align}
This means $c\in Z^{3}(G,\bbT)$.
Hence we obtain $[c]_{H^3(G,\Uo)}\in H^3(G,\Uo)$.
Note that resulting $[c]_{H^3(G,\Uo)}$ does not 
depend on the choice of $u$.
Indeed, if $\gamma_g\gamma_h=
\Ad\lmk \tilde u(g,h)\rmk\gamma_{gh}$
holds for another $\tilde u(g,h)\in \caM$,
the difference 
$u(g,h)^*\tilde u(g,h)$ belongs to the center
$\caM\cap\caM'=\bbC\unit$.
Hence there is some $a(g,h)\in \Uo$
such that 
$u(g,h)^*\tilde u(g,h)=a(g,h)$.
Let $\tilde c(g,h,k)\in\Uo$ be
the phase satisfying $\tilde u(g,h) \tilde u(gh,k)
=\tilde c(g,h,k)\gamma_g\lmk \tilde u(h,k)\rmk \tilde u (g,hk)$.
Then we have
\begin{align}
&a(g,h)a(gh,k)\lmk u(g,h) u(gh,k)\rmk\\
&=
\tilde u(g,h) \tilde u(gh,k)\notag\\
&=\tilde c(g,h,k)
 \gamma_g\lmk \tilde u(h,k)\rmk 
\tilde u(g,hk)\notag\\
&=\tilde c(g,h,k)a(h,k)a(g,hk)
\gamma_g\lmk u(h,k)\rmk 
 u(g,hk)\notag\\
&=\tilde c(g,h,k)a(h,k)a(g,hk)\overline{c(g,h,k)}
u(g,h) u(gh,k).\notag
\end{align}
Hence we have $\tilde c(g,h,k)=c(g,h,k)\overline{a(h,k)a(g,hk)}a(g,h)a(gh,k)$.
This means the difference between $\tilde c$ and $c$ is just a coboundary, and
we get $[c]_{H^{3}(G,\bbT)}=[\tilde c]_{H^{3}(G,\bbT)}$.


The above argument tells us,
if we have automorphisms $\{\gamma_g\}_{g\in G}$ on $\caA$
such that $\gamma_g\gamma_h\gamma_{gh}^{-1}$ is inner,
then we get some $[c]\in H^3(G,\Uo)$.
This can be a hint to find a $H^3(G,\Uo)$-valued index.

Before closing the subsection, we would like to emphasize the importance of
considering infinite system, not finite quantum spin systems.
On a matrix algebra, any automorphism is inner.
Therefore, if we collect some $\{\gamma_g\}_{g\in G}$,
they are represented as 
$\gamma_g=\Ad(u_g)$, by some unitary matrix $u_g$.
Then we may take 
$u(g,h)$ in
\begin{align}\label{gkereq}
\gamma_g\gamma_h=
\Ad\lmk u(g,h)\rmk\gamma_{gh}.
\end{align}
as $u_gu_hu_{gh}^*$. 
Substituting this to
\begin{align*}
u(g,h) u(gh,k)
=c(g,h,k)\gamma_g\lmk u(h,k)\rmk u (g,hk),
\end{align*}
we obtain $c(g,h,k)=1$.
Hence we get only a trivial element of $H^3(G,\Uo)$.
This can be seen as a reason that we consider {\it infinite }
quantum spin systems.

\subsection{Action of $\beta_g^U$ on automorphisms}\label{goodautosec}
From what we saw from the subsection \ref{gker}, 
our task can be finding automorphisms $\{\gamma_g\}_{g\in G}$ on $\caA$
such that $\gamma_g\gamma_h\gamma_{gh}^{-1}$ is inner.
But how can we find such a thing?
We have a genuine group action $\beta_g$ but a priori, there is no
set of automorphisms which are ``nicely obstacled''
like the ones in cocycle action.

A hint comes from former works on gapped phases.
The wisdom we can see from there is
{\it if you cut the system, you may obtain some non-trivial index}.
In fact, that is the heart of the phenomena
called {\it bulk-edge correspondence}, in free Fermions.
This correspondence says that if you would like to know about the bulk(=entire system), cut the system into two
and look at the edge.
We did the same thing in section \ref{1dimsec} for quantum spin chains as well.
We restricted our attention to the right half of the chain.
Also in \cite{bdf}, an integer-valued index for systems with $U(1)$-symmetry
 was derived by considering $U(1)$-action on half of the system.
From these works, it looks to be a good idea to consider the action of
the $G$-symmetry on the half-plane.
Recall that from the automorphic equivalence,
the ground state $\omega_{\Phi}$ of $\Phi\in\caP_{U.G.\beta}^0$
is given from a product state $\omega_{\Phi_0}$
via some automorphism $\alpha$, as $\omega_{\Phi}=\omega_{\Phi_0}\circ\alpha$.
This motivates us to think of an action (on the half infinite plane) of $G$ on automorphisms.
More precisely we consider the following action.

As we stated above, we consider an action of
the $G$-symmetry corresponding to the upper half-plane.
\begin{align*}
\beta_g^U:=\id_{\caA_{H_D}}
\otimes \bigotimes_{x\in H_U} 
\Ad\lmk U(g)\rmk,\quad g\in G.
\end{align*}
on the automorphism group $\Aut(\caA_{\bbZ^2})$.
The action is
given by
\begin{align*}
G\times \Aut(\caA_{\bbZ^2})\ni (g,\alpha) \to \beta_g^U\alpha \lmk \beta_g^U\rmk^{-1}
\in \Aut(\caA_{\bbZ^2}).
\end{align*}
An idea to allow us to derive $H^3(G,\Uo)$ is 
{\it cut further}.
Namely, we consider the following situation.
\begin{lem}\label{cco}
Suppose that
\begin{align}\label{autofac}
\beta_g^U\alpha\lmk \beta_g^U\rmk^{-1}\alpha^{-1}
=\inn\lmk \eta_{g,L}\otimes\eta_{g,R}\rmk.
\end{align}
Set $\gamma_{g}:=\eta_{g,R}\beta_g^{U,R}$. 
Then we have $\gamma_g\gamma_h\gamma_{gh}^{-1}=\Ad\lmk u(g,h)\rmk$,
with some unitary $u(g,h)\in \caA_{\bbZ^2}$.
Hence we obtain some
$[c]\in H^3(G,\Uo)$.
\end{lem}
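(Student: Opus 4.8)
The plan is to extract from the hypothesis a $1$-cocycle identity for the ``anomaly'' that $\alpha$ carries under the upper half-plane symmetry, restrict it to the right half-plane, feed the resulting relation into a direct computation of $\gamma_g\gamma_h\gamma_{gh}^{-1}$, and then invoke subsection \ref{gker}. First I would set $c_g:=\beta_g^U\circ\alpha\circ\lmk\beta_g^U\rmk^{-1}\circ\alpha^{-1}$, so that (\ref{autofac}) reads $c_g=\inn\circ\lmk\eta_{g,L}\otimes\eta_{g,R}\rmk$. Since $\beta^U$ is a genuine action of $G$, $\beta_{gh}^U=\beta_g^U\beta_h^U$, and substituting $\beta_h^U\alpha\lmk\beta_h^U\rmk^{-1}=c_h\circ\alpha$ and $\beta_g^U\alpha\lmk\beta_g^U\rmk^{-1}=c_g\circ\alpha$ into $\beta_{gh}^U\alpha\lmk\beta_{gh}^U\rmk^{-1}$ and cancelling $\alpha$ gives
\[
c_{gh}=\lmk\beta_g^U\circ c_h\circ\lmk\beta_g^U\rmk^{-1}\rmk\circ c_g,\qquad g,h\in G .
\]

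Next I would restrict this to $\caA_{H_R}$. Because $\beta_g^U$ itself factorizes along the left--right cut, $\beta_g^U=\beta_g^{U,L}\otimes\beta_g^{U,R}$ with $\beta_g^{U,R}=\beta_g^{H_U\cap H_R}$, conjugation by $\beta_g^U$ preserves innerness and carries product-form automorphisms to product-form ones; applied to $c_h$ it gives $\beta_g^U c_h\lmk\beta_g^U\rmk^{-1}=\inn\circ\lmk\sigma_{g,h}^{L}\otimes\sigma_{g,h}^{R}\rmk$ with $\sigma_{g,h}^{R}:=\beta_g^{U,R}\circ\eta_{h,R}\circ\lmk\beta_g^{U,R}\rmk^{-1}\in\Aut\caA_{H_R}$ and similarly on the left. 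Plugging into the displayed identity, both sides are product-form automorphisms of $\caA_{\bbZ^2}=\caA_{H_L}\otimes\caA_{H_R}$ that agree modulo an inner automorphism; comparing $\caA_{H_R}$-components (the relative commutant of $\caA_{H_L}$), and using that a product-form automorphism of $\caA_{H_L}\otimes\caA_{H_R}$ is inner iff both components are, one obtains $\eta_{gh,R}=\Ad\lmk w_{g,h}\rmk\circ\sigma_{g,h}^{R}\circ\eta_{g,R}$ for some unitary $w_{g,h}\in\caA_{H_R}$.

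On the other side, using $\gamma_g=\eta_{g,R}\circ\beta_g^{U,R}$ together with the genuine group law $\beta_g^{U,R}\beta_h^{U,R}=\beta_{gh}^{U,R}$ on the quadrant algebra $\caA_{H_U\cap H_R}$, a direct computation (move $\beta_g^{U,R}$ past $\eta_{h,R}$, then cancel the $\beta^{U,R}$-factors) gives
\[
\gamma_g\circ\gamma_h\circ\gamma_{gh}^{-1}=\eta_{g,R}\circ\sigma_{g,h}^{R}\circ\eta_{gh,R}^{-1}.
\]
Substituting the relation from the previous paragraph, this equals $\eta_{g,R}\circ\sigma_{g,h}^{R}\circ\eta_{g,R}^{-1}\circ\lmk\sigma_{g,h}^{R}\rmk^{-1}$ composed with an inner automorphism, so the whole problem is reduced to showing that the commutator of automorphisms $\lcm\eta_{g,R},\sigma_{g,h}^{R}\rcm$ is inner in $\caA_{\bbZ^2}$.

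This commutator being inner is the main obstacle: both $\eta_{g,R}$ and $\sigma_{g,h}^{R}$ are genuinely quasi-local automorphisms supported in $\caA_{H_R}$, and there is no reason for two arbitrary such automorphisms to commute modulo inner. The extra input is locality near the boundary: since $\alpha$ may be taken $\beta$-equivariant, one also has $c_g=\lmk\beta_g^D\rmk^{-1}\alpha\beta_g^D\alpha^{-1}$, so $c_g$, hence $\eta_{g,R}$, acts almost trivially deep inside $H_U$ and deep inside $H_D$ and is effectively supported near the positive half of the $x$-axis; combining this with the fact that $\alpha$ can be cut along cones in many directions simultaneously (Lemma \ref{alex} and the cone factorization (\ref{tfac})), one can absorb the effect of the quadrant symmetry $\beta_g^{U,R}$ on $\eta_{h,R}$ into an inner automorphism and force $\lcm\eta_{g,R},\sigma_{g,h}^{R}\rcm$ to be inner. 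This disentangling step is the technical heart of the argument and where the specific structure of $\alpha$ is essential. Granting it, $\gamma_g\gamma_h\gamma_{gh}^{-1}=\Ad\lmk u(g,h)\rmk$ for a unitary $u(g,h)\in\caA_{\bbZ^2}$, so $\{\gamma_g\}_{g\in G}$ is a cocycle action of $G$ on $\caA_{\bbZ^2}$, and the computation of subsection \ref{gker} --- the $3$-cocycle identity for the phase $c(g,h,k)$ determined by $u(g,h)u(gh,k)=c(g,h,k)\,\gamma_g\lmk u(h,k)\rmk u(g,hk)$, together with the independence of $[c]$ on the choice of the $u(g,h)$ --- produces the desired class $[c]\in H^3(G,\Uo)$.
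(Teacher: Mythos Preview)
Your argument has a genuine gap, and the attempted patch invokes hypotheses the lemma does not assume.

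You correctly derive the $1$-cocycle identity $c_{gh}=(\beta_g^U c_h(\beta_g^U)^{-1})\,c_g$ and, restricting to $\caA_{H_R}$, obtain $\eta_{gh,R}=\Ad(w_{g,h})\,\sigma_{g,h}^R\,\eta_{g,R}$, i.e.\ $\sigma_{g,h}^R\,\eta_{g,R}\,\eta_{gh,R}^{-1}$ is inner. But your direct computation gives $\gamma_g\gamma_h\gamma_{gh}^{-1}=\eta_{g,R}\,\sigma_{g,h}^R\,\eta_{gh,R}^{-1}$, with the first two factors in the \emph{opposite} order. The discrepancy is exactly the commutator $[\eta_{g,R},\sigma_{g,h}^R]$, and nothing you have established so far forces this to be inner. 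Your proposed fix --- that $\alpha$ ``may be taken $\beta$-equivariant'', that $\eta_{g,R}$ is localized near the $x$-axis, and that cone factorizations in the style of Lemma~\ref{alex} apply --- imports structure that is \emph{not} in the hypothesis of Lemma~\ref{cco}. The lemma assumes only (\ref{autofac}); no equivariance, locality, or quasi-local structure of $\alpha$ is available.

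The paper bypasses the ordering problem by a change of viewpoint: instead of the $1$-cocycle $c_g$, it works with $\delta_g:=\alpha\,\beta_g^U\,\alpha^{-1}$, which is a \emph{genuine} $G$-action because it is conjugate to one. The hypothesis (\ref{autofac}) is equivalent to $\delta_g=\inn\circ(\gamma_{g,L}\otimes\gamma_{g,R})$ in product form. Then $\delta_g\delta_h\delta_{gh}^{-1}=\id$ forces $(\gamma_{g,L}\gamma_{h,L}\gamma_{gh,L}^{-1})\otimes(\gamma_{g,R}\gamma_{h,R}\gamma_{gh,R}^{-1})$ to be inner on $\caA_{H_L}\otimes\caA_{H_R}$, and a product-form automorphism of a tensor product of simple $C^*$-algebras is inner iff each factor is. This yields $\gamma_{g,R}\gamma_{h,R}\gamma_{gh,R}^{-1}=\Ad(u(g,h))$ directly, with no commutator to control; the $H^3$ class then comes from subsection~\ref{gker}. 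The moral: split the conjugated action, not the anomaly cocycle.
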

\begin{proof}
Set $\gamma_{g, R}:=\eta_{g,R}\beta_g^{U,R}$
and $\gamma_{g, L}:=\eta_{g,L}\beta_g^{U,L}$.
Taking inverse of (\ref{autofac}), we get
\[
\alpha \beta_g^U\alpha^{-1}
=\inn\lmk \eta_{g,L}^{-1}\beta_g^{LU}\otimes\eta_{g,R}^{-1}\beta_g^{RU}\rmk
=\inn \lmk \gamma_{g,L}\otimes\gamma_{g,R}\rmk.
\]
Because $\beta_g^U$ is a genuine action of $G$, we get
\begin{align*}
&\id=
\alpha\beta_g^U\alpha^{-1}\circ
\alpha\beta_h^U\alpha^{-1}\circ \lmk \alpha\beta_{gh}^U\alpha^{-1}\rmk^{-1}\\
&=\inn \circ \lmk \gamma_{g,L}\gamma_{h,L} \gamma_{gh,L}^{-1}\rmk
\otimes\lmk \gamma_{g,R}\gamma_{h,R}\gamma_{gh,R}^{-1}\rmk.
\end{align*}
This means $\lmk \gamma_{g,L}\gamma_{h,L} \gamma_{gh,L}^{-1}\rmk
\otimes\lmk \gamma_{g,R}\gamma_{h,R}\gamma_{gh,R}^{-1}\rmk$
is inner. But as it is of the tensor product form
it means the automorphism $\gamma_{g,R}\gamma_{h,R}\gamma_{gh,R}^{-1}$
on $\caA_{H_R}$ is inner.
\end{proof}
In fact, in a model derived out of Dijkgraaf-Witten model in TQFT,
the ground state $\omega_{\Phi}$ is given by
some automorphism $\alpha$ and a product state $\omega_0$ 
as $\omega_\Phi=\omega_0\circ\alpha$.
This $\alpha$ satisfies the property (\ref{autofac}),
and above procedure allows us to recover the ''built in'' third cohomology class.
(See subsection \ref{dwsec}.)

We note from the proof that what is happening is heuristically the following:
we are considering a genuine action $\alpha\beta_g^U\alpha^{-1}$
on the whole space $\caA_{\bbZ^2}$.
This automorphism $\alpha\beta_g^U\alpha^{-1}$ can be 
split into left and right as in  (\ref{autofac}), modulo inner automorphisms.
The fact that the original $\alpha\beta_g^U\alpha^{-1}$
is a genuine action result in the 
fact that the right half fraction $\gamma_g$ of $\alpha\beta_g^U\alpha^{-1}$
is a cocycle action, i.e.,
$\gamma_g$ is a genuine action {\it modulo inner automorphisms}.

If the automorphism $\alpha$ in the automorphic equivalence satisfies
such a property, this would have allowed us to define a
$H^3(G,\Uo)$-valued index.
Unfortunately, it seems a it too much to ask.

\subsection{Action of $\beta^{U}$ on $\omega_{\Phi}$}
In the previous subsection, we saw that if our automorphism
$\alpha$ satisfy some nice factorization property (\ref{autofac}) with respect to left-right cut,
then we can obtain some $H^3(G,\Uo)$-valued index.
It does not look plausible (or at least we could not prove) that every $\alpha$ 
in automorphic equivalence
satisfies such a nice property.

Fortunately, a weak version of (\ref{autofac}) has turned out to be true.
This is the following Proposition. Recall the definition of $C_\theta$ (\ref{ctdef}).
\begin{prop}\label{betauprop}{\cite{2dSPT}}
Let $\Phi\in \caP_{U.G.\beta}^{0}$.
For any $0<\theta<\frac\pi 2$,
there are automorphisms 
$\eta_{g,\sigma}$, $\sigma=L,R$ localized in
${C_{\theta},\sigma}$, satisfying
\begin{align}\label{statefac}
\omega_{\Phi}=\omega_{\Phi}\circ
\lmk \eta_{g,L}\beta_{g}^{LU}\otimes\eta_{g,R}\beta_{g}^{RU}\rmk\circ\inn.
\end{align}
\end{prop}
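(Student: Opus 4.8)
The plan is to reduce to the trivial product state by automorphic equivalence, to reformulate (\ref{statefac}) as a statement about the lower half-plane symmetry $\beta^D_g$, and then, using the $\beta$-invariance of $\omega_\Phi$ and the split property, to show that $\beta^D_g$ disturbs $\omega_\Phi$ only near the $x$-axis, the disturbance being realized — via the factorization property — by a tensor product of two cone-localized automorphisms times an inner one.

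Since $\Phi\in\caP_{U.G.\beta}^{0}\subset\caP_{U.G.}^{0}$ we have $\Phi\sim\Phi_{0}$, so Theorem \ref{autoinf} provides an automorphism $\alpha$ with $\omega_{\Phi}=\omega_{\Phi_{0}}\circ\alpha$, where $\omega_{\Phi_0}=\bigotimes_x\rho_{\xi_x}$ is a pure product state and $\alpha$ enjoys the factorization property (\ref{tfac}) together with its refinements obtained by cutting in several directions at once; moreover $\omega_\Phi$, as the unique ground state of the $\beta$-invariant $\Phi$, satisfies $\omega_\Phi\circ\beta_g=\omega_\Phi$. Fix $0<\theta<\frac\pi2$ and choose $\theta'\in(\frac\pi2-\theta,\frac\pi2)$, so that the complement $(C'_{\theta'})^{c}$ of the cone $C'_{\theta'}$ of opening $\theta'$ about the $y$-axis is a thin ``bow-tie'' about the $x$-axis lying inside $C_\theta$ and disjoint from $\{x=0\}$, hence the disjoint union of a subset of $C_{\theta,L}$ and a subset of $C_{\theta,R}$. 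Using $\lmk \eta_{g,L}\beta_{g}^{LU}\otimes\eta_{g,R}\beta_{g}^{RU}\rmk=\lmk\eta_{g,L}\otimes\eta_{g,R}\rmk\circ\beta_g^U$ and $\omega_\Phi\circ\beta_g^U=\omega_\Phi\circ(\beta_g^D)^{-1}$ (from $\omega_\Phi\circ\beta_g=\omega_\Phi$), conjugating (\ref{statefac}) by $\beta_g^U$ and absorbing inner automorphisms shows it is equivalent to the existence of $\eta_{g,\sigma}\in\Aut\caA_{C_{\theta,\sigma}}$ and a unitary $v$ with $\omega_{\Phi}\circ\beta^D_g=\omega_{\Phi}\circ\lmk\eta_{g,L}\otimes\eta_{g,R}\rmk\circ\Ad(v)$.

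The crucial point is that $\omega_\Phi\circ\beta^D_g$ and $\omega_\Phi$ agree on the two half-plane subalgebras: they agree on $\caA_{H_U}$ because $\beta^D_g$ is trivial there, and on $\caA_{H_D}$ because $\beta^{H_D}_g(A)=\beta_g(A)$ for $A\in\caA_{H_D}$, so $\omega_\Phi(\beta^D_g(A))=\omega_\Phi(\beta_g(A))=\omega_\Phi(A)$ by the full $\beta$-invariance. Hence the two states differ only in the cross-correlations of the $H_U$-$H_D$ cut, which by Theorem \ref{expdec} live in an $O(1)$-neighbourhood of the $x$-axis. Both states are pure, and both are of the form $\omega_{\Phi_0}\circ(\text{quasi-local automorphism})$ — indeed $\omega_\Phi\circ\beta^D_g=\omega_{\Phi_0}\circ(\alpha\circ\beta^D_g)$ and $\beta^D_g$, being a product automorphism, preserves the split structure produced by $\alpha$ — so both satisfy the split property with respect to $H_U$-$H_D$ (and to $H_L$-$H_R$). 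By Proposition \ref{splitcharac}, Theorem \ref{qethm} and Theorem \ref{wigner} one may realize them on a common Hilbert space with identical representations of $\caA_{H_U}$ and of $\caA_{H_D}$, so that they differ by conjugation by a unitary which, combined with the factorization of $\alpha$, can be taken localized near the cut. Writing $\alpha=\inn\circ(\alpha_U\otimes\alpha_D)\circ\zeta$ in $H_U$-$H_D$ factorized form, with $\alpha_U\in\Aut\caA_{H_U}$, $\alpha_D\in\Aut\caA_{H_D}$ and $\zeta$ localized in $(C'_{\theta'})^{c}\subset C_\theta$, one commutes $\beta^D_g$ through $\zeta$ (the part of $\beta^D_g$ inside $C'_{\theta'}$ commutes with $\zeta$; the part in $(C'_{\theta'})^{c}$ conjugates into a correction localized in $C_\theta$), then through $\alpha_U\otimes\alpha_D$ and the inner part, and uses the product form of $\omega_{\Phi_0}$ and the $\beta$-invariance to absorb the residual bulk-$H_D$ rotation; the outcome is an identity $\omega_\Phi\circ\beta^D_g=\omega_\Phi\circ\kappa_g$ with $\kappa_g$ composed only of an inner automorphism and automorphisms localized in $C_\theta$.

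It remains to rewrite $\kappa_g$ as $\lmk\eta_{g,L}\otimes\eta_{g,R}\rmk\circ\Ad(v)$ with $\eta_{g,\sigma}\in\Aut\caA_{C_{\theta,\sigma}}$: after conjugating by $\alpha$ to pass to the product state $\omega_{\Phi_0}$, where a cone-localized automorphism can be split across $C_{\theta,L}$-$C_{\theta,R}$ after slightly widening the cone — here the multi-directional form of the factorization property and Lemma \ref{alex} are used — and conjugating back, one obtains the desired form up to an inner automorphism; reversing the reduction of the second paragraph then yields (\ref{statefac}). The genuinely delicate part is this last step: one must keep the correction produced by moving $\beta^U_g$ past $\alpha$ simultaneously supported in a cone about the $x$-axis \emph{and} of tensor-product form across the $H_L$-$H_R$ cut — a property no generic automorphism of $\caA_{C_{\theta,L}}\otimes\caA_{C_{\theta,R}}$ possesses. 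This is why only $\beta^U_g$ (not the macroscopic $\beta_g$) and a thin factorization bow-tie may be pushed past $\alpha$, why the $\beta$-invariance of $\omega_\Phi$ is indispensable to cancel the macroscopic $H_D$-rotation, and why one establishes the weak state-level identity (\ref{statefac}) instead of the operator identity (\ref{autofac}) for $\alpha$ itself: passing to $\omega_\Phi$ lets one discard the part of the correction acting trivially on $\omega_{\Phi_0}$, and that is exactly what makes the tensor splitting possible.
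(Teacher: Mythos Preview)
Your argument has a genuine gap at a critical point: you assert that $\omega_\Phi$ (and $\omega_\Phi\circ\beta_g^D$) satisfy the split property with respect to the $H_U$--$H_D$ cut, on the grounds that they are of the form $\omega_{\Phi_0}\circ(\text{quasi-local automorphism})$. This is false in general. The factorization $\alpha=\inn\circ(\alpha_U\otimes\alpha_D)\circ\zeta$ has $\zeta$ supported in a bow-tie $(C'_{\theta'})^c$ whose intersection with each half-plane is \emph{unbounded}, so composing with $\zeta$ can (and does) destroy the $H_U$--$H_D$ split. In fact, if the split property across $H_U$--$H_D$ held, your own argument would go through with $\eta_{g,\sigma}=\id$: the restrictions of $\omega_\Phi$ and $\omega_\Phi\circ\beta_g^U$ to $\caA_{H_U}$ and to $\caA_{H_D}$ coincide, so split would give $\omega_\Phi\circ\beta_g^U\sim_{\rm q.e.}\omega_\Phi$, hence by Kadison transitivity $\omega_\Phi\circ\beta_g^U=\omega_\Phi\circ\Ad(V)$. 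But then Proposition~\ref{defuprop} allows $u(g,h)=\unit$ and the index of Theorem~\ref{cochdef} is trivial --- contradicting the Dijkgraaf--Witten models of subsection~\ref{dwsec}. So the very non-triviality of the invariant shows that the $H_U$--$H_D$ split fails for a generic $\Phi\in\caP_{U.G.\beta}^0$.

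The paper avoids this trap by never invoking split of $\omega_\Phi$ across $H_U$--$H_D$. Instead it decomposes both $\alpha\beta_g^U\alpha^{-1}$ and $\alpha\beta_g\alpha^{-1}$ (the \emph{full} symmetry), and uses $\beta$-invariance only to obtain equation~(\ref{dulr}), which exhibits the auxiliary state $\omega_U\circ\xi_U$ on $\caA_{H_U}$ as quasi-equivalent to a product with respect to two \emph{different} cuts \emph{within} $H_U$: the cut $C_{\theta',U}$ versus $(C_{\theta'})^c\cap H_U$, and the cut $H_U\cap H_L$ versus $H_U\cap H_R$. These splits do hold, and Lemma~\ref{splitlem5} --- applied twice --- is the tool that extracts from them the cone-localized, left-right factorized automorphisms $\tilde\eta_{g,\sigma}$. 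Your step ``absorb the residual bulk-$H_D$ rotation'' is precisely where this machinery is needed and where your shortcut via a global $H_U$--$H_D$ split collapses.
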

If $\alpha$ in
 the automorphic equivalence
$\omega_\Phi=\omega_{\Phi_0}\circ\alpha$
satisfies the property (\ref{autofac})
and if $\omega_{\Phi_0}$ is invariant under
$\beta_g^U$, then $\omega_{\Phi}$
satisfies 
the property (\ref{statefac}) 
for some $\tilde \eta_{g,\sigma}$, $\sigma=L,R$ on $\Aut\lmk\caA_{H_\sigma}\rmk$.
Indeed, in this case, 
\begin{align*}
\omega_{\Phi}
&=\omega_{\Phi_0}\circ\alpha=\omega_{\Phi_0}\circ\lmk \beta_g^U\rmk^{-1}
\alpha\beta_g^U 
\alpha^{-1}\circ\alpha\lmk \beta_g^U\rmk^{-1}\\
&=\omega_{\Phi_0}\circ \inn\lmk \eta_{g^{-1},L}\otimes\eta_{g^{-1},R}\rmk
\circ\alpha\lmk \beta_g^U\rmk^{-1}\\
&=\omega_{\Phi}\circ \alpha^{-1} \lmk
\eta_{g^{-1},L}\otimes\eta_{g^{-1},R}\rmk
\circ\alpha\lmk \beta_g^U\rmk^{-1}\circ\inn.
\end{align*}
Now, by the quasi-locality of $\alpha$, there are automorphisms
$\tilde \eta_{g,\sigma}$, $\sigma=L,R$ on $\Aut\lmk\caA_{H_\sigma}\rmk$
such that 
\begin{align*}
\alpha^{-1} \lmk
\eta_{g^{-1},L}\otimes\eta_{g^{-1},R}\rmk
\circ\alpha
=\lmk \tilde \eta_{g^{-1},L}\otimes\tilde\eta_{g^{-1},R}\rmk\circ\inn.
\end{align*} 
(Lemma \ref{alex}.)
Substituting this to above, we obtain (\ref{statefac}) 
with $\eta_{g,\sigma}$ replaced by $\tilde \eta_{g,\sigma}$.

Intuitively, Proposition \ref{betauprop} can be understood as follows.
Let us rewrite (\ref{statefac}) as
\begin{align*}
\omega_{\Phi}\circ\beta_g^U=\omega_{\Phi}\circ
\lmk \eta_{g^{-1},L}\otimes\eta_{g^{-1},R}\rmk\circ\inn.
\end{align*}
On the left-hand side, we are considering the action of $\beta_g^U$
on $\omega_{\Phi}$.
Recall that $\omega_{\Phi}$ is $\beta_g$-invariant, 
\begin{align*}
\omega_{\Phi}\circ\beta_g=\omega_{\Phi}, 
\end{align*}
because 
it is a unique ground state of a $\beta$-invariant interaction $\Phi\in\caP_{U.G.\beta}^0$.
Clearly, on the upper half-plane, the action of $\beta_g^U$
and $\beta_g$ are the same, while
on the lower half-plane, the action of $\beta_g^U$
and $\id$ are the same, i.e.,
\begin{align*}
\left. \beta_g^U\right\vert_{\caA_{H_U}}
=\left. \beta_g\right\vert_{\caA_{H_U}},\quad
\left. \beta_g^U\right\vert_{\caA_{H_D}}
=\id_{\caA_{H_D}}.
\end{align*}
Now recall that as a 
unique gapped ground state, $\omega_{\Phi}$
 shows exponential decay of correlation functions,
 namely, it is ``almost'' a product state. 
 (Theorem \ref{expdec}.)
 From this ``low-correlation" and the fact that $\omega_\Phi$ is $\beta_g$-invariant,
 we intuitively expect
 that  the effect of $\beta_g^U$ 
 on $\omega_{\Phi}$ should be localized around the 
 $x$-axis.
 The Proposition \ref{betauprop}
 guarantees that {\it it is the case}.

Let us now prove Proposition \ref{betauprop}.
\begin{proofof}[Proposition \ref{betauprop}]
Let $\Phi\in \caP_{U.G.\beta}^{0}$.
Then by the automorphic equivalence Theorem \ref{autoinf},
there is an automorphism $\alpha$ such that $\omega_{\Phi}=\omega_{\Phi_{0}}\circ\alpha$.
Using the factorization property of $\alpha$, for any $0<\theta'<\frac \pi 2$, we obtain the following
decompositions
\begin{align}\label{aahok}
\begin{split}
&\alpha\circ
\beta_g^U\circ \alpha^{-1}
=\inn\circ \lmk \id_{\caA_{H_D}}\otimes 
\xi_U
\rmk\circ \lmk \Xi_{H_U,L}\otimes \Xi_{H_U,R}\rmk,\\
&
\alpha\circ
\beta_g\circ \alpha^{-1}
=\inn\circ \lmk  \xi_D\otimes \xi_U\rmk
\circ  \lmk \Xi_{\bbZ^2, L}\otimes  
\Xi_{\bbZ^2,R}\rmk.
\end{split}
\end{align}
Here, $\xi_{U}$ (resp. $\xi_{D}$)
 is an automorphism localized on the upper (resp. lower) half-plane $H_{U}$
 (resp. $H_{D}$),
while   $\Xi_{H_U,L}$ and $\Xi_{\bbZ^2,L}$
(resp. $\Xi_{H_U,R}$ and $\Xi_{\bbZ^2,R}$)
 are automorphisms localized on $C_{\theta', L}$
 (resp. $C_{\theta', R}$).
 
 Because $\omega_{\Phi}$ is a unique ground state of a $\beta$-invariant interaction,
 it is $\beta$-invariant.
 Therefore, we have
 \begin{align}\label{invt}
\omega_{\Phi_0}\circ \alpha\circ
\beta_g\circ \alpha^{-1}
=\omega_{\Phi_0}.
\end{align}
From the decomposition (\ref{aahok}),
we then get
\begin{align}\label{dulr}
\omega_{D}\circ\xi_{D}\otimes \omega_{U}\circ\xi_{U}
=
\omega_{\Phi_0}\circ\lmk  \xi_D\otimes \xi_U\rmk
=\omega_{\Phi_0}  \circ\lmk \Xi_{\bbZ^2, L}^{-1}\otimes  
\Xi_{\bbZ^2,R}^{-1}\rmk\circ\inn
\end{align}
 Note that the left-hand side is of the tensor product for cut
 $H_{D}-H_{U}$, while the right-hand side is
 almost tensor product form for cut $\lmk H_{D}\cup C_{\theta'}\rmk-\lmk H_{D}\cup C_{\theta'}\rmk^{c}$.
From this, the pure state $ \omega_{U}\circ\xi_{U}$
on $\caA_{H_{U}}$
satisfies the split property with respect to
 $\caA_{\lmk C_{\theta',U}\rmk}$ and $\caA_{\lmk C_{\theta'}\rmk^{c}\cap H_{U}}$.
 Furthermore, the restriction of $\omega_{U}\circ\xi_{U}$ onto $\caA_{\lmk  C_{\theta'}\rmk^{c}\cap H_{U}}$ is quasi-equivalent to the restriction of the pure product state $\omega_{\Phi_0}$
 onto the same algebra.
 As a result (Lemma \ref{splitlem5}), there is an automorphism $S$ on $\caA_{\lmk C_{\theta,U}\rmk}$
 such that 
 \begin{align}\label{ouxu}
 \begin{split}
  &\omega_{U}\circ\xi_{U}=\omega_{\Phi_0}\circ \lmk S\otimes \id_{\caA_{\lmk \lmk C_{\theta'}\rmk^{c}\cap H_{U}\rmk}}\rmk\circ\inn\\
& =
\lmk
 \lmk \omega_{\Phi_0}\vert_{\caA_{\lmk C_{\theta',U}\rmk}}\circ  S\rmk\otimes 
 \lmk \omega_{\Phi_0}\vert_{\caA_{\lmk \lmk C_{\theta'}\rmk^{c}\cap H_{U}\rmk}}\rmk\rmk
\circ\inn.
 \end{split}
 \end{align}

  On the other hand, from (\ref{dulr}), $\omega_{U}\circ\xi_{U}$ 
 also satisfies the split property with respect to
 ${H_{U}\cap H_{L}}$-${H_{U}\cap H_{R}}$ cut.
 As a result (Lemma \ref{splitlem5}), there are automorphisms  $\tilde \eta_{\sigma,g}\in\Aut\lmk
\caA_{\lmk C_{\theta'}\rmk_\sigma}\rmk$, $g\in G$, $\sigma=L,R$
such that 
\begin{align}
\omega_{\Phi_0}\vert_{\caA_{\lmk C_{\theta',U}\rmk}}\circ  S
 =\lmk
 \lmk \omega_{\Phi_0}\vert_{\caA_{\lmk C_{\theta',U}\cap H_L\rmk}}\circ\tilde\eta_{L,g}\rmk
 \otimes 
 \lmk \omega_{\Phi_0}\vert_{\caA_{\lmk C_{\theta',U}\cap H_R\rmk}}\circ\tilde\eta_{R,g}\rmk
 \rmk\circ\inn.
\end{align} 
Substituting this to (\ref{ouxu}),
we obtain
\begin{align*}
 \omega_{U}\circ\xi_{U}=
 \lmk \lmk
 \omega\vert_{\caA_{H_{U}\cap H_{L}}}\circ \tilde\eta_{L,g}\rmk
 \otimes \lmk \omega\vert_{\caA_{H_{U}\cap H_{R}}}\circ\tilde\eta_{R,g}\rmk\rmk
 \circ \inn
\end{align*}
Now, from this and (\ref{aahok}), we obtain a decomposition of
$\omega_{\Phi}\circ \beta_g^U$:
\begin{align}
\begin{split}
&\omega_{\Phi}\circ \beta_g^U=
\omega_{\Phi_0}\circ\alpha\circ
\beta_g^U\circ \alpha^{-1}\circ\alpha
=
\lmk \omega_{D}\otimes 
\omega_{U}\xi_U
\rmk\circ \lmk \Xi_{H_U,L}\otimes \Xi_{H_U,R}\rmk\circ\inn\circ\alpha\\
&=\omega_{\Phi_0}\circ
\lmk \hat \eta_{g,L}\otimes\hat\eta_{g,R}\rmk\circ\alpha\circ\inn,
\end{split}
\end{align}
where $\hat \eta_{g,\sigma}$ is an automorphism localized in
${C_{\theta'},\sigma}$.
Because of the factorization property of $\alpha$ for any
$\theta$ with $\theta'<\theta<\frac\pi 2$,
we may find (Lemma \ref{alex}) automorphisms $\eta_{g^{-1},\sigma}$, $\sigma=L,R$ localized in
${C_{\theta},\sigma}$, satisfying
\begin{align}
\begin{split}
&\omega_{\Phi}\circ \beta_g^U=
\omega_{\Phi_0}\circ
\lmk \hat \eta_{g,L}\otimes\hat\eta_{g,R}\rmk\circ\alpha\circ\inn
=\omega_{\Phi}\circ\alpha^{-1}\circ \lmk \hat \eta_{g,L}\otimes\hat\eta_{g,R}\rmk\circ\alpha\circ\inn\\
&=\omega_{\Phi}\circ
\lmk \eta_{g^{-1},L}\otimes\eta_{g^{-1},R}\rmk\circ\inn.
\end{split}
\end{align}
Hence we have obtained the  Proposition \ref{betauprop}.
\end{proofof}
\subsection{Definition of $H^{3}(G,\Uo)$-valued index}\label{defh3}
Recall when an automorphism satisfies a nice condition
(\ref{autofac}), we could obtain 
a $H^{3}(G,\Uo)$-valued index
by considering $u(g,h)\in\caA$ 
implementing  $\gamma_g\gamma_h\gamma_{gh}^{-1}$
, for $\gamma_{g}:=\eta_{g,R}\beta_g^{U,R}$.
From our weaker version Proposition \ref{betauprop}, we cannot gurantee
the existence of such $u(g,h)$
inside of $\caA$.
However, we can get some unitary $u(g,h)$
on the GNS-Hilbert space $\caH_R$ of $\omega_R$,
implementing 
\begin{align*}
\gamma_g\gamma_h\gamma_{gh}^{-1}=\eta_{g,R}\beta_g^{RU}
 \eta_{h,R} \lmk \beta_g^{RU}\rmk^{-1}
 \lmk \eta_{gh,R}\rmk^{-1}
\end{align*}
 with respect to
 a representation $\pi_R\alpha_R$ of $\caA_R$.
This is the following proposition.
\begin{prop}\label{defuprop}{\cite{2dSPT}}
For $\Phi\in \caP_{U.G.\beta}^{0}$, and  $0<\theta<\frac\pi 2$,
let $\eta_{g,\sigma}$, $\sigma=L,R$ be
automorphisms given by Proposition \ref{betauprop}.
Let $\alpha$ be an automorphism on $\caA_{\bbZ^2}$
satisfying the factorization property 
$\alpha=\lmk\alpha_L\otimes\alpha_R\rmk\circ\Theta\circ\inn$
for $\alpha_\sigma\in\Aut(\caA_{H_\sigma})$, $\sigma=L,R$
and $\Theta\in \Aut\lmk \caA_{C_\theta^c}\rmk$.
Suppose that $\omega_{\Phi}=\omega_{\Phi_0}\circ \alpha$.

Then there is a unitary $u(g,h)$ on $\caH_R$ such that
 \begin{align}\label{urur}
 \Ad\lmk u(g,h)\rmk\pi_R\circ\alpha_R(A)=
\pi_R\circ \alpha_R\eta_{g,R}\beta_g^{RU}
 \eta_{h,R} \lmk \beta_g^{RU}\rmk^{-1}
 \lmk \eta_{gh,R}\rmk^{-1}(A),\quad A\in \caA_R.
\end{align}
\end{prop}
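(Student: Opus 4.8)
The plan is to reduce the assertion to Lemma \ref{elementlem}. Throughout write $\gamma_{g,\sigma}:=\eta_{g,\sigma}\beta_g^{\sigma U}\in\Aut\lmk\caA_{H_\sigma}\rmk$ for $\sigma=L,R$, and set $\Theta_{g,h}:=\gamma_{g,R}\gamma_{h,R}\gamma_{gh,R}^{-1}$ and $\Theta_{g,h}^{L}:=\gamma_{g,L}\gamma_{h,L}\gamma_{gh,L}^{-1}$. First I would record two elementary observations. Since $\beta^{RU}$ is a genuine on-site action of $G$ on $\caA_{H_R\cap H_U}$, the $\beta$-factors in $\Theta_{g,h}$ telescope, giving $\Theta_{g,h}=\eta_{g,R}\beta_g^{RU}\eta_{h,R}\lmk\beta_g^{RU}\rmk^{-1}\lmk\eta_{gh,R}\rmk^{-1}$, which is precisely the automorphism occurring in (\ref{urur}); and since conjugation by the site-wise automorphism $\beta_g^{RU}$ does not enlarge supports, $\Theta_{g,h}$ is localized in $C_{\theta,R}$, and likewise $\Theta_{g,h}^{L}$ in $C_{\theta,L}$. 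Consequently $\Theta_{g,h}^{L}\otimes\Theta_{g,h}$ is a genuine tensor-product automorphism of $\caA_{\bbZ^2}=\caA_{H_L}\otimes\caA_{H_R}$ with support in $C_\theta$, hence it commutes with the automorphism $\Theta\in\Aut\lmk\caA_{C_\theta^c}\rmk$ entering the factorization $\alpha=\lmk\alpha_L\otimes\alpha_R\rmk\circ\Theta\circ\inn$.

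Next I would feed three instances of Proposition \ref{betauprop} into each other. That proposition says $\lmk\gamma_{g,L}\otimes\gamma_{g,R}\rmk\circ\inn$ leaves $\omega_\Phi$ invariant for every $g$; composing the versions for $g$, for $h$ and the inverse of the version for $gh$, and moving every inner automorphism to one side (each time replacing its implementing unitary by its image under the automorphism it is transported past), one arrives at a unitary $W_0\in\caA_{\bbZ^2}$ with $\omega_\Phi\circ\lmk\Theta_{g,h}^{L}\otimes\Theta_{g,h}\rmk=\omega_\Phi\circ\Ad\lmk W_0\rmk$. Then I would substitute $\omega_\Phi=\omega_{\Phi_0}\circ\alpha$, $\alpha=\lmk\alpha_L\otimes\alpha_R\rmk\circ\Theta\circ\inn$, and $\omega_{\Phi_0}=\omega_L\otimes\omega_R$. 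Using that $\Theta_{g,h}^{L}\otimes\Theta_{g,h}$ commutes with $\Theta$ and can be transported through the inner automorphisms, and that $\Theta\circ\inn\circ\Theta^{-1}$ is again inner, I would cancel $\Theta$ and the residual inner parts to obtain, for a suitable unitary $W\in\caA_{\bbZ^2}$,
\[
\lmk \omega_L\circ\alpha_L\circ\Theta_{g,h}^{L}\rmk\otimes\lmk\omega_R\circ\alpha_R\circ\Theta_{g,h}\rmk
=\lmk\omega_L\circ\alpha_L\otimes\omega_R\circ\alpha_R\rmk\circ\Ad\lmk W\rmk .
\]

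This is exactly the hypothesis of Lemma \ref{elementlem}. The states $\omega_L\circ\alpha_L$ and $\omega_R\circ\alpha_R$ are pure, since purity is preserved by automorphisms, with GNS triples $\lmk\caH_L,\pi_L\circ\alpha_L,\Omega_L\rmk$ and $\lmk\caH_R,\pi_R\circ\alpha_R,\Omega_R\rmk$, and $\pi_R\circ\alpha_R$ is irreducible. Applying Lemma \ref{elementlem} with $\mkA=\caA_{H_L}$, $\mkB=\caA_{H_R}$, $\alpha_\mkA=\Theta_{g,h}^{L}$, $\alpha_\mkB=\Theta_{g,h}$ and $V=W$ yields a unitary $u(g,h)$ on $\caH_R$ with $\Ad\lmk u(g,h)\rmk\circ\pi_R\circ\alpha_R(A)=\pi_R\circ\alpha_R\circ\Theta_{g,h}(A)$ for all $A\in\caA_R$, which, by the telescoping identity for $\Theta_{g,h}$, is (\ref{urur}).

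I expect the main obstacle to be the second paragraph: carrying the inner automorphisms faithfully through the threefold composition of Proposition \ref{betauprop} and then through the factorization of $\alpha$, together with checking that $\Theta_{g,h}$ really is localized in $C_{\theta,R}$ so that the $C_\theta$ versus $C_\theta^c$ separation can be used to strip off $\Theta$. The remaining steps are either direct appeals to Lemma \ref{elementlem} or routine composition of automorphisms.
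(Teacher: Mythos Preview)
Your proposal is correct and follows essentially the same route as the paper. The only organizational difference is that the paper first conjugates by $\alpha$, defining $\tilde\beta_g:=\alpha\,(\eta_{g,L}\otimes\eta_{g,R})\,\beta_g^U\,\alpha^{-1}$ so that $\omega_{\Phi_0}\circ\tilde\beta_g=\omega_{\Phi_0}\circ\inn$, and then computes $\tilde\beta_g\tilde\beta_h\tilde\beta_{gh}^{-1}=\alpha\,(\Theta_{g,h}^L\otimes\Theta_{g,h})\,\alpha^{-1}$ directly; you instead compose three instances of Proposition~\ref{betauprop} at the level of $\omega_\Phi$ and only afterwards insert $\omega_\Phi=\omega_{\Phi_0}\circ\alpha$. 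Either way one arrives at the same tensor-product identity for $\omega_L\alpha_L\otimes\omega_R\alpha_R$ via the same $C_\theta$ versus $C_\theta^c$ commutation, and then invokes Lemma~\ref{elementlem}.
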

\begin{rem}
Setting $\gamma_{g}:=\eta_{g,R}\beta_g^{U,R}$,
we can regard this as a weak version of Lemma \ref{cco}.
\end{rem}
\begin{proof}
With $\tilde \beta_g:=\alpha \lmk \eta_{g,L}\otimes\eta_{g,R}\rmk \beta_g^U\alpha^{-1}$,
we get $\omega_{\Phi_0}\circ \tilde\beta_g:=\omega_{\Phi_0}\circ \inn$.
Substituting the definition of $\tilde\beta_g$ and
the factorization property $\alpha=\lmk\alpha_L\otimes\alpha_R\rmk\circ\Theta\circ\inn$, we get
\begin{align*}
&\omega_{\Phi_0}\circ \inn=
\omega_{\Phi_0}\circ\tilde\beta_g\tilde\beta_h\tilde\beta_{gh}^{-1}\\
&=\omega_{\Phi_0}\circ
\alpha \lmk
\eta_{g,L}\beta_g^{LU}
 \eta_{h,L} \lmk \beta_g^{LU}\rmk^{-1}
 \lmk \eta_{gh,L}\rmk^{-1}
 \otimes
 \eta_{g,R}\beta_g^{RU}
 \eta_{h,R} \lmk \beta_g^{RU}\rmk^{-1}
 \lmk \eta_{gh,R}\rmk^{-1}
\rmk
\alpha^{-1}\\
&=
\omega_{\Phi_0}\circ
\lmk\alpha_L\otimes\alpha_R\rmk \Theta\notag\\
&\quad\lmk
\eta_{g,L}\beta_g^{LU}
 \eta_{h,L} \lmk \beta_g^{LU}\rmk^{-1}
 \lmk \eta_{gh,L}\rmk^{-1}
 \otimes
 \eta_{g,R}\beta_g^{RU}
 \eta_{h,R} \lmk \beta_g^{RU}\rmk^{-1}
 \lmk \eta_{gh,R}\rmk^{-1}
\rmk\notag\\
&\quad\Theta^{-1}
\lmk\alpha_L\otimes\alpha_R\rmk ^{-1}\circ\inn.
\end{align*}
However, $\Theta$ is localized at $\lmk C_\theta\rmk^c$,
while $\eta_{g,R}\beta_g^{RU}
 \eta_{h,R} \lmk \beta_g^{RU}\rmk^{-1}
 \lmk \eta_{gh,R}\rmk^{-1}$ is localized at $C_\theta$.
 Therefore, they commute, and we obtain
 \begin{align*}
&\lmk  \omega_L\otimes\omega_R\rmk\circ \inn
 =\omega_{\Phi_0}\circ \inn\\
& =\omega_{\Phi_0}\circ 
\lmk
\alpha_L\eta_{g,L}\beta_g^{LU}
 \eta_{h,L} \lmk \beta_g^{LU}\rmk^{-1}
 \lmk \eta_{gh,L}\rmk^{-1}\alpha_L^{-1}
 \otimes
 \alpha_R\eta_{g,R}\beta_g^{RU}
 \eta_{h,R} \lmk \beta_g^{RU}\rmk^{-1}
 \lmk \eta_{gh,R}\rmk^{-1}\alpha_R ^{-1}
\rmk\\
&=\omega_L\circ 
\lmk
\alpha_L\eta_{g,L}\beta_g^{LU}
 \eta_{h,L} \lmk \beta_g^{LU}\rmk^{-1}
 \lmk \eta_{gh,L}\rmk^{-1}\alpha_L^{-1}\rmk\\
&\quad \otimes
 \omega_R\circ 
\lmk
\alpha_R\eta_{g,R}\beta_g^{RU}
 \eta_{h,R} \lmk \beta_g^{RU}\rmk^{-1}
 \lmk \eta_{gh,R}\rmk^{-1}\alpha_R^{-1}\rmk.
 \end{align*}
Because $\omega_L$ and $\omega_R$ are pure, we may apply
Lemma \ref{elementlem}.
As a result, there is a unitary $u(g,h)$ on $\caH_R$ such that
\begin{align*}
\begin{split}
&\Ad\lmk u(g,h)\rmk\pi_R(A)
=\pi_R\circ \alpha_R\eta_{g,R}\beta_g^{RU}
 \eta_{h,R} \lmk \beta_g^{RU}\rmk^{-1}
 \lmk \eta_{gh,R}\rmk^{-1}\alpha_R^{-1}(A)
 \end{split}
\end{align*}
This proves the Proposition.
\end{proof}
Note that in general,
$\gamma_g=\eta_{g,R}\beta_g^{RU}$ 
does not need to be implemented 
by a unitary for the representation$(\caH_R,\pi_R\circ\alpha_R)$.
In general, representations 
$\pi_R\circ\alpha_R$ and $\pi_R\circ\alpha_R\circ\gamma_g$ are 
mutually singular. In fact, if 
they are quasi-equivalent 
the third cohomology class obtained below will be $0$.
Intuitively, $\pi_R\circ\alpha_R$ and $\pi_R\circ\alpha_R\circ\gamma_g$
are {\it macroscopically different}.
However, from the above Proposition \ref{defuprop},
if we consider the
combination
$\gamma_g\gamma_h\gamma_{gh}^{-1}$,
it is implementable in $(\caH_R,\pi_R\circ\alpha_R)$,
namely,
$\pi_R\circ\alpha_R$ and $\pi_R\circ\alpha_R\circ\gamma_g\gamma_h\gamma_{gh}^{-1}$
are quasi-equivalent.
Physically, they are macroscopically the same.

This situation is parallel to the situation in subsection \ref{goodautosec}.
Recall in subsection \ref{goodautosec},
the fact that the original $\alpha\beta_g^U\alpha^{-1}$
is a genuine action result in the 
fact that the right half fraction 
$\gamma_g$ is a genuine action  modulo inner automorphisms of $\caA$.
Here, in Proposition \ref{defuprop},
the fact that the original $\alpha\beta_g^U\alpha^{-1}$
is a genuine action result in
the fact that $\gamma_g$ is a genuine action {\it modulo
unitaries on $\caH_R$}.
\\

Having this $u(g,h)$,
we expect something like (\ref{3coc})
\begin{align*}
u(g,h) u(gh,k)
=c(g,h,k)\gamma_g\lmk u(h,k)\rmk u (g,hk),\quad
g,h,k\in G.
\end{align*}
should holds.
One question occurs : our  $\gamma_{g}:=\eta_{g,R}\beta_g^{U,R}$
is an automorphism on $\caA_R$, and it
does not act on a bounded opearators
on $\caH_R$, in general.
In particular, we do not know an action of 
our  $\gamma_{g}:=\eta_{g,R}\beta_g^{U,R}$
on $u(g,h)$.
Fortunately, there is an alternative.
\begin{prop}\label{wprop}{\cite{2dSPT}}
Let $\Phi\in \caP_{U.G.\beta}^{0}$, and  $0<\theta<\frac\pi 2$,
and $\eta_{g,\sigma}$, $\sigma=L,R$ be
automorphisms given by Proposition \ref{betauprop}.
Let $\alpha$ be an automorphism on $\caA_{\bbZ^2}$
satisfying the factorization property 
$\alpha=\lmk\alpha_L\otimes\alpha_R\rmk\circ\Theta\circ\inn$
with $\alpha_\sigma\in\Aut(\caA_{H_\sigma})$, $\sigma=L,R$
and $\Theta\in \Aut\lmk \caA_{C_\theta^c}\rmk$.
Suppose that $\omega_{\Phi}=\omega_{\Phi_0}\circ \alpha$.
Then there are unitaries $W_g$, $g\in G$ on $\caH_0$
such that
\begin{align}\label{wimp}
\Ad\lmk W_g\rmk\circ\pi_0
=\pi_0\circ\lmk \alpha_0\rmk\circ\Theta\circ\eta_g\beta_g^U\circ\Theta^{-1}\circ\alpha_0^{-1},\quad
g\in G,
\end{align}
where $\alpha_0:=\alpha_L\otimes\alpha_R$
and $\eta_g:=\eta_{g,L}\otimes\eta_{g, R}$.
Furthermore, there is a $c\in Z^3(G,\bbT)$
such that
\begin{align}\label{uwc}
\unit_{\caH_L}\otimes u(g,h) u(gh,k)
=c(g,h,k)
\lmk W_g\lmk \unit_{\caH_L}\otimes u(h,k)\rmk W_g^*\rmk
\lmk \unit_{\caH_L}\otimes u(g,hk)\rmk,
\end{align}
for all $g,h,k\in G$.
\end{prop}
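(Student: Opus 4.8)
The plan is to produce the unitaries $W_g$ from Theorem \ref{gnsthm}, and then to read off the $3$-cocycle by transcribing the abstract computation of subsection \ref{gker}, the new ingredient being a careful use of cone-localization to slide $\Theta$ past the relevant automorphisms. Throughout, write $\psi_g:=\alpha_0\circ\Theta\circ(\eta_g\beta_g^U)\circ\Theta^{-1}\circ\alpha_0^{-1}$, noting $\eta_g\beta_g^U=\eta_{g,L}\beta_g^{LU}\otimes\eta_{g,R}\beta_g^{RU}$. First I would get the $W_g$: by (\ref{statefac}) we have $\omega_\Phi=\omega_\Phi\circ(\eta_g\beta_g^U)\circ\Ad(V_g)$ with $V_g\in\caA_{\bbZ^2}$ unitary; inserting $\omega_\Phi=\omega_{\Phi_0}\circ\alpha$ and the factorization $\alpha=\alpha_0\circ\Theta\circ\Ad(W_\alpha)$, and pushing the inner parts out through $\alpha_0\circ\Theta$ (using repeatedly $\phi\circ\Ad(a)=\Ad(\phi(a))\circ\phi$, which produces new unitaries of $\caA_{\bbZ^2}$ because $\Theta$ and $\alpha_0$ are automorphisms of $\caA_{\bbZ^2}$), one finds a unitary $v_g\in\caA_{\bbZ^2}$ with $\alpha\circ(\eta_g\beta_g^U)\circ\Ad(V_g)\circ\alpha^{-1}=\Ad(v_g)\circ\psi_g$. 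Hence $\omega_{\Phi_0}$ is invariant under $\Ad(v_g)\circ\psi_g$, so Theorem \ref{gnsthm} yields a unitary $\tilde W_g$ on $\caH_0$ with $\Ad(\tilde W_g)\circ\pi_0=\pi_0\circ\Ad(v_g)\circ\psi_g=\Ad(\pi_0(v_g))\circ\pi_0\circ\psi_g$, and $W_g:=\pi_0(v_g)^*\tilde W_g$ satisfies (\ref{wimp}). This fixes $W_g$ up to a phase.

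Next I would analyse how $W_g$ conjugates $\unit_{\caH_L}\otimes u(h,k)$ and how the $W_g$ compose. Using $\beta_g^U\beta_h^U=\beta_{gh}^U$ one gets $\eta_g\beta_g^U\eta_h\beta_h^U(\eta_{gh}\beta_{gh}^U)^{-1}=\eta_g\beta_g^U\eta_h(\beta_g^U)^{-1}\eta_{gh}^{-1}=\mu_L(g,h)\otimes\mu(g,h)$, where $\gamma_g:=\eta_{g,R}\beta_g^{RU}$ is an automorphism of $\caA_{H_R}$ localized in $C_{\theta,R}$ (since $\eta_{g,R}$ is localized there and $\beta_g^{RU}$ is on-site, so conjugation by it does not enlarge supports), $\mu(g,h):=\gamma_g\gamma_h\gamma_{gh}^{-1}=\eta_{g,R}\beta_g^{RU}\eta_{h,R}(\beta_g^{RU})^{-1}\eta_{gh,R}^{-1}$ is again localized in $C_{\theta,R}$, and $\mu_L(g,h)$ is an automorphism of $\caA_{H_L}$ localized in $C_{\theta,L}$. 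As $\mu_L(g,h)\otimes\mu(g,h)$ is localized in $C_\theta$ it commutes with $\Theta\in\Aut\caA_{C_\theta^c}$, whence
\[
\psi_g\psi_h\psi_{gh}^{-1}=\alpha_0\circ\lmk\mu_L(g,h)\otimes\mu(g,h)\rmk\circ\alpha_0^{-1}=\lmk\alpha_L\mu_L(g,h)\alpha_L^{-1}\rmk\otimes\lmk\alpha_R\mu(g,h)\alpha_R^{-1}\rmk.
\]
By (\ref{wimp}), $W_gW_hW_{gh}^{-1}$ implements this product-form automorphism via $\pi_0=\pi_L\otimes\pi_R$; its right tensor factor is implemented on $\caH_R$ by $u(g,h)$ (Proposition \ref{defuprop}), and its left factor by a unitary $u_L(g,h)$ on $\caH_L$ coming from the left-hand counterpart of Proposition \ref{defuprop} (Lemma \ref{elementlem} applied with the two half-planes interchanged). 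Irreducibility of $\pi_0$ then forces $W_gW_hW_{gh}^{-1}=c''(g,h)\,u_L(g,h)\otimes u(g,h)$ for a phase $c''(g,h)$. Likewise, from $\psi_g\circ(\id_{\caA_{H_L}}\otimes\mu(h,k))\circ\psi_g^{-1}=\id_{\caA_{H_L}}\otimes(\gamma_g\mu(h,k)\gamma_g^{-1})$ — valid because both $\id_{\caA_{H_L}}\otimes\mu(h,k)$ and $\id_{\caA_{H_L}}\otimes(\gamma_g\mu(h,k)\gamma_g^{-1})$ are localized in $C_\theta$ and hence commute with $\Theta$ — one gets that $W_g(\unit_{\caH_L}\otimes u(h,k))W_g^*$ implements $\id_{\caA_{H_L}}\otimes(\alpha_R\gamma_g\mu(h,k)\gamma_g^{-1}\alpha_R^{-1})$ via $\pi_0$; being trivial on $\caA_{H_L}$ it lies in $\unit_{\caH_L}\otimes\caB(\caH_R)$ (irreducibility of $\pi_L$), say $W_g(\unit_{\caH_L}\otimes u(h,k))W_g^*=\unit_{\caH_L}\otimes v(g;h,k)$ with $\Ad(v(g;h,k))\circ\pi_R\alpha_R=\pi_R\alpha_R\circ(\gamma_g\mu(h,k)\gamma_g^{-1})$.

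Now I would assemble (\ref{uwc}) and verify $c\in Z^3(G,\bbT)$. The representation $\pi_R\circ\alpha_R$ is irreducible, being the GNS representation of the pure state $\omega_R\circ\alpha_R$ up to unitary equivalence. Since $\mu(g,h)=\gamma_g\gamma_h\gamma_{gh}^{-1}$, both $u(g,h)u(gh,k)$ and $v(g;h,k)u(g,hk)$ implement $\gamma_g\gamma_h\gamma_k\gamma_{ghk}^{-1}$ via $\pi_R\alpha_R$, so they coincide up to a phase $c(g,h,k)\in\bbT$; tensoring with $\unit_{\caH_L}$ and recalling $\unit_{\caH_L}\otimes v(g;h,k)=W_g(\unit_{\caH_L}\otimes u(h,k))W_g^*$ gives (\ref{uwc}). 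For the cocycle identity, observe that $W_gW_h(\unit_{\caH_L}\otimes u(k,f))W_h^*W_g^*=\Ad(\unit_{\caH_L}\otimes u(g,h))\lmk W_{gh}(\unit_{\caH_L}\otimes u(k,f))W_{gh}^*\rmk$: this follows from $W_gW_hW_{gh}^{-1}=c''(g,h)\,u_L(g,h)\otimes u(g,h)$ together with $W_{gh}(\unit_{\caH_L}\otimes u(k,f))W_{gh}^*\in\unit_{\caH_L}\otimes\caB(\caH_R)$, the $u_L(g,h)$-leg dropping out because it acts on the first tensor factor. This is precisely the relation supplied by (\ref{grel}) in subsection \ref{gker}. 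Hence the chain of equalities in subsection \ref{gker} that ends in the $3$-cocycle condition reproduces verbatim, with $\gamma_g(\,\cdot\,)$ replaced by $W_g(\unit_{\caH_L}\otimes\,\cdot\,)W_g^*$, with $u(g,h)$ replaced by $\unit_{\caH_L}\otimes u(g,h)$, and with (\ref{uwc}) in the role of (\ref{3coc}); this yields $c\in Z^3(G,\bbT)$.

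I expect the main obstacle to be the localization bookkeeping underlying the second paragraph: one must confirm that \emph{every} automorphism that has to be slid past $\Theta$ — namely $\mu_L(g,h)\otimes\mu(g,h)$, $\id_{\caA_{H_L}}\otimes\mu(h,k)$, $\id_{\caA_{H_L}}\otimes(\gamma_g\mu(h,k)\gamma_g^{-1})$, and the iterated expressions of the same shape that surface in the cocycle computation — is genuinely supported inside the cone $C_\theta$; this rests on $\eta_{g,\sigma}$ being localized in $C_{\theta,\sigma}$ and on $\beta_g^U$ being on-site, so that conjugation by $\beta_g^U$ does not enlarge supports. Together with the consistent tracking of the $U(1)$-phases, all of this is kept under control by working throughout inside the irreducible representations $\pi_0$ and $\pi_R\circ\alpha_R$.
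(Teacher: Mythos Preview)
Your proposal is correct and follows essentially the same route as the paper: obtain $W_g$ from Theorem~\ref{gnsthm} applied to the $\tilde\beta_g$-invariance of $\omega_{\Phi_0}$, extract $c(g,h,k)$ by comparing two implementers of $\gamma_g\gamma_h\gamma_k\gamma_{ghk}^{-1}$ in the irreducible representation $\pi_R\alpha_R$, and deduce the $3$-cocycle relation from the identity $\Ad(W_gW_h)(\unit\otimes u(k,f))=\Ad((\unit\otimes u(g,h))W_{gh})(\unit\otimes u(k,f))$, which is exactly the paper's equation (\ref{yon}). One slip to correct: $\gamma_g=\eta_{g,R}\beta_g^{RU}$ is \emph{not} localized in $C_{\theta,R}$, since $\beta_g^{RU}$ acts nontrivially on all of $H_R\cap H_U$; what \emph{is} localized there --- and what your argument actually uses --- are the combinations $\mu(g,h)=\eta_{g,R}\bigl(\beta_g^{RU}\eta_{h,R}(\beta_g^{RU})^{-1}\bigr)\eta_{gh,R}^{-1}$ and $\gamma_g\mu(h,k)\gamma_g^{-1}$, because conjugation by the on-site $\beta_g^{RU}$ preserves the support of each $\eta_{\cdot,R}$.
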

\begin{proof}
As in Proposition \ref{defuprop},
an automorphism 
\[
\tilde \beta_g:=\alpha \lmk \eta_{g,L}\otimes\eta_{g,R}\rmk \beta_g^U\alpha^{-1}
=\alpha_0 \Theta\circ \lmk \eta_{g,L}\otimes\eta_{g,R}\rmk \beta_g^U
\Theta^{-1}\alpha_0^{-1}\circ\inn
\]
satisfies
$\omega_{\Phi_0}\circ \tilde\beta_g:=\omega_{\Phi_0}\circ \inn$.
It means $\omega_{\Phi_0}$ is invariant under
an automorphism 
\[
\alpha_0 \Theta\circ \lmk \eta_{g,L}\otimes\eta_{g,R}\rmk \beta_g^U
\Theta^{-1}\alpha_0^{-1}\circ\Ad(V_g)
\]
with some unitary $V_g\in\caA$.
Applying Theorem \ref{gnsthm},
we obtain $\tilde W_g$ satisfying
\begin{align*}
\Ad(\tilde W_g)\circ \pi_0
=\pi_0\circ
\alpha_0 \Theta\circ \lmk \eta_{g,L}\otimes\eta_{g,R}\rmk \beta_g^U
\Theta^{-1}\alpha_0^{-1}\circ\Ad(V_g).
\end{align*}
Absorbing $V_g$ part, we obtain $W_g$ satisfying (\ref{wimp}).

Next we prove that there is a number $c(g,h,k)\in\bbT$
satisfying (\ref{uwc}).
From Proposition \ref{defuprop}, we have
\begin{align}\label{stars}
\begin{split}
&\Ad\lmk \unit_{\caH_L}\otimes u(g,h) u(gh,k)
\rmk\pi_0\\
&=\pi_L\otimes \pi_R \circ \alpha_R\circ \lmk \eta_{g,R}\beta_g^{RU}\rmk
\lmk \eta_{h,R}\beta_h^{RU}\rmk
\lmk \eta_k^R\beta_k^{RU}\rmk
\lmk \eta_{ghk}^R\beta_{ghk}^{RU}\rmk^{-1}
\alpha_R^{-1}.
\end{split}
\end{align}
On the other hand,
from (\ref{wimp}) and (\ref{urur}), using the fact that the support of
$\Theta$ and $\eta$s are disjoint (as in the proof of Proposition \ref{betauprop}), we obtain
\begin{align*}
&\Ad\lmk W_g\lmk \unit_{\caH_L}\otimes u(h,k)\rmk W_g^*\rmk\circ\pi_0\\
&=\pi_L\otimes 
\pi_R\lmk
\alpha_R\eta_{g,R}\beta_g^{RU}
\lmk
\eta_{h,R}\beta_h^{R U}
\eta_k^R\lmk\beta_h^{R U}\rmk^{-1}\lmk \eta_{hk}^R\rmk^{-1}
\rmk
\lmk \eta_{g,R}\beta_g^{RU}\rmk^{-1}
\alpha_R^{-1}
\rmk
\end{align*}
 for all $g,h,k\in G$.
 Using this and
 Proposition  \ref{betauprop},
 we see that
\begin{align}
\Ad\lmk \lmk W_g\lmk \unit_{\caH_L}\otimes u(h,k)\rmk W_g^*\rmk
\lmk \unit_{\caH_L}\otimes u(g,hk)\rmk\rmk
\pi_0
\end{align}
is also equal to the right-hand side of (\ref{stars}).
Because
 $\pi_0$ is irreducible, this means that there is a number $c(g,h,k)\in\bbT$
satisfying (\ref{uwc}).

The proof that $c$ is a $3$-cocycle
is analogous to that in subsectoin \ref{gker}.
In order to carry the argument there out,
we just need the following:
\begin{align}\label{yon}
\Ad\lmk W_g W_h\rmk\lmk  \unit_{\caH_L}\otimes u(k,f)\rmk
=\lmk
\Ad\lmk  \lmk \unit_{\caH_L}\otimes u(g,h)\rmk W_{gh}\rmk
\rmk
\lmk \unit_{\caH_L}\otimes u(k,f)\rmk.
\end{align}
To see this, first we note from
(\ref{uwc}) that 
\begin{align}\label{hana}
\Ad\lmk   W_{gh}\rmk
\lmk \unit_{\caH_L}\otimes u(k,f)\rmk
\in \unit_{\caH_L}\otimes \caB(\caH_R),
\end{align}
for any $g,h,k,f\in G$.
We get
\begin{align*}
\begin{split}
&\Ad\lmk W_gW_hW_{gh}^*
\rmk\lmk
\unit_{\caH_L}\otimes \pi_R(A)
\rmk\\
&=\unit_{\caH_L}\otimes
\pi_R\circ\alpha_R
\circ\eta_{g,R}\beta_g^{RU}\eta_{h,R}\lmk \beta_g^{RU}\rmk^{-1}\eta_{gh,R}^{-1}\circ
\alpha_R^{-1}(A)\\
&=
\unit_{\caH_L}\otimes
\Ad u(g,h)\circ \pi_R(A),\end{split}
\end{align*}
for all $A\in\caA_R$,
again, using the commutativity of $\Theta$ and $\eta$s.
Because $\pi_R$ is irreducible,
this means
\begin{align*}
W_gW_hW_{gh}^*\lmk \unit_{\caH_L}\otimes u(g,h)\rmk^*
\in \caB(\caH_L)\otimes \unit_{\caH_R}. 
\end{align*}
Hence using (\ref{hana}),  we obtain
\begin{align*}
\begin{split}
&\Ad\lmk
W_gW_hW_{gh}^*\lmk \unit_{\caH_L}\otimes u(g,h)\rmk^*
\rmk
\lmk
\Ad\lmk  \lmk \unit_{\caH_L}\otimes u(g,h)\rmk W_{gh}\rmk
\rmk
\lmk \unit_{\caH_L}\otimes u(k,f)\rmk\\
&=\lmk
\Ad\lmk  \lmk \unit_{\caH_L}\otimes u(g,h)\rmk W_{gh}\rmk
\rmk
\lmk \unit_{\caH_L}\otimes u(k,f)\rmk.
\end{split}
\end{align*}
This implies the required property (\ref{yon}).
\end{proof}
Hence we have obtained 
some third group cohomology.
In order to define it, we made a lot of choices.
First of all, there can be many path to connect $\Phi$
to $\Phi_0$. According to that, 
there can be many choices for $\alpha$.
The factorization property was key to define the index,
but there is a freedom of the direction $\theta$
to cut the plane, as well as the choice of
$\alpha_R$, $\alpha_L$, $\Theta$.
Furthermore, the choice of $\eta_{g,L},\eta_{g,R}$
of Proposition \ref{betauprop} is neither unique.
Does resulting third cohomology class obtained in Proposition \ref{wprop}
depend on such choices?
Fortunately, the answer turns out to be no.
\begin{thm}\label{cochdef}{\cite{2dSPT}}
Let $\Phi\in \caP_{U.G.\beta}^{0}$.
The third cohomology class $[c]_{H^3(G,\Uo)}$ obtained from Proposition \ref{wprop} 
does not depend on the choice of
$\alpha$, 
$0<\theta<\frac\pi 2$,
 $\alpha_\sigma\in\Aut(\caA_{H_\sigma})$, $\sigma=L,R$,
$\Theta\in \Aut\lmk \caA_{C_\theta^c}\rmk$, 
$W_g$, $u(g,h)$, 
and $\eta_{g,\sigma}$, $\sigma=L,R$.
\end{thm}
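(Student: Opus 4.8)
The plan is to reduce the statement to a sequence of \emph{elementary moves}: I would show that altering exactly one of the listed choices, while holding all the others fixed, changes the resulting $3$-cocycle $c\in Z^3(G,\bbT)$ only by a coboundary; since any two admissible configurations of choices are connected by a chain of such moves, this yields the claimed independence of $[c]_{H^3(G,\Uo)}$. I would order the moves from the ``innermost'' data outward: first the implementers $u(g,h)$ and $W_g$, then the cone direction $\theta$ and the automorphisms $\eta_{g,\sigma}$ and factorization data $\alpha_\sigma,\Theta$ attached to a fixed $\alpha$, and finally the automorphism $\alpha$ itself.

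\emph{Move 1 (implementers).} With everything else fixed, $u(g,h)$ is determined by $(\ref{urur})$ up to a phase $\nu(g,h)\in\bbT$ (because $\pi_R$ acts irreducibly on $\caH_R$) and $W_g$ is determined by $(\ref{wimp})$ up to a phase $\mu_g\in\bbT$ (because $\pi_0$ is irreducible). Substituting $u(g,h)\mapsto\nu(g,h)u(g,h)$ and $W_g\mapsto\mu_g W_g$ into $(\ref{uwc})$, the $\mu_g$'s cancel and one finds $\tilde c(g,h,k)=c(g,h,k)\,\overline{\nu(h,k)\nu(g,hk)}\,\nu(g,h)\nu(gh,k)$, so $[\tilde c]=[c]$. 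This is the routine $\Uo$-ambiguity computation, identical in spirit to the one at the end of subsection \ref{gker}.

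\emph{Move 2 ($\eta_{g,\sigma}$, $\alpha_\sigma$, $\Theta$ with $\alpha$ and $\theta$ fixed).} If $\eta_{g,\sigma}$ and $\eta_{g,\sigma}'$ both satisfy $(\ref{statefac})$, then modulo inner automorphisms $\omega_\Phi$ is invariant under $\lmk\eta_{g,L}'\eta_{g,L}^{-1}\rmk\otimes\lmk\eta_{g,R}'\eta_{g,R}^{-1}\rmk$, an automorphism localized in $C_\theta$; feeding this through Proposition \ref{defuprop}, Proposition \ref{betauprop} and Lemma \ref{elementlem} expresses the new $u'(g,h)$ in terms of $u(g,h)$ together with unitaries implementing the $\eta'\eta^{-1}$-pieces, and the non-multiplicativity of the latter is again a coboundary. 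For the factorization data, two decompositions $\alpha=\lmk\alpha_L\otimes\alpha_R\rmk\circ\Theta\circ\inn=\lmk\alpha_L'\otimes\alpha_R'\rmk\circ\Theta'\circ\inn$ induce unitary equivalences $\pi_R\circ\alpha_R\cong\pi_R\circ\alpha_R'$ on $\caH_R$ and $\pi_0\circ\alpha_0\cong\pi_0\circ\alpha_0'$ on $\caH_0$ (they implement the same states); transporting $u(g,h)$ and $W_g$ along these unitaries leaves $(\ref{uwc})$, and hence $[c]$, unchanged.

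\emph{Move 3 (the direction $\theta$), and Move 4 (the automorphism $\alpha$).} For $0<\theta<\theta'<\tfrac\pi2$, anything localized in $C_\theta$ is a fortiori localized in $C_{\theta'}$, so the $\theta$-data is admissible $\theta'$-data (shrinking the support of $\Theta$ to $C_{\theta'}^c$, which is covered by Move 2), and the associated $u(g,h),W_g,c$ are literally unchanged; for arbitrary $\theta,\theta'$ one compares both to a common $\theta''>\max(\theta,\theta')$, using Lemma \ref{alex} to make the various cone-cuts of $\alpha$ mutually compatible. Finally, for $\alpha,\alpha'$ with $\omega_{\Phi_0}\circ\alpha=\omega_\Phi=\omega_{\Phi_0}\circ\alpha'$, the automorphism $\rho:=\alpha'\circ\alpha^{-1}$ fixes the pure product state $\omega_{\Phi_0}$ and is quasi-local, hence admits a factorization $\rho=\inn\circ\lmk\rho_L\otimes\rho_R\rmk\circ\Theta_\rho$ along $C_\theta$; as in the proof of Proposition \ref{betauprop}, purity of $\omega_{\Phi_0}$ forces $\rho_R$ to essentially preserve $\omega_R$ and $\rho_L\otimes\rho_R$ (after absorbing $\Theta_\rho$) to preserve $\omega_{\Phi_0}$, so by Theorem \ref{gnsthm} they are implemented by unitaries $Y_R$ on $\caH_R$ and $Y_0$ on $\caH_0$. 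Taking $\alpha_R'=\rho_R\circ\alpha_R$ etc., a compatible choice of $\eta_{g,\sigma}'$ via Move 2, $u'(g,h)=\Ad(Y_R)(u(g,h))$ and $W_g'=\Ad(Y_0)(W_g)$ satisfy the analogues of $(\ref{urur})$ and $(\ref{wimp})$, and $(\ref{uwc})$ is preserved verbatim.

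\emph{Expected main obstacle.} I expect Move 4 to be the crux, for two reasons: (i) one must show that $\rho=\alpha'\circ\alpha^{-1}$ genuinely factorizes so that the left and right pieces control $\omega_L$ and $\omega_R$ \emph{separately} — this again leans on the factorization/locality input together with the purity argument used for Proposition \ref{betauprop}; and (ii) one must carefully track the supports of $\Theta_\rho$, $\Theta$, the $\eta_{g,\sigma}$'s, and the $\beta_g^U$-pieces (cone- versus half-plane-localization) under the composition of two independent quasi-local automorphisms, so that all the commutations invoked in the proof of Proposition \ref{wprop} — in particular the disjointness of the supports of $\Theta$ and the $\eta$'s — remain valid after the transport. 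This bookkeeping, rather than any single hard estimate, is where the bulk of the work lies.
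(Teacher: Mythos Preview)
The paper does not actually prove Theorem~\ref{cochdef}; it is stated with a citation to \cite{2dSPT} and then used. So there is no in-paper proof to compare against, and your elementary-moves strategy is indeed the natural approach (and is the one taken in the cited reference). Move~1 is entirely correct and complete; Moves~2 and~3 are sketched in the right spirit, relying on the irreducibility of $\pi_R,\pi_0$ and the disjoint-support commutations already used in the proof of Proposition~\ref{wprop}.

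There is, however, a genuine gap in Move~4. You write that $\rho:=\alpha'\circ\alpha^{-1}$ ``is quasi-local, hence admits a factorization $\rho=\inn\circ(\rho_L\otimes\rho_R)\circ\Theta_\rho$ along $C_\theta$''. But the hypotheses of Proposition~\ref{wprop} (and hence of Theorem~\ref{cochdef}) require only that $\alpha$ and $\alpha'$ \emph{individually} satisfy the factorization property $(\ref{tfac})$; they are not assumed to come from automorphic equivalence, and the factorization property is not closed under composition or inversion in any obvious way. Writing out $\alpha'\alpha^{-1}$ in terms of the two factorizations gives an expression in which $\Theta'$ and $(\alpha_L\otimes\alpha_R)^{-1}$ are interleaved with no common support structure, so you cannot simply read off a $(\rho_L\otimes\rho_R)\circ\Theta_\rho$ decomposition. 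Consequently the existence of the implementing unitaries $Y_R,Y_0$ you invoke is not established, and the transport $u'(g,h)=\Ad(Y_R)(u(g,h))$, $W_g'=\Ad(Y_0)(W_g)$ is not available.

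The way around this, which you should expect to need, is to avoid comparing $\alpha$ and $\alpha'$ through $\rho$ at all. Instead one shows that the class $[c]$ depends only on data intrinsic to $\omega_\Phi$ (via its restrictions to suitable cone algebras and the associated GNS representations), using Lemma~\ref{ih} and the quasi-equivalence statements in Proposition~\ref{splitcharac}/Lemma~\ref{splitlem5} to pass between the representations $\pi_R\circ\alpha_R$ and $\pi_R\circ\alpha_R'$ without ever needing $\alpha'\alpha^{-1}$ to factorize. Your own ``expected main obstacle'' paragraph correctly locates the difficulty in the support bookkeeping, but the resolution is not more careful bookkeeping of $\rho$ --- it is to replace the $\rho$-argument by a representation-theoretic one.
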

\begin{defn}{\cite{2dSPT}}
We denote by $h(\Phi)$, the third cohomology class $[c]_{H^3(G,\Uo)}$ obtained
in Theorem \ref{cochdef}.
\end{defn}

\subsection{What $h(\Phi)$ represents}
Now let us investigate what $h(\Phi)$ represents.
In $1$-dimension, the $H^2(G,\Uo)$ represents how the left-half
and right-half of the chains are correlated, in terms of 
group action.
We can have an analogous observation in $2$-dimensional systems.
Let us denote $u(g,h)$ $c(g,h,k)$ given in the previous subsection by $u_R(g,h)$ and $c_R(g,h,k)$.
The same argument gives us $u_L(g,h)$ and $c_L(g,h,k)$, for left half-plane.
Hence we get
 \begin{align}\label{udefs}
 \Ad\lmk u_\sigma(g,h)\rmk\pi_\sigma \circ\alpha_\sigma (A)=
\pi_\sigma \circ \alpha_\sigma \eta_{g,\sigma }\beta_g^{\sigma U}
 \eta_{h,\sigma } \lmk \beta_g^{\sigma U}\rmk^{-1}
 \lmk \eta_{gh,\sigma }\rmk^{-1}(A),\quad A\in \caA_\sigma .
\end{align}
for $\sigma=L,R$.
By the definition of $W_g$, and the commutativity of $\Theta$ and $\eta$s, we have
\begin{align*}
&\Ad\lmk
{W_gW_hW_{gh}^*}
\rmk\circ \bigotimes_{\sigma=L,R}\pi_\sigma=
\bigotimes_{\sigma=L,R}
\pi_\sigma \circ \alpha_\sigma \eta_{g,\sigma }\beta_g^{\sigma U}
 \eta_{h,\sigma } \lmk \beta_g^{\sigma U}\rmk^{-1}
 \lmk \eta_{gh,\sigma }\rmk^{-1}\alpha_\sigma^{-1}.
\end{align*}
Comparing these equations, from the irreducibility of $\pi_0$, we have
\begin{align*}
\Ad\lmk
{W_gW_hW_{gh}^*}
\rmk(x)
=\Ad\lmk {u_L(g,h)}\otimes { u_R(g,h)}\rmk(x),\quad x\in \caB(\caH_0).
\end{align*}
This means there is some
$\sigma(g,h)\in\Uo$ such that
\begin{align*}
W_gW_hW_{gh}^*=\sigma(g,h)u_L(g,h)\otimes u_R(g,h).
\end{align*}
Absorbing the $\Uo$-phase to $u_L(g,h)$,
we may set $\sigma(g,h)=1$.
Hence we get
\begin{align*}
W_gW_hW_{gh}^*=u_L(g,h)\otimes u_R(g,h).
\end{align*}
By the same argument in the previous section for the left half-plane,
there are
$c_R, c_L\in Z^3(G,\Uo)$
such that
\begin{align}\label{ulur}
\begin{split}
&\unit_{\caH_L}\otimes u_R(g,h) u_R(gh,k)
={c_R(g,h,k)}
\lmk W_g\lmk \unit_{\caH_L}\otimes u_R(h,k)\rmk W_g^*\rmk
\lmk \unit_{\caH_L}\otimes u_R(g,hk)\rmk\\
& u_L(g,h) u_L(gh,k)\otimes\unit_{\caH_R}
={c_L(g,h,k)}
\lmk W_g\lmk u_L(h,k)\otimes \unit_{\caH_R} \rmk W_g^*\rmk
\lmk  u_L(g,hk)\otimes\unit_{\caH_L}\rmk.
\end{split}
\end{align}
Now we set
\[
V(g,h):=W_gW_hW_{gh}^*.
\]
Clearly, it satisfies the {$2$-cocycle relation}
\[
{V(g,h)}{V(gh,k)}=\Ad(W_g)\lmk {V(h,k)}\rmk\cdot 
{V(g,hk)},
\]
because
\[
\lmk{ W_gW_hW_{gh}^*}\rmk 
\lmk{ W_{gh}W_kW_{ghk}^*}\rmk 
=
W_g \lmk{ W_hW_kW_{hk}^*}\rmk W_g^*
\lmk{ W_gW_{hk}W_{ghk}^*}\rmk .
\]
On the other hand, $u_L, u_R$ satisfies the {obstructed versions},
(\ref{ulur}).
 Now, substituting $V(g,h)=W_gW_hW_{gh}^*=u_L(g,h)\otimes u_R(g,h)$,
we get 
\begin{align*}
[c_L]_{H^3(G,\Uo)}\cdot[c_R]_{H^3(G,\Uo)}=1.
\end{align*}
From this, we can interpret that 
what we see via the third cohomology valued index $h_\Phi$ is
how $V(g, h)$
satisfying a genuine $2$-cocycle relation
split into left-right $u^L$, $u^R$, satisfying ``obstacled  $2$-cocycle relations".

Suppose that $\omega_{\Phi}$ is of product form with respect to $H_D$-$H_U$ cut
\[
\omega_{\Phi}=\omega_{\Phi}\vert_{\caA_{H_D}}\otimes\omega_{\Phi}\vert_{\caA_{H_U}}.
\]
Because $\omega_{\Phi}$ is $\beta$-invariant,
we have
\[
\omega_{\Phi}\vert_{\caA_{H_D}}\circ\beta_g^{D}\otimes\omega_{\Phi}\vert_{\caA_{H_U}}\circ
\beta_g^{U}
=\omega_{\Phi}\circ\beta_g
=\omega_{\Phi}=\omega_{\Phi}\vert_{\caA_{H_D}}\otimes\omega_{\Phi}\vert_{\caA_{H_U}}.
\]
Hence we have
\begin{align*}
\omega_{\Phi}\vert_{\caA_{H_U}}\circ
\beta_g^{U}
=\omega_{\Phi}\vert_{\caA_{H_U}},
\end{align*}
and 
\begin{align*}
\omega_{\Phi}\circ\beta_g^U=\omega_{\Phi}.
\end{align*}
It means that in Proposition \ref{betauprop},
we may take $\eta_{g,R}$, $\eta_{g,L}$ as identities.
In this case, we may take $u_R$, $u_L$ as identities.
As a result, we get 
\begin{align*}
[c_L]_{H^3(G,\Uo)}=[c_R]_{H^3(G,\Uo)}=1.
\end{align*}
Hence $1$ is split into left and right in a trivial manner.

Suppose that $\omega_{\Phi}$ is of product form with respect to $H_L$-$H_R$ cut
\[
\omega_{\Phi}=\omega_{\Phi}\vert_{\caA_{H_L}}\otimes\omega_{\Phi}\vert_{\caA_{H_R}}.
\]
In this case, in order to define the $H^3(G,\Uo)$-valued index,
 we may take $\alpha$ of the form $\alpha=\alpha_L\otimes\alpha_R$,
 with $\alpha_L\in \Aut\lmk\caA_{H_L}\rmk$, and $\alpha_R\in \Aut\lmk\caA_{H_R}\rmk$.
(Actually, this is allowed because our $H^3(G,\Uo)$-valued
index can actually be defined on the wider class of systems. See the original paper \cite{2dSPT}.)
As a result, we may set 
\begin{align*}
&\omega_{\Phi}\vert_{\caA_{H_L}}=\omega_L\circ\alpha_L,\quad
\omega_{\Phi}\vert_{\caA_{H_R}}=\omega_R\circ\alpha_R,\\
&\Theta=\id.
\end{align*}

Now, from Proposition \ref{betauprop}, 
there are automorphisms 
$\eta_{g,\sigma}$, $\sigma=L,R$ localized in
$C_{\theta,\sigma}$, satisfying
\begin{align*}
\omega_{\Phi}=\omega_{\Phi}\circ
\lmk \eta_{g,L}\beta_{g}^{LU}\otimes\eta_{g,R}\beta_{g}^{RU}\rmk\circ\inn.
\end{align*}
From this, we see that
\begin{align*}
\begin{split}
&\omega_L\circ\alpha_L\circ\lmk \beta_g^{LU}\rmk^{-1}
\otimes \omega_R\circ\alpha_R\circ\lmk \beta_g^{RU}\rmk^{-1}\\
&=\omega_{\Phi}\circ \lmk \beta_g^U\rmk^{-1}\sim_{q.e.}\omega_{\Phi}\circ
\lmk \eta_{g,L}\otimes\eta_{g,R}\rmk\\
&=\omega_L\circ \alpha_L\circ \eta_{g,L}\otimes \omega_R\circ\alpha_R\circ\eta_{g,R}.
\end{split}
\end{align*}
This implies that
\begin{align*}
\omega_\sigma\circ\alpha_\sigma\circ\lmk \beta_g^{\sigma U}\rmk^{-1}\sim_{q.e.}
\omega_\sigma\circ \alpha_\sigma\circ \eta_{g,\sigma},
\end{align*}
for $\sigma=L,R$.
Because $\omega_\sigma$ is pure,
we see that there is a unitary $v_{g,\sigma}$ on $\caH_\sigma$
such that 
\begin{align*}
\Ad\lmk v_{g,\sigma}\rmk \circ\pi_\sigma (A)
=\pi_\sigma\circ\alpha_\sigma\circ   \eta_{g,\sigma}\beta_g^{\sigma U}\circ
\alpha_\sigma^{-1}(A),\quad
A\in\caA_{H_\sigma},
\end{align*}
for each $\sigma=L,R$.
Because $\Theta=\id$ now, we have
\begin{align*}
\begin{split}
\Ad\lmk W_g\rmk\circ\pi_0
&=\pi_0\circ\lmk \alpha_0\rmk\circ\Theta\circ\eta_g\beta_g^U\circ\Theta^{-1}\circ\alpha_0^{-1}\\
&=\pi_0\circ\lmk \alpha_0\rmk\circ\eta_g\beta_g^U\circ\alpha_0^{-1}\\
&=\Ad\lmk v_{g,L}\otimes v_{g,R}\rmk\circ\pi_0.
\end{split}
\end{align*}
Because 
$\pi_0$ is irreducible, (absorbing possible $\Uo$-phase in
$v_{g, L}$) we get
\[
W_g= v_{g,L}\otimes v_{g,R}.
\]
Now, we may choose $u_\sigma(g,h)$ in (\ref{udefs}) 
as
\[
u_\sigma(g,h)
=v_{g,\sigma}v_{h,\sigma} v_{gh,\sigma}^*.
\]
With this choice, we immediately see
\begin{align*}
[c_L]_{H^3(G,\Uo)}=[c_R]_{H^3(G,\Uo)}=1.
\end{align*}
Hence $1$ is split into left and right in a trivial manner.

As we have seen, if either $H_D$-$H_U$ cut or
$H_L$-$H_R$ cut is trivial, namely, product,
then our $H^3(G,\Uo)$-valued index become
trivial.
In this sense, 
$[{c_R}]_{H^3(G,\Uo)}$ indicates
how left- right/ up-down planes are correlated, 
in terms of the group action.

\subsection{Two-dimensional Dijkgraaf-Witten model}\label{dwsec}
When the unique ground state is given by an automorphism $\alpha$
satisfying the factorization property (\ref{autofac}), the $H^3(G, \Uo)$-valued index
can be calculated without going through the GNS representation.
In this section, we give such an example of $2$-dimensional SPT.
\begin{defn}
For $\sigma^{(0)}\in \Aut\lmk\caA_{\bbZ^2}\rmk$ and $\Gamma\subset\bbZ^2$,  we set
\begin{align}\label{gdeff}
\lmk d^{0}_{\Gamma}\sigma^{(0)}\rmk (g):=\lmk \sigma^{(0)}\rmk^{-1}
\beta_{g}^{\Gamma}\circ \sigma^{(0)}\circ \lmk \beta_{g}^{\Gamma}\rmk^{-1},\quad g\in G.
\end{align}
For $\sigma^{(1)}: G\to  \Aut\lmk\caA_{\bbZ^2}\rmk$ and $\Gamma\subset\bbZ^2$,  we set
\begin{align}
\lmk d^{1}_{\Gamma}\sigma^{(1)}\rmk (g,h):=
 \sigma^{(1)}(g)\beta_g^\Gamma \sigma^{(1)}(h)\lmk \beta_g^{\Gamma}\rmk^{-1}
 \lmk\sigma^{(1)}(gh)\rmk,
\end{align}
for $g,h\in G$.
\end{defn}
\begin{defn}\label{gfac}
For $\alpha\in \Aut\lmk\caA_{\bbZ^2}\rmk$, we say that $d^{0}_{H_{U}}\alpha$ is factorized into left and right if there are automorphisms
$\gamma_{g,\sigma}\in\Aut\lmk\caA_{H_{\sigma}}\rmk$, $g\in G$, $\sigma=L,R$
such that 
\begin{align}\label{dzd}
\lmk d^{0}_{H_{U}}\alpha\rmk (g)=\inn\circ \lmk \gamma_{g,L}\otimes\gamma_{g,R}\rmk,\quad
g\in G.
\end{align}
\end{defn}
Note that this situation is the same as that of Lemma \ref{cco}.
Hence
 there are unitaries $v_{\sigma}(g,h)\in \caU\lmk \caA_{H_{\sigma}}\rmk$,
$g,h\in G$, $\sigma=L,R$ such that
\begin{align}\label{usec}
\gamma_{g,{\sigma}}\beta_{g}^{{\sigma}U}\gamma_{h,{\sigma}}\beta_{h}^{{\sigma}U}
 \lmk \gamma_{gh,{\sigma}}\beta_{gh}^{{\sigma}U}\rmk^{-1}
 =\Ad\lmk v_{\sigma}(g,h)\rmk.
\end{align}
From Lemma \ref{cco} and the basic fact given in section \ref{cco} on cocycle actions, we have
\begin{align}\label{vcl}
v_R(g,h) v_R(gh,k)
=c_R(g,h,k)\gamma_{g,{R}}\beta_{g}^{{R}U}
\lmk v_R(h,k)\rmk v_R(g,hk),\quad
g,h,k\in G,
\end{align}
for some $3$-cocycle $c_R$.
The
index $[c_R]_{H^3(G,\Uo)}$ coincides with our index obtained in subsection \ref{defh3}, under the following conditions.
\begin{thm}\label{autothm}{\cite{2dSPT}}
Suppose that $\omega_{\Phi}$ has a form
 $\omega_{\Phi}=\omega_0\alpha$, where 
 $\omega_0$ is a $\beta$-invariant infinite tensor product state
 and  $\alpha$ is an automorphism.
Suppose that $\alpha$ allows a decomposition like (\ref{tfac})
for an arbitrary direction $0<\theta<\frac\pi 2$ and 
 $d^{0}_{H_{U}}\alpha$ is factorized into left and right as in (\ref{dzd})
with some $\gamma_{g,\sigma}\in \Aut\lmk\caA_{C_{\theta_{0}},\sigma}\rmk$
and $0<\theta_{0}<\frac\pi 2$, for $\sigma=L,R$.
Then we have
$h(\Phi)=[c_{R}]_{H^{3}(G,\bbT)}$.
\end{thm}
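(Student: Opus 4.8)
The plan is to exhibit one admissible collection of auxiliary data — an automorphism $\alpha$, a cut direction $\theta$, a factorization $\alpha=(\alpha_L\otimes\alpha_R)\circ\Theta\circ\inn$, the automorphisms $\eta_{g,\sigma}$ of Proposition \ref{betauprop}, and the unitaries $u(g,h)$, $W_g$ of Propositions \ref{defuprop}--\ref{wprop} — for which the $3$-cocycle $c$ produced in Proposition \ref{wprop} is \emph{literally} the cocycle $c_R$ of (\ref{vcl}). Since by Theorem \ref{cochdef} the class $h(\Phi)$ does not depend on any of these choices, this yields $h(\Phi)=[c_R]_{H^3(G,\bbT)}$. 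Throughout we work with $\omega_0$ in place of $\omega_{\Phi_0}$: the constructions of subsection \ref{defh3} and Propositions \ref{betauprop}--\ref{wprop}, Theorem \ref{cochdef} carry over verbatim, since all that is used about the reference state is that it is a $\beta$-invariant pure product state with GNS triple of the split form $(\caH_L\otimes\caH_R,\pi_L\otimes\pi_R,\Omega_L\otimes\Omega_R)$. Fix $\alpha$ as in the hypothesis, choose $\theta$ with $\theta_0<\theta<\tfrac\pi2$, and use the decomposition of $\alpha$ in direction $\theta$ given by (\ref{tfac}), rewritten (moving the inner factor to the right) as $\alpha=(\alpha_L\otimes\alpha_R)\circ\Theta\circ\inn$ with $\alpha_\sigma\in\Aut(\caA_{H_\sigma})$ and $\Theta\in\Aut(\caA_{C_\theta^c})$.

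The first step is to check that the automorphisms $\eta_{g,\sigma}:=\gamma_{g,\sigma}$ furnished by the hypothesis $d^0_{H_U}\alpha(g)=\inn\circ(\gamma_{g,L}\otimes\gamma_{g,R})$ are an admissible choice in Proposition \ref{betauprop}. This is a short computation: writing $(\gamma_{g,L}\otimes\gamma_{g,R})=\inn\circ\alpha^{-1}\beta_g^U\alpha(\beta_g^U)^{-1}$ and using $\omega_\Phi=\omega_0\circ\alpha$ gives $\omega_\Phi\circ(\gamma_{g,L}\otimes\gamma_{g,R})\circ\beta_g^U=\omega_0\circ\inn'\circ\beta_g^U\circ\alpha$ for some inner $\inn'$, whereupon $\omega_0\circ\beta_g^U=\omega_0$ — valid because $\omega_0$ is a $\beta$-invariant infinite tensor product state — reduces the right side to $\omega_\Phi\circ\inn''$ for an inner $\inn''$; rearranging yields exactly $\omega_\Phi=\omega_\Phi\circ(\gamma_{g,L}\beta_g^{LU}\otimes\gamma_{g,R}\beta_g^{RU})\circ\inn$, which is (\ref{statefac}). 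The localization requirement $\gamma_{g,\sigma}\in\Aut(\caA_{C_{\theta,\sigma}})$ holds since $C_{\theta_0,\sigma}\subset C_{\theta,\sigma}$.

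With this choice, the automorphism occurring in Proposition \ref{defuprop} is $\gamma_{g,R}\beta_g^{RU}\gamma_{h,R}(\beta_g^{RU})^{-1}\gamma_{gh,R}^{-1}$, which, reassembling the on-site factors via $\beta_h^{RU}(\beta_{gh}^{RU})^{-1}=(\beta_g^{RU})^{-1}$, equals $\gamma_{g,R}\beta_g^{RU}\gamma_{h,R}\beta_h^{RU}(\gamma_{gh,R}\beta_{gh}^{RU})^{-1}=\Ad(v_R(g,h))$ of (\ref{usec}); moreover it is localized in $C_{\theta_0,R}$ (conjugation by the on-site $\beta_g^{RU}$ does not enlarge supports), so one may take $v_R(g,h)\in\caU(\caA_{C_{\theta_0,R}})$. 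By irreducibility of $\pi_R\circ\alpha_R$ the unitary $u(g,h)$ of Proposition \ref{defuprop} may then be chosen to be $\pi_R\circ\alpha_R(v_R(g,h))$. Next one evaluates $W_g(\unit_{\caH_L}\otimes u(h,k))W_g^*$ directly from (\ref{wimp}): since $\unit_{\caH_L}\otimes u(h,k)=\pi_0\big(\unit_{\caA_{H_L}}\otimes\alpha_R(v_R(h,k))\big)$ and $\alpha_0=\alpha_L\otimes\alpha_R$, and since $v_R(h,k)$ together with $\gamma_{g,R}\beta_g^{RU}(v_R(h,k))$ lie in $\caA_{C_{\theta_0,R}}\subset\caA_{C_\theta}$ and hence commute with $\Theta\in\Aut(\caA_{C_\theta^c})$, the automorphism $\alpha_0\,\Theta\,\eta_g\beta_g^U\,\Theta^{-1}\,\alpha_0^{-1}$ carries $\unit_{\caA_{H_L}}\otimes\alpha_R(v_R(h,k))$ to $\unit_{\caA_{H_L}}\otimes\alpha_R\gamma_{g,R}\beta_g^{RU}(v_R(h,k))$ with no residual phase, i.e. $W_g(\unit_{\caH_L}\otimes u(h,k))W_g^*=\unit_{\caH_L}\otimes\pi_R\alpha_R\big(\gamma_{g,R}\beta_g^{RU}(v_R(h,k))\big)$. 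Substituting into (\ref{uwc}), cancelling the $\unit_{\caH_L}\otimes(\cdot)$ and using faithfulness of $\pi_R\circ\alpha_R$, the relation (\ref{uwc}) becomes $v_R(g,h)v_R(gh,k)=c(g,h,k)\,\gamma_{g,R}\beta_g^{RU}(v_R(h,k))\,v_R(g,hk)$, which is exactly (\ref{vcl}); hence $c=c_R$, and therefore $h(\Phi)=[c_R]_{H^3(G,\bbT)}$.

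I expect the only genuinely delicate point to be the localization bookkeeping in the third step: that the unitaries $v_R(g,h)$ implementing (\ref{usec}) can be chosen inside the cone algebra $\caA_{C_{\theta_0,R}}$ (needed so that they commute with $\Theta$ and so that conjugation by $W_g$ keeps all relevant operators inside $\unit_{\caH_L}\otimes\caB(\caH_R)$), together with the parallel checks for $\Theta$ and the $\eta$'s. Once that localization is secured, the identification of $W_g(\unit_{\caH_L}\otimes u(h,k))W_g^*$ and the reduction of (\ref{uwc}) to (\ref{vcl}) are just a matter of unwinding (\ref{wimp}), (\ref{urur}) and the cocycle relation (\ref{3coc}).
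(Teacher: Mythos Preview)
Your proof is correct and follows essentially the same approach as the paper's: choose $\eta_{g,\sigma}=\gamma_{g,\sigma}$ via the $\beta_g^U$-invariance of $\omega_0$, take $u(g,h)=\pi_R\alpha_R(v_R(g,h))$, use the localization $v_R(g,h)\in\caA_{C_{\theta_0,R}}$ so that $\Theta$ acts trivially on it, and then read off $c=c_R$ from (\ref{uwc}). The only cosmetic difference is that you factorize $\alpha$ at an angle $\theta>\theta_0$ whereas the paper takes $\theta=\theta_0$; both work, and the localization point you flag as delicate is exactly the one the paper singles out as well.
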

(See Theorem 6.4 of \cite{2dSPT} for a more general setting.)
\begin{proof}
We factorize $\alpha$ with respect to the angle $\theta_0$ as in  (\ref{tfac}).
Let $v_R(g,h)$ $c_R(g,h,k)$ be the unitary and $3$-cocycle given in (\ref{usec}) and (\ref{vcl}).

Because $\omega_0$ is infinite tensor product state and $\omega_0\beta_g=\omega_0$,
we have $\omega_0\beta_g^U=\omega_0$.
From this and (\ref{tfac}), (\ref{dzd}), we get
\begin{align*}
\begin{split}
\omega_\Phi\beta_g^U=\omega_0\alpha\beta_g^U
=\omega_0\lmk \beta_g^U\rmk^{-1}\alpha\beta_g^U
=\omega_0\alpha\lmk \gamma_{g^{-1},L}\otimes\gamma_{g^{-1},R}\rmk\circ\inn\\
=\omega_\Phi\lmk \gamma_{g^{-1},L}\otimes\gamma_{g^{-1},R}\rmk\circ\inn.
\end{split}
\end{align*}
This means we may take $\eta_{g,\sigma}:=\gamma_{g,\sigma}$
in Proposition \ref{betauprop}.
With this choice, because of (\ref{usec}), we may take $u(g,h)$ in Proposition \ref{defuprop} as
$u(g,h):=\pi_R\alpha_R\lmk v_R(g,h)\rmk$.
From (\ref{usec}) and the fact that our $\gamma_{g,R}$s are localized in
$C_{\theta_{0}, R}$, we see that
$v_R(g,h)$s are elements in $\caA_{C_{\theta_{0}, R}}$. 
For $W_g$ given in Proposition \ref{wprop}, we have
\begin{align*}
\begin{split}
&\Ad(W_g)\lmk\unit_{\caH_L}\otimes u(h,k)\rmk
=\Ad(W_g)\lmk\unit_{\caH_L}\otimes \pi_R\alpha_R\lmk v_R(h,k)\rmk\rmk\\
&=\pi_0
\lmk \alpha_0\rmk\circ\Theta\circ\eta_g\beta_g^U\circ\Theta^{-1}\circ\alpha_0^{-1}
\lmk \unit_{\caA_{H_L}}\otimes\alpha_R\lmk v_R(h,k)\rmk\rmk\\
&=\pi_0
\lmk \unit_{\caA_{H_L}}\otimes\lmk\alpha_R\gamma_{g,R}\beta_g^{RU} v_R(h,k)\rmk\rmk,
\end{split}
\end{align*}
because $\Theta$ is localized in $C_{\theta_0}^c$.
(We substituted $\eta_{g,\sigma}:=\gamma_{g,\sigma}$ at the last line.)
Substituting this, we get
\begin{align*}
\begin{split}
&\lmk W_g\lmk \unit_{\caH_L}\otimes u(h,k)\rmk W_g^*\rmk
\lmk \unit_{\caH_L}\otimes u(g,hk)\rmk\\
&=
 \unit_{\caA_{H_L}}\otimes \pi_R\alpha_R\lmk \gamma_{g,R}\beta_g^{RU} \lmk v_R(h,k)\rmk
  v_R(g,hk)\rmk\\
&=\overline{c_R(g,h,k)}\cdot
\unit_{\caA_{H_L}}\otimes \pi_R\alpha_R\lmk
v_R(g,h) v_R(gh,k)
\rmk\\
&=\overline{c_R(g,h,k)}\cdot
\unit_{\caA_{H_L}}\otimes 
u(g,h) u(gh,k).
\end{split}
\end{align*}
This proves the theorem.
\end{proof}

Now we apply this to the analysis of an example of two-dimensional model related to
Dijkgraaf-Witten model \cite{cglw} \cite{Beni2016} \cite{MillerMIyake2016}.
We learned about this model from Hal Tasaki.
The calculation below is carried out by him.
Analogous analysis can be carried out 
for general $d$-dimensional Dijkgraaf-Witten model (or more generally,
for models with some abelian structures)
to derive a $H^{d+1}(G,\Uo)$-valued index.

The standard basis of $\bbR^2$ is denoted by
$\{\bm e_{j}\}_{j=1,2}$.
Let us recall the canonical triangulation of the unit hypercube $[0,1]^2$.
We denote by $\sym(2)$ the symmetric group of order $2$.
For $\pi\in \sym(2)$, $\sgn(\pi)$ denotes the signature of the permutation.
For each $\pi\in \sym(2)$, we set
\begin{align}
T_{\pi}:=\left\{
\bm{x}=(x_i)_{i=1}^2\in [0,1]^2\mid
x_{\pi(2)}\le x_{\pi(1)}
\right\}.
\end{align}
The unique integral corners of $T_\pi$ are the following $3$ points:
\begin{align}\label{vtachi}
\begin{array}{cccc}
\;&\; & x_{\pi(2)}&x_{\pi(1)}\\
{\bm{v}}_0^{(\pi)} &:=&0&0\\
{\bm{v}}_1^{(\pi)} &:=&0&1\\
{\bm{v}}_{2}^{(\pi)} &:=&1&1\\
\end{array}
\end{align}
The simplex $T_\pi$ is equal to the convex combination $\co\{{\bm{v}}_{k}^{(\pi)}\}_{k=1}^2$
of vectors $\{{\bm{v}}_{k}^{(\pi)}\}_{k=1}^2$.
The simplices give a simplicization $[0,1]^2:=\cup_{\pi\in\sym(2)} T_\pi$.

For each $\pi\in\sym(2)$ and $j=0,1,2$, we denote by
$S_j^{(\pi)}$ the convex combination $\co\{{\bm{v}}_{k}^{(\pi)}\mid k\neq j\}$
of vectors $\{{\bm{v}}_{k}^{(\pi)}\}_{k\neq j}$.
By (\ref{vtachi}), we have
\begin{align}
&S_0^{(\pi)}:=\left\{
\bm{x}=(x_i)_{i=1}^2\in [0,1]^2\mid
x_{\pi(1)}=1
\right\},\label{szp}\\
&S_1^{(\pi)}
=\left\{
\bm{x}=(x_i)_{i=1}^2\in [0,1]^2\mid
 x_{1}=x_{2}
\right\},\label{sjp}\\
&S_2^{(\pi)}:=\left\{
\bm{x}=(x_i)_{i=1}^2\in [0,1]^2\mid
x_{\pi(2)}=0
\right\}.\label{smp}
\end{align}
For $\pi,\pi'\in\sym(2)$ with $\pi\neq\pi'$,
we have $T_\pi\cap T_{\pi'}=S_1^{(\pi)}=S_1^{(\pi')}$.
For $\pi,\pi'\in\sym(2)$ and $j=1,2$, 
$T_\pi\cap\lmk T_{\pi'}+\bm{e}_j\rmk$ is $1$-dimensional
if and only if $\pi'\neq\pi$ and $\pi(1)=\pi'(2)=j$.
In this case, we have
$T_\pi\cap\lmk T_{\pi'}+\bm{e}_j\rmk
=S_0^{(\pi)}
=S_2^{(\pi')}+\bm{e}_j
$.
For $\bm x\in \bbZ^{2}$,  $\pi\in\sym(2)$, and $k=0,1,2$, we set
$\bm{v}_{k,\bm x}^{(\pi)}:=\bm{v}_k^{(\pi)}+\bm x\in\bbZ^2$, and
$
S_{k,\bm x}^{(\pi)}:=S_k^{(\pi)}+\bm x=\co\{ \bm{v}_{i,\bm x}^{(\pi)}\mid i\neq k\}
$.

For any $(\bm x,\pi,k)\in \bbZ^2\times \sym(2)\times\{0,1,2\}$,
there exists a unique 
$(\bm x',\pi',k')\in \bbZ^2\times \sym(2)\times\{0,1,2\}$
with
$(\bm x,\pi,k)\neq (\bm x',\pi',k')$
such that
$S_{k,\bm x}^{(\pi)}=S_{k',\bm x'}^{(\pi')}$.
Furthermore, for this  $(\bm x',\pi',k')$, one of the following occurs.
\begin{description}
\item[(i)]
We have $\bm x=\bm x'$, $k=k'=1$,
$\pi'=\pi\circ(1,2)$, and $\bm{v}_{j,\bm x}^{(\pi)}=\bm{v}_{j,\bm x'}^{(\pi')}$,
for any $j=0,2$.
\item[(ii)]
We have $\bm x'=\bm x+\bm e_{\pi(1)}$, $k=0$, 
$k'=2$,
$\pi'=\pi\circ(1,2)$
and $\bm{v}_{j,\bm x}^{(\pi)}=\bm{v}_{j-1,\bm x'}^{(\pi')}$,
for $j=1,2$.
\item[(iii)]
We have $\bm x=\bm x'+\bm e_{\pi'(1)}$, $k=2$, 
$k'=0$,
$\pi=\pi'\circ(1,2)$
and $\bm{v}_{j-1,\bm x}^{(\pi)}=\bm{v}_{j,\bm x'}^{(\pi')}$,
for $j=1,2$.
\end{description}
For such
$(\bm x,\pi,k)\in \bbZ^2\times \sym(2)\times\{0,1,2\}$,
and $(\bm x',\pi',k')\in \bbZ^2\times \sym(2)\times\{0,1,2\}$ 
we write $(\bm x,\pi,k)\sim (\bm x',\pi',k')$.
Each $(\bm x,\pi,k)\in \bbZ^2\times \sym(2)\times\{0,1,2\}$
has exactly one $(\bm x',\pi',k')\in \bbZ^2\times \sym(2)\times\{0,1,2\}$
satisfying $(\bm x,\pi,k)\sim (\bm x',\pi',k')$.

Now let us introduce the model related to the Dijkgraaf-Witten model.
Let $G$ be a fixed finite group.
Let $\delta_g$, $g\in G$ be the orthogonal basis of $l^2(G)$,
and $\{e_{g,h}\in B(l^2(G))\mid g,h\in G\}$ the system of matrix units associated to the basis $\{\delta_g\mid g\in G\}$.
We denote by $U$ the left action of $G$ on $l^2(G)$, i.e.,
$U_g \delta_h:=\delta_{gh}$, for any $g,h\in G$.
Note that the algebra of diagonal matrices $\Dia_d$ with respect to the basis $\{\delta_g\mid g\in G\}$
is invariant under $\Ad(U_g)$, $g\in G$.
For any finite $\Lambda\subset \bbZ^2$ and $\bm s=(s\lmk \bm x\rmk)_{\bm x\in\Lambda},
\bm s'=(s'\lmk {\bm x}\rmk)_{\bm x\in\Lambda}\in G^{\times \Lambda}$,
we set $e^{(\Lambda)}_{\bm s,\bm s'}
:=\bigotimes_{\bm x\in\Lambda} e_{s\lmk {\bm x}\rmk,{\bm s'\lmk \bm x\rmk}}$.
For each $\Gamma\subset\bbZ^2$, we denote by  $\caD_\Gamma$   the abelian subalgebra of $\caA_\Gamma$ generated by the tensor product of 
the diagonal matrices.
For each $L\in\bbN$,
set
\begin{align*}
\begin{split}
&\Lambda_L^{(1)}:=[-L,L]\quad
\partial\Lambda_L^{(1)}:=\{-L,L\},\\
&\Lambda_L^{(2)}:=[-L,L]^2\quad
\partial\Lambda_L^{(2)}:=
 \lmk \{-L,L\}\times [-L,L]\rmk\cup \lmk[-L,L]\times  \{-L,L\}\rmk.
\end{split}
\end{align*}
Here, with a bit abuse of notation, we denote $[-L,L]^k\cap\bbZ^k$ 
by $[-L,L]^k$. Analogous notation will be used for $[0,L]$, $[-L,-1]$
etc below.

Let us recall a basic fact about group cohomology.
The group ring $\bbZ[G^{\times n+1}]$ is a $G$-module with a $G$-action
\[
g\cdot (g_0,\cdots, g_{n})=(gg_0,\cdots,gg_n ).
\]
We associate a trivial action of $G$ on $\Uo$.
For a homomorphism $\varphi\in \Hom_{\bbZ[G]}\lmk \bbZ[G^{\times i+1}],\Uo\rmk$
from a $G$-module $ \bbZ[G^{\times i+1}]$ and to a $G$-module $\Uo$
we have
\begin{align}\label{cohomog}
\varphi(g g_0, gg_1,\ldots, gg_i)=
\varphi( g_0, g_1,\ldots, g_i).
\end{align}
Define $D^i: \Hom_{\bbZ[G]}\lmk \bbZ[G^{\times( i+1)}],\Uo\rmk\to
\Hom_{\bbZ[G]}\lmk \bbZ[G^{\times i+2}],\Uo\rmk$
 by
\begin{align*}\label{dfd}
\begin{split}
&\lmk D^i\varphi\rmk\lmk g_0, g_1,\cdots, g_{i}, g_{i+1}\rmk
:=
\prod_{j=0}^{i+1}\varphi\lmk g_0, \cdots,g_{j-1}, \hat g_j,g_{j+1},\cdots g_{i+1}\rmk^{(-1)^j},
\end{split}
\end{align*}
where $\hat g_j$ means that $g_j$ is excluded.
The group of $i$-cochains of $G$ with coefficients in $\Uo$ is the set of functions from
$G^{\times i}$ to $\Uo$.
The $i$-th differential 
$d^i : C^i\lmk G,\Uo\rmk 
\to C^{i+1}\lmk G, \Uo\rmk 
$
is given by
\begin{align*}
\begin{split}
&\lmk d^i_{\Gamma_0}\sigma\rmk\lmk g_0,\ldots, g_i\rmk\\
&=g_0\cdot\sigma\lmk g_1,\ldots,g_i\rmk\lmk
\prod_{k=1}^i \sigma(g_0,\ldots,g_{k-1}g_k,\ldots,g_i)^{(-1)^k}\rmk
\sigma\lmk g_0,\ldots,g_{i-1}\rmk^{(-1)^{i+1} }.
\end{split}
\end{align*}
There are isomorphisms 
$\Psi^i:\Hom_{\bbZ[G]}\lmk \bbZ[G^{\times i+1}],\Uo\rmk\to C^{i}\lmk G, \Uo\rmk$, $i=0,1,2,\ldots$
such that 
\begin{align*}
\Psi^i(\varphi)(g_0,\ldots, g_{i-1})
=\varphi\lmk
e, g_0, g_0g_1,g_0g_1g_2,\cdots, g_0g_1g_2\cdots g_{i-1}
\rmk.
\end{align*}
With this isomorphism, we have
\[
\Psi^{i+1}\circ D^i=d^i\circ \Psi^i
\]
 for all $i=0,1,2,\ldots$.

The two-dimensional Dijkgraaf-Witten model
is defined by fixing an arbitrary $\nu\in \ker D^{3}=\lmk \Psi^{3}\rmk^{-1}\lmk Z^{3}\lmk G, \Uo\rmk\rmk$.
For the rest of this section, we fix such $\nu$.
By definition, we have
\begin{align}\label{dfd}
\frac{\nu(g_1,g_2, g_3,g_4)\nu(g_0,g_1,g_3,g_4)\nu(g_0,g_1,g_2,g_3)}{\nu(g_0,g_2,g_3,g_4)\nu(g_0,g_1,g_2,g_4)}=1,
\end{align}
for any $g_0,g_1,g_2,g_3,g_4,g_5\in G$.

For $\bm s : \bbZ^2 \to G$, $\bm s=(s(\bm v))_{\bm v\in\bbZ^2}$, set
\begin{align*}
\begin{split}
&{\mathfrak q}\lmk\bm s, \bm{x} \rmk
:=\prod_{\pi\in \sym(2)}\nu\lmk e,
s\lmk {{\bm{v}}_{0,\bm{x}}^{\lmk \pi\rmk} }\rmk,
s\lmk {{\bm{v}}_{1,\bm{x}}^{\lmk \pi\rmk} }\rmk,
s\lmk {{\bm{v}}_{2,\bm{x}}^{\lmk \pi\rmk} }\rmk
\rmk^{\sgn\pi},\; \bm{x}\in \bbZ^2\quad\text{and}\\
&{\mathfrak p}\lmk g, \bm s, {y}\rmk
:=\nu\lmk
e, g,
s\lmk ( {y},0)\rmk,
s\lmk ( {y}+1,0)\rmk
\rmk,\quad g\in G,\quad y\in\bbZ.
\end{split}
\end{align*}
For each $L\in\bbN$, set
\begin{align*}
\begin{split}
&V_L^{(0)}:=
\sum_{\bm s\in G^{\times \lmk \Lambda_{L+1}^{(2)}\rmk}}
\;
\lmk \prod_{\bm x \in \Lambda_L^{(2)}}\;
\lmk
{\mathfrak q}\lmk\bm s, \bm{x} \rmk
\rmk\rmk\cdot  e_{\bm s, \bm s}^{\lmk \Lambda_{L+1}^{(2)}\rmk},\\
&V_L^{(1)}\lmk g\rmk=
\sum_{\bm s\in G^{\times \lmk \Lambda_{L+1}^{(1)}\times\{\bm 0\}\rmk}}
\;
\lmk \prod_{ y\in \Lambda_L^{(1)}}
\lmk
{\mathfrak p}\lmk g, \bm s, {y}\rmk
\rmk^{-1} \rmk\cdot e_{\bm s, \bm s}^{\lmk \Lambda_{L+1}^{(1)}\times\{\bm 0\}\rmk}.
\end{split}
\end{align*}
We also introduce
\begin{align*}
\begin{split}
&V_{+,L}^{(0)}:=
\sum_{\bm s\in G^{\times \lmk \Lambda_{L+1}^{(1)}\times [0,L+1]\rmk}}\;\;
\lmk
\prod_{\bm x \in \Lambda_L^{(1)}\times [0,L]}\;\;
\lmk
{\mathfrak q}\lmk\bm s, \bm{x} \rmk
\rmk\rmk\cdot e_{\bm s, \bm s}^{\lmk \Lambda_{L+1}^{(1)}\times [0,L+1]\rmk},\\
&V_{+,L}^{(1)}\lmk g\rmk
:=
\sum_{\bm s\in G^{\times \lmk  [0,L+1]\times\{\bm 0\}\rmk}}\;\;
\lmk \prod_{ y\in [0,L]}\;\;
\lmk
{\mathfrak p}\lmk g, \bm s, {y}\rmk
\rmk^{-1}\rmk\cdot e_{\bm s, \bm s}^{\lmk  [0,L+1]\times\{\bm 0\}\rmk},\\
\end{split}
\end{align*}
and
\begin{align*}
\begin{split}
&V_{-,L}^{(0)}:=\sum_{\bm s\in G^{\times \lmk \Lambda_{L+1}^{(1)}\times[-L,-1]\rmk}}\;\;
\lmk \prod_{\bm x \in \Lambda_L^{(1)}\times[-L,-2]}\;\;
\lmk
{\mathfrak q}\lmk\bm s, \bm{x} \rmk
\rmk \rmk\cdot e_{\bm s, \bm s}^{ \lmk \Lambda_{L+1}^{(1)}\times[-L,-1]\rmk},\\
&V_{-,L}^{(1)}\lmk g\rmk:=
\sum_{\bm s\in G^{\times \lmk [-L,-1]\times\{\bm 0\}\rmk}}\;\;
\lmk \prod_{ y\in [-L,-2]}
\quad\lmk
{\mathfrak p}\lmk g, \bm s, {y}\rmk
\rmk^{-1}\rmk\cdot e_{\bm s, \bm s}^{\lmk [-L,-1]\times\{\bm 0\}\rmk}.
\end{split}
\end{align*}
Furthermore, set
\begin{align*}
\begin{split}
&V_{\partial, L}^{(0)}
:=
\sum_{\bm s\in G^{\times \lmk \Lambda_{L+1}^{(1)}\times \{-1,0\}\rmk}}
\lmk
\prod_{\bm x \in \Lambda_L^{(1)}\times \{-1\}}\;
{\mathfrak q}\lmk\bm s, \bm{x} \rmk\rmk
e_{\bm s, \bm s}^{ \lmk \Lambda_{L+1}^{(1)}\times \{-1,0\}\rmk},\\
&V_{\partial, L,+}^{(0)}
:=
\sum_{\bm s\in G^{\times \lmk [0,L+1]\times \{-1,0\}\rmk}}
\lmk \prod_{\bm x \in [0,L]\times \{-1\}}\;
\lmk
{\mathfrak q}\lmk\bm s, \bm{x} \rmk
\rmk\rmk
e_{\bm s, \bm s}^{ \lmk [0,L+1]\times \{-1,0\}\rmk},\\
&V_{\partial, L,-}^{(0)}
:=
\sum_{\bm s\in G^{\times \lmk [-L,-1]]\times \{-1,0\}\rmk}}
\lmk \prod_{\bm x \in [-L,-2]
\times \{-1\}}\;
\lmk
{\mathfrak q}\lmk\bm s, \bm{x} \rmk
\rmk\rmk
e_{\bm s, \bm s}^{ \lmk [-L,-2]\times \{-1,0\}\rmk},\\
\\
&V_{\partial, L,0}^{(0)}
:=
\sum_{\bm s\in G^{\times \lmk \{-1,0\}\times \{-1,0\}\rmk}}
\lmk\prod_{\bm x =(-1,-1)}
\lmk
{\mathfrak q}\lmk\bm s, \bm{x} \rmk
\rmk\rmk
e_{\bm s, \bm s}^{ \lmk \{-1,0\}\times \{-1,0\}\rmk},\\
&V_{\partial, L}^{(1)}\lmk g\rmk
:=
\sum_{\bm s\in G^{\times \lmk \{-1,0\}\times\{\bm 0\}\rmk}}
\lmk{\mathfrak p}\lmk g, \bm s, {-1}\rmk
\rmk^{-1}e_{\bm s, \bm s}^{\lmk \{-1,0\}\times\{\bm 0\}\rmk}.
\end{split}
\end{align*}
Finally, for a unitary $u\in\caU(\caA_{\bbZ^2})$, we set
\begin{align*}
\lmk \widetilde{d^0}{u}\rmk\lmk g \rmk
={u}^{-1 }\beta_g^U\lmk {u}\rmk
,\quad g\in G.
\end{align*}
For these objects, the following Lemma holds.
\begin{lem}\label{vml}
In the setting above,
\begin{description}
\item[(i)] For any  $5\le L\in\nan$ and  $g\in G$,
\begin{align}
\begin{split}
&V_L^{(0)}
=V_{\partial, L}^{(0)}\lmk V_{-,L}^{(0)}\otimes V_{+,L}^{(0)}\rmk,\\
&V_L^{(1)}(g)
=V_{\partial, L}^{(1)}(g)
\lmk V_{-,L}^{(1)}(g)
\otimes V_{+,L}^{(1)}(g)\rmk,\\
&V_{\partial, L}^{(0)}= V_{\partial, L,0}^{(0)}\lmk V_{\partial, L,-}^{(0)}\otimes V_{\partial, L,+}^{(0)}\rmk
\end{split}
\end{align}
\item[(ii)]
For any  $5\le L\in\nan$ and  $g\in G$,
\begin{align}
&\lmk
\tilde d^0 V_{+,L}^{(0)}\rmk (g)
=\lmk \text{unitary in }\caD_{\partial\Lambda_L} \rmk\cdot
V_{L}^{(1)} (g)\\
&\lmk
\tilde d^0 V_{L}^{(0)}\rmk (g)
=\lmk \text{unitary in }\caD_{\partial\Lambda_L} \rmk.\label{bulkv}
\end{align}
\item[(iii)]
For any $5\le L\in\nan$ and  $(g, h)\in G^{\times 2}$,
\begin{align}
V_{+,L}^{(1)}(g)\cdot\lmk \beta_g^U\lmk V_{+,L}^{(1)}(h)\rmk\rmk\cdot
\lmk V_{+,L}^{(1)}(gh)\rmk^{-1}
=\lmk \text{unitary in }\caD_{\partial\Lambda_L^{(1)}} \rmk\cdot
V\lmk
g, h
\rmk,
\end{align}
where
\begin{align}\label{ud}
V\lmk
g,h
\rmk
:=\sum_{s\in G} 
\nu\lmk
e, g, 
gh,s
\rmk \cdot e_{s,s}^{\{\bm 0\}}.
\end{align}
\item[(iv)]
For the unitary $V$ in (\ref{ud}), we have
\begin{align}
\overline{\nu(e,g,gh,ghk)} V(g,h)
 V(gh,k)
 =
 \beta_{g}^{{\sigma}U}\lmk V(h,k)\rmk
 V\lmk g, hk\rmk,\quad g,h,k\in G.
 \end{align}
\end{description}
\end{lem}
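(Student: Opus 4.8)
The plan is to prove the four assertions in the order (i), (iv), (iii), (ii). All operators appearing are diagonal, i.e.\ norm-one elements of abelian algebras $\caD_\Gamma$, so they pairwise commute and each identity amounts, for every fixed configuration $\bm s$, to a scalar identity between the coefficients of the rank-one projections $e_{\bm s,\bm s}^{(\Gamma)}$; (i), (iii), (iv) then follow from a single application of the $3$-cocycle relation (\ref{dfd}) together with the $G$-homogeneity (\ref{cohomog}) of $\nu$ (which holds because $\nu\in\ker D^3\subset\Hom_{\bbZ[G]}(\bbZ[G^{\times4}],\Uo)$), while the genuinely two-dimensional step — and the main obstacle — is (ii).

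For (i), I would split the index set of the defining product. For the first identity, write $\Lambda_L^{(2)}=\big(\Lambda_L^{(1)}\times[0,L]\big)\sqcup\big(\Lambda_L^{(1)}\times\{-1\}\big)\sqcup\big(\Lambda_L^{(1)}\times[-L,-2]\big)$: a cell with base point in the first (resp.\ in the last) block has all its vertices $\bm v_{k,\bm x}^{(\pi)}$ in $H_U$ (resp.\ in $H_D$), while one with base point in the middle block has its vertices in the row $\bbZ\times\{-1,0\}$; comparing the coefficient of $e_{\bm s,\bm s}^{(\Lambda_{L+1}^{(2)})}$ on the two sides against the definitions of $V_{+,L}^{(0)}$, $V_{-,L}^{(0)}$ and $V_{\partial,L}^{(0)}$ then gives the claim. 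The $V_L^{(1)}$-identity is the same bookkeeping for the product over $\Lambda_L^{(1)}=[0,L]\sqcup\{-1\}\sqcup[-L,-2]$, and the $V_{\partial,L}^{(0)}$-splitting comes from decomposing the cell-index set $\Lambda_L^{(1)}\times\{-1\}$ according to $\Lambda_L^{(1)}=[0,L]\sqcup\{-1\}\sqcup[-L,-2]$, with the single cell at $(-1,-1)$ isolated.

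For (iv) and (iii), I would use that $\beta_g^U$ sends a diagonal operator supported in $H_U$ to the one obtained by the relabelling $e_{\bm s,\bm s}\mapsto e_{g\bm s,g\bm s}$, and that $\nu(e,g^{-1}a,g^{-1}b,g^{-1}c)=\nu(g,a,b,c)$ by (\ref{cohomog}). In (iv) all operators sit on the single site $\{\bm 0\}$; reading off the coefficient of $e_{s,s}^{\{\bm 0\}}$ one gets $\overline{\nu(e,g,gh,ghk)}\,\nu(e,g,gh,s)\,\nu(e,gh,ghk,s)$ on the left and $\nu(g,gh,ghk,s)\,\nu(e,g,ghk,s)$ on the right, and these are equal by (\ref{dfd}) applied to $(e,g,gh,ghk,s)$. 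For (iii), the same relabelling makes the coefficient of $e_{\bm s,\bm s}$ in $V_{+,L}^{(1)}(g)\,\beta_g^U\!\big(V_{+,L}^{(1)}(h)\big)\,V_{+,L}^{(1)}(gh)^{-1}$ equal to $\prod_{y=0}^{L}\frac{\nu(e,gh,s(y,0),s(y+1,0))}{\nu(e,g,s(y,0),s(y+1,0))\,\nu(g,gh,s(y,0),s(y+1,0))}$; applying (\ref{dfd}) to $(e,g,gh,s(y,0),s(y+1,0))$ rewrites the $y$-th factor as $\nu(e,g,gh,s(y,0))/\nu(e,g,gh,s(y+1,0))$, so the product telescopes to $\nu(e,g,gh,s(0,0))/\nu(e,g,gh,s(L+1,0))$ — the first factor being the coefficient of $V(g,h)$ at $\bm 0$ and the second, a function of the single site $(L+1,0)$, the asserted diagonal boundary unitary.

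The main work will be (ii). I would start from $\big(\widetilde{d^0}V_{+,L}^{(0)}\big)(g)=\big(V_{+,L}^{(0)}\big)^{*}\beta_g^U\!\big(V_{+,L}^{(0)}\big)$ and the relabelling above, which make the coefficient of $e_{\bm s,\bm s}$ equal to $\prod_{\bm x\in\Lambda_L^{(1)}\times[0,L]}\;\prod_{\pi\in\sym(2)}\left(\frac{\nu(g,s(\bm v_{0,\bm x}^{(\pi)}),s(\bm v_{1,\bm x}^{(\pi)}),s(\bm v_{2,\bm x}^{(\pi)}))}{\nu(e,s(\bm v_{0,\bm x}^{(\pi)}),s(\bm v_{1,\bm x}^{(\pi)}),s(\bm v_{2,\bm x}^{(\pi)}))}\right)^{\sgn\pi}$. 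Applying (\ref{dfd}) to $(e,g,s(\bm v_{0,\bm x}^{(\pi)}),s(\bm v_{1,\bm x}^{(\pi)}),s(\bm v_{2,\bm x}^{(\pi)}))$ converts the contribution of each $2$-cell into a product, over its three edges $S_{k,\bm x}^{(\pi)}$ with $k=0,1,2$, of factors $\nu(e,g,\cdot,\cdot)^{\pm1}$ whose exponent is $\sgn\pi$ times $(+1)$ for $k=1$ and $(-1)$ for $k\in\{0,2\}$. The decisive combinatorial fact is that, in the product over all cells of $\Lambda_L^{(1)}\times[0,L]$, every interior $1$-cell appears in exactly two cells with opposite exponents and therefore cancels; this is precisely what the simplicial pairing $(\bm x,\pi,k)\sim(\bm x',\pi',k')$ records — each $1$-cell has a unique partner, and the three cases describing it pin down how the two orientations (hence the two factors $\sgn\pi$) compare. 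What survives is the product over the boundary $1$-cells: the bottom edges, lying along $\bbZ\times\{0\}$ with exponent $-1$, reassemble directly into the coefficient $\prod_{y}\nu(e,g,s(y,0),s(y+1,0))^{-1}$ of $V_L^{(1)}(g)$, while the top edge and the two vertical sides are contained in $\partial\Lambda_L$ and give a diagonal unitary there. For the second identity of (ii) I would combine part (i) — $V_L^{(0)}=V_{\partial,L}^{(0)}\big(V_{-,L}^{(0)}\otimes V_{+,L}^{(0)}\big)$, together with $\beta_g^U$ fixing $V_{-,L}^{(0)}$ and all these diagonal operators commuting, so that $\big(\widetilde{d^0}V_L^{(0)}\big)(g)=\big(\widetilde{d^0}V_{\partial,L}^{(0)}\big)(g)\cdot\big(\widetilde{d^0}V_{+,L}^{(0)}\big)(g)$ — with the first identity of (ii) and the analogous thin-strip Stokes computation of $\big(\widetilde{d^0}V_{\partial,L}^{(0)}\big)(g)$, which supplies exactly the factor $V_L^{(1)}(g)^{-1}$ cancelling the $x$-axis part of $\big(\widetilde{d^0}V_{+,L}^{(0)}\big)(g)$ and leaving only a $\caD_{\partial\Lambda_L}$-unitary. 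Throughout, the delicate point is the exact tracking of the signs $\sgn\pi$ and the $(-1)^{k}$-type exponents in (\ref{dfd}) so that the interior-edge cancellation is on the nose; this is where the three-case description of the pairing is indispensable.
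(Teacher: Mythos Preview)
Your proposal is correct and matches the paper's argument essentially step for step: (i) is bookkeeping on the index sets, (iii) and (iv) are the same single applications of the cocycle relation (\ref{dfd}) together with the homogeneity (\ref{cohomog}), and for the first identity in (ii) your Stokes-type edge cancellation via the pairing $(\bm x,\pi,k)\sim(\bm x',\pi',k')$ is exactly the paper's computation with the functions $\psi_{\bm s,g}(\bm x,\pi,j)$.

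The only organizational difference is in the second identity of (ii). The paper simply remarks that the same edge-cancellation argument, run over the full square $\Lambda_L^{(1)}\times[-L,L]$ instead of $\Lambda_L^{(1)}\times[0,L]$, leaves no $x$-axis contribution because that row is now interior, so only $\partial\Lambda_L^{(2)}$-terms survive. You instead factor $V_L^{(0)}$ via (i), use that $\beta_g^U$ fixes $V_{-,L}^{(0)}$, and reduce to a thin-strip computation of $\widetilde{d^0}V_{\partial,L}^{(0)}(g)$ producing the compensating $V_L^{(1)}(g)^{-1}$. Both routes are valid; yours is a bit more explicit about the cells straddling the $H_U$--$H_D$ interface (the row $y=-1$), which the paper's one-line remark leaves implicit, while the paper's direct rerun avoids introducing the extra intermediate objects.
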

\begin{proof}
(i) is trivial from the definitions.\\
To see (ii), by definition, we first have
\begin{align}
\begin{split}
\tilde d^0V_{+,L}^{(0)}(g)=&
{V_{+,L}^{(0)}}^{-1 }\beta_g^U\lmk {V_{+,L}^{(0)}}\rmk\\
=&
\sum_{\bm s\in G^{\times \lmk \Lambda_{L+1}^{(1)}\times [0,L+1]\rmk}}\;\;
\prod_{\bm x \in \Lambda_L^{(1)}\times [0,L]}\;\;
\prod_{\pi\in \sym(2)}\\
&\nu\lmk
g,
s\lmk {{\bm{v}}_{0,\bm{x}}^{\lmk \pi\rmk} }\rmk,
s\lmk {{\bm{v}}_{1,\bm{x}}^{\lmk \pi\rmk} }\rmk,
s\lmk {{\bm{v}}_{2,\bm{x}}^{\lmk \pi\rmk} }\rmk
\rmk^{\sgn{\pi}}\\
&\nu\lmk
e, 
s\lmk {{\bm{v}}_{0,\bm{x}}^{\lmk \pi\rmk} }\rmk,
s\lmk {{\bm{v}}_{1,\bm{x}}^{\lmk \pi\rmk} }\rmk,
s\lmk {{\bm{v}}_{2,\bm{x}}^{\lmk \pi\rmk} }\rmk
\rmk^{-{\sgn{\pi}}}
 e_{\bm s, \bm s}^{\lmk \Lambda_{L+1}^{(1)}\times [0,L+1]\rmk}.
\end{split}
\end{align}
The first $\nu$ comes
  from $\beta_g^{U}\lmk {V_{+,L}^{(0)}}\rmk$
 and 
we used change of varianles $g^{-1}s\to s$ and (\ref{cohomog}).
Because $\nu\in \ker D^{3}$, using (\ref{dfd}), we obtain
\begin{align}\label{nust}
\begin{split}
&\tilde d^0V_{+,L}^{(0)}\lmk g\rmk\\&=
\sum_{\bm s\in G^{\times \lmk \Lambda_{L+1}^{(1)}\times [0,L+1]\rmk}}\;\;
\prod_{\bm x \in \Lambda_L^{(1)}\times [0,L]}\;\;
\prod_{\pi\in \sym(2)}\\
&
\quad\lmk
\begin{gathered}
\nu\lmk
e, g,
s\lmk {\bm{v}}_{1,\bm{x}}^{\lmk \pi\rmk} \rmk,
s\lmk {\bm{v}}_{2,\bm{x}}^{\lmk \pi\rmk} \rmk
\rmk^{-1}
\nu\lmk
e, g,
s\lmk {\bm{v}}_{0,\bm{x}}^{\lmk \pi\rmk} \rmk,
s\lmk {\bm{v}}_{2,\bm{x}}^{\lmk \pi\rmk} \rmk
\rmk\\
\nu\lmk
e, g,
s\lmk {\bm{v}}_{0,\bm{x}}^{\lmk \pi\rmk} \rmk,
s\lmk {\bm{v}}_{1,\bm{x}}^{\lmk \pi\rmk} \rmk\rmk^{-1}
\end{gathered}
\rmk^{\sgn{\pi}}
\quad\quad e_{\bm s, \bm s}^{\lmk \Lambda_{L+1}^{(1)}\times [0,L+1]\rmk}.
\end{split}
\end{align}
Now for each $\bm s : \bbZ^2\to G$, $g\in G$ and 
$(\bm x,\pi,j)\in \lmk \Lambda_L^{(1)}\times [0,L]\rmk\times \sym(2)
\times \{0,1,2\}$, we set
\begin{align*}
\psi_{\bm s, g}\lmk \bm x,\pi,j\rmk:=\left\{
\begin{gathered}
\nu\lmk
e, g,
s\lmk {\bm{v}}_{1,\bm{x}}^{\lmk \pi\rmk} \rmk,
s\lmk {\bm{v}}_{2,\bm{x}}^{\lmk \pi\rmk} \rmk
\rmk^{-\sgn{\pi}},\quad \text{if} \quad j=0\\
\nu\lmk
e, g,
s\lmk {\bm{v}}_{0,\bm{x}}^{\lmk \pi\rmk} \rmk,
s\lmk {\bm{v}}_{2,\bm{x}}^{\lmk \pi\rmk} \rmk
\rmk^{\sgn{\pi}}\quad \text{if} \quad j=1\\
\nu\lmk
e, g,
s\lmk {\bm{v}}_{0,\bm{x}}^{\lmk \pi\rmk} \rmk,
s\lmk {\bm{v}}_{1,\bm{x}}^{\lmk \pi\rmk} \rmk
\rmk^{-\sgn{\pi}}\quad \text{if} \quad j=2
\end{gathered}
\right..
\end{align*}
With this notation, we have
\begin{align}
\begin{split}
&\tilde d^0V_{+,L}^{(0)}\lmk g\rmk\\&=
\sum_{\bm s\in G^{\times \lmk \Lambda_{L+1}^{(1)}\times [0,L+1]\rmk}}\;\;
\lmk
\prod_{\bm x \in \Lambda_L^{(1)}\times [0,L]}\;\;
\prod_{\pi\in \sym(2)}\prod_{j=0}^2
\psi_{\bm s, g}\lmk \bm x,\pi,j\rmk\rmk \cdot 
 e_{\bm s, \bm s}^{\lmk \Lambda_{L+1}^{(1)}\times [0,L+1]\rmk}
\end{split}
\label{igr}
\end{align}
By the observation about
 $(\bm x,\pi,j)\sim (\bm {x'},{\pi'},j')$,
we see that if $(\bm x,\pi,j)\sim (\bm {x'},{\pi'},j')$,
then $\psi_{\bm s, g}\lmk \bm x,\pi,j\rmk
\psi_{\bm s, g}\lmk \bm {x'},{\pi'},j'\rmk=1$.
Therefore, a term $\psi_{\bm s, g}(\bm x,\pi,j)$ corresponding to $(\bm x,\pi,j)$ cancels out with that of $ (\bm {x'},{\pi'},j')$
, if the term $\psi_{\bm s, g}(\bm{ x'},{\pi'},j')$  also appears in (\ref{igr}).
Recall that there is exactly one  $ (\bm {x'},{\pi'},j')$ such that $(\bm x,\pi,j)\sim (\bm {x'},{\pi'},j')$.
The terms
which may not disappear is that of either $\bm x\in \partial\Lambda_L^{(2)}$
or 
\begin{align}
\bm x\in \Lambda_L^{(1)}\times \{0\},\quad\text{and}\quad
j=2,\quad\text{and}\quad \pi=\id.
\end{align}
The former terms form some unitary in $\caD_{\partial\Lambda_L^{(2)}}$.
For the latter case,
setting $ y\in \Lambda_L^{(1)}$ so that $( y,0)=\bm x$,
we obtain
\begin{align}
{\bm{v}}_{0,\bm{x}}^{\lmk \pi\rmk}
=\lmk
y,0
\rmk,\quad 
{\bm{v}}_{1,\bm{x}}^{\lmk \pi\rmk}
=\lmk
y+1,0
\rmk.
\end{align}

Hence we have
\begin{align*}
\begin{split}
&\tilde d^0V_{+,L}^{(0)}\lmk g
\rmk\\
&=\lmk \text{unitary in }\caD_{\partial\Lambda_L^{(2)}} \rmk\cdot
\sum_{\bm s\in G^{\times \lmk \Lambda_{L+1}^{(1)}\times\{\bm 0\}\rmk}}\\
&\quad\prod_{y \in \Lambda_L^{(1)}}
\nu\lmk
\begin{gathered}
e, g,
s\lmk(y,0)
\rmk,
s\lmk
(y+1,0)
\rmk
\end{gathered}
\rmk^{-1}\cdot  e_{\bm s, \bm s}^{\lmk \Lambda_{L+1}^{(1)}\times\{\bm 0\}\rmk}
\\
&
=\lmk \text{unitary in }\caD_{\partial\Lambda_L^{(2)}} \rmk\cdot
V_{L}^{(1)} (g).
\end{split}
\end{align*}
This completes the proof of the first part of (ii).
The proof for the second part of (ii) is the same,
except for this time, $[0,L]$ is replaced by $[-L,L]$.
Therefore, there is no contribution from out side of $\partial\Lambda_L^{(2)}$.

The proof of (iii) is basically the same.
We have
\begin{align*}
\begin{split}
&V_{+,L}^{(1)}(g)\cdot\lmk \beta_g^U\lmk V_{+,L}^{(1)}(h)\rmk\rmk\cdot
\lmk V_{+,L}^{(1)}(gh)\rmk^{-1}\\
&=
\sum_{\bm s\in G^{\times \lmk  [0,L+1]\times\{0\}\rmk}}\;\;
\prod_{y\in  [0,L]}\\
&
\quad\lmk
\begin{gathered}
\nu\lmk
g, gh,
s(y,0),
s( y+1,0)
\rmk^{-1}\\
\nu\lmk
e, gh,
s( y,0),
s( y+1,0)
\rmk
\nu\lmk
e, g,
s( y,0),
s(y+1,0)
\rmk^{-1}
\end{gathered}
\rmk
e_{\bm s, \bm s}^{   [0,L+1]\times\{0\}}\\
&=\sum_{\bm s\in G^{\times \lmk [0,L+1]\times\{\bm 0\}\rmk}}\;\;
\prod_{y\in  [0,L]}\lmk
\nu\lmk
e, g, gh,
s(y+1,0)
\rmk^{-1}
\nu\lmk
e, g, gh,
s( y,0)
\rmk
\rmk e_{\bm s, \bm s}^{   [0,L+1]\times\{0\}}\\
&=\lmk \text{unitary in }\caD_{\partial\Lambda_L^{(1)}} \rmk
\sum_{s\in G} 
\nu\lmk
e, g, 
gh,s
\rmk \cdot e_{s,s}^{\{\bm 0\}}.
\end{split}
\end{align*}
For the second equality, we used (\ref{dfd}).
Third equality is just a simple cancellation.

(iv) can be checked directly, using (\ref{dfd}).
\end{proof}
We can easily take the thermodynamic limit.
\begin{lem}\label{vvl}
There are automorphisms
\begin{align*}
\begin{split}
&\sigma^{(0)}\in \Aut\lmk {\caA_{\bbZ^{2}}}\rmk,\quad
\sigma_+^{(0)}
\in \Aut\lmk \caA_{ \bbZ\times[0,\infty)} \rmk,\quad
\sigma_-^{(0)}
\in \Aut\lmk \caA_{ \bbZ\times(-\infty,-1]} \rmk
\\
&\Xi^{(0)}\in
\Aut\lmk \caA_{ \bbZ^{(1)}\times\{-1,0\}}\rmk,\quad
\Xi_+^{(0)}\in
\Aut\lmk \caA_{ {[0,\infty)}\times\{-1,0\}}\rmk,
\Xi_-^{(0)}\in
\Aut\lmk \caA_{ (-\infty,-1]\times\{-1,0\}}\rmk,
\quad\\
&\sigma_+^{(1)}(g)
\in \Aut\lmk \caA_{ [0,\infty)\times\{0\}} \rmk,\quad
\sigma_-^{(1)}(g)
\in \Aut\lmk \caA_{(-\infty,-1]\times\{0\}} \rmk
\end{split}
\end{align*}
such that
\begin{align}\label{tdl}
\begin{split}
&\sigma^{(0)}(A)=\lim_{L\to\infty}\Ad\lmk V_L^{(0)}\rmk(A),\quad
\sigma_+^{(0)}(A)=\lim_{L\to\infty}\Ad\lmk V_{+,L}^{(0)}\rmk(A),\\
&\sigma_-^{(0)}(A)=\lim_{L\to\infty}\Ad\lmk V_{-,L}^{(0)}\rmk(A),\quad
\Xi^{(0)}(A)
=\lim_{L\to\infty}\Ad\lmk V_{\partial L}^{(0)}\rmk(A)
,\\
&\Xi_+^{(0)}(A)
=\lim_{L\to\infty}\Ad\lmk V_{\partial, L,+}^{(0)}\rmk(A),\quad
\Xi_-^{(0)}(A)
=\lim_{L\to\infty}\Ad\lmk V_{\partial, L,-}^{(0)}\rmk(A),\\
&\sigma^{(1)}(g)(A)
=\lim_{L\to\infty}\Ad\lmk V_L^{(1)}\lmk g\rmk\rmk(A),\quad 
\sigma_+^{(1)}\lmk g\rmk(A)
=\lim_{L\to\infty}\Ad\lmk V_{+,L}^{(1)}\lmk g \rmk\rmk(A),\\
&\sigma_-^{(1)}\lmk g\rmk(A)
=\lim_{L\to\infty}\Ad\lmk V_{-,L}^{(1)}\lmk g \rmk\rmk(A),\quad
\Xi^{(1)}\lmk g\rmk(A)
=\lim_{L\to\infty}\Ad\lmk V_{\partial, L}^{(1)}\lmk g\rmk\rmk(A),
\end{split}
\end{align}
for any $g\in G$, $L\in\bbN$, and $A\in\caA_{\Lambda_{L-1}^{(2)}}$.\\
Furthermore, the following holds:
\begin{description}
\item[(i)]
For each 
\begin{align}
\begin{split}
&\sigma^{(0)}=
\Xi^{(0)}
\lmk
\sigma_-^{(0)}
\otimes
\sigma_+^{(0)}\rmk,\quad
\Xi^{(0)}=\inn\lmk \Xi_-^{(0)}\otimes\Xi_+^{(0)}\rmk,\\
&\sigma^{(1)}\lmk g\rmk=
\Xi^{(1)}\lmk g\rmk\circ
\lmk
\sigma_-^{(1)}\lmk g \rmk
\otimes
\sigma_+^{(1)}\lmk g\rmk\rmk.
\end{split}
\end{align}
\item[(ii)]
For each $g\in G$,
\begin{align}
d^1_{H_U} \sigma_+^{(0)}\lmk g\rmk
=\sigma^{(1)}\lmk g\rmk.
\end{align}
\item[(iii)]
\begin{align}
\sigma_+^{(1)}(g)\beta_g^U\sigma_+^{(1)}\lmk h\rmk\lmk\beta_g^{U}\rmk^{-1}\lmk \sigma_+^{(1)}(gh)\rmk^{-1}
=\Ad\lmk V\lmk g,h\rmk\rmk
\end{align}
\end{description}
\end{lem}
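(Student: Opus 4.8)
The plan is to deduce every assertion of Lemma~\ref{vvl} from Lemma~\ref{vml} by passing to the thermodynamic limit. The only genuinely new input is the existence of the limits in (\ref{tdl}); once these are in hand, parts (i)--(iii) are obtained by writing the finite-volume identities of Lemma~\ref{vml} as identities of $*$-automorphisms and letting $L\to\infty$, using that a unitary whose support escapes to infinity acts trivially on any fixed local element in the limit.

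First I would establish the limits, by a stabilization argument exploiting that all the $V$'s are \emph{diagonal} unitaries built from local products of $\nu$. Take, say, $V_L^{(0)}=\sum_{\bm s} w_L(\bm s)\, e_{\bm s,\bm s}^{(\Lambda_{L+1}^{(2)})}$ with $w_L(\bm s)=\prod_{\bm x\in\Lambda_L^{(2)}}{\mathfrak q}(\bm s,\bm x)\in\bbT$. For a matrix unit $e_{\bm t,\bm t'}^{(\Lambda_0)}$ supported in a fixed finite $\Lambda_0$ one has $\Ad(V_L^{(0)})(e_{\bm t,\bm t'}^{(\Lambda_0)})=\sum_{\bm q}\frac{w_L(\bm t,\bm q)}{w_L(\bm t',\bm q)}\, e_{(\bm t,\bm q),(\bm t',\bm q)}$, the sum over configurations $\bm q$ off $\Lambda_0$. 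Since $\bm t$ and $\bm t'$ agree off $\Lambda_0$, every factor ${\mathfrak q}(\cdot,\bm x)$ coming from a simplex $T_\pi+\bm x$ disjoint from $\Lambda_0$ cancels in the ratio, so $\frac{w_L(\bm t,\bm q)}{w_L(\bm t',\bm q)}$ depends only on the restriction of the configurations to a fixed bounded neighbourhood $N(\Lambda_0)$ (essentially $\Lambda_0+\{-1,0,1\}^2$, by the combinatorics recalled before Lemma~\ref{vml}), and once $N(\Lambda_0)\subset\Lambda_L^{(2)}$ it no longer depends on $L$. Hence $\Ad(V_L^{(0)})(A)=\Ad(V_{L'}^{(0)})(A)\in\caA_{N(\Lambda_0)}$ for all $L'\ge L$ and $A\in\caA_{\Lambda_{L-1}^{(2)}}$; this defines $\sigma^{(0)}$ on $\caA_{\rm loc}$, extending by continuity to an isometric $*$-endomorphism of $\caA_{\bbZ^2}$ since each $\Ad(V_L^{(0)})$ is a $*$-automorphism. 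Applying the same argument to $(V_L^{(0)})^{*}$ produces a $*$-endomorphism that is a two-sided inverse of $\sigma^{(0)}$ on $\caA_{\rm loc}$ (the relation $\Ad(V_L^{(0)})\Ad((V_L^{(0)})^{*})=\id$ already holds at a common finite $L$), so $\sigma^{(0)}\in\Aut(\caA_{\bbZ^2})$. The localizations asserted in (\ref{tdl}) are read off from the supports of the corresponding unitaries ($V_{+,L}^{(0)}\in\caA_{\Lambda_{L+1}^{(1)}\times[0,L+1]}$, $V_{+,L}^{(1)}(g)\in\caD_{[0,L+1]\times\{0\}}$, etc.), and the same verbatim argument yields all the limits in (\ref{tdl}).

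Given the limits, (i)--(iii) follow by applying $\Ad$ to Lemma~\ref{vml}(i)--(iii) and taking $L\to\infty$, using $\Ad(ab)=\Ad(a)\Ad(b)$, $\Ad(\beta_g^{U}(u))=\beta_g^{U}\circ\Ad(u)\circ(\beta_g^{U})^{-1}$, and the fact that a unitary supported on $\partial\Lambda_L^{(1)}$ or $\partial\Lambda_L^{(2)}$ fixes any given local element for $L$ large, hence disappears in the limit. Concretely: Lemma~\ref{vml}(i) gives the three factorizations in (i), where the inner automorphism in $\Xi^{(0)}=\inn\circ(\Xi_-^{(0)}\otimes\Xi_+^{(0)})$ comes from the fixed finite-support unitary $V_{\partial,L,0}^{(0)}$; Lemma~\ref{vml}(ii) gives (ii), since $\Ad$ of $\widetilde{d^0}V_{+,L}^{(0)}(g)=(V_{+,L}^{(0)})^{-1}\beta_g^{U}(V_{+,L}^{(0)})$ converges to $(\sigma_+^{(0)})^{-1}\circ\beta_g^{U}\circ\sigma_+^{(0)}\circ(\beta_g^{U})^{-1}=d^{0}_{H_U}\sigma_+^{(0)}(g)$ while $\Ad(V_L^{(1)}(g))\to\sigma^{(1)}(g)$ and the $\caD_{\partial\Lambda_L}$-factor drops out; and Lemma~\ref{vml}(iii) gives (iii) in the same way, with $V(g,h)$ surviving on the nose because it is independent of $L$ and supported at the origin. (Part (iv) of Lemma~\ref{vml}, the $3$-cocycle relation for $V(g,h)$, is a closed identity among fixed matrices and is not needed for Lemma~\ref{vvl} itself.)

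The main obstacle is the stabilization step: one must pin down the neighbourhood $N(\Lambda_0)$ and verify it is bounded uniformly in $L$ --- this is exactly where the combinatorics of which simplices $T_\pi+\bm x$ touch a given site is used --- and one must be careful that the limiting $*$-endomorphisms are genuine automorphisms, which is why I pass through the inverses $\Ad((V_L)^{*})$ rather than only proving injectivity. After that, (i)--(iii) are bookkeeping: each is the image under $\Ad$ of an identity of Lemma~\ref{vml}, taken in the (already-established, in fact eventually constant) limit, with boundary unitaries discarded.
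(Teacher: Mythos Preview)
Your proposal is correct and follows essentially the same route as the paper. The paper phrases the stabilization slightly more compactly: since all the $V_L$'s are diagonal, one has $V_L^{(m)}=V_{M+1}^{(m)}\cdot W$ with $W$ a diagonal unitary supported in $\Lambda_L\setminus\Lambda_M$, so $\Ad(V_L^{(m)})(A)=\Ad(V_{M+1}^{(m)})(A)$ for $A\in\caA_{\Lambda_M^{(2)}}$ directly, without unpacking the ratio $w_L(\bm t,\bm q)/w_L(\bm t',\bm q)$; your matrix-element computation is the same cancellation seen coefficientwise, and your extra care in producing a two-sided inverse via $\Ad((V_L)^*)$ is a welcome detail the paper leaves implicit.
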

\begin{proof}
For any $M, L\in\bbN$ with $M+1\ge L$, $ V_L^{(m)}$ is of the form
\begin{align}
 V_L^{(m)}= V_{M+1}^{(m)}\lmk \text{unitary in }\quad \caD_{\Lambda_L^{(d)}\setminus \Lambda_M^{(d)}}\rmk.
\end{align}
From this and the fact that
$V_L^{(m)}, V_{M+1}^{(m)}$ belong to a common abelian subalgebra, we have
\begin{align}
\Ad\lmk {V_{L}^{(m)}}\rmk (A)
=\Ad\lmk {V_{M+1}^{(m)}}\rmk (A),\quad
\Ad\lmk {\lmk V_{L}^{(m)}\rmk^* }\rmk (A)
=\Ad\lmk {\lmk V_{M+1}^{(m)}\rmk^*}\rmk (A),
\end{align}
for all $A\in\caA_{\Lambda_M^{(2)}}$.
Therefore, taking $L\to\infty$ limit, we obtain an automorphism
satisfying (\ref{tdl}).
The existence of other automorphisms are the same.
The properties (i), (ii), (iii) corresponds to the (i), (ii), (iii) of Lemma \ref{vml}.
\end{proof}
Let 
\begin{align}
\xi:=\frac{1}{\sqrt{|G|}}\sum_{g\in G} \delta_g,
\end{align}
and set a state $\rho$ on $\caB(l^2(G))$ such that
\begin{align}
\rho(A)
=\braket{\xi}{A\xi},\quad A\in \caB(l^2(G)).
\end{align}
Combining Lemma \ref{vml} and Lemma \ref{vvl}, we obtain the following:
\begin{thm}
Let $\sigma^{(0)}$ be an automorphism given by Lemma \ref{vvl}.
Let $\omega_0$ be an infinite tensor product state on 
$\caA_{\bbZ^2}:=\bigotimes_{\bm x\in \bbZ^2}\caB(l^2(G))$
defined by
\begin{align}
\omega_0:=\bigotimes_{\bm x\in\bbZ^2} \rho
\end{align}
Then the $H^3(G,\Uo)$-valued index of the model is
$
\left[ 
  \Psi^{3}\lmk \nu
\rmk\right]_{H^{3}(G,\Uo)}
$.
\end{thm}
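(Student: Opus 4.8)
The plan is to deduce the statement from Theorem \ref{autothm}, applied with $\alpha:=\sigma^{(0)}$ and the $\omega_0$ of the statement, so that everything reduces to identifying the $3$-cocycle $c_R$ produced by that theorem with $\Psi^3(\nu)$. The ground state of the model is $\omega_\Phi=\omega_0\circ\sigma^{(0)}$, so first I would verify the three hypotheses of Theorem \ref{autothm}. The state $\omega_0=\bigotimes_{\bm x\in\bbZ^2}\rho$ is an infinite tensor product state, and it is $\beta$-invariant because $U_g\delta_h=\delta_{gh}$ merely permutes the orthonormal basis of $l^2(G)$, so $U_g\xi=\xi$ for $\xi=\frac1{\sqrt{|G|}}\sum_{g\in G}\delta_g$, whence $\rho\circ\Ad(U_g)=\rho$. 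Next, $\sigma^{(0)}=\lim_L\Ad\lmk V_L^{(0)}\rmk$ is implemented by one layer of mutually commuting unitaries $W_{\bm x}$, each supported on the four sites of the plaquette at $\bm x$; grouping the plaquettes by whether they lie in $C_{\theta,L}$, lie in $C_{\theta,R}$, or meet a boundary ray of $C_\theta$, one gets a factorization of $\sigma^{(0)}$ of the form (\ref{tfac}) for every $0<\theta<\frac\pi2$. For the third hypothesis I would use Lemma \ref{vvl}: by part (i), $\sigma^{(0)}=\Xi^{(0)}\lmk\sigma_-^{(0)}\otimes\sigma_+^{(0)}\rmk$ with $\Xi^{(0)}$ inner and $\sigma_-^{(0)}$ supported on $\bbZ\times(-\infty,-1]$, hence commuting with $\beta_g^U$; so $d^0_{H_U}\sigma^{(0)}(g)$ is an inner automorphism composed with $d^0_{H_U}\sigma_+^{(0)}(g)=\sigma^{(1)}(g)$ by part (ii), and applying part (i) again to $\sigma^{(1)}(g)=\Xi^{(1)}(g)\lmk\sigma_-^{(1)}(g)\otimes\sigma_+^{(1)}(g)\rmk$ with $\Xi^{(1)}(g)$ inner, one obtains
\[
d^0_{H_U}\sigma^{(0)}(g)=\inn\circ\lmk\sigma_-^{(1)}(g)\otimes\sigma_+^{(1)}(g)\rmk .
\]
This is exactly (\ref{dzd}) with $\gamma_{g,L}:=\sigma_-^{(1)}(g)$, $\gamma_{g,R}:=\sigma_+^{(1)}(g)$, which sit on the half-lines $(-\infty,-1]\times\{0\}$, $[0,\infty)\times\{0\}$ and so are localized in $C_{\theta_0,L}$, $C_{\theta_0,R}$ for every $0<\theta_0<\frac\pi2$.

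Granting the hypotheses, Theorem \ref{autothm} yields $h(\Phi)=[c_R]_{H^3(G,\Uo)}$, where $c_R$ is the cocycle built from $\gamma_{g,R}\beta_g^{RU}=\sigma_+^{(1)}(g)\beta_g^U$ through (\ref{usec})--(\ref{vcl}). To compute it I would note that Lemma \ref{vvl}(iii) is precisely the cocycle-action relation for this family with $2$-cochain $V(g,h)=\sum_{s\in G}\nu(e,g,gh,s)\,e_{s,s}^{\{\bm 0\}}$, so one may take $v_R(g,h)=V(g,h)$ in (\ref{usec}). Since $\sigma_+^{(1)}(g)$ is a limit of diagonal unitaries it fixes the diagonal element $\beta_g^{RU}\lmk V(h,k)\rmk$, so $\gamma_{g,R}\beta_g^{RU}\lmk V(h,k)\rmk=\beta_g^{RU}\lmk V(h,k)\rmk$; comparing (\ref{vcl}) with Lemma \ref{vml}(iv), namely $\overline{\nu(e,g,gh,ghk)}\,V(g,h)\,V(gh,k)=\beta_g^{RU}\lmk V(h,k)\rmk\,V(g,hk)$, I would read off $c_R(g,h,k)=\nu(e,g,gh,ghk)=\lmk\Psi^3(\nu)\rmk(g,h,k)$. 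Since $\nu\in\ker D^3$, $c_R\in Z^3(G,\Uo)$, and therefore $h(\Phi)=[c_R]_{H^3(G,\Uo)}=[\Psi^3(\nu)]_{H^3(G,\Uo)}$.

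The bookkeeping above is routine; the real content lies in the structural facts being invoked --- existence of the thermodynamic limits defining $\sigma^{(0)},\sigma_\pm^{(0)},\sigma^{(1)}(g),\sigma_\pm^{(1)}(g),\Xi^{(0)},\Xi^{(1)}(g)$, their localization, the innerness of $\Xi^{(0)}$ and $\Xi^{(1)}(g)$, and the action of $\widetilde{d^0}$ on the finite-volume unitaries --- all of which are packaged into Lemma \ref{vml}(i)--(iii) and Lemma \ref{vvl}. The genuinely delicate step is the combinatorial cancellation in the plaquette product underlying Lemma \ref{vml} (the pairing $(\bm x,\pi,j)\sim(\bm x',\pi',j')$ together with the identity satisfied by $\nu$), and, secondarily, checking that $\sigma^{(0)}$ admits the cone-factorization (\ref{tfac}) in every direction; once those are in hand the theorem follows along the chain described above. (For the statement over the wider class of ground states of the form $\omega_0\circ\alpha$ one invokes Theorem 6.4 of \cite{2dSPT} in place of Theorem \ref{autothm}.)
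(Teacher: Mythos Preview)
Your overall strategy --- apply Theorem \ref{autothm} and identify $c_R$ via Lemma \ref{vvl}(iii) and Lemma \ref{vml}(iv) --- is exactly the paper's. The concrete gap is in your verification of the factorization hypothesis (\ref{dzd}). You assert that Lemma \ref{vvl}(i) gives ``$\Xi^{(0)}$ inner'', but that is not what it says: the statement is $\Xi^{(0)}=\inn\circ\bigl(\Xi_{-}^{(0)}\otimes\Xi_{+}^{(0)}\bigr)$, and the half-strip pieces $\Xi_{\pm}^{(0)}\in\Aut\bigl(\caA_{[0,\infty)\times\{-1,0\}}\bigr)$ (resp.\ its mirror) are \emph{not} inner --- they are thermodynamic limits along an infinite strip. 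Consequently $d^{0}_{H_U}\Xi^{(0)}(g)$ is not inner either, and the identity you write, $d^{0}_{H_U}\sigma^{(0)}(g)=\inn\circ\bigl(\sigma_{-}^{(1)}(g)\otimes\sigma_{+}^{(1)}(g)\bigr)$, is false. Theorem \ref{autothm} therefore does not apply with your choice $\gamma_{g,R}=\sigma_{+}^{(1)}(g)$, so the link between your computed $c_R$ and the index $h(\Phi)$ is not established.

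The fix is exactly what the paper does: carry the $\Xi_{\pm}^{(0)}$ contributions along, setting
\[
\gamma_{g,\sigma}=d^{0}_{H_U\cap H_\sigma}\Xi_{\sigma}^{(0)}(g)\circ\sigma_{\sigma}^{(1)}(g),
\]
so that (\ref{dzd}) genuinely holds. These extra factors are harmless for the cocycle computation: since they are of the form $d^{0}\Xi$, one has $d^{1}_{H_U\cap H_R}\bigl(d^{0}\Xi_{+}^{(0)}\bigr)=\id$ (the elementary identity $d^{1}\circ d^{0}=\id$ for commuting diagonal automorphisms), so $d^{1}_{H_U\cap H_R}\gamma_{R}(g,h)$ still equals $\Ad\bigl(V(g,h)\bigr)$; and because everything is diagonal, $\gamma_{g,R}\beta_g^{RU}\bigl(V(h,k)\bigr)=\beta_g^{RU}\bigl(V(h,k)\bigr)$ as you observed. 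Thus $v_R(g,h)=V(g,h)$ and $c_R=\Psi^{3}(\nu)$ just as you computed --- but only once (\ref{dzd}) is secured with the correct $\gamma_{g,\sigma}$.
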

\begin{proof}
Note that all the automorphisms in Lemma \ref{vvl} and their conjugation with respect to
$\beta_g^U$ mutually commute.
Using  Lemma \ref{vvl}, we obtain
\begin{align}
\begin{split}
&d^0_{H_U}\lmk \sigma^{(0)}\rmk(g)
=
d^0_{H_U}\lmk 
\Xi^{(0)}
\lmk
\sigma_-^{(0)}
\otimes
\sigma_+^{(0)}
\rmk
\rmk(g)\notag\\
&=
\inn\circ
\lmk
d^0_{H_{U}\cap H_L}\Xi^{(0)}_-(g)
\otimes
d^0_{H_{U}\cap H_R} \Xi^{(0)}_+(g)
\rmk
\circ
 d^0_{H_U\cap H_R} \sigma_+^{(0)}\lmk g\rmk
\\
&=
\inn\circ
\lmk
d^0_{H_{U}\cap H_L}\Xi^{(0)}_-(g)\circ\sigma_-^{(1)}(g)
\otimes
d^0_{H_{U}\cap H_R} \Xi^{(0)}_+(g)\circ\sigma_+^{(1)}(g)
\rmk
\end{split}
\end{align}
Hence
$d^{0}_{H_{U}}\sigma^{(0)}$ is factorized into left and right with respect to automorphisms
\begin{align}
\begin{split}
\gamma_{g,L}
=d^0_{H_{U}\cap H_L} \Xi^{(0)}_-(g)\circ\sigma_-^{(1)}(g)\\
\gamma_{g,R}
=d^0_{H_{U}\cap H_R} \Xi^{(0)}_+(g)\circ\sigma_+^{(1)}(g).
\end{split}
\end{align}
For this $\gamma_{g,R}$, by a straight forward calculation, we get
\begin{align}
\begin{split}
&d^1_{H_U\cap H_R}\gamma_{R}(g,h)=
d^1_{H_U\cap H_R}d^0_{H_{U}\cap H_R} \Xi^{(0)}_+(g,h)\circ
d^1_{H_U}\sigma_+^{(1)}(g,h)
=\Ad\lmk V\lmk g,h\rmk\rmk.
\end{split}
\end{align}
Hence 
we may set $v(g,h)=V(g,h)$ in Theorem \ref{autothm}.
Then the corresponding $3$-cocycle 
$c_R$ of Theorem \ref{autothm} 
is $c(g,h,k)=\Psi_3(\nu)(g,h,k)$
from (iv) of Lemma \ref{vml}.
This proves the theorem.
\end{proof}

\section*{Acknowledgment.}
The author is grateful to all the 2020 Current
Development of Mathematics Conference organizers of, for giving her the opportunity to give a talk there
and write this lecture note.
The author is thankful to Hal Tasaki for his help on this lecture note, particularly
for authorizing her to write down his calculation on
Dijkgraaf-Witten model in this lecture note.
This work is supported by JSPS KAKENHI Grant Number 16K05171 and 19K03534
and JST CREST Grant Number JPMJCR19T2.


\begin{thebibliography}{10}
\bibitem[ALLNY]{allny}
 H.~Abdul-Rahman, M.~Lemm, A.~Lucia,  B.~Nachtergaele and A.~Young
\newblock{A class of two-dimensional AKLT models with a gap, Analytic trends in mathematical physics}, \newblock{Contemp. Math.} 741, Amer. Math. Soc., Providence, RI 2020 

\bibitem[AKLT]{Affleck:1988vr}
I.~Affleck, T.~Kennedy, E.H. Lieb, and H.~Tasaki.
\newblock {Valence bond ground states in isotropic quantum antiferromagnets}.
\newblock { Comm. Math. Phys.}, 115, 477--528, 1988.

\bibitem[AL]{AL}
I.~Affleck and E.~H.~Lieb,
\newblock{A proof of part of Haldane's conjecture on spin chains}.
\newblock{Lett. Math. Phys.} {\bf 12}, 57--69 (1986).



\bibitem[A]{arv}W.B.~ Arveson. \newblock{Continuous analogues of Fock space I.}
\newblock{ Mem. Amer. Math. Soc.}, {\bf 409}, 1989.



\bibitem[BMNS]{bmns}
S.~ Bachmann, S.~Michalakis, B.~Nachtergaele, and R.~Sims.
\newblock{Automorphic Equivalence within Gapped Phases of Quantum Lattice Systems.}\newblock{Communications in Mathematical Physics}
{\bf 309}, 835--871, 2012. 



\bibitem[BN]{bn}
S.~ Bachmann and B.~Nachtergaele.
\newblock{On gapped phases with a continuous symmetry and boundary operators}
\newblock{J. Stat. Phy.}
{\bf 154}, 91--112, 2014. 
\bibitem[BDF]{bdf}
S.~Bachmann, W.~De~Roeck, M.~Fraas, 
\newblock{ The adiabatic theorem and linear response theory for extended quantum systems}, 
Commun. Math. Phys. 361, 997--1027, 2018.

\bibitem[B]{brat}
O.~Bratteli.
\newblock{ Inductive limits of finite-dimensional $C^{*}$-algebras. }
\newblock{Trans. Amer. Math. Soc. }{\bf 171} 195--234 (1972).


\bibitem[BJP]{bjp}
O.~Bratteli  P.~Jorgensen, G.~Price. 
\newblock{Endomorphisms of $B(\caH)$.}
\newblock{Quantization, nonlinear partial differential equations, and operator algebra.}
93--138, 
Proc. Sympos. Pure Math., {\bf 59}, 1996. 

\bibitem[BJ]{BJ}O.~Bratteli,
P. E. T.~Jorgensen.
\newblock{ Endomorphisms of $B(H)$
II. Finitely Correlated States on $O_n$.}
\newblock{Journal of functional analysis.} {\bf 145}, 323--373 1997.

\bibitem[BR1]{BR1}
 O.~Bratteli,  D. ~W.~Robinson.
\newblock {\em Operator Algebras and Quntum Statistical 
 Mechanics 1.} Springer-Verlag, 1986.
 \bibitem[BR2]{BR2}
 O.~Bratteli,  D.~W.~Robinson.
 \newblock {\em Operator Algebras and Quantum Statistical 
 Mechanics 2.} Springer-Verlag, 1996.
 
  \bibitem[CGLW]{cglw}
 X.~Chen, Z.~C. Gu, Z.~X. Liu, and X.~G. Wen,
 \newblock{Symmetry protected topological orders and the group cohomology of their
symmetry group.}
 Phys. Rev. B {\bf 87}, 155114 (2013)

 
\bibitem[CGW]{ChenGuWEn2011}
X. Chen, Z.-C. Gu, and X.-G. Wen,
\newblock{ Classification of gapped symmetric phases in one-dimensional spin systems},
Phys. Rev. B {\bf 83}, 035107 2011.


\bibitem[C]{Connes}
A.~Connes,
\newblock{Periodic automorphisms of the hyperfinite factor of type $II_1$.}
\newblock{Acta Sci. Math. (Szeged)} {\bf 39} (1977), no. 1--2, 39--66.
\bibitem[DL]{dl}
S.~Doplicher, R.~Longo. \newblock{Standard and split inclusions of von Neumann algebras.} Invent. Math. 
{\bf 75} 493--536. 1984.
\bibitem[FNW]{Fannes:1992vq}
M.~Fannes, B.~Nachtergaele, and R.F. Werner.
\newblock {Finitely correlated states on quantum spin chains}.
\newblock { Comm. Math. Phys.}, {\bf 144}, 443--490, 1992.

\bibitem[FKK]{fkk}
H.~Futamura, N.~Kataoka and A.~Kishimoto.
\newblock{Homogeneity of the pure state space for separable $C^{*}$algebras. }
\newblock{Internat. J. Math.} {\bf 12} 813--845 (2001).




\bibitem[FNW2]{fnwpure}
M.~Fannes, B.~Nachtergaele, and R.F. Werner.
\newblock {Finitely correlated pure states}.
\newblock {\em Journal of Functional Analysis.}, {\bf 120}, 511--534, 1994.


\bibitem[Fu]{fuji}
Y.~Fuji
\newblock{Effective field theory for one-dimensional valence-bond-solid phases and their symmetry protection},
\newblock{\em Physical Review B}, {\bf 10} 1410.4211,2016.
  


\bibitem[GW]{GuWen2009}
Z.-C.~Gu,  and X.-G.~Wen,
\newblock{ Tensor-entanglement-filtering renormalization approach and symmetry-protected topological order},
Phys. Rev. B,  {\bf 80}, 155131 2009.


\bibitem[GCW]{GCW}
{X.~Chen, Z.-C.~Gu, and X.-G. Wen},
\newblock{Classification of gapped symmetric phases in one-dimensional spin systems},
{Phys. Rev.  B}, {\bf 3}, 1008.3745 2011




\bibitem[Hal1]{Haldane1983a}
F.D.M. Haldane, 
\newblock{Continuum dynamics of the 1-D Heisenberg antiferromagnet: identification with the $O(3)$ nonlinear sigma model},
Phys. Lett. {\bf 93A}, 464--468 1983.

\bibitem[Hal2]{Haldane1983b}
F.D.M. Haldane, 
\newblock{Nonlinear field theory of large-spin Heisenberg antiferromagnets: semiclassically quantized solitons of the one-dimensional easy-axis N\'eel state},
Phys. Rev. Lett. {\bf 50} 1153--1156 1983.


\bibitem[H1]{area}
M.~Hastings.
\newblock{An area law for one-dimensional quantum systems.}
\newblock{Journal of Statistical Mechanics.}  P08024, 2007.

\bibitem[H2]{h1}M.~Hastings.
\newblock{Quasi-adiabatic Continuation for Disordered Systems: Applications to Correlations, Lieb-Schultz-Mattis, and Hall Conductance.}
 http://arxiv/org/abs/1001.5280v2 [math-ph], 2010.
 
 \bibitem[H3]{HL}
M.B. Hastings, 
\newblock{ Lieb-Schultz-Mattis in higher dimensions}\/,
\newblock{Phys. Rev. B} {\bf 69}, 104431 (2004).

\bibitem[HK]{hk}
M.~B.~Hastings, T.= Koma, 
\newblock{Spectral Gap and Exponential Decay of Correlations.}
\newblock{ Commun. Math. Phys. }265, 781--804 (2006).



\bibitem[HW]{hw}
M.~B.~Hastings, X.~G.~Wen, \newblock{Quasi-adiabatic continuation of quantum states: The stability of topological
ground-state degeneracy and emergent gauge invariance}, 
Phys. Rev. B 72 045141 2005 .

 \bibitem[J]{jones}
V.~Jones.
\newblock{Actions of finite groups on the hyperfinite type $II_{1}$ factor.}
Mem. Amer. Math. Soc. 28 (1980), no. 237.

\bibitem[K]{Kennedy1990}
T. Kennedy,
\newblock{Exact diagonalization of open spin 1 chains},
J.~Phys.: Cond. Matt. {\bf 2}, 5737--5745, 1990.



\bibitem[KT1]{kt} T.~Kennedy and H.~Tasaki.
\newblock{Hidden $\bbZ_2\times\bbZ_2$-symmetry breaking in Haldane-gap antiferromagnets.}
\newblock{Phys. Rev. B}, {\bf 45} 304--307, 1992.

 \bibitem[KT2]{kt2} T.~Kennedy and H.~Tasaki.
\newblock{Hidden symmetry breaking and the Haldane phase in $S= 1$ quantum spin chains.}
\newblock{Communications in Mathematical Physics}, {\bf 147} 431--484, 1992.


\bibitem[Ki]{Kitaev} A.~Kitaev.
\newblock{Fault-tolerant quantum computation by anyons.}
\newblock{ Ann. Physics},{\bf 303} (1) 2-30,2003.

\bibitem[KOS]{kos} A.~Kishimoto, N.~Ozawa, and S.~Sakai.
\newblock{Homogeneity of the pure state space of a separable C*-algebra}.
\newblock{Canad. Math. Bull.} {\bf 46} 365--37 (2003).

\bibitem[LN]{ln}M.~Lemm and B.~Nachtergaele
\newblock{ Gapped PVBS models for all species numbers and dimensions, Rev. Math. Phys. 31 (2019), no. 9, 1950028 }
\bibitem[LSW]{lsw}
M.~Lemm  A.~W.~Sandvik and L.~Wang
\newblock{ Existence of a Spectral Gap in the Affleck-Kennedy-Lieb-Tasaki Model on the Hexagonal Lattice}, Phys. Rev. Lett. 124 (2020), 177204

\bibitem[LSM]{LSM}
E.~Lieb, T.~Schultz, and D.~Mattis.
\newblock{ Two soluble models of an antiferromagnetic chain},
\newblock{Ann. Phys.}{\bf16},407--466 (1961).

\bibitem[M1]{Matsui1}T.~Matsui.
\newblock{A characterization of matrix product pure states.}
\newblock{Infinite dimensional analysis and quantum probability.}
{\bf 1} 647--661. 1998.

\bibitem[M2]{Matsui2}
T.~Matsui.
\newblock{The split property and the symmetry breaking of the quantum spin chain.}
\newblock{Communications in Mathematical Physics}, {\bf 218}
393--416, 2001.


\bibitem[M3]{Matsui3}T.~Matsui.
\newblock{Boundedness of entanglement entropy and split property of quantum spin chains}.
\newblock{
Reviews in Mathematical Physics},
1350017, 2013.

\bibitem[MM]{MillerMIyake2016}
J.~Miller and A.~Miyake, \newblock{ Hierarchy of universal entanglement in 2D
  measurement-based quantum computation}, 
  \newblock{Quantum Information }{\bf 2},
  16036 (2016). 
  
\bibitem[MO]{mo}A.~Moon and Y.~Ogata
\newblock{Automorphic equivalence within gapped phases in the bulk}
\newblock{Journal of Functional Analysis :Volume 278, Issue 8, 1 May 2020, 108422}

\bibitem[NO]{NO} P.~Naaijkens and Y.~Ogata
\newblock{The split and approximate split property in 2D systems: stability and absence of superselection sectors}
\newblock{
arXiv:2102.07707}


\bibitem[N]{Nachtergaele:1996vc}
B.~Nachtergaele.
\newblock {The spectral gap for some spin chains with discrete symmetry
  breaking}.
\newblock {\em Comm. Math. Phys.}, 175(3):565--606, 1996.


\bibitem[NOS]{nos}B.~Nachtergaele, Y.~Ogata, and R.~Sims.
\newblock{Propogation of correlations in quantum lattice systems}.
\newblock{J. Stat. Phys},
{\bf 124}, 1--13, 2006.

\bibitem[NR]{denNijsRommelse}
M. den Nijs and K. Rommelse,
\newblock{Preroughening transitions in crystal surfaces and valence-bond phases in quantum spin chains},
Phys. Rev. B {\bf 40}, 4709, 1989.

\bibitem[NS06]{ns06}
B.~Nachtergaele and R.~Sims.
\newblock{ Lieb-Robinson Bounds and the Exponential Clustering Theorem.}
\newblock{ Commun. Math. Phys.} 265, 119--130 (2006)
\bibitem[NS07]{NS}
B. Nachtergaele and R. Sims,
\newblock{ A multi-dimensional Lieb-Schultz-Mattis theorem}\/,
Comm. Math. Phys. {\bf 276}, 437--472 (2007).
\bibitem[NS09]{ns09}
B.~Nachtergaele and R.~Sims.
 Locality Estimates for Quantum Spin Systems. In: Sidoravicius V. (eds) New Trends in Mathematical Physics. Springer (2009)

\bibitem[NSY]{nsy} B. Nachtergaele, R. Sims, and A. Young. 
\newblock{\em Quasi-locality bounds for quantum lattice systems. I. Lieb-Robinson bounds, quasi-local maps, and spectral flow automorphisms}. J. Math. Phys. \textbf{60}, 061101 (2019)


\bibitem[O1]{Ogata1}
Y.~Ogata.
\newblock {A class of asymmetric gapped Hamiltonians on quantum spin chains and its classification I}.
\newblock{Communications in Mathematical Physics}, {\bf 348}, 847--895, 2016.

\bibitem[O2]{Ogata2}
Y.~Ogata.
\newblock {A class of asymmetric gapped Hamiltonians on quantum spin chains and its classification II}.
\newblock{Communications in Mathematical Physics}, {\bf 348}, 897--957, 2016. 


\bibitem[O3]{Ogata3}
Y.~Ogata.
\newblock {A class of asymmetric gapped Hamiltonians on quantum spin chains and its classification III}.
\newblock{Communications in Mathematical Physics}, {\bf 352}, 1205--1263, 2017.
\bibitem[O4]{TRI}
Y.~Ogata
\newblock{A {$\mathbb{Z}_2$}-Index of Symmetry Protected Topological Phases with Time Reversal Symmetry for Quantum Spin Chains}.
\newblock{Communications in Mathematical Physics}, {\bf 374}, 705--734, 2020.
 
\bibitem[O5]{GS}
 Y.~Ogata
\newblock{A classification of pure states on quantum spin chains satisfying the split property with on-site finite group symmetries}.
\newblock{
Transactions of the American Mathematical Society, Series B 8(2) 39 - 65
 2021.}
\bibitem[O6]{RI}
Y.~Ogata
\newblock{A $\bbZ_2$-index of symmetry protected topological phases with reflection symmetry for quantum spin chains}
\newblock{To appear in Communications in Mathematical Physics.}

\bibitem[O7]{2dSPT}
Y.~Ogata
\newblock{A $H^3(G,\bbT)$-valued index of symmetry protected topological phases with on-site finite group symmetry for two-dimensional quantum spin systems}
arXiv:2101.00426


 \bibitem[OTT]{OTT}
 Y.~Ogata and Y.~Tachikawa and H.~Tasaki
\newblock{General {L}ieb-{S}chultz-{M}attis type theorems for quantum spin chains},  arXiv:2004.06458 






\bibitem[OT]{ot}Y.~Ogata and H.~Tasaki
\newblock{Lieb-Schultz-Mattis type theorems for quantum spin chains without continuous symmetry}.
\newblock{Communications in Mathematical Physics 372(3) 951--962 }, 2019.


\bibitem[Os]{Os}
M. Oshikawa, 
\newblock{ Commensurability, excitation gap, and topology in quantum many-particle systems on a periodic lattice},
\newblock{Phys. Rev. Lett.} {\bf 84}, 1535 (2000).


\bibitem[P]{powers}
R.~T.~Powers.
\newblock{ Representations of uniformly hyperfinite algebras and their associated von Neumann rings.}
\newblock{ Ann. of Math.}
{\bf  86} 138--171 (1967).


\bibitem[PTBO1]{po}F.~Pollmann, A.~Turner, E.~Berg, and M.~Oshikawa
\newblock{Entanglement spectrum of a topological phase in one dimension}.
Phys. Rev. B {\bf 81}, 064439, 2010.

\bibitem[PTBO2]{po2}F.~Pollmann, A.~Turner, E.~Berg, and M.~Oshikawa
\newblock{Symmetry protection of topological phases in one-dimensional quantum spin systems}.
Phys. Rev. B {\bf 81}, 075125, 2012.

\bibitem[PWSVC]{Perez-Garcia2008}
D. Perez-Garcia, M.M. Wolf, M. Sanz, F. Verstraete, and J.I. Cirac,
{ String order and symmetries in quantum spin lattices},\/
Phys. Rev. Lett. {\bf 100}, 167202 2008.

\bibitem[T]{takesaki}
M.~Takesaki,
\newblock{Theory of operator algebras. I.}
\newblock{{Encyclopaedia of Mathematical Sciences}.} Springer-Verlag, Berlin, 2002.


\bibitem[Tas1]{ta}
H.~Tasaki
\newblock{Topological phase transition and $\bbZ_2$ index for $S=1$ quantum spin chains}
\newblock{ Phys. Rev. Lett. {\bf 121}, 140604 (2018).}

\bibitem[Tas2]{TasakiBook}
H.~Tasaki,
\newblock{ Physics and mathematics of quantum many-body systems}, 
\newblock{Graduate Texts in Physics}, Springer
2020

\bibitem[Y]{Beni2016}
B.~Yoshida, \newblock{Topological phases with generalized global symmetries}\/,
  Phys. Rev. B {\bf 93}, 155131 (2016).

\bibitem[WPVZ]{WPVZ}
H.~Watanabe, H.C.~Po, A.~Vishwanath, and M.P.~Zaletel, 
\newblock{ Filling constraints for spin-orbit coupled insulators in symmorphic and nonsymmorphic crystals}\/,
\newblock{Proc. Natl. Acad. Sci. U.S.A.} {\bf 112}, 14551--14556 (2015).


\bibitem[No]{Nobel2016}
``Topological Phase Transitions and Topological Phases of Matter: Scientific Background on the Nobel Prize in Physics 2016'' (Class for Physics of the Royal Swedish Academy of Sciences, 2016)
\\{\tt https://www.nobelprize.org/prizes/physics/2016/advanced-information/}


\end{thebibliography}
\end{document}